\newtheorem{theorem}{Theorem}
\theoremstyle{definition}
\newtheorem{definition}{Definition}[section]
\numberwithin{equation}{section}
\newtheorem{lemma}{Lemma}[section]
\newtheorem{proposition}{Proposition}[section]
\newtheorem{corollary}[lemma]{Corollary}
\newtheorem{theorem/definition}{Theorem/Definition}
\newenvironment{claim}[1]{\par\noindent\underline{Claim:}\space#1}{}
\newcommand{\mE}{\mathcal{E}}
\newcommand{\mF}{\mathcal{F}}
\newcommand{\mH}{\mathcal{H}}
\newcommand{\mM}{\mathcal{M}}
\newcommand{\mN}{\mathcal{N}}
\newcommand{\mO}{\mathcal{O}}
\newcommand{\lb}{\left(}
\newcommand{\rb}{\right)}
\newcommand{\p}{\partial}
\newcommand{\nn}{\nonumber}
\newcommand{\ee}{\end{equation}}
\title{Operator Algebras and Third Quantization}
\begin{document}
	\author[1,a]{Yidong Chen}
	\author[2,a]{Marius Junge}
	\author[3,b]{Nima Lashkari}
	\abstract{
In quantum gravity, the gravitational path integral involves a sum over topologies, representing the joining and splitting of multiple universes. To account for topology change, one is led to allow the creation and annihilation of closed and open universes in a framework often called third quantization or universe field theory. We argue that since topology change in gravity is a rare event, its contribution to exponentially late-time physics is universally described by a Poisson process. This universal Poisson process is responsible for the plateau in the multi-boundary generalization of the spectral form factor at late times. In the Fock space of closed baby universes, this universality implies that the statistics of the total number of baby universes are captured by a coherent state.

Allowing for the creation of asymptotic open universes calls for a noncommutative generalization of a Poisson process. We propose such an operator algebraic framework, called Poissonization, which takes as input the observable algebra and a (unnormalized) state of a quantum system and outputs a von Neumann algebra of a many-body theory represented on its symmetric Fock space.  Physically, Poissonization is a generalization of the coherent state vacua of bipartite quantum systems or matrix quantum mechanics. 

The multi-boundary correlators of the Marolf-Maxfield toy model of baby universes, and the sum over bordisms of closed-open 2D topological quantum field theory, are entirely captured by Poissonization. In the Jackiew-Teitelboim gravity, the universal Poisson process reproduces the late-time correlators in the $\tau$-scaling limit. In Poissonization, asymptotic open boundaries are treated as distinguishable. We argue that the indistinguishability of the bulk dynamical probes, such as with the end-of-the-world branes, is tied to the eigenstate thermalization hypothesis and a projection (conditional expectation) to the center of the one-boundary algebra.}
 
\affiliation[a]{Department of Mathematics, University of Illinois at Urbana-Champaign, Illinois, IL 61801, USA}
\affiliation[b]{Department of Physics and Astronomy, Purdue University, West Lafayette, IN 47907, USA}
 
\emailAdd{$^1$yidongc2@illinois.edu}
\emailAdd{$^2$mjunge@illinois.edu}
\emailAdd{$^3$nima@purdue.edu}

\maketitle

\section{Introduction}


In quantum mechanics and quantum field theory (QFT), the transition amplitudes between states are given by Lorentzian path integrals that sum over all interpolating configurations weighted by the exponential of the action. In the semi-classical regime of small $\hbar$ (large action), the path integral can be written as a sum labeled by the classical solutions (saddle points of the path integral), with each term in the summand capturing the perturbation theory around that classical solution:
\begin{eqnarray}
    \bra{\varphi_f;t_f}U(t_f;t_i)\ket{\varphi_i;t_i}&&=\int_{\phi(t_i)=\varphi_i}^{\phi(t_f)=\varphi_f}\mathcal{D}\phi\: e^{i S[\phi]/\hbar}\simeq \sum_{\phi^{cl}} \int_{\delta\phi(t_i)=0}^{\delta\phi(t_f)=0} \mathcal{D}(\delta \phi) e^{i S[\phi^{cl}+\delta \phi]/\hbar}
\end{eqnarray}
where $\phi^{cl}$ are classical solutions to the equations of motion.
There is an exponentially small probability for the system to tunnel between each pair of classical solutions $\phi^{cl}_1$ and $\phi^{cl}_2$. In the semiclassical approximation, the probability rate for such transitions per unit volume is $\Gamma/V\sim e^{-2S_E/\hbar}$ where $S_E$ is the one-shell action on the instanton that interpolates between the two saddles $\phi^{cl}_1$ and $\phi^{cl}_2$  \cite{coleman1977fate,callan1977fate} \footnote{An instanton is a finite action classical solution to the Euclidean equations of motion. The factor of $2$ has to do with the fact that the transition probability is given by the action of a ``bounce" solution. A bounce is a classical solution of the Euclidean equations of motion that describes a system starting in a metastable vacuum (a local minimum of the potential), ``bouncing" over the potential barrier, and returning to the metastable vacuum. The action of the bounce is twice the action of the instanton that traverses the potential only once.}.


Multi-instanton solutions describe multi-tunneling events. In a general theory with non-linear equations of motion, the instantons can interact, which means that tunneling events at different times are correlated. However, for instantons that are well-separated in time, it is reasonable to use a dilute gas approximation where the total action of a configuration of multiple instantons is the sum of the individual instanton actions\footnote{The assumption is that the interaction between the instantons is small and can be treated perturbatively.}. In this approximation, the probability for $k$ tunneling events is proportional to 
$\frac{(e^{-2S_E/\hbar})^k}{k!}$ which is a Poisson distribution \cite{coleman1988aspects}. The power $k$ arises from treating the tunneling events as independent, and the factor $1/k!$ stems from treating instantons as indistinguishable, as we are only concerned with the total number of events.


A well-known example of tunneling in quantum mechanics is the alpha decay of radioactive atoms. Euclidean methods compute the probability rate for multiple tunneling exactly by summing over the multi-instanton solutions. However, to understand the statistics of many such events over a long time, one might not need the microscopic details of the action. In the dilute gas approximation, one uses the probability rate for a single transition and models the decay as a rare stochastic process. Assuming that decays at different times are independent, the Poisson limit theorems tell us that for exponentially long times the statistics of such events are independent of the detailed microscopics of individual events, and universally modeled by a Poisson distribution\footnote{See Section \ref{subsec:rareevents} for a review of the Poisson limit theorem and a precise statement of its universality.}. In the study of quantum tunneling of charge through a tunneling junction (a thin insulator separating two conductors) this Poisson distribution is called shot noise, or Poisson noise \cite{blanter2000shot}. The Poisson limit theorem applies to the statistics of the total number of occurrences of any set of rare independent events, enjoying such great experimental success that it is often referred to as the law of rare events. 


The Poisson limit theorem should be compared to the central limit theorem. Gaussian and Poisson random variables take values on continuous and discrete sets, respectively. The central limit theorem explains a universal Gaussian kinetic term controlling thermal or quantum fluctuations\footnote{In the thermodynamic limit, the fluctuations around each saddle are $\hbar$ suppressed. Assuming independence, by the central limit theorem, the statistics of these fluctuations are controlled by a Gaussian distribution.}. Here, we argue that the Poisson limit theorem controls the statistics of multi-tunneling events such as topology fluctuations in quantum gravity. 
Tunneling between saddles is a rare event with a probability that is exponentially suppressed $e^{-O(1/\hbar)}$. By the Poisson limit theorem, the statistics of multi-tunneling events at late times are universally controlled by Poisson distributions. It is important to emphasize that, in this work, by late time, we mean $t\sim 1/\Gamma$ where $\Gamma$ is the probability rate for the rare events. The Poisson approximation is violated for both parameterically longer and shorter time scales.


In gauge theories, topologically non-trivial gauge configurations can give rise to distinct classical solutions that contribute to the path integral. 
Similarly, in gravity, one is instructed to include in the gravitational path-integral a sum over all non-trivial topologies that interpolate between the fixed boundaries\footnote{See \cite{hawking1980quantum,hawking1982unpredictability,hartle1983wave,strominger1984vacuum,hawking1987quantum,hawking1988wormholes} for an incomplete list of references.}. The transition amplitude to go from a boundary manifold $\Sigma_i$ with metric $h_i$ to a boundary manifold $\Sigma_f$ with induced metrics $h_f$ is given by 
\begin{eqnarray}
    \braket{h_f}{h_i}=\sum_{\mathcal{M}}\int_{g|_{\Sigma_i}=h_i}^{g|_{\Sigma_f}=h_i} \mathcal{D}g\:e^{i S[g]/G_N}
\end{eqnarray}
where $\mM$ are topologically inequivalent manifolds (cobordisms) that interpolate between $\Sigma_i$ and $\Sigma_f$ with metrics $h_i$ and $h_f$, respectively. In quantum gravity, topology fluctuations is interpreted as processes that create and annihilate open and closed boundaries; see Figure \ref{fig1}. Euclidean methods establish that in the semiclassical regime of small $G_N$, the probability rate for topology fluctuation in quantum gravity suffers an exponential suppression $e^{-2S_E/G_N}$ where $S_E$ is the Euclidean action of the instanton \cite{gross1982instability}. In the remainder of this work, we often work in units where $\hbar$ and $G_N$ are set to one so that the semi-classical limit corresponds to a large Euclidean action. We will see in Section \ref{sec:sumovertopology} that the sum over topology fluctuations in the Euclidean gravitational path-integral, in the dilute gas approximation, results in a Poisson-type distribution. Intuitively, this is expected because in the dilute gas approximation, summing multi-instanton configurations counts the number of set partitions, which is intimately related to the moments of a Poisson distribution through the Bell polynomials and the Bell numbers.

Topology fluctuations in quantum gravity are rare events with a probability rate $\Gamma\sim e^{-2S_E}$ that remains constant in time. In holography, the contribution of topology-changing events to boundary observables (e.g., boundary correlation functions) becomes significant at exponentially long times $T\sim 1/\Gamma$. 
The first result of this work is that, by the Poisson limit theorem, analogous to alpha decay, there is a universal Poisson term in the contribution of topology change to late-time physics that is independent of the microscopics of the quantum gravity model. 

According to the Eigenstate Thermalization Hypothesis (ETH), our discussion can be generalized from gravity to any general quantum chaotic theory. By the ETH, an energy transition due to the insertion of a ``simple" operator is a rare event\footnote{The off-diagonal elements of a simple operator in energy eigenbasis have the interpretation of the amplitude for such transitions. According to the ETH, these off-diagonal elements are exponentially small in entropy (see equation \ref{ETH}).}. We will see that the universal Poisson process can be understood as approximating ``simple" operators and their powers in a microcanonical ensemble of energy $E$ by $\mO\simeq \text{tr}(P_E\mO) P_E$, where $P_E$ is the projection to the microcanonical ensemble. This approximation is relevant for late-time physics and the plateau in the multi-point generalization of the spectral form.

Much of the earlier work on topology fluctuations focused on the probe limit of one asymptotic open boundary (parent universe), interpreting topology fluctuations as the creation and annihilation of closed baby universes \cite{giddings1988axion,giddings1988loss,coleman1988black}. In this limit,  the Hilbert space is $\mathcal{H}_{open}\otimes \mF$, where $\mF$ is the Fock space of closed baby universes. Integrating out this Fock space results in second-order processes that correspond to the emission and reabsorption of baby universes. From the point of view of observers on the open boundary, integrating out this Fock space results in topology fluctuations. When the state of the Fock space is a coherent state $\ket{\alpha}$, the coefficient $\alpha$ becomes a constant of nature in the parent universe. In a general state of the Fock space, integrating out the Fock space results in an ensemble for these couplings. Note that the second-order process of the emission and reabsorption of a baby universe in the Fock space is the number operator $a^\dagger a$ up to a constant shift. We will see that since the statistics of the number operator in a coherent state is given by the Poisson distribution, the Poisson limit theorem suggests that coherent states capture the universal late-time physics of topology fluctuations.


More generally, going beyond the probe limit, one has to introduce operators that create and annihilate closed and open universes, in a framework that is often referred to as the {\it third quantization} or {\it universe field theory} \cite{coleman1991quantum}. In third quantization and universe field theory, we consider the space of boundary conditions $J$ (superspace), and introduce the field operator $\phi(J)\sim a(J)+a^\dagger(J)$ with operators $a(J)$ that annihilate boundaries with boundary condition $J$. The Klein-Gordon equation for $\phi(J)$ is the Wheeler-DeWitt equation, and its vacuum is the Hartle-Hawking no-boundary wave-function \cite{hartle1983wave}. In the universe field theory, the process of emission/absorption of universes is described via a vertex interaction that couples boundaries (e.g., see Section ``Baby Universes" in \cite{coleman1991quantum}). The inclusion of this interaction term results in a renormalization of the Hartle-Hawking vacuum $\ket{HH}$ and the propagator of a single universe. The connected $p$-point correlators $\bra{HH}\phi(J_1)\cdots \phi(J_p)\ket{HH}_{conn}$ $\varphi(J)$ are interpreted as the $p$-boundary wormholes with boundary conditions $J_1, \cdots, J_p$\footnote{In probability theory and statistics, connected correlation functions are called cumulants.}.
The universal Poisson term we are concerned with comes from fixing this connected correlator to be a particular correlation function in the Hilbert space of a single boundary, and the count of configurations with multiple disjoint universes with $p$ boundaries. Note that the set partitions are intimately related to the Poisson distribution (see Appendix \ref{app:setpartitions} for a review).

\begin{figure}[t]
    \centering
    \includegraphics[width=1\linewidth]{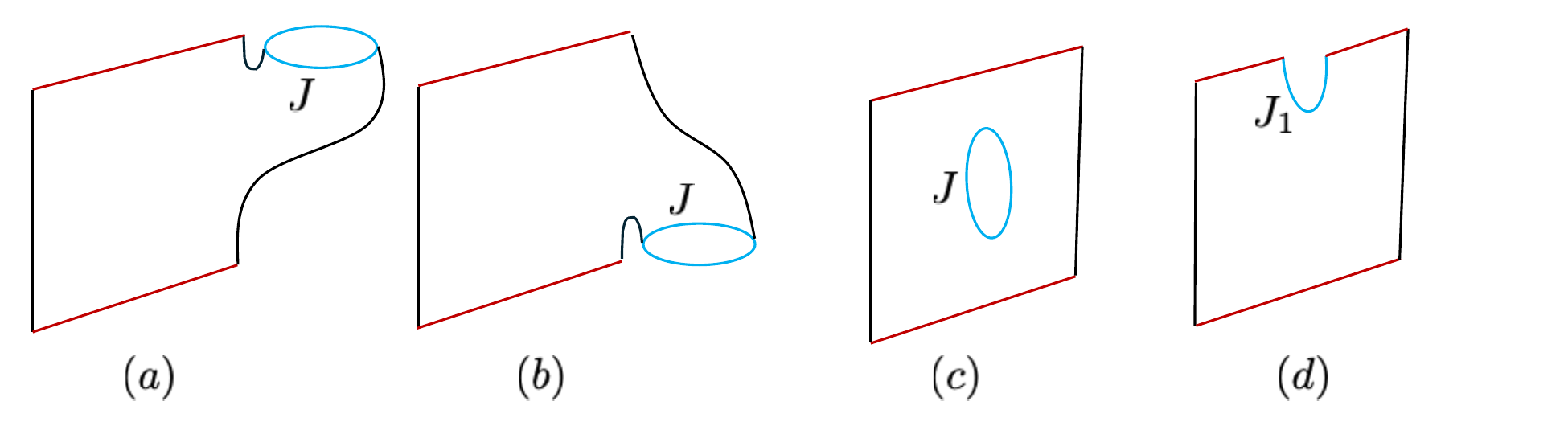}
    \caption{\small{The (a) emission and (b) absorption of a closed baby universe are topologically equivalent to (c). Here, $J$ is the choice of boundary conditions on the baby universe. (d) The process of emitting an open universe with boundary condition $J_1$ on the half-circle.}}
    \label{fig1}
\end{figure}

The Euclidean manifold that describes the emission of a single baby universe can be cut differently according to the figure \ref{fig2}. In this picture, the process of emission of a closed baby universe is a second-order effect describing the creation and annihilation of an open universe\footnote{The equivalence of these two pictures in two-dimensional quantum gravity is reminiscent of the closed-open string duality.}. The Hilbert space of closed baby universes is one-dimensional. If the closed baby universes come in different types, e.g., energy or length in JT gravity, we can associate a commutative algebra to baby universes\footnote{For a concrete example, see the commutative algebra of operators generated by (\ref{Ob}) in pure JT gravity.}. We will see that the Poisson limit theorem suggests that multi-mode coherent states in the Fock space of closed baby universes capture a universal contribution to the late-time physics of the open boundary. In the case of JT, the Poisson distribution manifests itself in the so-called $\tau$-scaling limit and the plateau in the multi-point generalization of the spectral form factors \cite{okuyama2020multi,blommaert2023integrable,saad2024convergent}. 


Recent progress in lower-dimensional models of quantum gravity revived interest in Euclidean wormholes \cite{saad2019jt}. In \cite{MM}, Marolf and Maxfield revisited the Fock space of closed and open boundaries in topological 2D gravity, elaborating the role of topology fluctuations in quantum gravity in asymptotically Anti-de Sitter spacetimes. 
Their construction was later generalized to arbitrary 2D open/closed topological quantum field theories in \cite{gardiner20212D,@Banerjee-Moore}, and the Jackiew-Teitelboim (JT) gravity in \cite{penington2023algebras}. For an explicit construction of a universe field theory for JT, see \cite{post2022universe}.

More generally, one can allow for the creation and annihilation of both closed and open universes, including asymptotic boundaries \cite{penington2023algebras}. Asymptotic open boundaries, in general, have noncommutative algebras. We introduce creation and annihilation operators for asymptotic open universes, and using a noncommutative Poisson limit theorem, argue that late-time physics ($t\sim 1/\Gamma$) is universally controlled by a noncommutative generalization of a Poisson process. A key result of this work is a mathematically rigorous framework called Poissonization that we argue captures this universal aspect of late-time fluctuations and summing over topologies.

\begin{figure}[t]
    \centering
    \includegraphics[width=0.6\linewidth]{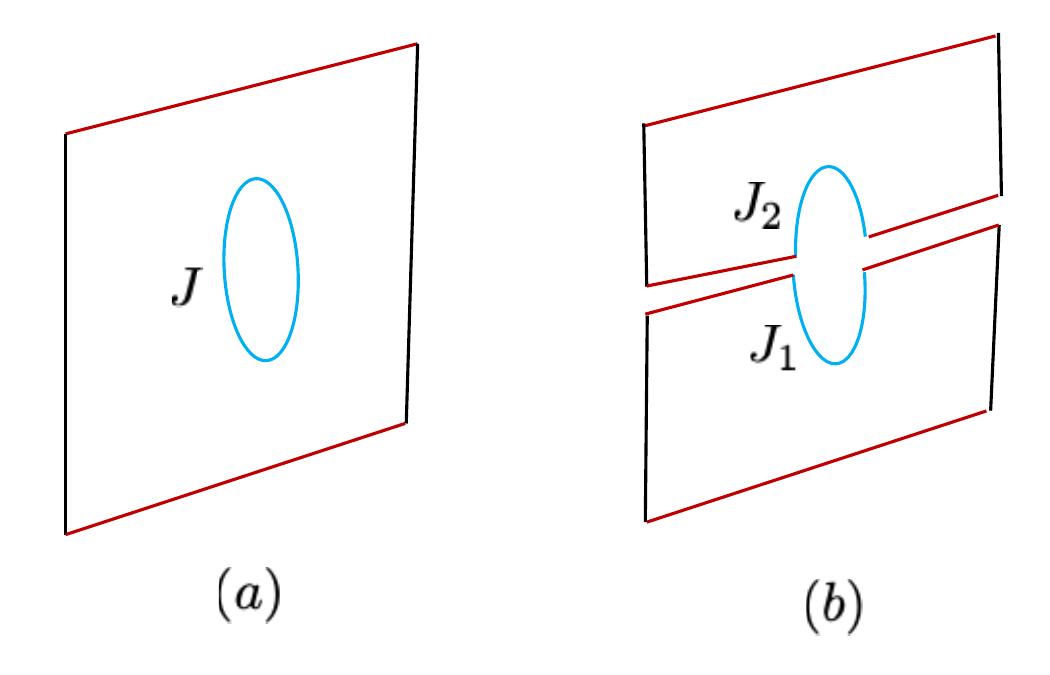}
    \caption{\small{Cutting a closed baby universe in half results in a one-dimensional Hilbert space that we refer to as an open baby universe. (b) We can choose different boundary conditions on the upper and lower half-circles. Then, the emission of a baby universe in (a) can be interpreted as a second-order process of emitting and reabsorbing an open baby universe.}}
    \label{fig2}
\end{figure}

In general relativity, one is expected to treat excitations on asymptotic open boundaries as distinguishable. This distinguishability naturally leads to the Poissonization framework. Poissonization takes as input the (noncommutative) observable algebra of a quantum system in an unnormalized state (weight) and lifts it to a (noncommutative) von Neumann algebra in a (noncommutative) coherent state represented on the Fock space.
We argue that indistinguishable bulk dynamical degrees of freedom can be accounted for by symmetrizing the correlators of Poissonization. We will see that the bulk dynamical fields are the trace of the boundary random operators. Physically, one can use the ETH to justify random operators, and the trace can be viewed as a conditional expectation from the noncommutative algebra of the open boundaries to the commutative algebra of closed boundaries.

This paper is structured as follows. In Section \ref{sec:sumovertopology}, we argue that the late-time effects of topology fluctuation as a rare discrete event can be modeled as a Poisson process. We motivate a noncommutative generalization of the Poisson process (Poissonization) using creation and annihilation operators for open universes, including the asymptotic ones. In Section \ref{sec:bipartitecoherent}, we present Poissonization as coherent states of bipartite quantum systems, and compare it to second quantization. In Section \ref{sec:babyuniversescommutative} and \ref{sec:babyuniversesnonAbel}, we show that the Marolf-Maxfield toy model of baby universes, the sum over bordisms of closed-open Topological Quantum Field Theory (TQFT), and the simplified Jackiew-Teitelboim (JT) gravity with the End of the World (EOW) branes are all examples of Poissonization. To keep the notation easier to follow, in Appendix \ref{app:symbols}, we have included a list of symbols and their definitions. 
 
\section{Sum over topologies and Poisson distribution}\label{sec:sumovertopology}

In subsection \ref{subsec:rareevents}, we start by reviewing the Poisson limit theorem, explaining why the statistics of rare and independent events are universally controlled by a Poisson distribution. The example familiar to physicists is the emission of quanta from a radioactive source. The key assumption that underlies this universality is that the emissions of particles at different times are rare and statistically independent events. The example of $\alpha$-decay most resembles our discussion of topology change, because both processes are discrete and rare due to the exponential suppression of the probability of quantum tunneling. However, the universality of the Poisson limit theorem also applies to any source that, for any other physical reason (e.g. very weak coupling), emits quanta with vanishingly small probability\footnote{In the Poisson limit theorem, all that matters is that the events are independent and rare, and we wait long enough so that the expectation value of the total number of emissions is order one (exponentially long times).}.

In Subsection \ref{subsec:coherentrareevent}, we model the emission and absorption of photons by radioactive atoms as quantum events, and argue that the universal Poisson distribution is reflected in the statistics of number operators in a coherent state of photons. In the remainder of this section, we argue that, in quantum gravity, the emission and absorption of closed and open universes are rare events, with an amplitude exponentially suppressed in entropy. In analogy with the emission of photons from a radioactive source, it is best modeled using Poisson statistics and coherent states. 

In the case of closed baby universes, the algebra of observables is commutative\footnote{For instance, as we see in Section \ref{sec:babyuniversescommutative}, in 2D closed topological TQFTs like Dijkgraaf-Witten, each boundary is labeled by an irreducible representation of a gauge group. The commutative algebra is generated by $\pi_r$, the projections to the irreducible representation $r$.}. The Poisson limit theorem points to the universality of multi-mode coherent states. In 2D dimensions, we can use the open/closed duality to view the process of emission of a closed baby universe as a second-order process of emission and reabsorption of an open baby universe. This allows us to contemplate the further generalization of allowing for the creation and annihilation of asymptotic open universes, as required by third quantization or universe field theory. The algebra of such boundaries is noncommutative. Subsection \ref{subsec:asymptoticopen} uses physics intuition to motivate the key ingredients of our main construction (Poissonization) as a universal model for third quantization of a general asymptotic boundary theory; e.g., the algebra of a single copy of $N=4$ super Yang-Mills.

\subsection{Rare events and Poisson distribution}\label{subsec:rareevents}


Let us start by modeling the emission of photons by decaying atoms as classical events. Define a random variable $n_t$ (number of emitted photons) associated with a time interval $(t,t+\Delta t)$, which is equal to one if a decay occurs in this time interval or zero if it does not. We say we have a rare event if the probability $p(n_t=1)\sim \epsilon\ll 1$ and we can neglect the probability of emitting multiple photons in a single time interval.

For a rare event, each $n_t$ is a binary random variable. Assume for now that the probability of decay in the time interval $(t, t+\delta t)$ is independent of $t$ (time-translation invariant process), and the random variable $n_t$ at different times is independent: $p(n_t, n_{t'}) = p(n_t)p(n_{t'})$. In other words, $n_t$'s are independent and identically distributed (i.i.d.) random variables. Then the probability of emitting at least $1$ photon after time $T = M\Delta T$ is given by\footnote{We warn the reader that, in this work, the variable $M$ will be used to denote the total number of events to avoid potential confusion with the total number of emissions that we denote by $N$.}:
\begin{equation}
    1 - (1 - \epsilon)^{M} \approx M\epsilon
\end{equation}
This probability is of $O(1)$ if $M = O(1/\epsilon)$ (i.e. $T = O(\Delta t / \epsilon)$). Typically, this is an exponentially late time (c.f. Subsection \ref{subsec:coherentrareevent}). At this late time, the probability of emitting a total of $k$ photons is given by the binomial distribution:
\begin{equation}
    N(T) = \sum_{t = 1}^{M} n_t
\end{equation}
\begin{equation}
    \text{Prob}(N(T) = k) = \binom{M}{k}\epsilon^k(1 - \epsilon)^{M - k}
\end{equation}
In the limit $\epsilon \rightarrow 0$, the probability density has a well-defined limit:
\begin{equation}
    \lim_{\epsilon\rightarrow 0}\text{Prob}(N(T) = k) = \frac{e^{-\lambda}\lambda^k}{k!}
\end{equation}
where $\lambda = M\epsilon = O(1)$. This is the Poisson distribution with intensity $\lambda$. For finite but infinitesimally small $\epsilon$, the binomial distribution can be effectively approximated by this Poisson distribution.

Alternatively, for a fixed time interval $[0, T]$, assume the \textit{rate} of emission is constant and is given by $\frac{\lambda}{T}$. We divide the time interval into $M$ equal-length subintervals: $0 = t_0 < \frac{T}{M} = t_1 < \frac{2T}{M} = t_2 < \cdots < \frac{(M - 1)T}{M} = t_{M - 1} < T = t_M$. Since the rate is constant, the probability of an emission within each subinterval is given by $\frac{\lambda}{T}\frac{T}{M} = \frac{\lambda}{M}$. This probability is small when $M \gg 1$. In this case, we can assume that the number of emissions with each subinterval is a Bernoulli random variable with success probability $\frac{\lambda}{M}$ and each Bernoulli random variable is mutually independent. The total number of emissions also follows the binomial distribution:
\begin{equation}
    \text{Prob}(N(T) = k) = \binom{M}{k}\lb \frac{\lambda}{M}\rb^{k}\lb 1 - \frac{\lambda}{M}\rb^{M - k}
\end{equation}
When $M\rightarrow\infty$, the Poisson limit arises:
\begin{equation}
    \lim_{M\rightarrow\infty}\text{Prob}(N(T) = k)= \frac{e^{-\lambda}\lambda^k}{k!}
\end{equation}
The result above is called the Poisson limit theorem. The universality of this result lies in the fact that it continues to hold even if the events have weak dependence on time and there are weak violations of independence. This means that the result above is robust against correlations that fall off fast enough at late times.

 More generally, consider a collection of atoms interacting with photons in a cavity. Assume that initially there are $k_i$ photons present in the cavity. At each time interval, the atoms can emit a photon or absorb one of the photons, and assume that emission and absorption are independent and equally rare events. Atoms can emit photons and reabsorb them later in an event that is doubly suppressed.
 Repeating the argument of the Poisson limit theorem, we find that the probability for the atoms to absorb a total of $k_i-m$ atoms and emit $k_f-m$ atoms is universally given by the multiplication of Poisson distributions
 \begin{eqnarray}\label{rareeventscoleman}
     p_\lambda(k_i,k_f,k;m)={k_i \choose m} {k_f \choose m} p_\lambda(k_i-m) p_\lambda(k_f-m)
 \end{eqnarray}
 where the binomials ${k_i \choose m}$ and ${k_f \choose m}$ count the number of ways one can emit $k_i-m$ and $k_f-m$ photons from $k_i$ and $k_f$ present, respectively. In subsection \ref{subsec:topol}, we will see that the argument above can be repeated if we replace photons with baby universes. Then, the Euclidean path-integral provides a pictorial representation of the Poisson process above; see Figure \ref{fig3}.

\begin{figure}[t]
    \centering
    \includegraphics[width=\linewidth]{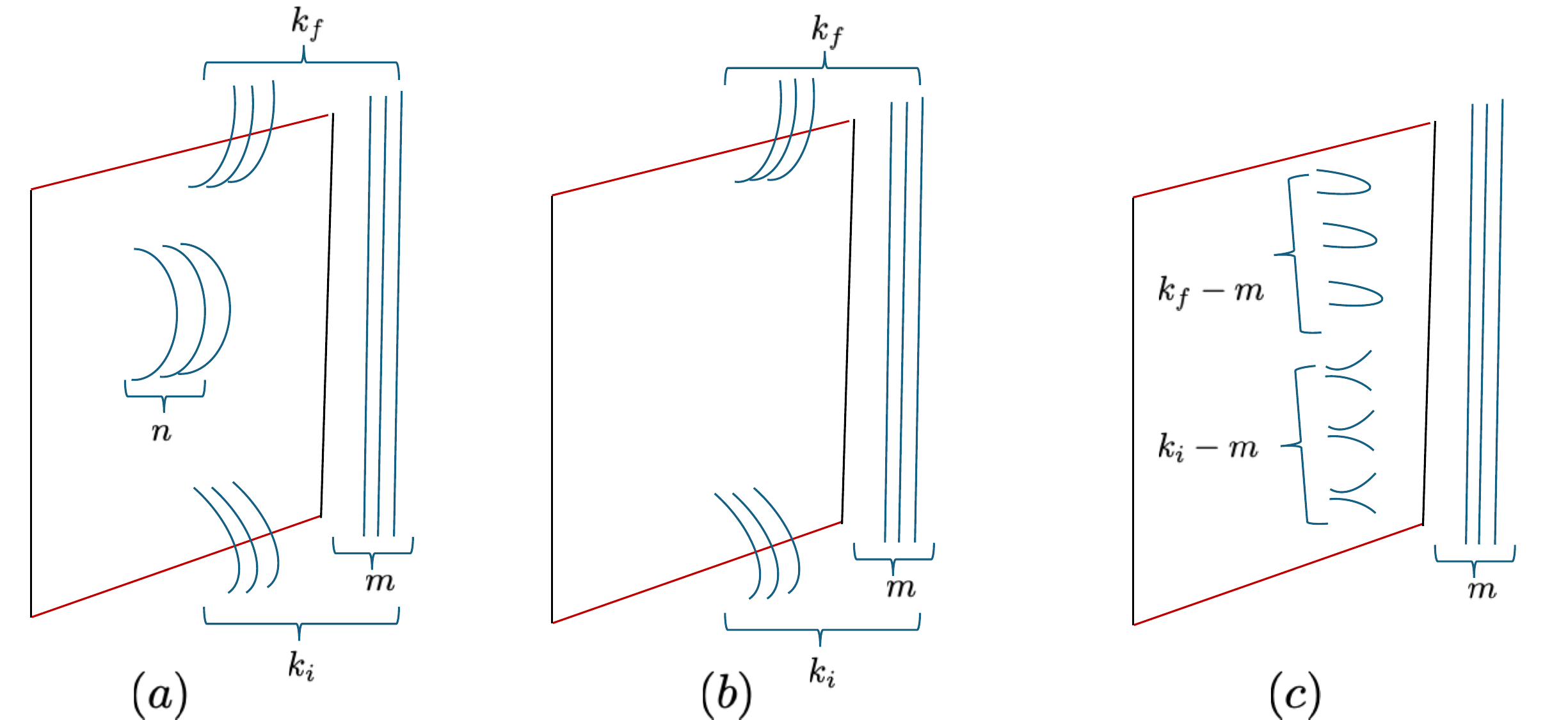}
    \caption{\small{(a) The Euclidean path-integral that represents the absorption of $k_i-m$ baby universes and emission of $k_f-m$ baby universes, while emitting and reabsorbing $n$ (b) Setting $n=0$, we can view this as an amplitude whose probability is described by (a) with $n\to k_f$ and $k_f\to k_i$. (c) By rearranging the resulting diagram, we can express it as the multiplication of probabilities (independent events).}}
    \label{fig3}
\end{figure}

Note that the Poisson limit theorem is a statement about the probability distribution of the total number of events\footnote{If the events are indistinguishable, e.g., photons in this example, then the total number of events is the only meaningful notion.}. Now, consider the atoms can emit photons in $k$ different wavelengths $b_i$, each as an independent rare event. Photons of the same wavelengths are indistinguishable; however, photons of different wavelengths are distinguishable. Repeating the Poisson limit argument, we find that the limiting distribution is the product of $l$ independent Poisson distributions.
\begin{eqnarray}
    p(N(b_1)=k(b_1), \cdots N(b_l)=k(b_l))=\prod_{i=1}^l p_{\lambda_i}(N(b_i)=k(b_i))\ .
\end{eqnarray}
where we have assumed that the weights $\lambda_i$ are different for different wavelengths.

\subsection{Rare quantum events, coherent states, and Poisson distribution}\label{subsec:coherentrareevent}
The discussion above treated emission and absorption of photons as classical events. Here, we generalize our discussion of the Poisson limit theorem, where the emission and absorption of atoms are treated as rare quantum events. We provide a microscopic explanation of how the Poisson limit theorem and coherent states arise as effective descriptions of quantum rare events for late-time observers.

Consider a parent quantum system $P$ and its Hilbert space $\mathcal{H}_P$. The parent system may undergo \textit{rare} events of emitting/absorbing subsystems. We assume all emitted subsystems are identical and all events are mutually independent. For the current discussion, we do not need to know any other details. Let $\mathcal{H}_C$ denote the subsystem's Hilbert space \footnote{The subscript $C$ stands for children.}. Because all emitted subsystems are identical, the entire composite system has a Hilbert space $\mathcal{H}_P\otimes\mathcal{F}_\text{sym}(\mathcal{H}_C)$.

For simplicity, we assume that the initial state of the system is $\ket{\Omega}$ \footnote{We use the familiar notation for vacuum to indicate that the initial state is assumed to be free of emissions.}. Each emission is described by a creation operator $a^\dagger_i$ where the emitted system is in the state $\ket{i}$. The entire system may have a nontrivial Hamiltonian-driven time evolution. However, this is inessential for our discussion here. Again, for simplicity, we assume the emitted subsystem is always in the stationary state $\ket{f}$, and the system only goes through the emission/absorption process with no other time evolution. The creation operator for the state $\ket{f}$ is denoted as $a^\dagger_f$, and the annihilation operator is denoted as $a_f$. 

Since we are concerned with late-time observers, we focus on quantum events in a fixed time interval $[0,T]$. As before, we consider dividing the interval into $M$ subintervals (c.f. Subsection \ref{subsec:rareevents}). Each subinterval has length $\frac{T}{M}$. When $M$ is sufficiently large, we assume there is at most one emission in each subinterval. By mutual independence, the number of emitted subsystems in each subinterval can be modeled by a Bernoulli (0-1) random variable $M_i$ ($1\leq i\leq M$) and the total number of emissions is simply the sum $\sum_{1\leq i\leq M}M_i$. The success probability of each Bernoulli random variable is proportional to the length of the interval $p_M := \lambda\frac{T}{M}$ \footnote{Here the subscript $M$ emphasizes the fact that the probability is proportional to the length of the subinterval and depends on $M$.}. The key feature is that the probability is proportional to the length of each subinterval \footnote{This is the key to arrive at the Poisson limit.}. In each subinterval, emission happens stochastically. The entire emission process can be modeled by:
\begin{equation}\label{equation:emissionevolution}
    \ket{\Omega}\mapsto\prod_{1\leq i\leq N}\lb a^\dagger_f\rb^{M_i}\ket{\Omega} = \lb a_f^\dagger\rb^{\sum_i M_i}\ket{\Omega}
\end{equation}
Notice the operator $\lb a_f^\dagger\rb^{M_i}$ is a random operator. It is identity $\mathbbm{1}$ when $M_i = 0$ and $a^\dagger_f$ when $M_i = 1$. This is one way to mathematically model the physical process of stochastic emission. The resulting state $\prod_{1\leq i\leq M}\lb a^\dagger_f\rb^{M_i}\ket{\Omega}$ is a random state. In terms of the number basis, it can be simply written as $\ket{\sum_i M_i}$. This state is random because the total number $\sum_i M_i$ is a random variable. Using the binomial distribution, the probability of $\sum_i M_i = k$ is given by the multinomial distribution:
\begin{equation}
    \text{Prob}(\sum_i M_i = k) = \binom{M}{k}p_M^k(1 - p_M)^{M - k}
\end{equation}
Therefore, we can write the density matrix of the random state in the more familiar form:
\begin{equation}
    \sum_{0\leq k\leq M}\text{Prob}(\sum_i M_i = k)\ket{k}\bra{k}
\end{equation}
where $\ket{k} := \frac{1}{\sqrt{k!}}\lb a_f^\dagger\rb^k\ket{\Omega}$. The factorial is to ensure the number state is properly normalized.

When $M$ approaches infinity, the Poisson limit kicks in. In this case, the density matrix effectively becomes:
\begin{align}
    \begin{split}
        &\lim_{M\rightarrow\infty}\sum_{0\leq k\leq M}\text{Prob}\lb\sum_i M_i = k\rb\ket{k}\bra{k} \nn\\
        &= \lim_{M\rightarrow\infty}\sum_{0\leq k\leq M}\binom{M}{k}\lb\lambda\frac{T}{M}\rb^k(1 - \lb\lambda\frac{T}{M}\rb)^{M - k}\ket{k}\bra{k}
        \\&=\sum_{k\geq 0}\frac{e^{-\lambda T}\lb\lambda T\rb^k}{k!}\ket{k}\bra{k}
    \end{split}
\end{align}
This is the classical-quantum state corresponding to the Poisson distribution. This state is intimately related to the coherent state $\ket{W((\lambda T)^{1/2})}$ (c.f. Subsection \ref{subsection:coherentexamples}). First, we calculate the expectation value of the total number operator $\widehat{N}$:
\begin{align}
    \begin{split}
        \bra{W((\lambda T)^{1/2})}\widehat{N}\ket{W((\lambda T)^{1/2})} &= \sum_{k,l\geq 0}\frac{e^{-\frac{\lambda T}{2}}(\lambda T)^{k/2}}{\sqrt{k!}}\bra{k}\widehat{N}\frac{e^{-\frac{\lambda T}{2}}(\lambda T)^{l/2}}{\sqrt{l!}}\ket{l}
        \\&=\sum_{k\geq 0}\frac{e^{-\lambda T}(\lambda T)^k}{k!}\bra{k}\widehat{N}\ket{k}
    \end{split}
\end{align}
This is precisely the expectation value of $\widehat{N}$ under the Poisson classical-quantum state. More generally, for any operator that \textit{preserves} the number of emissions (i.e., commutes with $\widehat{N}$) \footnote{It is well-known that the number-preserving operators form a von Neumann algebra. This algebra is generated by operators of the form: $\Gamma(x) = \sum_{k\geq 0}\mathcal{O}^{\otimes k}$ \cite{quantStoch}.}, we have:
\begin{align}
    \begin{split}
        \bra{W((\lambda T)^{1/2})}\Gamma(x)\ket{W((\lambda T)^{1/2})} &= \sum_{k,l\geq 0}\frac{e^{-\frac{\lambda T}{2}}(\lambda T)^{k/2}}{\sqrt{k!}}\bra{k}\Gamma(x)\frac{e^{-\frac{\lambda T}{2}}(\lambda T)^{l/2}}{\sqrt{l!}}\ket{l}
        \\
        &=\sum_{k\geq 0}\frac{e^{-\lambda T}(\lambda T)^k}{k!}\bra{k}\mathcal{O}^{\otimes k}\ket{k}
    \end{split}
\end{align}
Again, this is precisely the expectation value of a generic number-preserving operator $\Gamma(x)$ under the Poisson classical-quantum state. 

The Poisson classical-quantum state is an example of what we call a Poisson state in Poissonization (c.f. Section \ref{sec:bipartitecoherent} \footnote{Here the Poisson state is a classical-quantum state. In general, this need not be the case. For more examples of Poisson state in Poissonization, refer to Section \ref{sec:bipartitecoherent}}.  The discussion above shows that \textit{if the observables are restricted to number-preserving operators}, the Poisson state coincides with the pure density matrix formed by the coherent state. 

\paragraph{Quantum Markovian Evolution} An alternative description of the emission process discussed above is given by a quantum Markov semigroup. It is well-known that under mild continuity conditions, a quantum Markov semigroup is driven by a Lindbladian \cite{lindbladian, davies, davies2, davies3, cipriani}. And for spontaneous decay with a single decay mode (i.e. $\ket{f}$), the emission Lindbladian is given by \footnote{The Lindbladian here does not satisfy detailed balance because we only considered the emission process.}:
\begin{equation}
    \mathcal{L}_*\rho = a_f\rho a_f^\dagger - \frac{1}{2}\{a^\dagger_fa_f, \rho\}
\end{equation}
And the master evolution equation is given by:
\begin{equation}
    \frac{d}{dt}\rho(t) = \mathcal{L}_*\rho
\end{equation}
For an observer interested in the evolution of individual quantum trajectories, one can apply the so-called \textit{quantum jump method} to decompose the evolution of individual quantum trajectories into two steps:
\begin{enumerate}
    \item \textbf{Continuous (non-)Hermitian evolution:} During this step within an infinitesimal time interval $\delta t$, the system evolves with non-Hermitian Hamiltonian. In our case, this non-Hermitian Hamiltonian is given by:
    \begin{equation}
        H_\text{eff} = \frac{i\hbar}{2}a^\dagger_fa_f
    \end{equation}
    This is proportional to the number operator. This non-Hermitian Hamiltonian drives dissipation. For each number state $\ket{k}$, the state decays by $e^{-\frac{\hbar k}{2}}\ket{k}$ eventually converging to $0$. In other words, in the long-time limit, this non-Hermitian evolution projects onto the $0$-emission state. 
    \item \textbf{Quantum jumps:} Within an infinitesimal time interval $\delta t$, a jump (i.e. an emission) occurs with probability proportional to $\tr(a_f\rho a_f^\dagger)\delta t$. For the Poisson state $\rho = \sum_{k\geq 0}\frac{e^{-\lambda}\lambda^k}{k!}\ket{k}\bra{k}$, the jump probability is given by:
    \begin{equation}
        \tr(a_f\rho a^\dagger_f)\delta t = \lambda \delta t
    \end{equation}
    This shows that the pure jump component of the quantum trajectory is indeed modeled by a Poisson process with intensity $\lambda$.
\end{enumerate}

\subsection{Topology change as a rare event}\label{subsec:topol}

 In quantum gravity, we can repeat the analysis above, replacing atoms with an asymptotic open universe with a Hilbert $\mH_{open}$ and photons with closed baby universes to mimic the probe limit of topology fluctuations. Going beyond the probe limit, in an analogy, we think of a cavity that emits entangled pairs of atoms (asymptotic open universes) in some state of $\mH_{open}$, as well. In this section, we mostly focus on two-dimensional quantum gravity.
 
 In gravity, we will consider three different types of universes:
 \begin{itemize}
     \item {\bf Asymptotic open universes:} Asymptotic open universes have large-dimensional Hilbert spaces ($e^S\gg 1$) and a non-commutative algebra of observables. We can view the open universe as the gravity dual of an entangled state of two holographic boundary conformal field theories. It is often convenient to consider the thermofield double state so that the state of the open universe is prepared using a gravitational Euclidean path integral on a half-disk in 2D. 
     There are various equivalent ways to describe the state of an open boundary (see Figure \ref{fig4}): 1) a (entangled) bipartite state of two boundary theories  2) an operator in the algebra of one boundary CFT that creates this state by acting on the unnormalized maximally entangled state: $\ket{\mO}_{11'}=\sum_l(\mO\otimes 1)\ket{ll}_{11'}$, 3) the gravitational wave-function on the co-dimension one Cauchy surface of the bulk dual of the state.

\item {\bf Closed baby universes:} Closed baby universes have one-dimensional Hilbert spaces and a trivial Abelian algebra if they come in different flavors $J$. In a gravitational theory, closed baby universes carry labels set by the boundary conditions of the metric $J$. In a topological gauge theory, such as BF theory, they can be labeled by an irreducible representation of a gauge group. In 2D pure JT gravity, closed baby universes are labeled by a positive parameter $b>0$, which corresponds to the circumference (volume) of the closed baby universe.

\item{\bf Open baby universes:} We can slice a gravitational path-integral such that the closed baby universe is cut in half, resulting in what we call open baby universes; see Figure \ref{fig2}. Each open baby universe is labeled by the boundary condition $J$ of the metric, the disk in 2D (half-sphere in higher dimensions). This allows for introducing different boundary conditions $J$ and $J'$ on the two halves of a closed baby universe; see Figure \ref{fig2}. 

 \end{itemize}
 In gravity, it is natural to treat excitations on asymptotic boundaries as distinguishable and baby universes as indistinguishable. This means that a configuration with two asymptotic boundaries described by the pair of operators $(\mO_1,\mO_2)$ is distinct from that of $(\mO_2,\mO_1)$; i.e. order matters \footnote{Equivalently, in describing asymptotic open universes in vectors in bipartite systems the set $(\ket{\Psi_1},\ket{\Psi_2})$ is an ordered set.}. Whereas for baby universes, there is no meaningful way of defining an order; see Figure \ref{fig5}. This is reflected in the fact that, as opposed to open baby universes that have a commutative algebra, the algebra of observables of asymptotic open boundaries is, in general, noncommutative. The location of the baby universe in the gravitational path integral is decided dynamically in the bulk, which means that in the quantum theory, we will have to sum over all insertions. In comparison, the states of the asymptotic boundary universes are fixed and not summed over.

\begin{figure}[t]
    \centering
    \includegraphics[width=1\linewidth]{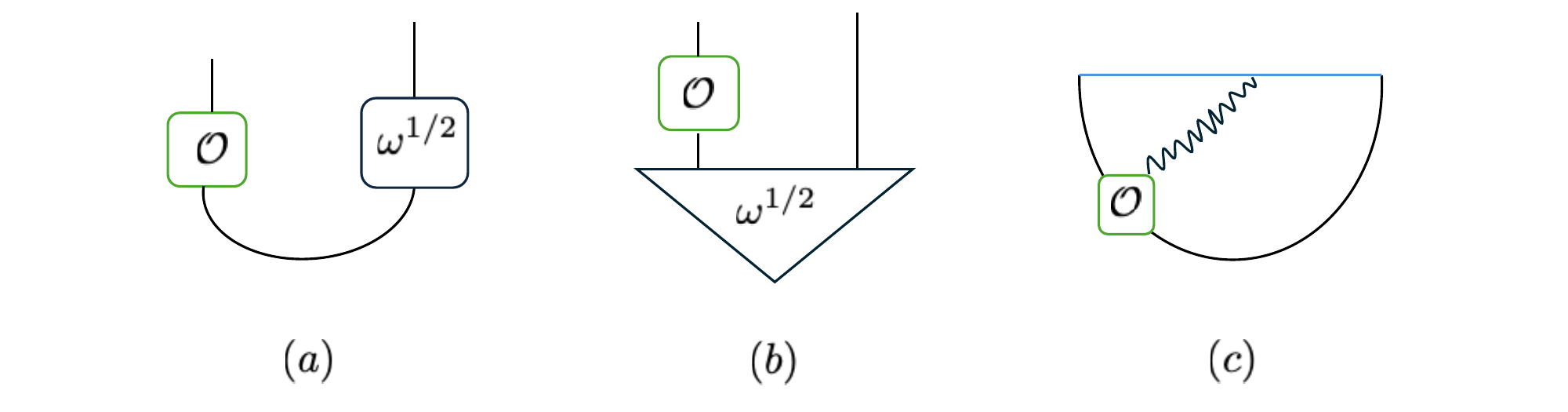}
    \caption{\small{(a) and (b) are equivalent representations of the state $(\mO\otimes 1)\ket{\omega^{1/2}}$ using tensor diagrams. (c) In a gravitational setup, the insertion of some operator $\mO$ is dual to a deformation of the bulk geometry.}}
    \label{fig4}
\end{figure}

\paragraph{Euclidean gravitational path-integral:}

Earlier discussions of topology change in gravity focused on the emission and absorption of closed baby universes by the probe limit of one asymptotic open universe (parent universe). In \cite{giddings1988loss}, using Euclidean gravitational path integrals and the dilute gas approximation of instantons, the authors computed the amplitude for a parent universe with $k_i$ initial baby universes to evolve to $k_f$ final baby universes after Euclidean evolution for $T$, emitting and absorbing a total of $k_i-m$ and $k_f-m$ baby universes, respectively, and emitting and reabsorbing $n$ baby universes:
\begin{eqnarray}\label{instantoncount}
    \bra{k_f}e^{-H T}\ket{k_i}_{m}&&\equiv \sum_n  \bra{k_f}e^{-H T}\ket{k_i}_{m,n}\nn\\
    &&=\sum_n \frac{\sqrt{k_i!k_f!}}{2^nm!n!}\frac{z^{2n+k_i+k_f-2m}}{(k_i-m)!(k_f-m)!} \nn\\
    &&=e^{\frac{1}{2}z^2}\sqrt{k_i!}\sqrt{k_f!}\sum_{m=0}^{\min(k_i,k_f)}\frac{z^{k_i+k_f-2m}}{m!(k_i-m)!(k_f-m)!}
\end{eqnarray}
for $z=e^{-S_E}VT\equiv \sqrt{\lambda}$ where $V$ is the volume of the parent universe, and $H=e^{-S_E}V (a+a^\dagger)$. Here, $S_E$ is the on-shell action of the instanton that describes the tunneling amplitude corresponding to the emission/absorption of a baby universe; see Figure \ref{fig3} \cite{giddings1988axion}. 
The $n=0$ term in the summand in (\ref{instantoncount}) can be interpreted as the amplitude of tunneling from a state with $k_i$ initial baby universes to a state with $k_f$ final baby universes, in a process that involves only the absorption of $k_i-m$ and the emission of $k_f-m$ baby universes. The probability for this process is
\begin{eqnarray}
&&|\bra{k_f}e^{-HT}\ket{k_i}_{m,0}|^2={k_i \choose m}{k_f \choose m}p_\lambda(k_f-m)p_\lambda(k_i-m)
\end{eqnarray}
where $p_\lambda(n)$ is a Poisson distribution of weight $\lambda=z^2$. This matches precisely the answer we found in (\ref{rareeventscoleman}).

As we argued earlier, the appearance of Poisson distributions in a count of instantons in the dilute gas approximation is not surprising. The result (\ref{instantoncount}) follows from a counting argument whose combinatorics are controlled by the number of set partitions. The moments of Poisson distributions precisely capture such combinatorics; see Appendix \ref{app:setpartitions}. Figure \ref{fig3} shows pictorially why we should expect the instanton count to match the counting in (\ref{rareeventscoleman}). 

In the two-dimensional quantum gravity models we study explicitly, the small parameter $e^{-S_E}$ is replaced by a small parameter $e^{-S_0}$ that controls the genus expansion. We will see that, in the random matrix theory description, the small parameter $e^{-S_0}$ is proportional to the averaged density of states $\langle\rho(E)\rangle$ that we sometimes denote by $\rho_0(E)$. For this reason, in the remainder of this section, we treat $S_0$ as a measure of the entropy of the system in some microcanonical or canonical ensemble, and sometimes denote it by $S$.

\paragraph{Lorentzian gravitational path-integral:} The connection to the Poisson limit theorem becomes more direct once we consider the contribution of baby universes to the real-time path integrals. Consider the path integral that prepares a thermofield double state with some inverse temperature $\beta$. 
This state evolves in real time with some complicated Hamiltonian $H_L+H_R$. We could alternatively use the bulk Hamiltonian to evolve the Lorentzian geometry forward in time. We denote the eigenstates of the bulk Hamiltonian by $\ket{E}$. There is a natural timescale associated with this time evolution that we set to one in our normalization of the Hamiltonian. In a real-time interval $(0, \Delta t)$ with $\Delta t=O(1)$, the emission or absorption of a baby universe is a rare quantum tunneling event with probability suppressed by $e^{-2S_E}$, and we can ignore the possibility of multiple events.

We split real time into many intervals $I_n=(n\Delta t,(n+1)\Delta t)$. In terms of the bulk path-integral, the emission/absorption process is described by inserting a cylinder inside the Lorentzian path-integral at some time interval $I_n$; see Figure \ref{fig5}. 
The insertions are local in the time of the parent universe; however, there is no sense in chronologically ordering them in the bulk. This implies that the operator insertions that describe the emission/absorption of baby universes should commute. 
The emissions/absorptions of baby universes in different time intervals are independent and rare events. The Poisson limit theorem applies, and we conclude that there exists a universal Poisson term in the statistics of the number of closed baby universes for times $T\sim e^{2S_0}$ \footnote{Note that the Poisson limit theorem can tolerate correlations as long as they fall off fast enough. Strictly speaking, the emission/absorptions during different time-intervals are not independent; however, as we saw in Section \ref{subsec:rareevents} (see also Appendix \ref{app:universalityPoisson}), if the correlations between them fall off fast enough, we can still apply the Poisson limit theorem to argue that after time $T\sim e^{2S_0}$, there is a universal Poisson contribution to the probability of observing $k$ closed universes.}.

It is instructive to elaborate more on the origin of the Poisson limit theorem in this case. To each time interval $I_n$, we associate a two-dimensional Hilbert space $\mH_{I_n}$, spanned by the vectors $\ket{0_n}$ (no baby universe) and $\ket{1_n}$ (a baby universe)\footnote{There are no kets $\ket{2}$ at this point, because the event is so rare that we ignore the multiple occurances in a single time interval.}. The total Hilbert space is 
\begin{eqnarray}
    \mH_{parent}\otimes \lb\otimes_n \mH_{I_n}\rb\ .
\end{eqnarray}
To allow for the emission/absorption of a baby universe we couple $\mH_{parent}$ to each $\mH_{I_n}$ via the interaction Hamiltonian $H^{(n)}_{int}=g(\mO_{parent}\otimes X_n)$, where $X_n=\ket{1_n}\bra{0_n}+\ket{0_n}\bra{1_n}$. Then, the amplitude for the emission/absorption of a baby universe in a single interval in the expansion $g\Delta t \sim e^{-S_0}\ll 1$ is  
\begin{eqnarray}\label{transitionamp}
    \bra{E',1_n}e^{-i(H_{parent}+H_{int})(\Delta t)}\ket{E,0_{n-1}}&&=(g\Delta t)e^{-iH_E(\Delta t)}\bra{E'}\mO\ket{E}+O(g\Delta t)^2
\end{eqnarray}
where $\ket{E}$ are the eigenkets of $H_{parent}$. We will be working in the order of limits $1\ll \Delta t\ll e^{S_0}$ \footnote{More specifically, we need to keep $\Delta t$ large but much smaller than the scrambling time \cite{chandrasekaran2023large}.}. Note that in the description above, we are representing an event that is rare due to quantum tunneling by a spontaneous emission that is rare due to a small coupling $g$. 

To make our discussion of the universal Poisson term most pedagogical, we focus on a ``topological" idealization that assumes $\mH_{parent}$ is one-dimensional. For example, this can occur if the parent universe is an open baby universe (a closed baby universe, circle cut in half). In this simplified model, the only dynamical process allowed is the emission/absorption of baby universes. This idealization has the advantage that, as we see in topological examples of Section \ref{sec:babyuniversescommutative}, the universal Poisson term is the only term that controls the statistics of baby universes\footnote{In systems with non-topological fluctuations, one needs to take a late-time limit to isolate this Poisson term. We will see that in JT gravity, this late-time limit is the $\tau$-scaling limit studied in \cite{okuyama2020multi,blommaert2023integrable,saad2024convergent}.}.

 Since the baby universes are indistinguishable, we should symmetrize our tensor product Hilbert space in the symmetric Fock by direct summing sectors with different numbers of baby universes: 
  \begin{eqnarray}
      \mathcal{F}_{closed} =\overline{\oplus_n \otimes_\text{sym}^n\mH_{closed}}\ .
  \end{eqnarray}
By the universality of the Poisson limit theorem, as long as $e^{-S_0}$  is exponentially small, the operator $1\otimes X_n$ can be replaced by any other evolution that acts trivially in the parent universe. In the Fock space of closed baby universes, a convenient choice of interaction is $1\otimes g(a+a^\dagger)$, resulting in the time evolution operator that is a Weyl unitary operator $1_{parent}\otimes W(\delta z)$ with $\delta z=(g \Delta t)\sim e^{-S_0}$. 
As we saw in Section \ref{subsec:coherentrareevent}, by the Poisson limit theorem, the statistics of $\hat{N}$ after long time $T\sim e^{2S_0}$ is well-approximated by the coherent state $W(Te^{-S_0})\ket{0}$ in $\mathcal{F}_{closed}$:
  \begin{eqnarray}
      |\bra{n}W(\sqrt{T}e^{-S_0})\ket{0}|^2=p_{Te^{-2S_0}}(n)
  \end{eqnarray}
  where the Poisson weight $Te^{-2S_0}$ grows linearly in time, and ket $\ket{n}$ is the eigenket of $\hat{N}$ (the total baby universe number operator).

If closed baby universes come in $d$ different types. Then, we have a rare event with $d+1$ possibilities: no emission, or emission of a baby universe of type $i$. Repeating the argument above, from the Poisson limit theorem, we obtain that the state is well-described by a $d$-mode coherent state with a pair of creation/annihilation operators for each type: $[a_i,a^\dagger_j]=\delta_{ij}$. In Section \ref{sec:babyuniversescommutative}, we will find that multi-mode coherent states control the statistics of baby universes in a general closed 2D topological quantum field theory\footnote{In a general 2D quantum gravity, we can label a closed baby universe boundary condition $J$ on the boundary circle as a parameter. For example, in JT gravity, closed baby universes can be labeled by positive $b>0$, which is the geodesic length corresponding to a fixed length boundary condition.}.

In a general 2D topological model, the algebra of the parent universe is the commutative algebra $\oplus_r x_r \mathbb{I}_r$ where $\mathbb{I}_r$ are $d_r\times d_r$ dimensional identity operators; see Section \ref{sec:babyuniversescommutative}. We still have $H_{parent}=0$, but the interaction can take the form
\begin{eqnarray}
    H=\sum_r g_r \mathbb{I}_r\otimes X_r^{(n)}\ . 
\end{eqnarray}
For example, this can occur if the parent universe is an open baby universe that comes in different types $r=1,\cdots, d$. We see in Section \ref{sec:babyuniversescommutative}, as the Poisson limit theorem suggests, the sum over topologies in this model matches the statistics of the number operator in a multi-mode coherent state.

\subsection{Open baby universes}\label{subsec:openbaby}

In the Euclidean path-integral of 2D quantum gravity, e.g. JT gravity, the amplitude for the emission of a closed baby universe with label $J$ is described by the overlap $\braket{\Psi_2,J}{\Psi_1}$ where the incoming state is $\ket{\Psi_1}$ is in the  Hilbert space are states of the asymptotic open boundary and the outgoing state is the tensor product $\ket{\Psi_2,J}\equiv\ket{\Psi_2}\otimes \ket{J}$ and we have emitted an open baby universe in state $\ket{J}$. The complex conjugate amplitude $\braket{\Psi_1}{\Psi_2,J}$ has the interpretation of absorption of a baby universe.

A more symmetric way of cutting this path integral is to split the closed baby universe in half. In this way of cutting the manifold, we interpret the path integral as the norm of a vector $\ket{\Psi,J_i}$, where $\ket{J_i}$ belongs to a different Hilbert space labeled by the boundary condition on a half-disk, and we have the inner product
\begin{eqnarray}
    \braket{J_i}{J_j}\sim\delta_{ij}\ .
\end{eqnarray}
Suppose there are $k$ different types of open (and hence closed) baby universes. Denoting $\ket{J_i}$ by $\ket{i}$ we have an commutative algebra of the projections $e_{ii}=\ket{i}\bra{i}$ with $i=1,\cdots, k$. The Poisson statistics naturally arise in this setting. The parent universe stochastically emits and then annihilates open baby universes. The key observation is that once we introduce annihilation/creation operators $a_i, a_i^\dagger$ for an open universe of $i$-th type, closed baby universes are automatically included as a second-order effect of a creation followed by annihilation\footnote{Mathematically it is well-known that classical Poisson distribution or Poisson random process (or more generally infinitely divisible process) can be realized by introducing creation/annihilation operators on a symmetric Fock space. This is the basis of quantum stochastic calculus \cite{quantStoch}.}.

One way to study the stochastic emission process is to focus on the evolution of the no-open-baby-universe initial state $\ket{\Omega}$ (c.f. Equation \ref{equation:emissionevolution}):
\begin{equation}
    \ket{\Omega}\mapsto \prod_{1\leq i\leq k}\lb a^\dagger_i\rb^{N_i(T)}\ket{\Omega}
\end{equation}
where $N_i(T)$ is the total number of emissions of type $i$ open universe after time $T$. This equation is a simple mathematical model where within time $T$ there could be $N_i(T)$ emissions of open universes of type $i$. Because the emissions are stochastic, the number $N_i(T)$ is a random variable. Hence, this state is a random state with a diagonal density matrix:
\begin{equation}
    \sum_{\ell_1,\cdots,\ell_k\geq 0}\text{Prob}(N_1(T) = \ell_1,\cdots,N_k(T) = \ell_k)\ket{\ell_1,\cdots,\ell_k}\bra{\ell_1,\cdots,\ell_k}
\end{equation}
where $\ket{\ell_1,\cdots,\ell_k}$ is the number eigenstate with $\ell_i$ many type $i$ open universes. It turns out that the probability is a product of Poisson distributions  \footnote{If we coarse-grain the interval to $M$ equal-length subintervals, the random variable $N_i(T)$ can be approximated by the sum of $M$ mutually independent identically distributed Bernoulli random variables. As $M\rightarrow \infty$ (i.e. as the division becomes finer), the limiting distribution of $N_i(T)$ approaches the Poisson distribution. This is the well-known Poisson limit theorem. And this is the mathematical reason why the probability is Poisson.}:
\begin{equation}
    \text{Prob}(N_1(T) = \ell_1,\cdots,N_k(T) = \ell_k) = \prod_{1\leq i\leq k}\text{Prob}(N_i(T) = \ell_i) = \prod_{1\leq i\leq k}\frac{e^{-\lambda_i T}(\lambda_i T)^{\ell_i}}{\ell_i!}
\end{equation}
Here $\lambda_i$ is the \textit{rate} of emission of the open universe of type $i$. Notice this rate of emission is exponentially suppressed in $G_N$. The diagonal density matrix can be written as a tensor product:
\begin{eqnarray}
        &&\sum_{\ell_1,\cdots,\ell_k\geq 0}\text{Prob}(N_1(T) = \ell_1,\cdots,N_k(T) = \ell_k)\ket{\ell_1,\cdots,\ell_k}\bra{\ell_1,\cdots,\ell_k}\nn\\
        &&= \otimes_{1\leq i\leq k}\sum_{\ell_i\geq 0}\frac{e^{-\lambda_i T}(\lambda_i T)^{\ell_i}}{\ell_i!}\ket{\ell_i}\bra{\ell_i}
\end{eqnarray}
From a mathematical point of view, the density matrix above describes a classical probability distribution over pure states (a random state), which we may also denote by
\begin{equation}
    \ket{N_1(T),\cdots,N_k(T)}
\end{equation} 
where $N_i(T)$ are classical random variables.

This random state corresponds to a particular way of cutting the path integral that splits the closed baby universes in half. A general operator insertion in the path integral results in a formula of the generic form $\bra{\Psi, i}\mathcal{O}\ket{\Psi, j}$ where $\mathcal{O}$ is a closed-baby-universe observable. In our concrete model, the generic formula takes the concrete form:
\begin{equation}
\mathbb{E}\bra{N_1(T),\cdots,N_k(T)}\mathcal{O}\ket{N_1(T),\cdots,N_k(T)}
\end{equation}
where the expectation is on the random variables $N_i(T)$'s.

The Hilbert space of closed baby universes is trivial (one-dimensional), and they come in $k$ different flavors; therefore, the only relevant observables in the Fock space are $N_i(T)$, which generate a commutative algebra. 
From the point of view of an observer in the parent universe, the state of the system is effectively a multi-modal coherent state\footnote{The statistics of the number operator in a coherent state is given by a Poisson distribution. For a more detailed discussion of coherent state and Poisson distribution, see  \cite{QuantumOptics}.}:
\begin{equation}
    \ket{W(z_1,\cdots,z_k)} = e^{\sum_iz_ia_i - \bar{z_i}a^\dagger_i}\ket{\Omega}
\end{equation}
And a generic correlator has the form:
\begin{equation}
    \bra{W(z_1,\cdots,z_k)}N_{i_1}N_{i_2}\cdots N_{i_p}\ket{W(z_1,\cdots,z_k)}
\end{equation}

The generating function for the connected correlations is:
\begin{equation}
    \log\bra{W(z_1,\cdots,z_k)}e^{\sum_i u_i N_i}\ket{W(z_1,\cdots,z_k)} = \sum_i |z_i|^2(e^{u_i} - 1)
\end{equation}
As a result, a generic connected correlator is given by:
\begin{equation}
    \bra{W(z_1,\cdots,z_k)}N_{i_1}\cdots N_{i_p}\ket{W(z_1,\cdots,z_k)}_\text{conn} = |z_{i_1}\cdots z_{i_p}|^2\delta_{i_1\cdots i_p}
\end{equation}
Here, $\delta_{i_1\cdots i_p}=1$ if $i_1=i_2=\cdots =i_p$ and zero otherwise.
It follows that 
\begin{eqnarray}
&&\bra{W(z_1,\cdots,z_k)}\lb N_i-|z_i|^2\rb\ket{W(z_1,\cdots,z_k)}=0\nn\\
       && \bra{W(z_1,\cdots,z_k)}\lb N_i-|z_i|^2\rb\lb N_j-|z_j|^2\rb\ket{W(z_1,\cdots,z_k)}=|z_i|^2 \delta_{ij}\ .
\end{eqnarray}
We can interpret the equations above as the one-point removed number operators
\begin{eqnarray}
 N_i-\bra{W(z_1,\cdots,z_k)}N_i\ket{W(z_1,\cdots, z_k)}=N_i-|z_i|^2   
\end{eqnarray}
creating type $i$ excitation on top of the coherent state. The Hilbert space spanned by $\{(N_i-|z_i|^2)\ket{W(z_1,\cdots,z_k)}\}_{1\leq i\leq k}$ is isomorphic to the Hilbert space of \textit{single} closed baby universe states. 

From this perspective, the coherent state $\ket{W(z_1,\cdots, z_k)}$ is a Hartle-Hawking state, and the action of the one-point removed number operator creates a closed open universe excitation. The form of the connected correlation function shows that only closed baby universes of the same type can have correlations. A simple combinatorial calculation will show that the full correlation function between $N$ closed baby universes of the same type counts the number of ways these universes connect. This discussion will come up again in the Marolf-Maxfield toy model of 2D topological gravity in subsection \ref{MMmodel}.

\begin{figure}[t]
    \centering
    \includegraphics[width=.8\linewidth]{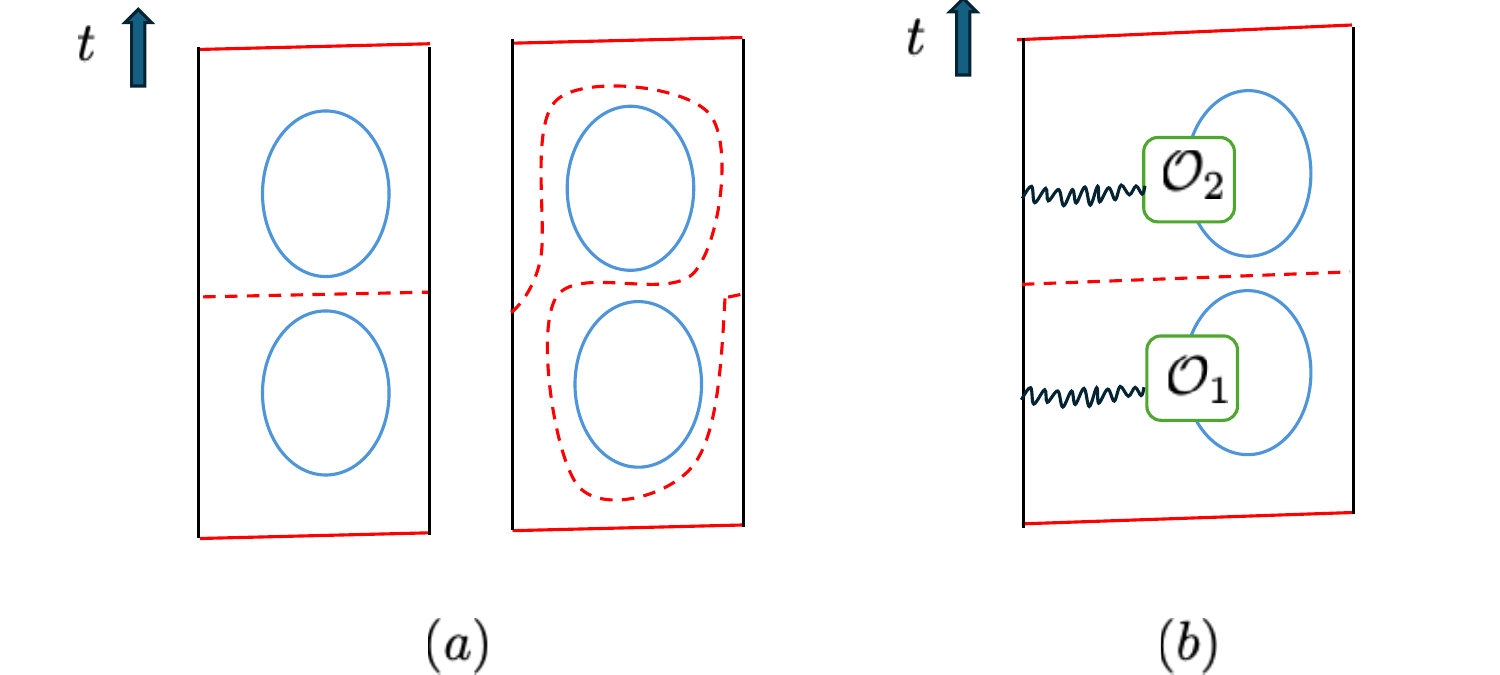}
    \caption{\small{(a) The Lorentzian path integral that corresponds to the emission of two baby universes can be cut, resulting in a change in the order of operator insertions. The equivalence of the results implies that the algebra of closed baby universes is commutative. (b) For an asymptotic open universe with a deformed state corresponding to the insertion of operator $\mO$, the order matters unless $\mO_1=\mO_2$.}}
    \label{fig5}
\end{figure}

\subsection{Asymptotic open universes and noncommutative Poisson}\label{subsec:asymptoticopen}

We can now generalize our construction, allowing for the creation of asymptotic open universes with noncommutative algebras. These are no longer baby universes because we can have a copy of CFT living on each asymptotic boundary. An asymptotic 2D open universe can be viewed as a Euclidean path integral over a half-disk with two endpoints on the boundary semi-circle. Each endpoint carries a label which represents a basis vector in the Hilbert space $\mathcal{K}$. The open-universe Hilbert space is given by:
\begin{equation}
    \mathcal{H}_\text{open} = \mathcal{K}\otimes\mathcal{K}
\end{equation}
The creation of an asymptotic open universe is a two-step process: 
\begin{enumerate}
    \item Create a thermofield double state from the vacuum
    \item Insert an operator that modifies the boundary condition (i.e., the boundary label)
\end{enumerate}
Note that open universes in the same state are indistinguishable, whereas those in different states are distinguishable \footnote{More generally, in gravity we should treat excitations on asymptotic boundaries as distinguishable and dynamic excitations in the bulk as indistinguishable.  These bulk excitations are summed over when we sum over all possible bulk geometries in the gravitational path integral. We will come back to this in Section \ref{sec:babyuniversescommutative}.}. 

To model these non-commutative excitations, we introduce creation/annihilation operators:
\begin{equation}\label{eqn:openunivCCR}
    [a_{ik}, a^\dagger_{jl}] = \delta_{ij}\delta_{kl}
\end{equation}
The creation operator $a^\dagger_{jl}$ acting on the Fock space ground state $\ket{\Omega}$ creates a pair of asymptotic open universes in the state $\ket{jl}$. 
Consider the thermofield double state corresponding to a Gibbs state of inverse temperature $\beta$
\begin{eqnarray}
    \ket{TFD}= \sum_j \frac{e^{-\beta E_j/2}}{\sqrt{Z}}\ket{jj}
\end{eqnarray}
where $E_j$ is the eigen-energy of the $j$-th state. The thermofield double state is the canonical purification of the Gibbs density:
\begin{equation}
    \rho_\beta:= \sum_j \frac{e^{-\beta E_j}}{Z}\ket{j}\bra{j}
\end{equation}
Suppose within the time interval $[0, T]$, $N(T)$ thermofield double states have been emitted \footnote{We reiterate that because the emission is stochastic, the variable $N(T)$ is a random variable.}. After emission, the state is 
\begin{eqnarray}
    \ket{TFD}^{\otimes N(T)}
\end{eqnarray}
which is a probability distribution over states with different numbers of copies of $\ket{TFD}$.
Then, the following operator creates an open universe in the state $\ket{ij}$:
\begin{align}\label{equation:statecreation}
    \begin{split}
        \sum_k& a_{ik}^\dagger a_{jk}\ket{TFD}^{\otimes N(T)}= 
        \sum_k a^\dagger_{ik}\sqrt{N(T)}\bra{jk}\ket{TFD}\ket{TFD}^{\otimes(N(T) - 1)}
        \\&
        =\sum_k \delta_{jk}\frac{e^{-\beta E_j/2}}{\sqrt{Z}}\sqrt{N(T)}a^\dagger_{ik}\ket{TFD}^{\otimes (N(T) - 1)}
        \\&
        =\frac{e^{-\beta E_j/2}}{\sqrt{Z}}\sqrt{N(T)}\sum_{1\leq s\leq N(T) - 1}\frac{1}{\sqrt{N(T)}}\ket{TFD}^{\otimes s}\otimes\ket{ij}\otimes\ket{TFD}^{\otimes (N(T)-1-s)}
        \\&
        =\frac{e^{-\beta E_j/2}}{\sqrt{Z}}\ket{ij}\otimes_\text{sym}\ket{TFD}^{\otimes(N(T) - 1)}
    \end{split}
\end{align}
where $\otimes_\text{sym}$ is the symmetric tensor product. The state $\frac{e^{-\beta E_j/2}}{\sqrt{Z}}\ket{ij}\otimes_\text{sym}\ket{TFD}^{\otimes(N(T) - 1)}$ represents $N(T) -1$ thermofield double states and an open universe in the state $\frac{e^{-\beta E_j/2}}{\sqrt{Z}}\ket{ij}$. Therefore, the operator $\sum_ka^\dagger_{ik}a_{jk}$ can be understood as converting one of the TFD states to an open universe in state $\ket{ij}$.  

Because the emission of the thermofield double state is stochastic and the number of emissions $N(T)$ follows the Poisson distribution, insertion of a single operator $\sum_k a^\dagger_{ik}a_{jk}$ leads to the following expected amplitude:
\begin{align}
    \begin{split}
        \sum_{N\geq 0}&\frac{e^{-\lambda T}(\lambda T)^N}{N!}\bra{TFD}^{\otimes N}\sum_k a^\dagger_{ik}a_{jk}\ket{TFD}^{\otimes N}\\&=\frac{e^{-\beta E_j/2}}{\sqrt{Z}}\sum_{N\geq 0}\frac{e^{-\lambda T}(\lambda T)^N}{N!}N\bra{TFD}\ket{ij}\bra{TFD}\ket{TFD}^{N - 1}
        \\
        &=\frac{e^{-\beta (E_i + E_j)/2}\delta_{ij}}{Z}e^{-\lambda T + \lambda T\bra{TFD}\ket{TFD}} = \frac{e^{-\beta(E_i + E_j)/2}\delta_{ij}}{Z}
    \end{split}
\end{align}
where $\lambda > 0$ is the intensity of emission. Note, we have properly normalized the thermofield double state, thus $\bra{TFD}\ket{TFD} = 1$.

Using the connection between the Poisson distribution and the coherent state, we can introduce the coherent thermofield double state:
\begin{equation}
    \ket{W(\rho_\beta^{1/2})} = \sum_{N\geq 0}\frac{e^{-\frac{\lambda T}{2}}(\lambda T)^{\frac{N}{2}}}{\sqrt{N!}}\ket{TFD}^{\otimes N}
\end{equation}
where $\rho_\beta$ is the Gibbs state of inverse temperature $\beta$.
The same amplitude can be written as:
\begin{align}
    \begin{split}
        \bra{W(\rho_\beta^{1/2})}&\sum_ka^\dagger_{ik}a_{jk}\ket{W(\rho_\beta^{1/2})} =\sum_k\bra{TFD}\ket{ik}\bra{jk}\ket{TFD}
        \\&=\sum_k\frac{e^{-\beta(E_i + E_j)/2}}{Z}\delta_{ik}\delta_{jk}
        =\frac{e^{-\beta(E_i + E_j)/2}}{Z}\delta_{ij}
    \end{split}
\end{align}
Hence, the observer effectively observes a coherent state of thermofield double excitations, and we refer to it as the Hartle-Hawking (HH) state. Acting on this HH-state, the one-point remove operator $\sum_k a^\dagger_{ik}a_{jk}-\delta_{ij}e^{-\beta E_i}/Z$ creates an open boundary in the state $\frac{e^{-\beta E_j/2}}{\sqrt{Z}}\ket{ij}$. In terms of the boundary condition changes, this state can be represented by:
\begin{equation}
    (\ket{i}\bra{j}\otimes\mathbbm{1})\ket{TFD} = \frac{e^{-\beta E_j/2}}{\sqrt{Z}}\ket{ij}
\end{equation}
Here, the matrix $\ket{i}\bra{j}\otimes\mathbbm{1}$ represents the boundary operator that changes the boundary label from $j$ to $i$. Note that this boundary operator is an operator on the \textit{single} open-universe Hilbert space. On the multi-open-universe Hilbert space (a symmetric Fock space) $\mathcal{F}_\text{sym}(\mathcal{H}_\text{open})$, this operator is lifted to:
\begin{equation}
    \lambda(\ket{i}\bra{j}):=\sum_ka^\dagger_{ik}a_{jk}
\end{equation}
Mathematically, this lift is canonical in the sense that we can consider a more general lifting:
\begin{equation}
    \lambda(\mathcal{O}) = \sum_{ijk} \mathcal{O}_{ij}a^\dagger_{ik}a_{jk}
\end{equation}
This represents the insertion of a boundary operator $\mathcal{O}$ that changes the boundary label in one of the thermofield double states. The one-point removed ``third quantized" operator 
\begin{eqnarray}
    \lb \lambda(\mO)-\bra{W(\rho_\beta^{1/2})}\lambda(\mO)\ket{W(\rho_\beta^{1/2})}\rb \ket{W(\rho_\beta^{1/2})}
\end{eqnarray}
creates an open universe in the superposition: $\sum_{ij}\mathcal{O}_{ij} \frac{e^{-\beta E_j/2}}
{\sqrt{Z}}\ket{ij}$ on top of the coherent state; i.e. Hartle-Hawking state.

It is clear from definition of $\lambda(\mathcal{O})$ that the $\lambda$ map is linear and preserves adjoint:
\begin{equation}
    \lambda(\mathcal{O}_1 + \mathcal{O}_2) = \lambda(\mathcal{O}_1) + \lambda(\mathcal{O}_2)
\end{equation}
\begin{equation}
    \lambda(\mathcal{O}^\dagger) = \sum_{ijk}\overline{\mathcal{O}}_{ji}a^\dagger_{ik}a_{jk} = \lambda(\mathcal{O})^\dagger
\end{equation}

As we argued before, asymptotic open boundaries with different excitations $\mathcal{O}_1$ and $\mathcal{O}_2$ are distinguishable. 
It follows from the commutation relations in (\ref{eqn:openunivCCR}) that on the Fock space, we have the commutation relations
\begin{eqnarray}\label{commutators}
    [\lambda(\mathcal{O}_1),\lambda(\mathcal{O}_2)]=\lambda([\mathcal{O}_1,\mathcal{O}_2])\ .
\end{eqnarray}
The commutation relation (\ref{commutators}) is required if we are to interpret $\lambda(\mathcal{O})$ as the lift of $\mathcal{O}$ to the Fock space so that the restriction of the $C^*$ algebra of operators $\lambda(\mathcal{O})$ to a single boundary sector is the same as the $C^*$ algebra of operators $\mathcal{O}$.

\begin{figure}[t]
    \centering
    \includegraphics[width=1\linewidth]{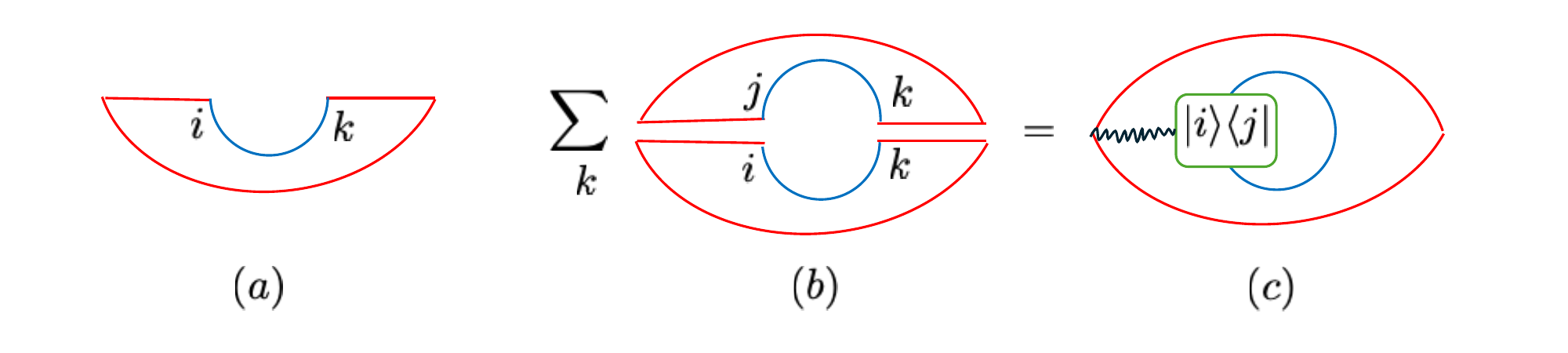}
    \caption{\small{(a) The operator $a_{ik}^\dagger$ creates an open boundary with in the state $\ket{ik}$. (b) The Poissonization map $\lambda(\ket{i}\bra{j})=\sum_k a_{ik}^\dagger a_{jk}$ can be understood as the insertion of an asymptotic boundary with a particle insertion as in (b).}}
    \label{fig6}
\end{figure}

A single-mode coherent state shows Poisson statistics for the commutative algebra of number operators. Similarly, a coherent thermofield double state in the Fock space exhibits a noncommutative generalization of Poisson statistics with respect to the noncommutative algebra of $\lambda(\mathcal{O})$. To understand what this means, note that if we pick any fixed self-adjoint $\lambda(\mathcal{O})$, then the state shows Poisson statistics with respect to the commutative algebra generated by $\lambda(\mathcal{O})$: 
\begin{eqnarray}
\bra{W(\rho_\beta^{1/2})}\lambda(\mathcal{O})^p\ket{W(\rho_\beta^{1/2})}_\text{conn}=\sum_i \frac{e^{-\beta E_i}}{Z} \mathcal{O}_{ii}^p=\bra{TFD}(\mathcal{O}^p\otimes \mathbb{I})\ket{TFD}\nn\\
\end{eqnarray}
In fact, requiring the relation above to hold for the commutative algebra generated by any single $\lambda(\mathcal{O})$ uniquely fixes the vector $\ket{W(\rho_\beta^{1/2})}$.

In particular, for $p = 1$, the full correlation function is the connected correlation function. Hence, we have:
\begin{equation}
    \bra{W(\rho_\beta^{1/2})}\lambda(\mathcal{O})\ket{W(\rho_\beta^{1/2})} = \bra{TFD}\mathcal{O}\otimes\mathbbm{1}\ket{TFD} = \sum_i \frac{e^{-\beta E_i}}{Z}\mathcal{O}_{ii}
\end{equation}
In particular, for off-diagonal matrix units, the expectation value vanishes:
\begin{equation}
    \bra{W(\rho_\beta^{1/2})}\lambda(\ket{i}\bra{j})\ket{W(\rho_\beta^{1/2})} = \bra{TFD}\lb\ket{i}\bra{j}\otimes\mathbbm{1}\rb\ket{TFD} = 0
\end{equation}
where $i\neq j$. To simplify our discussion, in the following, we often implicitly assume that the operator $\mathcal{O}$ has a vanishing one-point function.

The algebra generated by the $\lambda(\mathcal{O})$ operators is generally noncommutative. To see this, we note that for each matrix unit $\ket{i}\bra{j}$, the corresponding operator $\lambda(\ket{i}\bra{j})$ creates a nontrivial open universe state $\ket{i}\bra{j}\otimes\mathbbm{1}\ket{TFD} = \frac{e^{-\beta E_j/2}}{\sqrt{Z}}\ket{ij}$ from the coherent thermofield double state (c.f. Equation \ref{equation:statecreation}). In terms of Figure \ref{fig6}(c), the annulus with operator insertion of $\ket{i}\bra{j}$ has a the wiggly line in the path-integral to denote the insertion of an operator in the bulk, and the figure represents only the created state of $\frac{e^{-\beta E_j/2}}{\sqrt{Z}}\ket{ij}$. The state $\lambda(\ket{i}\bra{j})\ket{W(\rho_\beta^{1/2})}$ is a symmetric tensor product of Figure \ref{fig6}(c) and $\ket{W(\rho_\beta^{1/2})}$ (c.f. Equation \ref{equation:statecreation}).
Acting a second operator $\lambda(\ket{i'}\bra{j'})$ on this state creates two scenarios:
\begin{enumerate}
    \item Either a second operator $\ket{i'}\bra{j'}$ is inserted in Figure \ref{fig6}(c); 
    \item Or the operator $\ket{i'}\bra{j'}$ is inserted in a brand new annulus representing $\ket{TFD}$.
\end{enumerate}
Pictorially, the first scenario can be represented by gluing two annuli (i.e., two Figure \ref{fig6}(c)) along the cut. To be able to glue, the boundary conditions must match. In terms of formula, this is simply the observation that the state $\lb\ket{i'}\bra{j'}\ket{i}\bra{j}\otimes\mathbbm{1}\rb\ket{TFD}$ is only nonzero if $j' = i$. The resulting figure is an annulus with one branch cut, but the boundary condition is changed to $(i',j)$ representing the state $\frac{e^{-\beta E_j/2}}{\sqrt{Z}}\ket{i'j}$. The second scenario is simply two separate annuli (i.e, two separate Figure \ref{fig6}(c)), one with boundary condition $(i,j)$ and the other with $(i',j')$.

The second scenario is commutative because the two annuli are disjoint. However, the first scenario is \textit{not} commutative because the order of operator insertion matters (c.f. Figure \ref{fig5}(b)). Therefore, the algebra generated by $\lambda(\mathcal{O})$ is generally noncommutative. 

The infinite temperature thermofield double state corresponds to the maximally mixed state of a single boundary
\begin{eqnarray}
    \ket{TFD_0} = \frac{1}{d}\sum_i\ket{ii}
\end{eqnarray}
This is the purification of the normalized trace on the matrix algebra. We can also replace the thermofield double state with the purification of an unnormalized trace $|\mu|^2\sum_i\ket{ii}$. We denote the corresponding coherent state as
\begin{eqnarray}
    \ket{W(\mu\mathbbm{1})}=e^{\mu\sum_i (a_{ii}^\dagger+a_{ii})}\ket{\Omega}\ .
\end{eqnarray}
where $|\mu|^2\mathbbm{1}$ is the density matrix corresponding to the bipartite state $\mu\sum_i\ket{ii}$. The coherent state $\ket{W(\mu\mathbbm{1})}$ and the coherent thermofield double state $\ket{W(\rho_\beta^{1/2})}$ are basic examples of bipartite coherent states, which are the key components of Poissonization. We will study these states and Poisson in depth in Section \ref{sec:bipartitecoherent}.

\section{Chaotic quantum systems and late-time plateau}\label{sec:chaoticsystems}
In this Section, we take a digression from our discussion of topology change and Poisson processes to review some basic facts about chaotic quantum systems. The goal is to argue that the height of the late-time plateau expected in chaotic quantum systems can be explained using the Poisson limit theorem.

Generalizing beyond the topological model, we can consider a parent universe with a finite-dimensional Hilbert space, a non-commutative algebra of observables, and a general Hamiltonian $H_{parent}$ with non-degenerate eigenkets $\ket{E}$. Consider the two-point function of general operators in the Hilbert space of the parent universe
\begin{eqnarray}
    \tr(\mO_1(0)\mO_2(t))=\sum_E (\mO_1)_{EE}(\mO_2)_{EE}+\sum_{E_1\neq E_2}(\mO_1)_{E_1E_2}(\mO_2)_{E_2E_1}e^{i(E_1-E_2)t}\ .
\end{eqnarray}
In chaotic quantum systems, one expects that the correlator above decays initially, but of course, this decay cannot continue forever, because the first term is time-independent and equal to the long-time average of the correlator:
\begin{eqnarray}
    \lim_{t\to \infty}\frac{1}{(\Delta t)}\int_0^{\Delta t}dt \tr(\mO_1(0)\mO_2(t))=\sum_E (\mO_1)_{EE}(\mO_2)_{EE}\ .
\end{eqnarray}
At very late times, one expects that the two-point function oscillates erratically around its average value above. The precise nature of these oscillations depends on the choice of operator $\mO$; however, one can make statistical statements by taking an average over various choices of $\mO$. For instance, choose the operator to be unitary $u$ and averaging over all unitary operators with a Haar random measure, we find
\begin{eqnarray}
    \int du\: \tr(u(0) u(t))=\sum_{E_1,E_2}e^{i(E_1-E_2)t}=|Z(it)|^2
\end{eqnarray}
where $Z(it)=\tr(e^{i t H})$ is the partition function continued to real time. Choosing $\mO=u e^{-\beta H}$ we find that
\begin{eqnarray}\label{SFF}
    \int du \: \tr(u e^{-\beta H/2} \lb u(t) e^{-\beta H/2}\rb^\dagger)=\sum_{E_1,E_2}e^{i(E_1-E_2)t}e^{-\beta(E_1+E_2)}=|Z(\beta+it)|^2
\end{eqnarray}
which is called the spectral form factor. In a quantum chaotic theory, we consider an ensemble of Hamiltonians \cite{mehta2004random}. More generally, one can define a generalization of the spectral form factor with $t_1+\cdots+t_p=0$:
\begin{eqnarray}\label{highptSFF}
    Z(\beta+it_1)\cdots Z(\beta+it_p)=\sum_{E_1\cdots E_p}e^{i(E_1 t_1+\cdots +E_pt_p)}e^{-\beta(E_1+\cdots E_p)}\ .
\end{eqnarray}

\subsection{Diagonal terms and conditional expectation}

The expectation is that the ensemble-averaged spectral form factor initially decays. At times of order $e^{S/2}$, that is often referred to as the Thouless time or the onset of the random matrix theory, it reaches a minimum, turns around, and grows linearly in time. This linear growth is referred to as the ramp. Finally, at times of order $e^S$, it reaches a plateau. Splitting the right-hand side of (\ref{SFF}) into a diagonal and off-diagonal terms, the height of the plateau comes from the diagonal term \cite{prange1997spectral,saad2019late}
\begin{eqnarray}
    \sum_{E_1=E_2}e^{-\beta (E_1+E_2)}=Z(2\beta)\ .
\end{eqnarray}
The intuition is that at late times, the off-diagonal term averages to zero. This expectation extends to the generalization of the spectral form factor in (\ref{highptSFF}). In a quantum chaotic theory at late times, $t_1,\cdots t_p>e^S$, the spectral form factor averages to a plateau with a height $Z(p\beta)$ that comes from the diagonal term $E_1=E_2=\cdots=E_p$.

Similarly, for general operators $\mO$ inserted at times $t_n\in I_n$ we have the correlator
\begin{eqnarray}
&&\tr(\mO_1(t_1)\cdots \mO_p(t_p))=\nn\\
&&\sum_{E_1,\cdots, E_p}(\mO_1)_{E_pE_1}(\mO_2)_{E_1E_2}\cdots (\mO_p)_{E_{p-1}E_p}e^{i(E_1-E_p)t_1}e^{i(E_2-E_1)}\cdots e^{i(E_{p-1}-E_p)t_p} \ .
\end{eqnarray}
There is a term in the sum above that corresponds to $E_1=E_2=\cdots=E_p$ and matches the long-time average
\begin{eqnarray}
&&\sum_E (\mO_1)_{EE}\cdots (\mO_p)_{EE}=\lim_{\Delta \to \infty}\frac{1}{(\Delta t)^p}\int_{t_1\in I_1,\cdots, t_p\in I_p}\mO_1(t_1)\cdots \mO_2(t_p)
\end{eqnarray}
Once again, in a quantum chaotic theory, separating subsequent correlators by a large $\Delta t$, one expects the off-diagonal terms to decay away \footnote{In noncommutative ergodic theory, this is often referred to as the $k$-mixing property.}. In this limit, we expect the correlator to oscillate around this average value, which we also refer to as the plateau.

The time-scale at which the physics of the plateau (the diagonal term) kicks in depends on how large $\Delta t$ in the average should be so that the diagonal approximation is valid. If there are very small energy gaps in the spectrum, one needs a larger $\Delta t$ for the approximation to work. In a quantum chaotic system, the probability distribution of energy gaps is conjectured to be given by the Wigner surmise \cite{mehta2004random}. The eigenvalue repulsion of random matrix theory makes it very unlikely for the energy gaps to be arbitrarily small. However,  note that no matter what eigenvalue spectrum we have, in the absence of degeneracies for large enough $\Delta t$, the off-diagonal terms go away\footnote{If there are exact degeneracies $\ket{E,\alpha}$ with $\alpha$ keeping track of eigenkets of same energy $E$, the matrix elements $\bra{E;\alpha}\mO\ket{E,\beta}$ also remain time-independent.}.

In chaotic quantum systems, the timescale of the plateau is $e^S$. In integrable models with no eigenvalue repulsion, there is no ramp, and the decay ends in a plateau\footnote{Note that the absence of eigenvalue repulsion is the assumption of the independence of the matrix eigenvalues, which is often referred to as a Poisson spectrum. We warn the reader that this is not related to the Poisson process in time we are describing in this work.}. Therefore, the transition to the plateau occurs before $e^S$. Our simple argument based on the Poisson limit theorem requires a much longer timescale $e^{2S}$. The Poisson limit argument is not complex enough to chaotic properties of the system, such as the matrix degrees of freedom, and the eigenvalue repulsion. The  Poisson limit theorem provides a universal explanation for the plateau at very late times; however, it is not refined enough to predict the timescale of the onset of the plateau. 

Consider a quantum system with a Hilbert space of large dimension $d\gg 1$. Its operator algebra is the algebra of $d\times d$ matrices with the trivial center $z\mathbb{I}$. For simplicity, we will think of this Hilbert space as a narrow microcanonical ensemble of average energy $\bar{E}$, i.e. $d\sim e^{-S(\bar{E})}$ where $S_E$ is the microcanonical entropy. Consider an arbitrary pure state 
\begin{eqnarray}
    \ket{\Psi}=\sum_i \psi_i \ket{E_i}
\end{eqnarray}
The time-average of observables over time-scales much larger than $d$ decoheres the observable in the energy eigenbasis
\begin{eqnarray}
    \lim_{t\to \infty}\frac{1}{t}\int_0^t dt'\braket{\Psi}{\mO(t'
    )\Psi}=\sum_i |\psi_i|^2 \mO_{ii}\ .
\end{eqnarray}
This suggests that the system never forgets the diagonal matrix elements of observables, which is in tension with the intuition that in a chaotic system, the system forgets about the initial state. The resolution proposed by the ETH is that the diagonal matrix elements of {\it simple} probe observables are independent of the microscopic label $i$ and only dependent on the energy $E_i$. Therefore, in a chaotic theory, any diagonal matrix element of an observable is the same as the microcanonical average, up to exponentially small corrections. More generally, the expectation is that the correlation functions of {\it simple} observables separated by large times are independent of the choice of the state. This motivates the Eigenstate Thermalization Hypothesis (ETH) that postulates the matrix elements of simple observables in the energy eigenbasis are given by
\begin{eqnarray}\label{ETH}
    \braket{E_i}{\mO E_j}=\mO(\bar{E})\delta_{ij}+e^{-S(\bar{E})/2}\mO(E_i,E_j)\psi_{ij}
\end{eqnarray}
where $\bar{E}=(E_i+E_j)/2$ is a smooth function of $\bar{E}$ and $\psi_{ij}$ are random matrices that are traditionally taken to be Gaussian random matrices \footnote{It is known that non-Gaussianities are needed to reproduce the expected behavior of the out-of-time ordered correlators in a chaotic system \cite{foini2019eigenstate,jafferis2023matrix}. This has motivated Generalized ETH \cite{foini2019eigenstate,pappalardi2022eigenstate}.}.

At large entropy (or short times), each energy sum can be approximated by an integral over $E$ with a measure $\rho_0(E)$ and an integrand that is a smooth function of energies. In the continuum of energies, we denote energy eigenkets by $\ket{E;\alpha}$, where $\alpha$ keeps track of the degeneracies. If we assume that $H_{parent}$ is chaotic and $\mO$ in (\ref{transitionamp}) are simple  observables, then invoking the ETH, we conclude that to the lowest order
\begin{eqnarray}
    \bra{E;\alpha}\mO\ket{E';\beta}=\delta_{EE'}\mathbb{I}_E\mO(E)+O(e^{-S_E/2})\ .
\end{eqnarray}
In other words, to the lowest order, simple operators are invariant under the conditional expectation below:
\begin{eqnarray}\label{conditionalexpec}
    &&\mathcal{E}(\mO)=\sum_E e^{-S_E}\tr(\mathbb{I}_E\mO)\mathbb{I}_E
\end{eqnarray}
assuming that there are no large degeneracies.
This conditional expectation projects the noncommutative algebra of the parent universe to a commutative subalgebra that only keeps track of the microcanonical traces of the operators. This commutative subalgebra is the same as our example of topological model we discussed above, when we identify $r$ with the energy label $E$. The universal Poisson term captures the physics of the diagonal operator $\mE(\mO)$.

\subsection{Off-diagonal terms and random operators}\label{subsec:ETH}
To go beyond the physics of the diagonal term, one needs to keep track of the random and exponentially suppressed off-diagonal matrix elements in (\ref{ETH}). A transition in energy due to the insertion of simple operators is an exponentially rare event. To focus on the off-diagonal fluctuation, we redefine simple operators by subtracting their diagonal terms (one-point functions):
\begin{eqnarray}
    \mO\to \mO-\mathcal{E}(\mO)
\end{eqnarray}
so that our simple operators are (c.f. Subsection \ref{subsection:coherentexamples})
\begin{eqnarray}
    \mO^{(\psi)}=\psi \mO \psi^\dagger\ .
\end{eqnarray}
Even though we subtracted the diagonal piece, higher powers of the simple operator $\mO^{(\psi)}$ will contain diagonal terms coming from the second-order effect of a transition from some initial energy $\ket{E}$ to $\ket{E'}$ and back. These diagonal terms contribute to the time averages:
\begin{eqnarray}
     \lim_{T\to \infty}\frac{1}{T}\int_0^T dt\braket{\Psi}{(\mO(t')^{(\psi)})^p\Psi}&&=\mathbb{E}_\Psi\text{tr}\lb (\mO^{(\psi)})^p\rb\ .
\end{eqnarray}
Assuming that there are no large degeneracies, we compute these diagonal terms in Subsection \ref{subsection:coherentexamples} (c.f. Equation \ref{equation:simplecombinatorics})\footnote{Note that in the notation of Subsection \ref{subsection:coherentexamples} and Appendix \ref{app:openclosed}, the role of the $a$ index is played by $E$ and the role of the microstate $i$ is replaced by the microstate $E_i$ and 
\begin{eqnarray}
\psi_{ij}=\bra{E_i}\mathbb{I}^{(\psi)}\ket{E_j}=\sum_a\bra{E_i}\psi e_{aa}\psi^\dagger\ket{E_j}
\end{eqnarray}
providing a physical justification for the Poissonization of random weights in Subsection \ref{subsection:coherentexamples}.
} in the powers of simple operators to find\footnote{See Equation 4.4 of \cite{srednicki1999approach}.}
\begin{eqnarray}
\mathbb{E}_\Psi\text{tr}\lb (\mO^{(\psi)})^p\rb&&= \sum_{\substack{l_1,l_2,\cdots, l_p=0\\
    l_1+2l_2+\cdots pl_p=p}}^p    \frac{x^p p!}{\prod_{k=1}^p k^{l_k}l_k!}  d^{\sum_k l_k} \text{tr}(\mO)^{l_1}\text{tr}(\mO^2)^{l_2}\cdots \text{tr}(\mO^p)^{l_p}\ .
\end{eqnarray}

\section{Poissonization and bipartite coherent states}
\label{sec:bipartitecoherent}
In this section, we introduce Poissonization as a mathematical framework for third quantization. In essence, what we desire to achieve in third quantization is a framework that takes as input a general quantum system (a single copy of boundary theory) and outputs an algebra of operators that create and annihilate boundary copies in various states, represented on a symmetric Fock space.
Formally speaking, the input is a von Neumann algebra represented on a Hilbert space $\mH$ and an unnormalized state (weight)\footnote{We say unnormalized because, as we will see in some examples, we promote the partition function (norm of an unnormalized thermofield double state) to an operator.} 
\begin{eqnarray}
\ket{\omega^{1/2}}=\sum_i \omega_{i}^{1/2}\ket{ii}\in \mH \ .
\end{eqnarray}
Here, by unnormalized we mean that the sum of $\omega_{i}$ need not be one, or even finite. Denote by $\mathcal{O}\in B(\mathcal{H})$  operators in the one copy theory. The output should be an algebra of operators (lifts of $\mathcal{O}$) generated by $\lambda(\mathcal{O})$ acting on the symmetric Fock space of $\mH$: $\mathcal{F}_\text{sym}(\mathcal{H}) = \overline{\oplus_{n\geq 0}\otimes_\text{sym}^n\mathcal{H}}$ and a coherenet vector $\ket{W(\omega^{1/2})}$ (lift of the weight $\omega$). The action of $\lambda(\mathcal{O})$ on $\ket{W(\omega^{1/2})}$ in this Fock space has the interpretation of creating an open universe in a state $\sum_{ij}\mathcal{O}_{ij}\omega_j^{1/2}\ket{ij}$ \footnote{Once again, we are implicitly assuming that we are removing the one-point functions of $\mathcal{O}$.}.

In the previous section, we motivated Poissonization from gravitational path integrals; however, our discussion only required a set of rare independent events. Hence, in this section, we discuss Poissonization with no direct reference to any gravitational context. We formulate Poissonization using the intuitive physics picture of coherent states of bipartite quantum systems (or, equivalently, coherent states of matrix quantum systems). The connection to gravity will be provided in the next sections. We postpone the full mathematical theory of Poissonization to Section \ref{app:math}.

\subsection{Multi-mode coherent states and the second quantization}\label{subsec:multimodecoherent}


First, we briefly review the second quantization and the physics of coherent states. The purpose is mainly to fix some notations. The second quantization describes a many-body system using one-particle information. In terms of the Hilbert space, the second quantization lifts a one-particle Hilbert space $\mathcal{H}$ to the symmetric Fock space $\mathcal{F}_\text{sym}(\mathcal{H}) = \overline{\oplus_{n\geq 0}\otimes_\text{sym}^n\mathcal{H}}$. In terms of operators, the second quantization constructs the creation/annihilation operators $a_f$ and $a_f^\dagger$ from a vector $\ket{f}\in\mathcal{H}$. The creation/annihilation operators satisfy the canonical commutation relation:
\begin{equation}
    [a_f, a^\dagger_g] = \bra{f}\ket{g}
\end{equation}
In addition, the creation/annihilation operators generate the Weyl unitary operators:
\begin{equation}
    W(f) := \exp(i p(f))
\end{equation}
where $p(f):=i(a_f - a^\dagger_f)$. From the canonical commutation relation, we can see that the generators satisfy:
\begin{equation}
    [p(f), p(g)] = 2i\Im\bra{f}\ket{g}
\end{equation}
And the Weyl operators satisfy the integrated version of the canonical commutation relation:
\begin{equation}
    W(f)W(g) = \exp(i\Im\bra{f}\ket{g})W(f + g)
\end{equation}
where $W(f + g)$ is the Weyl operator lifted from the vector $\ket{f} + \ket{g}\in\mathcal{H}$.

In addition to the Weyl operators, one can also lift one-particle operators to operators on $\mathcal{F}_\text{sym}(\mathcal{H})$. Suppose the one-particle system evolves under the Hamiltonian $H$, then allowing simultaneous independent evolution on the many-body system gives the following time evolution map:
\begin{equation}\label{eqution:distribution}
    \ket{f_1}\otimes_\text{sym}\cdots\otimes_\text{sym}\ket{f_N} \mapsto e^{iHt}\ket{f_1}\otimes_\text{sym}\cdots\otimes_\text{sym} e^{iHt}\ket{f_N}
\end{equation}
where $\otimes_\text{sym}$ is the symmetric tensor product. The generator of this simultaneous evolution is given by:
\begin{equation}
    \lambda_N(H) := \sum_{1\leq i\leq N}1\otimes\cdots\underbrace{\otimes H\otimes}_{\text{i-th position}}\cdots\otimes 1
\end{equation}
The generator $\lambda_N(H)$ acts only the $N$-particle subspace of $\mathcal{F}_\text{sym}(\mathcal{H})$. Collecting all such generators, we can define the operator:
\begin{equation}
    \lambda(H) := \sum_{N\geq 0}\lambda_N(H)
\end{equation}
where by convention $\lambda_0(H) = 0$. The one-parameter unitary group generated by $\lambda(H)$ is given by:
\begin{equation}
    \Gamma(e^{iH}) := \sum_{N\geq 0}e^{i\lambda_N(H)} = \sum_{N\geq 0}\otimes_\text{sym}^N e^{iH}
\end{equation}
We can generalize the definition of $\lambda(H)$ and $\Gamma(e^{iH})$ beyond the Hamiltonian $H$. In fact for any self-adjoint operator $\mathcal{O}$ on $\mathcal{H}$, we can define $\lambda(\mathcal{O})$ and $\Gamma(e^{i\mathcal{O}})$. From the definition, it is clear that $\lambda$ is linear. Hence we can define $\lambda(\mathcal{O})$ for any operator $\mathcal{O}$ on $\mathcal{H}$. In addition, from definition, the following commutation relation holds\footnote{Note that this is different from what we do in symmetric orbifold theories, where the lift is normalized by an extra factor of $1/\sqrt{n}$: $\mathcal{O}\to \frac{1}{\sqrt{n}}\sum_i\mathcal{O}^{(i)}$. This extra factor of $n^{-1/2}$ is responsible for the large $n$  factorization because $[n^{-p/2}\lambda(\mathcal{O}_1),n^{-p/2}\lambda(\mathcal{O}_2)]=n^{-p}\lambda([\mathcal{O}_1,\mathcal{O}_2])$.}:
\begin{equation}
    [\lambda(\mathcal{O}_1), \lambda(\mathcal{O}_2)] = \lambda([\mathcal{O}_1, \mathcal{O}_2])
\end{equation}
A particularly interesting example of $\lambda(\mathcal{O})$ operators is the number operator:
\begin{equation}
    \widehat{N}  = \lambda(\mathbbm{1})
\end{equation}
This can be seen from the definition:
\begin{equation}
    \lambda(\mathbbm{1}) = \sum_{N\geq 0}\lambda_N(\mathbbm{1}) = \sum_{N\geq 0}Np_{N}
\end{equation}
where $p_N$ is the projection from $\mathcal{F}_\text{sym}(\mathcal{H})$ to the $N$-particle subspace. From this observation, it is clear that all $\lambda(\mathcal{O})$ operators commute with the number operator:
\begin{equation}
    [\lambda(\mathcal{O}),\lambda(\mathbbm{1})] = \lambda([\mathcal{O},\mathbbm{1}]) = 0
\end{equation}

If we fix a basis $\{\ket{\xi_i}\}$ for the one-particle Hilbert space $\mathcal{H}$, we can rewrite the generator of the Weyl operator as:
\begin{equation}
    p(f) = i\sum_jf_j(a_j -a_j^\dagger)
\end{equation}
where $\ket{f} = \sum_j f_j\ket{\xi_j}$ and $a_j  = a_{\xi_j}\text{ , }a_j^\dagger = a_{\xi_j}^\dagger$. In addition, because the $\lambda(\mathcal{O})$ operators preserve the particle number, we can rewrite $\lambda(\mathcal{O})$ in terms of a bilinear of creation/annihilation operators:
\begin{equation}
    \lambda(\mathcal{O}) = \sum_{jk}\mathcal{O}_{jk}a^\dagger_ja_k
\end{equation}

A typical choice of vacuum in the symmetric Fock space is a vector $\ket{\Omega}$ that is killed by all annihilation operators. Using this vacuum vector, we can calculate the moment generating function for $p(f)$:
\begin{equation}\label{equation: moment}
    \bra{\Omega}W(tf)\ket{\Omega} = \bra{\Omega}e^{itp(f)}\ket{\Omega} = e^{-\frac{t^2}{2}\bra{f}\ket{f}}
\end{equation}
where $t\in\mathbb{R}$ is a real parameter. This is analogous to the classical Gaussian moment generating function. The Weyl operator acting on the vacuum state creates the coherent state:
\begin{equation}
    \ket{W(f)} := e^{-\frac{\bra{f}\ket{f}}{2}}\sum_{k\geq 0}\frac{1}{\sqrt{k!}}\ket{f}^{\otimes_\text{sym} k}
\end{equation}
where $\ket{f}^{\otimes_\text{sym} k}$ is the $k$-fold symmetric tensor product of $\ket{f}$. Notice that $\ket{W(f)}$ is properly normalized. The coherent states form a dense subset of $\mathcal{F}_\text{sym}(\mathcal{H})$. 

In general in addition to $\lambda(\mathcal{O})$, we can always consider bilinears with terms $a_ka_j^\dagger$, however since $[a_j,a^\dagger_k]=\delta_{jk}$ we can always bring the operator into the normal-ordered form by subtracting the {\it one-point function} to make sure $\lambda(\mathcal{O})=:\lambda(\mathcal{O}):$. Notice that for the operator $\sum_{jk}\mathcal{O}_{jk}a_ja_k^\dagger$ not in the normal form, its one-point function under the vacuum state is the trace:
\begin{equation}
    \bra{\Omega}\sum_{jk}\mathcal{O}_{jk}a_ja_k^\dagger\ket{\Omega} = \sum_{jk}\mathcal{O}_{jk}\delta_{jk} = \tr(\mathcal{O})
\end{equation}
Thus, subtracting the one-point function simply means keeping the traceless part of $\mathcal{O}$. For the normal-ordered $\lambda(\mathcal{O})$ operators, their one-point functions always vanish under the vacuum state $\ket{\Omega}$. In addition, we can calculate the moment generating function:
\begin{equation}
    \bra{\Omega}\Gamma(e^{it\mathcal{O}})\ket{\Omega} = \sum_{k\geq 0}\frac{(it)^k}{k!}\bra{\Omega}\lambda(\mathcal{O})^k\ket{\Omega} = \bra{\Omega}\ket{\Omega} = 1
\end{equation}
where all terms $\bra{\Omega}\lambda(\mathcal{O})^k\ket{\Omega}$ with $k \geq 1$ vanish because $\lambda(\mathcal{O})$ is normal-ordered.

In terms of the coherent state and the normal ordered operator $\lambda(\mathcal{O})$, the one-point function is preserved in the following sense:
\begin{equation}
    \bra{W(f)}\lambda(\mathcal{O})\ket{W(f)} = \sum_{jk}\mathcal{O}_{jk}\bra{f}\ket{\xi_j}\bra{\xi_k}\ket{f} = \bra{f}\mathcal{O}\ket{f}
\end{equation}
where $\{\ket{\xi_j}\}$ is an orthonormal basis for the one-particle Hilbert space $\mathcal{H}$. 
Connected correlation functions are defined as:
\begin{align}
    \begin{split}
        \bra{W(f)}&\lambda(\mathcal{O}_1)\cdots\lambda(\mathcal{O}_k)\ket{W(f)}_\text{conn} \\&:= (-i)^k\frac{\partial^k}{\partial s_1\cdots\partial s_k}|_{s_1,\cdots,s_k = 0}\log \bra{W(f)}e^{is_1\lambda(\mathcal{O}_1)}\cdots e^{is_k\lambda(\mathcal{O}_k)}\ket{W(f)}\ .
    \end{split}
\end{align}
More generally, we observe that the coherent states have the following property:
\begin{equation}
    \bra{W(f)}\lambda(\mathcal{O}_1)\cdots\lambda(\mathcal{O}_k)\ket{W(f)}_\text{conn} = \bra{f}\mathcal{O}_1\cdots \mathcal{O}_k\ket{f}\ .
\end{equation}
We can calculate the logarithm directly. Notice that we have
\begin{align}\label{equation:action}
    \begin{split}
        e^{is\lambda(\mathcal{O})}\ket{W(f)} &= \sum_{k\geq 0}\frac{e^{-\frac{\bra{f}\ket{f}}{2}}}{\sqrt{k!}}\lb e^{is\mathcal{O}}\ket{f}\rb^{\otimes_\text{sym} k}
         = \ket{W(e^{is\mathcal{O}}f)}.
    \end{split}
\end{align}
Therefore, we have:
\begin{align}
    \begin{split}
        \log&\bra{W(f)}e^{is_1\lambda(\mathcal{O}_1)}\cdots e^{is_k\lambda(\mathcal{O}_k)}\ket{W(f)} = \log\bra{W(f)}\ket{W(e^{is_1\mathcal{O}_1}\cdots e^{is_k\mathcal{O}_k}f)} \\&= \log \exp(-\frac{\bra{f}\ket{f} +\bra{e^{is_1\mathcal{O}_1}\cdots e^{is_k\mathcal{O}_k}f}\ket{e^{is_1\mathcal{O}_1}\cdots e^{is_k\mathcal{O}_k}f}}{2} + \bra{f}\ket{e^{is_1\mathcal{O}_1}\cdots e^{is_k\mathcal{O}_k}f})
        \\&
        =\bra{f}e^{is_1\mathcal{O}_1}\cdots e^{is_k\mathcal{O}_k}\ket{f} - 1
    \end{split}
\end{align}
Hence, we have the following preservation property:
\begin{align}
    \begin{split}
        \bra{W(f)}\lambda(\mathcal{O}_1)\cdots\lambda(\mathcal{O}_k)\ket{W(f)}_\text{conn} &= (-i)^k\frac{\partial^k}{\partial s_1\cdots\partial s_k}|_{s_1,\cdots,s_k = 0}\lb\bra{f}e^{is_1\mathcal{O}_1}\cdots e^{is_k\mathcal{O}_k}\ket{f} -  1\rb \\&= \bra{f}\mathcal{O}_1\cdots\mathcal{O}_k\ket{f}    
    \end{split} 
\end{align}
The existence of connected correlators implies that coherent states are not quasi-free (Gaussian) because the connected correlators of normal-ordered operators under the quasi-free states are zero. To better understand the statistical nature of the coherent states $\ket{W(f)}$ consider the simple case of the number operator $\lambda(1)$:
\begin{eqnarray}
    \bra{W(f)}\lambda(1)^k\ket{W(f)}_\text{conn} = \bra{f}\ket{f}
\end{eqnarray}
The $k$-point connected correlator is independent of $k$, which reveals that the coherent state has Poisson statistics, as is clear from the moment generating function:
\begin{eqnarray}
    \bra{W(f)}e^{is\lambda(1)}\ket{W(f)} = \exp\lb|f|^2(e^{is} - 1)\rb
\end{eqnarray}
This is the classical moment generating function of a Poisson random variable with intensity $|f|^2$.

We emphasize that the key data of the calculation above is contained in Equation \ref{equation: moment} and Equation \ref{equation:action}. All other equations can be derived from these two key equations.
\subsection{Coherent vacua of bipartite systems}\label{subsec:coherentbipartite}
In the second quantization, each vector in the one-particle Hilbert space (i.e., pure states) is lifted to a Weyl operator and a coherent state on the symmetric Fock space. More generally, we can consider mixed states of the one-particle Hilbert space and their canonical purification in the double-copy Hilbert space:
\begin{equation}
    \ket{\omega^{1/2}} := \sum_j\omega_j^{1/2}\ket{j\bar{j}}\in\mathcal{H}\otimes\overline{\mathcal{H}}
\end{equation}
For simplicity, we consider a finite-dimensional Hilbert space $\mathcal{H}$ with $\dim\mathcal{H} = d$. The double-copy Hilbert space is the space of $d\times d$-Hilbert-Schmidt matrices. Similar to the second quantization, we consider the symmetric Fock space: $\mathcal{F}_\text{sym}(\mathcal{H}\otimes\overline{\mathcal{H}})$ and define the creation/annihilation operators $a_{i\bar{k}}, a_{j\bar{l}}^\dagger$. These creation/annihilation operators satisfy the following canonical commutation relations:
\begin{equation}
    [a_{i\bar{k}},a^\dagger_{j\bar{l}}] = \delta_{ij}\delta_{kl}
\end{equation}
One can think of these operators simply as regular creation/annihilation operators. Alternatively, one can also organize these operators into a matrix of operators:
\begin{equation}
    \textbf{a} :=\begin{bmatrix}
        a_{1\bar{1}} & \cdots & a_{1\bar{d}}\\a_{2\bar{1}} &\cdots &a_{2\bar{d}}\\\cdots\\a_{d\bar{1}}&\cdots&a_{d\bar{d}}
    \end{bmatrix}
\end{equation}
\begin{equation}
    \textbf{a}^\dagger = \begin{bmatrix}
        a^\dagger_{1\bar{1}} & a^\dagger_{2\bar{1}} &\cdots & a^\dagger_{d\bar{1}}\\a^\dagger_{1\bar{2}}& a^\dagger_{2\bar{2}} & \cdots & a^\dagger_{d\bar{2}}\\\cdots\\a^\dagger_{1\bar{d}}&a^\dagger_{2\bar{d}}&\cdots&a^\dagger_{d\bar{d}}
    \end{bmatrix}
\end{equation}
Notice that this setup is different from what we do in finite-temperature field theory. 
In thermal QFT, we consider the Fock space $\mathcal{F}_\text{sym}(\mathcal{H})\otimes\mathcal{F}_\text{sym}(\overline{\mathcal{H}}) = \mathcal{F}_\text{sym}(\mathcal{H}\oplus\overline{\mathcal{H}})$, and a linear Bogoliubov transformation 
\begin{eqnarray}
    &&c_k=b_+ a_k-b_- a_{\bar{k}}^\dagger,\qquad b_+^2-b_-^2=1\nn\\
    &&c_{\bar{k}}=-b_-a^\dagger_k-b_+a_{\bar{k}}\nn\\
    &&[c_k,c_{k'}^\dagger]=\delta_{kk'}, \qquad [c_{\bar{k}},c_{\bar{k'}}^\dagger]=\delta_{\bar{k}\bar{k'}}
\end{eqnarray}
This linear Bogoliubov transformation disentangles the thermofield double state on the duplicated Fock space to $\ket{\Omega}\otimes \ket{\bar{\Omega}}$ where $\ket{\Omega}$ and $\ket{\bar{\Omega}}$ are respectively defined as vectors in the Fock space killed by all $c_k$ and $c_{\bar{k}}$. There are no operators that can naturally be interpreted as analogs of $a_{i\bar{k}}$ \footnote{Mathematically, the distinction between our construction and finite temperature field theory is evident. Our operators $a_{i\bar{k}}$ acts on the symmetric Fock space $\mathcal{F}(\mathcal{H}\otimes \mathcal{H}^\dagger)$, whereas $a_k$ and $a_{\bar{k}}$ acts on a different space $\mathcal{F}_\text{sym}(\mathcal{H})\otimes \mathcal{F}_\text{sym}(\overline{\mathcal{H}})$. In second quantization and Poissonization, the direct sum of two single-particle Hilbert spaces is lifted to a tensor product in the Fock space: $\mathcal{H}_1\oplus \mathcal{H}_2\to \mathcal{F}_\text{sym}(\mathcal{H}_1)\otimes \mathcal{F}_\text{sym}(\mathcal{H}_2)$. This is used to define finite temperature field theory. Whereas the lift of the tensor product of single-particle Hilbert spaces is the space $\mathcal{F}_\text{sym}(\mathcal{H}_1\otimes \mathcal{H}_2)$, which we use in Poissonization.}. It might be tempting to think $a_{i\bar{k}}$ can be replaced by bilinears $a_i a_{\bar{k}}$, an operator that annihilates a state $\ket{i}$ in $\mathcal{F}_\text{sym}(\mathcal{H})$ and an operator that annihilates a state $\ket{\bar{k}}$ in $\mathcal{F}_\text{sym}(\overline{\mathcal{H}})$ but this is incorrect as one can check explicitly that 
\begin{eqnarray}\label{TFDannihilatoon}
    [a_ia_{\bar{k}},a^\dagger_j a^\dagger_{\bar{l}}]\neq [a_{i\bar{k}},a^\dagger_{j\bar{l}}] \ .
\end{eqnarray}

On the symmetric Fock space $\mathcal{F}_\text{sym}(\mathcal{H}\otimes\overline{\mathcal{H}})$, we consider the operators:
\begin{equation}\label{equation:poisslambda}
    \lambda(\mathcal{O}):=\lambda(\mathcal{O}\otimes \mathbbm{1}) = \sum_{ijk}\mathcal{O}_{ij}a^{\dagger}_{i\bar{k}}a_{j\bar{k}}
\end{equation}
\begin{equation}\label{equation:poisslambda'}
    \lambda'(\widetilde{\mathcal{O}}) := \lambda(\mathbbm{1}\otimes \widetilde{\mathcal{O}}) = \sum_{ikl}\widetilde{\mathcal{O}}_{kl}a^\dagger_{i\bar{k}}a_{i\bar{l}}
\end{equation}
For simplicity, we have short-handed $\lambda(\mathcal{O}\otimes\mathbbm{1})$ simply as $\lambda(\mathcal{O})$. From definition, it is clear that $[\lambda(\mathcal{O}), \lambda'(\widetilde{\mathcal{O}})] = 0$. We also consider the coherent state:
\begin{equation}
    \ket{W(\omega^{1/2})} := e^{\sum_j \omega_j^{1/2}(a^\dagger_{j\bar{j}} - a_{j\bar{j}})}\ket{\Omega}
\end{equation}
Using the same calculation as the previous section, one can show that the connected correlator of $\lambda(\mathcal{O})$ operators is given by:
\begin{align}
    \begin{split}
        \bra{W(\omega^{1/2})}\lambda(\mathcal{O}_1)\cdots\lambda(\mathcal{O}_p)\ket{W(\omega^{1/2})}_\text{conn} = \bra{\omega^{1/2}}\lb \mathcal{O}_1\cdots\mathcal{O}_p\rb\otimes\mathbbm{1}\ket{\omega^{1/2}} = \omega(\mathcal{O}_1\cdots\mathcal{O}_p)
    \end{split}
\end{align}
Using this formula, it is easy to see that the full $p$-point correlation function is given by:
\begin{equation}
    \bra{W(\omega^{1/2})}\lambda(\mathcal{O}_1)\cdots\lambda(\mathcal{O}_p)\ket{W(\omega^{1/2})} = \sum_{\sigma\in\mathcal{P}_p}\prod_{A\in\sigma}\omega(\overrightarrow{\prod}_{j\in A}\mathcal{O}_j)
\end{equation}
where $\mathcal{P}_p$ is the set of partitions on the set $[p]:=\{1,\cdots,p\}$, $\sigma$ is a particular partition and $A\in\sigma$ is a subset in the partition $\sigma$. To be explicit, the full one-point function is given by:
\begin{equation}
    \bra{W(\omega^{1/2})}\lambda(\mathcal{O})\ket{W(\omega^{1/2})} = \omega(\mathcal{O})
\end{equation}
The full two-point function is given by:
\begin{equation}
    \bra{W(\omega^{1/2})}\lambda(\mathcal{O}_1)\lambda(\mathcal{O}_2)\ket{W(\omega^{1/2})} = \omega(\mathcal{O}_1\mathcal{O}_2) + \omega(\mathcal{O}_1)\omega(\mathcal{O}_2)
\end{equation}
And the full three-point function is given by:
\begin{align}
    \begin{split}
        \bra{W(\omega)^{1/2}}&\lambda(\mathcal{O}_1)\lambda(\mathcal{O}_2)\lambda(\mathcal{O}_3)\ket{W(\omega^{1/2})} =\omega(\mathcal{O}_1\mathcal{O}_2\mathcal{O}_3) \\&+\omega(\mathcal{O}_1\mathcal{O}_2)\omega(\mathcal{O}_3) +  \omega(\mathcal{O}_1\mathcal{O}_3)\omega(\mathcal{O}_2) +\omega(\mathcal{O}_2\mathcal{O}_3)\omega(\mathcal{O}_1) \\&+ \omega(\mathcal{O}_1)\omega(X\mathcal{O}_2)\omega(\mathcal{O}_3)
    \end{split}
\end{align}

The $\lambda$ map preserves the commutation relation:
\begin{align}
    \begin{split}
        [\lambda(\mathcal{O}), \lambda(\widetilde{\mathcal{O}})] &= \sum_{ijk,lmn}\mathcal{O}_{ij}\widetilde{\mathcal{O}}_{lm}[a^\dagger_{i\bar{k}}a_{j\bar{k}}, a^\dagger_{l\bar{n}}a_{m\bar{n}}] = \sum_{ijk,lmn}\mathcal{O}_{ij}\widetilde{\mathcal{O}}_{lm}\lb a^\dagger_{i\bar{k}}a_{m\bar{n}}\delta_{jl}\delta_{kn} - a^\dagger_{l\bar{n}}a_{j\bar{k}}\delta_{im}\delta_{kn}\rb
        \\&
        =\sum_{ijkm}\mathcal{O}_{ij}\widetilde{\mathcal{O}}_{jm}a^\dagger_{i\bar{k}}a_{l\bar{k}} - \widetilde{\mathcal{O}}_{li}\mathcal{O}_{ij}a^\dagger_{l\bar{k}}a_{j\bar{k}} = \lambda([\mathcal{O}, \widetilde{\mathcal{O}}])
    \end{split}
\end{align}
We have now arrived at our first construction of Poissonization:
\begin{definition}
    Given a matrix algebra $M_d(\mathbb{C})$ and a mixed state $\omega$, \textbf{\textit{the Poisson algebra $\mathbb{P}_\omega M_d(\mathbb{C})$}} is the von Neumann algebra generated by $\lambda(\mathcal{O})$ operators under the coherent state $\ket{W(\omega^{1/2})}$. These operators act on the symmetric Fock space $\mathcal{F}_\text{sym}(\mathcal{H}\otimes\overline{\mathcal{H}})$. The pure density matrix $\ket{W(\omega^{1/2})}\bra{W(\omega^{1/2})}$ defines a state $\varphi_\omega$ on the Poisson algebra. This state is \textbf{\textit{the Poisson state}}. The linear map $\lambda$ is \textbf{\textit{the Poisson quantization map}}. The general $p$-point correlation function is called \textbf{\textit{the Poisson moment formula}}.
\end{definition}
Much of this construction can be stated in the language of matrix quantum mechanics using $\textbf{a}$ and $\textbf{a}^\dagger$. Define the matrix ``position" and ``momentum" operators:
\begin{equation}
    \mathbf{M} = \frac{1}{\sqrt{2}}\lb\mathbf{a} + \mathbf{a}^\dagger\rb
\end{equation}
\begin{equation}
    \mathbf{\Pi} = \frac{1}{\sqrt{2}i}\lb \mathbf{a} - \mathbf{a}^\dagger\rb
\end{equation}
\begin{equation}
    [\mathbf{M}_{ik}, \mathbf{\Pi}_{jl}] = \delta_{ij}\delta_{kl}
\end{equation}
Using normalized Killing-Cartan basis $\{\mathbf{T}^A\}$ of the Lie algebra $\mathfrak{su}(d)$ \footnote{Here the normalization ensures $\tr(\mathbf{T}^A\mathbf{T}^B) = \delta_{AB}$}, we can decompose the matrix ``position" and ``momentum" operators: $\mathbf{M} = \sum_A M_A\mathbf{T}^A$ and $\mathbf{\Pi} = \sum_B\Pi_B\mathbf{T}^B$ such that $[M_A, \Pi_B] = i\delta_{AB}$. 

Consider the free matrix Hamiltonian:
\begin{equation}
    \mathbf{H} := \frac{1}{2}\tr\lb\mathbf{M}^\dagger\mathbf{M} + \mathbf{\Pi}^\dagger\mathbf{\Pi}\rb = \frac{1}{2}\tr\lb\mathbf{a}^\dagger\mathbf{a} + \mathbf{a}\mathbf{a}^\dagger\rb
\end{equation}
We remind the reader that in the equation above $\tr$ represents the trace in $d\times d$-matrices. This Hamiltonian admits two $U(d)$-symmetries:
\begin{equation}
    \mathbf{a}\mapsto \mathbf{U}\mathbf{a}\text{ , }\mathbf{a}\mapsto\mathbf{a}\mathbf{U}
\end{equation}
where $\mathbf{U}$ is a $d\times d$ unitary matrix. Suppose $\mathbf{U}$ is generated by a self-adjoint matrix $\mathcal{O}$. Then observe:
\begin{equation}
    [\mathbf{a}_{mn}, \sum_{ijk}\lb\mathbf{a}^\dagger\rb_{ki} \mathcal{O}_{ij}\mathbf{a}_{jk}] = \sum_{ijk}[\mathbf{a}_{mn},\mathbf{a}^\dagger_{ki}]\mathcal{O}_{ij}\mathbf{a}_{jk} = \sum_j\mathcal{O}_{mj}\mathbf{a}_{jn} 
\end{equation}
This is the infinitesimal unitary transformation:
\begin{equation}
    (-i)\frac{d}{dt}|_{t = 0}e^{it\mathcal{O}}\mathbf{a} = \mathcal{O}\mathbf{a}
\end{equation}
In other words, the left $U(d)$-symmetry is implemented by the \textit{adjoint} action:
\begin{equation}
    e^{-it\sum_{ijk}\lb\mathbf{a}^\dagger\rb_{ki}\mathcal{O}_{ij}\mathbf{a}_{jk}} \mathbf{a} e^{it\sum_{ijk}\lb\mathbf{a}^\dagger\rb_{ki}\mathcal{O}_{ij}\mathbf{a}_{jk}} = e^{it\mathcal{O}}\mathbf{a}
\end{equation}
Similarly one can check that the right $U(d)$-symmetry is generated by the \textit{adjoint} action by $\sum_{ikl}\mathbf{a}_{il}\widetilde{\mathcal{O}}^T_{lk}\lb\mathbf{a}^\dagger\rb_{ki}$. These generators are precisely the $\lambda(\mathcal{O})$ and $\lambda'(\widetilde{\mathcal{O}})$ operators. The operator $\lambda(\mathcal{O})$ can be rewritten as:
\begin{equation}
    \tr(\textbf{a}^\dagger\mathcal{O}\textbf{a}) = \sum_{ijk}\lb \textbf{a}^\dagger\rb_{ki}\mathcal{O}_{ij}\textbf{a}_{jk} = \sum_{ijk}\mathcal{O}_{ij}a^\dagger_{i\bar{k}}a_{j\bar{k}} = \lambda(\mathcal{O})
\end{equation}
Similarly, $\lambda'(\widetilde{\mathcal{O}})$ can be rewritten as:
\begin{equation}
    \tr(\lb\textbf{a}\rb \widetilde{\mathcal{O}}^T \textbf{a}^\dagger) = \sum_{ikl}\textbf{a}_{il}\widetilde{\mathcal{O}}^T_{lk}\lb\textbf{a}^\dagger\rb_{ki} = \sum_{ikl}\widetilde{\mathcal{O}}_{kl}a^\dagger_{i\bar{k}}a_{i\bar{l}} = \lambda'(\widetilde{\mathcal{O}})
\end{equation}
Therefore, from the perspective of matrix quantum mechanics, the Poisson algebra is generated by the generators of the left $U(d)$-symmetries of the free matrix model. We will not be pursuing the perspective of matrix quantum mechanics further in this work, reserving it for future work.

We motivated the construction above using the second quantization. However, mathematically, the prescription above is a well-defined framework that does not require the introduction of creation and annihilation operators. It takes as input a quantum system (a von Neumann algebra with a state, or more generally a weight) and outputs another von Neumann algebra represented on the Fock space, where both the operators and the states are lifted in a canonical structure-preserving (functorial) way. We call this construction {\it Poissonization}. In general, in Poissonization, there are no operators similar to the coherent operators, and the algebra is solely generated by the lifted operators $\lambda(\mathcal{O})$. As opposed to the conventional second quantization, where the choice of a noncommutative Gaussian (quasi-free) state is to define a free quantum field theory, in Poissonization, the natural choice is the Poisson state, resulting in connected correlators that match the correlators of the state of the input algebra. We postpone a more mathematically rigorous discussion of Poissonization until Section \ref{app:math}.
\subsection{Examples of coherent vacuua of bipartite systems}\label{subsection:coherentexamples}
Here we collect a few important examples of the construction above. These examples will be used throughout this work. 
\begin{enumerate}
    \item \textit{Restricted to diagonal matrices} Given the canonical purification of the input weight $\ket{\omega^{1/2}} = \sum_j \omega_j^{1/2}\ket{j\bar{j}}$, we restrict the possible types of operators to diagonal matrices only:
    \begin{equation}
        \lambda(\mathcal{O}) = \sum_{ik}\mathcal{O}_{ii}a^\dagger_{i\bar{k}}a_{i\bar{k}} 
    \end{equation}
    where $\mathcal{O}$ is a diagonal matrix. In this case, the connected correlation function is given by:
    \begin{align}\label{equation:diagonalconn}
        \begin{split}
            \bra{W(\omega^{1/2})}\lambda(e_{i_1i_1})\cdots\lambda(e_{i_pi_p})\ket{W(\omega^{1/2})}_\text{conn} &= \omega(e_{i_1i_1}\cdots e_{i_pi_p}) = \sum_j\omega_j\delta_{ji_1\cdots i_p}
        \end{split}
    \end{align}
    where $e_{ii} = \ket{i}\bra{i}$ is a diagonal projiection. A general diagonal matrix is a linear combination of these projections. In this case, the construction is equivalent to the following. We map $a^\dagger_{i\bar{k}}\mapsto a^\dagger_i$ and $a_{i\bar{k}}\mapsto a_i$. Then the bipartite coherent state $\ket{W(\omega^{1/2})}$ is mapped to a multimodal coherent state $e^{\sum_j \omega_j^{1/2}(a^\dagger_j -a_j)}\ket{\Omega}$ and the operator $\lambda(\mathcal{O})$ is mapped to a weighted number operator $\sum_i\mathcal{O}_{ii}a^\dagger_ia_i$. In this form, it is easy to see that the algebra generated by $\lambda(\mathcal{O})$ operators is commutative. Such a commutative Poisson algebra will be studied in depth in Appendix \ref{app:closed} for its connection with closed 2D TQFTs \cite{tqftreview, frobeniusTQFT2}.
    \item \textit{Thermofield double state} Given a thermofield double state $\ket{TFD^{1/2}} = \sum_j \frac{e^{-\beta E_j / 2}}{\sqrt{Z}}\ket{j\bar{j}}$, the connected correlation function is given by:
    \begin{align}\label{equation:diagonal}
        \begin{split}
            \bra{W(\rho_\beta^{1/2})}\lambda(e_{i_1j_1})\cdots\lambda(e_{i_pj_p})\ket{W(\rho_\beta^{1/2})}_\text{conn} &= \sum_j \frac{e^{-\beta E_j}}{Z}\delta_{ji_1}\delta_{j_1i_2}\cdots\delta_{j_{p-1}i_p}\delta_{j_pj}
        \end{split}
    \end{align}
    This is a quintessential example of Poissonization of matrix algebra.
    \item \textit{Wishart random matrices} Consider a $d\times k$ rectangular Gaussian random matrices $\psi$ where each entry $\psi_i^a := \bra{i}\psi\ket{a}$ is a standard complex Gaussian random variable \footnote{Recall a standard complex Gaussian random variable follows the complex normal distribution with density $\frac{1}{2\pi}e^{-\frac{|z|^2}{2}}d\Re{z}d\Im{z}$}. Here $1\leq i\leq d$ and $1\leq a\leq k$. Then for each $k\times k$ matrix $\mathcal{O}$, we can consider a \textit{random embedding}:
    \begin{equation}
        \mathcal{O}\mapsto\mathcal{O}^{(\psi)} := \psi\mathcal{O}\psi^\dagger = \sum_{ab}\mathcal{O}^{ab}\psi^{ab}
    \end{equation}
    where $\psi\mathcal{O}\psi^\dagger $ is a matrix multiplication, and the second equation results from inserting the resolution of identity $\sum_a\ket{a}\bra{a}$. The $d\times d$ matrix $\psi^{ab} := \sum_{ij}\psi^a_i\bar{\psi}^b_j\ket{i}\bra{j}$ and $\bar{\psi}^b_i$ is the standard complex Gaussian random variable conjugate to $\psi^b_i$. We will study the mathematical properties of this map in depth in Appendix \ref{subsection:simplechannel}. Notice that $\mathcal{O}^{(\psi)}$ is a $d\times d$ \textit{random matrix}. The matrix $\psi^{ab}$ is a \textit{standard complex Wishart matrix}. The following expectation value is useful (c.f. Appendix \ref{app:openclosed}):
    \begin{align}
        \begin{split}
            \mathbb{E}\tr\lb \psi^{a_1b_1}\cdots\psi^{a_pb_p} \rb &= \sum_{i_1j_1\cdots i_pj_p} \mathbb{E}\lb\psi^{a_1}_{i_1}\bar{\psi}^{b_1}_{j_1}\cdots\psi^{a_p}_{i_p}\bar{\psi}^{b_p}_{j_p}\rb\delta_{j_pi_1}\delta_{j_1i_2}\cdots\delta_{j_{p-1}i_p}
            \\& = \sum_{i_1j_1\cdots i_pj_p}\sum_{\pi\in\mathcal{S}_p}\prod_{1\leq k\leq p}\mathbb{E}\lb\psi^{a_k}_{i_k}\bar{\psi}^{b_{\pi(k)}}_{j_{\pi(k)}}\rb\delta_{j_pi_1}\delta_{j_1i_2}\cdots\delta_{j_{p-1}i_p}
            \\&
            =\sum_{i_1j_1\cdots i_pj_p}\sum_{\pi\in\mathcal{S}_p}\prod_{1\leq k\leq p}\lb\delta^{a_k b_{\pi(k)}}\delta_{i_kj_{\pi(k)}}\rb\delta_{j_pi_1}\delta_{j_1i_2}\cdots\delta_{j_{p-1}i_p}
            \\&
            =\sum_{\pi\in\mathcal{S}_p}d^{\text{cycles}(\pi^0\pi)}\delta^{a_1b_{\pi(1)}}\cdots\delta^{a_pb_{\pi(p)}}
        \end{split}
    \end{align}
    where $\mathcal{S}_p$ is the permutation group on $p$ indices, $\pi^0 = (12\cdots p)$ is the cyclic permutation, and $\text{cycles}(\pi^0\pi)$ counts the number of irreducible cycles in the permutation $\pi^0\pi$ \footnote{For a more detailed derivation and various extensions of this formula, see Appendix \ref{app:openclosed}.}. Using this formula, we can calculate the connected correlation function of the $p$ operators $\lambda(\mathcal{O}_1^{(\psi)})\cdots\lambda(\mathcal{O}_p^{(\psi)})$. This is given by:
    \begin{align}
        \begin{split}
            \mathbb{E}\tr(\mathcal{O}_1^{(\psi)}\cdots\mathcal{O}_p^{(\psi)}) &= \sum_{a_1b_1\cdots a_pb_p}\lb\mathcal{O}_1\rb_{a_1b_1}\cdots\lb\mathcal{O}_p\rb_{a_pb_p}\mathbb{E}\tr(\psi^{a_1b_1}\cdots\psi^{a_pb_p})
            \\&
            =\sum_{a_1b_1\cdots a_pb_p}\sum_{\pi\in\mathcal{S}_p}d^{\text{cycles}(\pi^0\pi)}\delta^{a_1b_{\pi(1)}}\cdots\delta^{a_pb_{\pi(p)}}\lb\mathcal{O}_1\rb_{a_1b_1}\cdots\lb\mathcal{O}_p\rb_{a_pb_p}
            \\&
            =\sum_{\pi\in\mathcal{S}_p}d^{\text{cycles}(\pi^0\pi)}\prod_{\gamma\in\pi}\tr(\overrightarrow{\prod}_{k\in\gamma}\mathcal{O}_k)
        \end{split}
    \end{align}
    where $\gamma$ is an irreducible cycle in $\pi$. For a more detailed discussion and various extensions of this formula, we refer the readers to Appendix \ref{app:openclosed}. Notice that because $\mathcal{O}^{(\psi)}$ is a random matrix, the algebra generated by $\lambda(\mathcal{O}^{(\psi)})$ is noncommutative.

    Instead of embedding a $k\times k$ matrix to a random $d\times d$ matrix, we can also embed a $d\times d$ matrix to a random $k\times k$ matrix:
    \begin{equation}
        \mathcal{O}\mapsto\mathcal{O}_{(\psi)} :=\psi^\dagger\mathcal{O}\psi = \sum_{ij,ab} \bra{b}\psi^\dagger\ket{j}\bra{j}\mathcal{O}\ket{i}\bra{i}\psi\ket{a}\ket{b}\bra{a} = \sum_{ij,ab}\mathcal{O}_{ji}\psi^a_i\bar{\psi}^b_j\ket{b}\bra{a}
    \end{equation}
    Denote $\psi_{ij} :=\sum_{ab}\psi^a_i\bar{\psi}^b_j\ket{b}\bra{a}$. This is a $k\times k$ random matrix. Using the same calculation \footnote{The only difference is to exchange the indices.} as before, we have the following formula:
    \begin{align}
        \begin{split}
            \mathbb{E}\tr\lb\psi_{j_1i_1}\cdots\psi_{j_pi_p}\rb &= \sum_{a_1b_1\cdots a_pb_p}\mathbb{E}\lb\psi^{a_1}_{i_1}\bar{\psi}^{b_1}_{j_1}\cdots\psi^{a_p}_{i_p}\bar{\psi}^{b_p}_{j_p}\rb\delta_{a_1b_2}\delta_{a_2b_3}\cdots\delta_{a_pb_1}
            \\&=\sum_{a_1b_1\cdots a_pb_p}\sum_{\pi\in\mathcal{S}_p}\prod_{1\leq k\leq p}\mathbb{E}\lb\psi^{a_\pi(k)}_{i_{\pi(k)}}\bar{\psi}^{b_k}_{j_k}\rb\delta_{a_1b_2}\delta_{a_2b_3}\cdots\delta_{a_pb_1}
            \\&=\sum_{\pi\in\mathcal{S}_p}k^{\text{cycles}(\pi^0\pi)}\delta_{j_1i_{\pi(1)}}\cdots\delta_{j_p i_{\pi(p)}}
        \end{split}
    \end{align}
    Using this formula, we can calculate the connected correlation function of the $p$ operators $\lambda(\mathcal{O}_{1,(\psi)})\cdots\lambda(\mathcal{O}_{p,(\psi)})$:
    \begin{align}\label{equation:lowerpsi}
        \begin{split}
            \mathbb{E}\tr\lb\mathcal{O}_{1,(\psi)}\cdots\mathcal{O}_{p,(\psi)}\rb &= \sum_{i_1j_1\cdots i_pj_p}\lb\mathcal{O}_1\rb_{j_1i_1}\cdots\lb\mathcal{O}_p\rb_{j_pi_p}\mathbb{E}\tr\lb\psi_{j_1i_1}\cdots\psi_{j_pi_p}\rb
            \\&
            =\sum_{\pi\in\mathcal{S}_p}k^{\text{cycles}(\pi^0\pi)}\prod_{\gamma\in\pi}\tr\lb\overrightarrow{\prod}_{k\in \gamma}\mathcal{O}_k\rb
        \end{split}
    \end{align}
    \item \textit{Restrict to the tracial component of Wishart random matrices} Every matrix can be decomposed into a tracial component $\frac{1}{d}\tr\lb\mathcal{O}\rb\mathbbm{1}$ and a traceless remainder term. Following the previous example, we can project (apply a conditional expectation) the Wishart random matrix $\mathcal{O}^{(\psi)}$ to its tracial component: 
    \begin{equation}
        \overline{\mathcal{O}^{(\psi)}} := \frac{1}{d}\sum_{i,ab}\mathcal{O}^{ab}\psi^a_i\bar{\psi}^b_i\mathbb{I}
    \end{equation}
    In this case, we can also calculate the connected correlation function of $\lambda(\overline{\mathcal{O}_1^{(\psi)}})\cdots\lambda(\overline{\mathcal{O}_p^{(\psi)}})$. 
    \begin{align}
        \begin{split}
            \mathbb{E}\tr\lb\overline{\mathcal{O}_1^{(\psi)}}\cdots\overline{\mathcal{O}_p^{(\psi)}}\rb &= \frac{1}{d^p}\sum_{i_1\cdots i_p}\sum_{a_1b_1\cdots a_pb_p}\mathcal{O}^{a_1b_1}\cdots\mathcal{O}^{a_pb_p}\mathbb{E}(\psi^{a_1}_{i_1}\bar{\psi}^{b_1}_{i_1}\cdots\psi^{a_p}_{i_p}\bar{\psi}^{b_p}_{i_p})
            \\&=\frac{1}{d^p}\sum_{i_1\cdots i_p}\sum_{a_1b_1\cdots a_pb_p}\mathcal{O}^{a_1b_1}\cdots\mathcal{O}^{a_pb_p}\sum_{\pi\in\mathcal{S}_p}\prod_{1\leq k\leq p}\lb\delta^{a_kb_{\pi(k)}}\delta_{i_ki_{\pi(k)}}\rb
            \\&
            =\frac{1}{d^p}\sum_{\pi\in\mathcal{S}_p}d^{\text{cycles}(\pi)}\prod_{\gamma\in\pi}\tr(\overrightarrow{\prod}_{k\in \gamma}\mathcal{O}_k)
            \\&= \frac{1}{d^p}\sum_{\pi\in\mathcal{S}_p}\prod_{\gamma\in\pi}d\tr(\overrightarrow{\prod}_{k\in\gamma}\mathcal{O}_k)
        \end{split}
    \end{align}
    Notice that the algebra generated by the $\lambda(\overline{\mathcal{O}^{(\psi)}})$ operators is commutative because they are all proportional to the $d\times d$ identity matrix. When all operators $\mathcal{O}_i$ are the same, the connected correlation function is given by:
    \begin{equation}
        \mathbb{E}\tr\lb\lb\overline{\mathcal{O}^{(\psi)}}\rb^p\rb = \frac{1}{d^p}\sum_{\pi\in\mathcal{S}_p}\prod_{\gamma\in\pi}d\tr(\mathcal{O}^{|\gamma|})
    \end{equation}
    where $|\gamma|$ is the size of the irreducible cycle $\gamma$. We can reorganize the sum above as follows: for each permutation $\pi\in\mathcal{S}_p$, we are distributing $p$ copies of $\mathcal{O}$ into indistinguishable bins such that there are $\ell_1$ bins with one single $\mathcal{O}$, $\ell_2$ bins with two $\mathcal{O}$'s, and $\ell_k$ bins with $k$ many $\mathcal{O}$'s. Since we have $p$ many $\mathcal{O}$'s in total, we have:
    \begin{equation}
        \sum_{1 \leq k\leq p}k\ell_k = p
    \end{equation}
    The number of ways of having $\ell_k$ bins with $k$ many $\mathcal{O}$'s is given by $\frac{p!}{\prod_{1\leq k\leq p}k^{\ell_k}\ell_k!}$. To each bin with $k$ many $\mathcal{O}$'s we associate it with $d\tr(\mathcal{O}^k)$. Therefore to each permutation $\pi$ we associate:
    \begin{eqnarray}
        d^{\sum_k\ell_k}\tr(\mathcal{O})^{\ell_1}\tr\lb\lb\mathcal{O}\rb^2\rb^{\ell_2}\cdots\tr\lb\lb\mathcal{O}\rb^p\rb^{\ell_p}
    \end{eqnarray}
    Hence, we can rewrite:
    \begin{equation}\label{equation:simplecombinatorics}
        \mathbb{E}\tr\lb\lb\overline{\mathcal{O}^{(\psi)}}\rb^p\rb = \frac{1}{d^p}\sum_{\substack{0\leq\ell_1,\cdots,\ell_p\leq p\\ \ell_1 + 2\ell_2 + \cdots p\ell_p = p}}\frac{p!}{\prod_{1\leq k\leq p}k^{\ell_k}\ell_k!}d^{\sum_k\ell_k}\tr(\mathcal{O})^{\ell_1}\cdots\tr\lb\lb\mathcal{O}\rb^p\rb^{\ell_p}
    \end{equation}
    Using the formula, we can calculate the moment generating function:
    \begin{align}
        \begin{split}
            \mathbb{E}\tr\lb e^{\mu\overline{\mathcal{O}^{(\psi)}}}\rb &= \sum_{p\geq 0}\frac{\mu^p}{p!}\mathbb{E}\tr\lb\lb\overline{\mathcal{O}^{(\psi)}}\rb^p\rb\\&
            =\exp(-d\tr\log\lb \mathbbm{1} - \frac{\mu}{d}\mathcal{O}\rb) = \frac{1}{\det(\mathbbm{1} - \frac{\mu}{d}\mathcal{O})^d}
        \end{split}
    \end{align}
    This is the moment generating function of the Wishart distribution. Using this formula, we can directly calculate the moment generating function of the operators $\lambda(\overline{\mathcal{O}^{(\psi)}})$. The detailed derivation of this formula and its relation to Marolf-Maxfield's topological gravity with end-of-the-world branes \cite{MM} are discussed in depth in Appendix \ref{app:openclosed}.
    \end{enumerate}
    
\section{Baby universes and Poissonization of commutative algebras}\label{sec:babyuniversescommutative}

In this section, we establish that the simplest examples of our construction, namely the Poissonization of a commutative algebra, correspond to models of 2D closed topological QFTs whose path-integral sums over all Riemann surfaces interpolating between boundary circles. As the first example, in subsection \ref{MMmodel}, we show that the Marolf-Maxfield model of topological gravity in two-dimensions \cite{MM} is the Poissonization of the commutative algebra of complex numbers. Then, in subsection \ref{closedTQFFT}, we establish that more general theories of baby universe obtained in \cite{gardiner20212D} and \cite{banerjee2022comments} by summing over bordisms of circles in a general closed 2D topological field theory correspond to the Poissonization of the commutative algebra of diagonal matrices. In subsection \ref{subsec:otherexamples}, we comment on the cases of 2D Yang-Mills theory and the commutative algebra of pure JT gravity. For completeness, in Appendix \ref{app:closed}, we have included a review of closed and open/closed 2D TQFT and more mathematical details of the calculations of this Section.

\subsection{Marolf-Maxfield closed baby universes}\label{MMmodel}

In this section, we discuss the Marolf-Maxfield toy model of closed baby universes. First, we present a concise review of this simple theory. Then, we reformulate the theory in terms of Poissonization. In \cite{MM}, Marolf and Maxfield (MM) considered a topological two-dimensional theory that associates to a genus $g$ Riemann surface $\mM_{g,m}$ with $m$ closed boundaries a topological action that is
\begin{eqnarray}
    S[\mathcal{M}_{g,m}]=-S_0\chi(\mathcal{M}_{g,m})-S_\p \:m=-S_0(2-2g)-(S_\p-S_0) m
\end{eqnarray}
where $\chi$ is the Euler character, $g$ is the genus, and $S_0$ and $S_\p$ are parameters of the theory \footnote{Later, we will specialize to the case $S_0=S_\partial$.}.
This is a topological QFT, and in the holographic picture, the boundary quantum mechanics is trivial: the Hamiltonian is zero, the partition function is independent of $\beta$ and given by the dimension of the Hilbert space (a positive integer $d$). 

In the MM model, the gravity path-integral associated with the manifold $\mathcal{M}_{g,m}$ defines the weight
\begin{eqnarray}
    Z[\mathcal{M}_{g,m}]=e^{-S[\mathcal{M}_{g,m}]}
\end{eqnarray}
Summing over the weights that correspond to all two-dimensional manifolds (bordisms), connected or not, that interpolate between $m$-boundaries gives 
\begin{eqnarray}\label{ZnMM}
    Z[m]=\sum_{\substack{\text{$\mathcal{M}$ s.t. $\p\mathcal{M}$ has}\\
     \text{$m$ components.}}}\mu(\mathcal{M})e^{-S[\mathcal{M}]}
\end{eqnarray}
where $\mu(\mathcal{M})=\frac{1}{\prod_g n_g!}$ and $n_g$ is the number of connected components of genus $g$ with no boundaries in $\mathcal{M}$. From the expression above, it is clear that we are assuming the partition function of disconnected spacetimes multiply 
\begin{eqnarray}
    Z[\mathcal{M}_1\sqcup \mathcal{M}_2]=Z[\mathcal{M}_1]Z[\mathcal{M}_2]
\end{eqnarray}
and the sum in (\ref{ZnMM}) involves summing over all ways of partitioning $m$ boundaries into $k$ indistinguishable bins (connected spacetimes). The combinatorics of such partitions are controlled by the Touchard polynomials $B_m(\lambda)$, which are intimately tied to the Poisson distribution (see Appendix \ref{app:setpartitions}):
\begin{eqnarray}
    &&Z[m]=e^{Z[0]} B_m(Z[0]) e^{m(S_\p-S_0)}\nn\\
 &&Z[0]=\sum_ge^{S_0(2-2g)}=\frac{e^{2S_0}}{1-e^{-2S_0}}\ .
\end{eqnarray}
The quantity $Z[0]$ is the (re)-normalization of the Hartle-Hawking vacuum. MM interpreted the ratio below as the correlation functions of the ``partition function" operator $\hat{Z}$ that creates closed baby universes in a no-boundary state (Hartle-Hawking state) $\ket{HH}$.
\begin{eqnarray}
    \langle\hat{Z}^m\rangle\equiv \bra{HH}\hat{Z}^m\ket{HH}=\frac{Z[m]}{Z[0]}\ .
\end{eqnarray}
The equation in (\ref{ZnMM})  expressed in terms of the connected correlators becomes
\begin{eqnarray}\label{ZconMM}
  && \langle\hat{Z}^m\rangle_\text{conn}=Z[0] \sum_g e^{-S[\mathcal{M}_{g,m}]}=Z[0]x^m\ .
\end{eqnarray}
The generating function for these connected correlators is
\begin{eqnarray}
  &&\log\langle e^{u \hat{Z}}\rangle=Z[0]\lb e^{u x}-1 \rb\nn\\
   && x= e^{(S_\p-S_0)}\ .
\end{eqnarray}
Using these correlators, MM constructed the GNS Hilbert space of the commutative algebra of $\hat{Z}$ by interpreting the correlators as an overlap between excited states $\ket{Z^m}$:
\begin{eqnarray}
    &&\langle \hat{Z}^{m_1+m_2}\rangle=\braket{Z^{m_1}}{Z^{m_2}}\nn\\
    &&\ket{Z^m}=\hat{Z}^m\ket{HH}
\end{eqnarray}
The vacuum $\ket{HH}$ is called the Hartle-Hawking state:
\begin{eqnarray}\label{pnHH}
    \langle \hat{Z}^m\rangle=\bra{HH}\hat{Z}^m\ket{HH}=\sum_{n=0}^\infty p_{Z[0]}(n) (nx)^m\ .
\end{eqnarray}
where $p_{Z[0]}(n)$ is a Poisson distribution with intensity variable $\lambda=Z[0]$. The problem with this expression is that in a topological field theory, the Hamiltonian is zero and the partition function is a positive integer, namely the dimension of the Hilbert space. Promoting the partition function to a commutative operator means that $\hat{Z}$ is a random variable that takes values in $\mathbb{N}$. This contradicts the equation (\ref{pnHH}), which suggests that the eigenvalues of $\hat{Z}$ are non-integer values $nx$. To solve this problem, Marolf-Maxfield specializes to the case $S_\p=S_0$ so that $x=1$ and the eigenvalues $\hat{Z}$ are integers. Alternatively, one can allow for non-zero $x$ by promoting $xZ$ to an operator.

It is important to point out that in this model, the connected correlators of $m$ copies of the partition function are independent of $m$: 
\begin{eqnarray}\label{connectedZp}
    \langle Z^p\rangle_\text{conn}=\sum_{g>0}e^{-S_0(2g-2)}=\frac{e^{2S_0}}{1-e^{-2S_0}}\equiv \lambda\ .
\end{eqnarray}
This is the unique characteristic of a Poisson distribution
\begin{eqnarray}
    p_\lambda(n)=\frac{e^{-\lambda}\lambda^n}{n!}\ .
\end{eqnarray}

In Poissonization, the above model is the Poissonization of the commutative algebra of complex numbers (see also Appendix \ref{subsection:MMDW}). Each boundary circle is associated with a $1$-dimensional Hilbert space with trivial Hamiltonian $H = 0$. On this trivial system, we introduce a weight $\ket{\mu 1}$ such that $\bra{\mu 1}z\ket{\mu 1} = |\mu|^2z$ where $z\in \mathbb{C}$. Performing Poissonization on the data $(\mathbb{C}, \ket{\mu 1})$, we model the Hartle-Hawking state by the coherent vector $\ket{W(\mu 1)}$ where $\mu$ is fixed to satisfy $|\mu|^2 = Z[0]$ and the \textit{partition function operator} $\widehat{Z}$ by the Poisson quantized operator $\lambda(1)$ (i.e. the number operator). Notice that in Poissonization $\widehat{Z}$ is a second-order process involving emission and absorption of open baby universes (see also Section \ref{sec:sumovertopology}). As a brief check, the connected correlation function is given by:
\begin{equation}
    \bra{W(\mu 1)}\lambda(1)^m\ket{W(\mu 1)}_\text{conn} = \bra{\mu 1}1^m\ket{\mu 1} = |\mu|^2 = Z[0]
\end{equation}
This is exactly Equation \ref{ZnMM}. For more detailed discussion and comparison, please refer to Appendix \ref{subsection:MMDW}.

Notice that because $\widehat{Z}$ is modeled as a second-order process, and the excited states are found by acting the Poisson quantized operator (i.e., the number operator in this case) on the Hartle-Hawking state:
\begin{equation}
    \lambda(1)\ket{W(\mu 1)} = \mu a^\dagger\ket{W(\mu 1)}
\end{equation}
In addition, the following \textit{factorization} property\footnote{In the conventional approach to the universe field theory, the states that satisfy factorization are called $\alpha$-states.} holds:
\begin{equation}
    \bra{n}\lambda(z_1)\cdots\lambda(z_p)\ket{n} = z_1\cdots z_p\bra{n}\lambda(1)^p\ket{n} = z_1\cdots z_pn^p = \bra{n}\lambda(z_1)\ket{n}\cdots\bra{n}\lambda(z_p)\ket{n}
\end{equation}
where $z_i\in\mathbb{C}$ and $\lambda(z) = z\lambda(1) = z\widehat{N}$. The state $\ket{n}$ is the number eigenstate with $n$ closed baby universes. 

\subsection{General closed 2D TQFT}\label{closedTQFFT}

In \cite{gardiner20212D,banerjee2022comments}, the authors generalized the MM model by considering the action that is the sum of $S_0\chi$ and the Dijkgraaf-Witten action for topological gauge fields with finite gauge group $G$. It is convenient to label the matter boundary conditions on each circle by an irreducible representation $r$ of $G$. The result of the path-integral over a Riemann surface $\mathcal{M}_{g,m}$ with boundary labels $r_i$ with $i=1,\cdots, m$ associated to the boundary circle $i$ is \cite{gardiner20212D}
\begin{eqnarray}\label{dijWitt}
    Z[\mathcal{M}_{g,m},r_1,\cdots, r_m]= \lb\frac{e^{S_0}d_{r_1}}{|G|}\rb^{2-2g-m}\delta_{r_1\cdots r_m}\ .
\end{eqnarray}
where $d_r$ is the dimension of the irreducible representation $r$.
As before, to manifolds with disjoint pieces, we associate the multiplication of the partition functions of their connected components.
Summing over all bordisms with $m$ boundaries with the boundary conditions $r_1$ to $r_m$ to define the total partition function
\begin{eqnarray}\label{corr2DTQFT}
    Z[r_1,\cdots, r_m]=\sum_{\substack{\text{$\mathcal{M}$ s.t. $\p\mathcal{M}$ has}\\
     \text{$m$ components.}}}\mu(\mathcal{M})Z[\mathcal{M},r_1,\cdots, r_m]
\end{eqnarray}
Following MM \cite{gardiner20212D} and \cite{banerjee2022comments} interpreted the sum over bordisms in (\ref{corr2DTQFT}) as the correlators of a {\it partition function operator}
\begin{eqnarray}
    \langle \hat{Z}(r_1)\cdots \hat{Z}(r_m)\rangle=\frac{Z[r_1,\cdots,r_m]}{Z[0]}
\end{eqnarray}
where
\begin{eqnarray}\label{map2DTQFT}
    &&Z[0]=\sum_g\sum_rx_r^{2g-2}=\sum_rZ_r[0]\\\label{map2DTQFT2}
    &&x_r= \frac{|G|}{d_r} e^{-S_0},\qquad Z_r[0]= \frac{x_r^{-2}}{1-x_r^{2}}
\end{eqnarray}
The connected correlators and their generating function associated to (\ref{corr2DTQFT}) are
\begin{eqnarray}\label{cumulantsclosedTQFT}
    &&\langle \hat{Z}(r_1)\cdots \hat{Z}(r_m)\rangle_\text{conn}=Z_{r_1}[0]\: x_{r_1}^{m}\delta_{r_1\cdots r_m}\nn\\
    &&\log\langle e^{\sum_r u_r\hat{Z}(r)}\rangle=\sum_r Z_r[0]\lb e^{u_rx_r}-1\rb\ .
\end{eqnarray}
 More generally, one can consider the sum over bordisms in a general 2D semi-simple closed TQFT where the label $r$ and parameter $x_r$ corresponding to an irreducible representation of the group are replaced by the label $r$ for the eigenkets of the handle-creation operator with eigenvalue $x_r^2$. The formulae in (\ref{corr2DTQFT}),(\ref{map2DTQFT}) and (\ref{cumulantsclosedTQFT}) continue to hold \cite{gardiner20212D,banerjee2022comments}.

Following MM, \cite{gardiner20212D,banerjee2022comments}  construct the baby universe Hilbert space by interpreting the correlator as an inner product 
\begin{eqnarray}
    \langle \hat{Z}(r_1)\cdots \hat{Z}(r_m)\rangle=\braket{Z(r_1)\cdots Z(r_k)}{Z(r_{k+1})\cdots Z(r_m)}
\end{eqnarray}
in a GNS Hilbert space built on top of the Hartle-Hawking vacuum
\begin{eqnarray}\label{closed2DTQFTHH}
    \langle \hat{Z}(r_1)\cdots \hat{Z}(r_m)\rangle=\braket{\hat{Z}(r_1)\cdots \hat{Z}(r_m)}{HH}\ .
\end{eqnarray}
The excitations are
\begin{eqnarray}
    \ket{Z(r_1)\cdots Z(r_k)}=\hat{Z}(r_1)\cdots \hat{Z}(r_k)\ket{HH}\ .
\end{eqnarray}
It is clear from (\ref{cumulantsclosedTQFT}) and (\ref{closed2DTQFTHH}) that the Hartle-Hawking vacuum factors as $\ket{HH}=\otimes_r \ket{HH_r}$ and
\begin{eqnarray}
    &&\bra{HH}e^{\sum_r u_r\hat{Z}(r)}\ket{HH}=\prod_r\bra{HH_r}e^{u_r\hat{Z}(r)}\ket{HH}_r\nn\\
    &&\bra{HH_r}e^{u_r\hat{Z}(r)}\ket{HH_r}=e^{Z_r[0](e^{u_rx_r}-1)}\ .
\end{eqnarray}
As before, if we interpret the operator $\hat{Z}(r)$ as independent Poisson random variables, we find that it has Poisson probability $p_r(Z_r[0])$ associated with non-integer eigenvalues $n_r x_r$. In other words, we can write the expression above as
\begin{eqnarray}
   &&\bra{HH_r}e^{u_r\hat{Z}(r)}\ket{HH_r}=\int d\alpha \: p_{r,\alpha} e^{u_r\alpha_r}\nn\\
   &&p_{r,\alpha}=p_{n_r}(Z_r[0])\delta(\alpha_r-x n_r)
\end{eqnarray}
where $\alpha$ is a continuous variable, and $p_{n_r}(Z_r)$ is a Poisson random variable. 

We now show that the above construction is the Poissonization of a commutative algebra generated by the operator $X=\oplus_r x_r \mathbb{I}_r$ where $\mathbb{I}_r$ is the identity operator of dimension $d_r$. Note that the matrices $\mathbb{I}_{r}$ are introduced for convenience, and the same analysis applies for a set of rank one projections $e_{rr}$ with the parameter $x_r$ correctly identified in Poissonization. Consider the ``Hamiltonian" defined by the equation $X=e^{-\beta H}$ in the input algebra of diagonal matrices, where $\beta$ is an inverse temperature, and $x_r=e^{-E_r}$ with $E_r$ the eigenvalues of the ``one-boundary" Hamiltonian. Comparing to (\ref{map2DTQFT2}) suggests the identification 
\begin{eqnarray}
    &&E_r=E_0+\log d_r\\
    &&E_0=S_0-\log |G|
\end{eqnarray}
where $E_0$ is the {\it vacuum energy} corresponding to the trivial representation. The operator $\lambda(x_r\mathbb{I}_r)$ creates a closed boundary with boundary conditions $r$. We choose the weight $\oplus_r|\mu_r|^2\mathbb{I}_r$ and the Hartle-Hawking coherent state becomes
\begin{eqnarray}
    \ket{W(\mu \mathbbm{1})}=\otimes_r \ket{W(\mu_r \mathbbm{1}_r)}
\end{eqnarray}
In Poissonization, the connected correlation function is given by (c.f. Equation \ref{equation:diagonalconn})
\begin{eqnarray}
    \bra{W(\mu \mathbbm{1})}\lambda(x_{r_1}\mathbb{I}_{r_1})\cdots \lambda(x_{r_m}\mathbb{I}_{r_m})\ket{W(\mu \mathbbm{1})}_\text{conn}=(|\mu_{r_1} |^2D_{r_1})x_{r_1}^m\delta_{r_1\cdots r_m}\ .
\end{eqnarray}
Compared to equation (\ref{cumulantsclosedTQFT}), we find that summing over bordisms of a general 2D closed TQFT is the same as the Poissonization of the commutative algebra of diagonal matrices $\oplus x_r e_{rr}$ with the following identification
\begin{eqnarray}
    Z_r[0]=|\mu_r|^2D_r,\qquad \ket{HH}=\ket{W(\mu 1)}\ .
\end{eqnarray}
The factorized states correspond to the eigenstates of the number operator $\lambda(x_r\mathbb{I}_r)=x_r\hat{N}_r$.

For a more detailed discussion on Poissonization and 2D closed TQFTs, please refer to Appendix \ref{app:closed}. For a more detailed discussion on Poissonization and Dijkgraaf-Witten theory, please refer to Subsection \ref{subsection:MMDW}.

\subsection{2D Yang-Mills and JT}\label{subsec:otherexamples}

\paragraph{2D Yang-Mills:} To take a step away from topological theories, we can consider the example of  Yang-Mills theory in two dimensions. The action is 
\begin{eqnarray}
    I=-\frac{1}{4e^2}\int d^2x \sqrt{g} \tr(F_{\mu\nu}F^{\mu\nu})
\end{eqnarray}
which is invariant under the group of area-preserving diffeomorphisms. This group is so large that this theory has a type of topological QFT, as it depends only on $e^2A$, where $A=\int d^2x\sqrt{g}$ is the total area of the Riemann surface. Similar to a TQFT, we can label each boundary circle with an irreducible representation of the gauge group $r$. Then, the partition function of 2D Yang-Mills on a Riemann surface $\mM_{g,m}$ of genus $g$ with $m$ boundary circles is \cite{witten1991quantum}
\begin{eqnarray}\label{YangMills}
    Z[e^2A, \mM_{g,m},r_1,\cdots, r_m]=e^{-e^2A c_2(r)/2}d_r^{2-2g-m}\delta_{r_1\cdots r_m}
\end{eqnarray}
which is the same as the topological answer in (\ref{dijWitt}) with an extra exponential $e^{-e^2A\: c_2(r)/2}$ where $c_2(r)$ is the quadratic Casimir of $G$. 

Since the partition function of 2D Yang-Mills depends on $e^2A$, in addition to summing over topologies, we will have to integrate over areas as well. For a discussion of a measure for such integrals and the divergences that can arise, see \cite{Banerjee:2022szm}. Here, we just point out that the expression in (\ref{YangMills}) is compatible with a model of Poissonization for a commutative algebra where each close boundary with representation $r$ corresponds to the insertion of a projection $e_r$ and the $\delta_{r_1\cdots r_m}$ can be attributed to $\tr(e_{r_1}\cdots e_{r_m})$ in Poissonization.

\paragraph{Pure JT gravity:}
JT gravity is a simple 2D theory of gravity with the action
\begin{eqnarray}\label{JTaction}
    I_{JT}=S_0\chi-\left(\frac{1}{2}\int_\mathcal{M}\sqrt{g}\phi(R+2)+\int_{\partial\mathcal{M}}\phi K\right)
\end{eqnarray}
where $\chi=2-2g-n$ is the Euler characteristic of a Riemann surface of genus $g$ with $n$ disks removed.
In \cite{saad2019jt}, it was shown that JT gravity is dual to the double-scaling limit of a random matrix theory with potential $V(H)$
\begin{eqnarray}
    \mathcal{Z}=\int dH\:e^{-L\text{tr}(V(H))}
\end{eqnarray}
where $H$ is an $L\times L$ Hermitian matrix, and the partition function $Z(\beta)=\text{tr}(e^{-\beta H})$ may be viewed as an observable. In the double scaling limit $L\to \infty$ and the potential is tuned to zoom in over the edge of the spectrum where the density of eigenvalues takes the finite value $e^{S_0}$. In this correspondence, $S_0\sim 1/G_N$ in JT gravity, and the contribution of higher topologies to connected correlators of $Z(\beta)$ suffers a non-perturbative suppression  $e^{\chi S_0}$ where $\chi=2-2g-n$ is the Euler character of the surface
\begin{eqnarray}\label{Conneccor}
    \mathbb{E}(Z(\beta_1)\cdots Z(\beta_n))_{conn}=\sum_{g>0}e^{-S_0(2g+n-2)}Z_{g,n}(\beta_1,\cdots ,\beta_n)\ .
\end{eqnarray}
The higher genus terms represent the processes that involve the joining and splitting of closed asymptotic universes (the third quantization).

In the lowest order in this expansion, the one-point function of $Z(\beta)$ is given by the disk partition function in JT gravity \cite{cotler2017black,kitaev2019statistical,yang2019quantum}
\begin{eqnarray}
\mathbb{E}(Z(\beta))=\frac{e^{S_0}e^{\pi^2/\beta}}{\sqrt{2\pi}(2\beta)^{3/2}}
\end{eqnarray}
 whose inverse Laplace transform gives the density of states $\rho_0(E)$ with
\begin{eqnarray}\label{equation:spectraldensity}
    \mathbb{E}(\rho_0(E))=\frac{e^{S_0}}{4\pi^2}\sinh(2\pi\sqrt{E})\ .
\end{eqnarray}
Similarly, the dominant (genus zero) contribution to the connected correlator is given by the double trumpet
\begin{eqnarray}
    \mathbb{E}(Z(\beta_1)Z(\beta_2))_{conn}=\frac{\sqrt{\beta_1\beta_2}}{2\pi(\beta_1+\beta_2)}
\end{eqnarray}
whose inverse Laplace transform gives
\begin{eqnarray}
    \mathbb{E}(\rho_0(E_1)\rho_0(E_2))=\frac{-1}{2\pi^2}\frac{E_1+E_2}{\sqrt{E_1E_2}(E_1-E_2)^2}\ .
\end{eqnarray}
The higher-order correlators are computed using Mirzakhani recursion relations \cite{saad2019jt}.
 
At the level of the action, using the first-order formalism, one can rewrite the pure JT action in (\ref{JTaction}) as a topological BF gauge theory with gauge group $SL(2,\mathbb{R})$. However, $SL(2,\mathbb{R})$ is non-compact and its irreducible representations are infinite-dimensional. This goes beyond the conventional definition of topological TQFT a la Atiyah's axioms. 
However, similar to the topological theories, the algebra of operators of JT gravity is commutative \cite{penington2023algebras}. Recall that for a commutative algebra, the Poisson formula implies
\begin{eqnarray}\label{Poissoncommutative}
    \langle \hat{Z}(\beta_1)\cdots \hat{Z}(\beta_p)\rangle_{conn}=Z(\beta_1+\cdots +\beta_p)\ .
\end{eqnarray}
As we argued earlier, the Poisson limit theorem implies that at late times $T\sim e^{2S_0}$ the multi-point generalization of the spectral form factor with  $t_1+\cdots t_p=0$ is given by the Poisson answer above. Since JT is a chaotic theory with matrix degrees of freedom and eigenvalue repulsion, the onset of the plateau is earlier $T\sim e^{S_0}$\footnote{Of course, the plateau persists for later times as well.}. To isolate the late-time physics associated with the Poisson limit theorem (the plateau), following \cite{okuyama2020multi,blommaert2023integrable,saad2024convergent}, one can define the $\tau$-scaling limit $e^{S_0}\to \infty$ and $t\to \infty$ with $\tau=te^{-S}$ kept fixed\footnote{In \cite{saad2019late}, the author studied the contribution of closed baby universes to the ramp physics at time scale $T\sim e^{S_0/2}$. Here, we are concerned with later times and the physics of the plateau.}. Then, for $\rho_0 (E)<t_1,\cdots, t_p$ one obtains \footnote{We thank Luca Iliseiu for pointing out the recent work on the $\tau$-scaling limit of JT to us.}
\begin{eqnarray}
    \langle Z(\beta+t_1)\cdots Z(\beta+t_p)\rangle_{conn}=\int_0^\infty dE e^{-(\beta_1+\cdots +\beta_p)E}\rho_0(E) =Z(\beta_1+\cdots +\beta_p)
\end{eqnarray}
which matches the Poisson expression in (\ref{Poissoncommutative}).
For the two-point and three-point functions, the formula above was derived by resuming the sum over topologies in the $\tau$-scaling limit of the Mirzakhani recursion relations \cite{okuyama2020multi,blommaert2021eigenbranes,blommaert2023integrable,saad2024convergent}. Note that for short times, the sum over genus is divergent; however, at late enough times, the recursion relations simplify, and we can resum the sum over genus explicitly.

In JT gravity, the probe limit of a single open universe and a Fock space of closed baby universes in JT was studied in \cite{penington2023algebras}. For completeness, we summarize some of the key aspects of this Fock space and the universe field theory of JT gravity studied in \cite{post2022universe} in Appendix \ref{app:JT}.

\section{Open universes and Poissonization of matrix algebras}\label{sec:babyuniversesnonAbel}

In this section, we consider three examples of 2D theories with sum over topologies that include dynamical degrees of freedom in the bulk: the MM models with the EOW branes in subsection \ref{subsec:MMEOW}, the open-closed 2D TQFT in subsection \ref{subsection:openclosed}, and the simplified JT gravity with the EOW branes in subsection \ref{subsec:JTEOW}. We realize the sum over topologies in these examples as Poissonization of matrix algebras with random operators (see also Appendix \ref{app:openclosed}). Our discussion of simple operators of chaotic systems in Section \ref{sec:chaoticsystems} provided a physical justification of the randomness required in matching Poissonization with the sum over topologies in these examples \footnote{The study of lower-dimensional models of quantum gravity, and their connection to random matrices and chaos, has hightligthed the physics importance of randomness in simple operators.}. For more details on the calculations of this Section, see Appendix \ref{app:openclosed},

However, before giving explicit examples, in subsection \ref{subsec:indistinguish}, we point out that even in the absence of randomness, if we symmetrize the correlators of Poissonization of matrix algebras, we can match models with dynamical degrees of freedom in the bulk. The origin of this is simply the fact that Poissonization treats excitations as distinguishable, but bulk dynamical degrees of freedom should be treated as indistinguishable.
In subsection \ref{subsec:ETH}, we argued that this symmetrization can be explained using random operators, which is physically rooted in the ETH for chaotic quantum systems. 

\subsection{Indistinguishability and symmetrization}\label{subsec:indistinguish}

In Poissonization, we obtain a noncommutative algebra on the symmetric Fock space of the boundary Hilbert space. The Poisson statistics follow from counting the number of partitions of a set of $n$ distinguishable objects into $m$ indistinguishable bins. In comparison to a sum over topologies, each bin corresponds to a connected universe, and disconnected universes are indistinguishable. Connected correlators correspond to connected universes. In Poissonization, the contribution of a single connected universe to the correlators of $\lambda(\mathcal{O}_1)\cdots \lambda(\mathcal{O}_n)$ is proportional to $\text{tr}(\mathcal{O}_1\cdots \mathcal{O}_n)$. The noncommutativity of the algebra of Poissonization is related to the fact that $\lambda(\mathcal{O}_i)$ are treated as {\it distinguishable} operators, which is reflected in the fact that the order of operators in $\text{tr}(\mathcal{O}_1\cdots \mathcal{O}_n)$ matters. Insisting on a description that treats $\lambda(\mathcal{O}_i)$ as indistinguishable results in a set of correlation functions that do not care about order. We will see that the combinatorics of indistinguishable operators $\lambda(\mathcal{O})$ can be derived from Poissonization simply by summing over all orders:
\begin{eqnarray}\label{transtocomm}
    \langle \lambda(\mathcal{O}_1)\cdots \lambda(\mathcal{O}_n)\rangle_\text{Indis, conn}=\sum_{\pi\in c_n}\langle \lambda(\mathcal{O}_{\pi(1)})\cdots \lambda(\mathcal{O}_{\pi(n)})\rangle_\text{Poiss,conn} 
\end{eqnarray}
where $c_n$ are all $n$-cycles. Note that we need to restrict the sum to $n$-cycles to ensure that the correlated (universe) remains connected. Of course, after symmetrization, commuting the operators $\lambda(\mathcal{O}_i)$ does not change the correlation function.

In the discussion above, one might get the impression that the distinguishability of operators $\lambda(\mathcal{O}_i)$ is a choice. If we choose to be distinguishable, we obtain Poissonization, and if we choose them to be indistinguishable, we have to perform the sum in (\ref{transtocomm}), and we obtain the result that matches the examples with dynamical degrees of freedom. However, as we will see, the randomness in simple operators results in the statistics required by indistinguishable bulk probes.

\subsection{Marolf-Maxfield model with EOW branes}\label{subsec:MMEOW}

In this subsection, we revisit Marolf-Maxfield's toy model of topological quantum gravity. To come closer to realistic models of gravity, Marolf and Maxfield generalized their model in \cite{MM} by introducing a new type of boundary that is an interval ending on a pair of the {\it end-of-the-world} (EOW) branes. This generalization is analogous to cutting a closed boundary in the middle, as we described in Section \ref{sec:sumovertopology}, and setting different boundary conditions on the lower and upper half-circles; see Figure \ref{fig2}. They allow $k$ EOW brane types, labeling boundary conditions by a pair $(a,b)$ with $a,b=1,\cdots, k$. The EOW branes are dynamical, and the gravitational path-integral sums over all EOW configurations allowed. This means that different EOW configurations in the sum are indistinguishable.

To build a Hilbert space of baby universes with these new labels, Marolf and Maxfield introduced new operators that we denote by $\psi^{ab}$ \footnote{In the original work, Marolf and Maxfield used indices $1\leq i,j\leq k$ to label the types of EOW branes. They denote these new operators as $\widehat{(\psi_j,\psi_i)} = \sum_{1\leq a\leq d}\psi^a_i\bar{\psi}^a_j$. In our notation $\psi^{ab} := \frac{1}{d}\sum_{i}\psi^a_i\bar{\psi}^b_i$. Notice that we have added the normalization factor $\frac{1}{d}$ and use a different set of indices.} next to their closed-boundary operators $\hat{Z}$. Then, using the Euclidean gravitational path integral, they argued that these operators have to commute in the symmetric Fock space
\begin{eqnarray}\label{commuteboundary}
    [\psi^{ab}, \psi^{cd}]=0\ .
\end{eqnarray}
A similar conclusion was reached in earlier work on baby universes in \cite{coleman1988black,giddings1988loss}. 
The commutation relation in (\ref{commuteboundary}) implies that these operators can be simultaneously diagonalized\footnote{In their analysis, the commutation relation plays an important role in the ensemble interpretation of gravity and the definition of the so-called {\it $\alpha$-states} \cite{coleman1988black,MM}.}.

The modern approach to summing over topologies in the gravitational path-integral is called the {\it universe field theory}, which draws an analogy with the worldsheet string theory. However, the commutativity assumption in (\ref{commuteboundary}) is in tension with this modern picture. From the perspective of the universe field theory, the operators that create boundaries with different boundary conditions need not commute\footnote{For concrete examples, see the discussion in Section 8 of \cite{giddings1989baby} or more recently in \cite{anous2020density}. For more recent attempts to resolve this tension by defining a noncommutative algebra of observables, see \cite{casali2022baby}.}. 

Later studies of the EOW branes in simplified models of two-dimensional gravity, such as simplified JT gravity in \cite{Penington:2019kki} and more detailed work in \cite{jafferis2023jackiw,jafferis2023matrix}, have highlighted the importance of the ETH and chaos in gravity. In another line of development, the MM model with the EOW branes was generalized to arbitrary 2D open-closed TQFTs in \cite{gardiner20212D,banerjee2022comments}. We discuss these models in the next subsection.

Since EOW branes are bulk dynamical objects, Marolf and Maxfield treat them as indistinguishable, and sum over all two-dimensional topologies with the same measure as before, now with the boundary conditions $\widehat{(\psi_{j_\alpha}, \psi_{i_\alpha})}$ \footnote{Here we conform to Marolf-Maxfield's notation.} on the $\alpha$-th open boundary. Summing all possible configurations, the following generating function is found (c.f. Equation 3.34 \cite{MM} and Equation \ref{equation:MMkey}):
\begin{equation}
    \langle\exp\lb uZ + \sum_{1\leq i,j\leq k}t_{ij}\widehat{(\psi_j,\psi_i)}\rb\rangle = \exp(\lambda\frac{e^u}{\det(1 - t)})
\end{equation}
The generating function is calculated in the Hartle-Hawking state.

To model this construction using Poissonization, we use the central simple operators \footnote{For a detailed explanation of where this terminology and definition come from, please refer to Appendix \ref{app:openclosed}.} of the form (c.f. Subsection \ref{subsection:centralsimple}) $\overline{\mathcal{O}^{(\psi)}} = \frac{1}{d^2}\sum_{i,ab}\mathcal{O}^{ab}\psi^a_i\bar{\psi}^b_i\mathbbm{1}$. These operators are central because they are all proportional to the identity matrix. In particular, when $d = 1$, we have: $\overline{\mathcal{O}^{(\psi)}} = \sum_{i,ab}\mathcal{O}^{ab}\psi^a_i\bar{\psi}^b_i$ where we have dropped the notation for identity matrix because for $d = 1$ the identity matrix is simply $1$. Notice that for $d = 1$, we can rewrite our definition of $\overline{\mathcal{O}^{(\psi)}}$ in terms of Marolf-Maxfield's original notation \cite{MM}:
\begin{equation}
    \overline{\mathcal{O}^{(\psi)}} = \sum_{i,ab}\mathcal{O}^{ab}\psi^a_i\bar{\psi}^b_i = \sum_{ab}\mathcal{O}^{ab}\psi^{ab} = \sum_{ab}\mathcal{O}^{ab}\widehat{(\psi_b, \psi_a)}
\end{equation}
This is exactly the boundary operator considered in the generating function (c.f. Equation 3.34 \cite{MM}). In terms of Poissonization, the same generating function is written as \footnote{Here we used slightly different notation for various constants in the generating function (e.g. $\lambda\mapsto\mu$). This is to avoid notational confusion in this work.}:
\begin{equation}
    e^\mu\varphi_{\mu\mathbb{E}}\lb e^{u\lambda(1) + \lambda(\overline{\mathcal{O}^{(\psi)}})} \rb = \exp(\mu\frac{e^u}{\det(1 - \mathcal{O})})
\end{equation}
For a detailed derivation and discussion, please refer to Subsection \ref{subsection:MMEOWdetail}.
\subsection{Summing over bordisms in open/closed 2D TQFT}\label{subsection:openclosed}

As the second example, following \cite{gardiner20212D,banerjee2022comments}, we consider a general open/closed 2D TQFT. Compared with closed 2D TQFT, the key new ingredient of an open/closed TQFT is the so-called open-to-closed bordism \cite{banerjee2022comments, Lauda-Pfeiffer}. Using this new ingredient, the n-point correlation function, which corresponds to a bordism that maps $n$ open boundaries to $\mathbb{C}$, can be written as \cite{banerjee2022comments}:
\begin{equation}
    \overline{\mathcal{A}}^\text{open}(n,0):\mathcal{O}_1\cdots\mathcal{O}_n\mapsto\sum_{\pi\in\mathcal{S}_n}B_{|\pi|}(\mu)e^\mu\prod_{\gamma\in\pi}\mu^{-1}\tr(\overrightarrow{\prod}_{k\in\gamma}\mathcal{O}_k)
\end{equation}
Due to the presence of the Touchard polynomial, it is almost immediately clear that one can model a general open/closed 2D TQFT using Poissonization. This is carried out in detail in Subsection \ref{subsection:openclosedappendix}. 
\subsection{Simplified JT gravity with EOW branes}\label{subsec:JTEOW}

Finally, we consider the example of simplified JT gravity with the EOW branes as in \cite{Penington:2019kki}. 
One can include the EOW branes in JT gravity, by adding to the action $\mu\: l$ where $l$ is the proper length of the brane and $\mu$ is the brane's tension:
\begin{eqnarray}
    I=I_{JT}+\mu \int_{brane} ds\ .
\end{eqnarray}
The closed boundary sector of JT is a random matrix theory, and topology fluctuations (adding genus) are suppressed by $e^{-S_0}\sim 1/d$. In the main text of the original work of \cite{Penington:2019kki}, the authors focused on the planar limit where $d\rightarrow\infty$ (c.f. Section 2 \cite{Penington:2019kki}). As we establish here using Poissonization of $k\times k$ random matrices embedded in $d$-dimensional Hilbert spaces, we can capture:
\begin{enumerate}
    \item The ensemble boundary dual of simplified JT with EOW brane (c.f. Appendix D \cite{Penington:2019kki})
    \item And the planar limit when $d\rightarrow\infty$
\end{enumerate}

Besides the gravitational operator $Z(\beta)$ corresponding to a closed boundary of renormalized circumference $\beta$, JT gravity with EOW brane also admits open boundaries which correspond to $k\times k$ matrix degrees of freedom representing matter fields. The global Hilbert space is $\mathcal{H}_\text{grav}\otimes\mathcal{H}_\text{matter}$ and the total system is in the state (c.f. Equations 2.5, D.2, D.3 \cite{Penington:2019kki}):
\begin{equation}
    \ket{\Psi(\beta)} = \frac{1}{\sqrt{k}}\sum_a\ket{\psi_a(\beta)}\ket{a}
\end{equation}
\begin{equation}
   \ket{\psi_a(\beta)} = \sum_i 2^{1/2 - \mu}e^{-\beta E_i}\Gamma(\mu - \frac{1}{2} + i\sqrt{2E_i})\psi_i^a\ket{i}
\end{equation}
where $\psi_i^a$ is a standard complex Gaussian random variable \footnote{Here the notation suggests no energy degeneracy. However, the exact same calculation can be used to include degeneracy with slightly more complicated notations.}. The partial density matrix $\rho(\beta) := \tr_\text{grav}\ket{\Psi(\beta)}\bra{\Psi(\beta)}$ of the entire system \footnote{Here the gravitational subsystem is traced out.} is given by Equation \ref{equation:JTdensity}. This is a random embedding of $d\times d$ matrices in a $k$-dimensional Hilbert space. Poissonization of such densities/operators is studied in depth in Appendix \ref{app:openclosed}. Applying the general formula of connected correlation function (c.f. Appendix \ref{app:openclosed}), we can derive a formula for the connected correlation between $n$-boundaries $\mathbb{E}\tr\lb\rho(\beta_1)\cdots\rho(\beta_n)\rb$ (c.f. Equation \ref{equation:corrJTEOW}).

Notice that the algebra generated by \textit{all} $d\times d$ matrices randomly embedded in a $k$-dimensional Hilbert space is noncommutative. However, the subalgebra generated by the density matrices $\rho(\beta)$ (where $\beta$ is a free parameter) is commutative. Hence, we can introduce a spectral density to rewrite the connected correlation function. In the limit when $k\rightarrow\infty$, the Poissonization calculation gives a well-defined limit for the connected correlation function. It turns out that the Poissonization answer matches the calculation in \cite{Penington:2019kki} (c.f. Equation 2.32) in the double scaling limit (both $k$ and $d$ go to infinity). For detailed discussion and derivations, we refer the reader to Appendix \ref{subsection:mathJTEOW}.

\section{Discussion}

It is well-known that tunneling effects in quantum mechanics are rare events with a probability rate that is exponentially small $\Gamma\sim e^{-O(S_E/\hbar)}$ in the semi-classical regime. It follows from the Poisson limit theorem that, at late times, $T\sim 1/\Gamma$, the statistics of multi-tunneling events are universally described by a Poisson process\footnote{In tunnel junctions, this Poisson process is called shot noise \cite{blanter2000shot}.}. In this work, first, we argued that since topology fluctuations in gravity are also exponentially suppressed, the contribution of baby universes to exponentially late-time physics of quantum gravity correlators can be universally described by a Poisson process. 

In third quantization or universe field theory, it is natural to allow for the creation and annihilation of all universes, including asymptotic open universes with noncommutative algebras. This suggests that the late-time and coarse-grained statistics of such rare events are universally captured by a noncommutative generalization of the Poisson process. We presented such a framework called Poissonization. We worked out simple 2D models of quantum gravity as examples of Poissonization. We observed that the indistinguishability of the bulk dynamical degrees of freedom is tied to the randomness of simple operators in ETH, quantum chaos, and gravity.

Similarly, in a quantum chaotic system, by the ETH, the amplitude for a simple operator to change the macroscopic energy of the system is also exponentially small and random. We argued that by the Poisson limit theorem, the contribution of these rare processes to powers of the simple operators at late time is captured by Poissonization of Gaussian random operators. 

Coleman had argued that, to avoid causality problems, the algebra of operators that create and annihilate baby universes $a^\dagger(J)/a(J)$ has to be commutative. Later, Marolf and Maxfield used the Euclidean path-integral methods to reach the same conclusion \cite{MM}. In simple models such as 2D Topological QFTs (TQFT) and pure JT gravity, this algebra is indeed commutative. This is in tension with the universe field theory paradigm, where the algebra of creation/annihilation operators is that of a quantum field theory, and hence noncommutative. The explicit construction of the universe field theory in simple  String theory motivated models and JT gravity resulted in a noncommutative algebra \cite{giddings1988axion,post2022universe}. See \cite{anous2020density,casali2022baby} for attempts to reconcile the two pictures. 
More generally, once we allow creation/annihilation operators for asymptotic open boundaries as well. It is no longer reasonable to expect a commutative algebra. Our operator algebraic approach to third quantization (Poissonization) provides a resolution to the puzzle above.

As we saw in this work, for the Poisson limit theorem, one needs to wait for times of order $e^{2S}$, which is significantly longer than the timescale of the onset of the plateau in the spectral form factor of quantum chaotic systems and Gaussian random matrices. We attribute this to the fact that our simplified model of the Poisson process does not incorporate the finite dimensionality of the total Hilbert space and the fact that we have matrix degrees of freedom. 

One possibility to go beyond our simplified description is as follows: Instead of the order limits ($e^S\to \infty$ first and then $T\to\infty$ we took in Section \ref{sec:chaoticsystems} that led to independent noncommutative events, we can separate events by the scrambling time or Thouless time. In this limit, successive events are freely independent \cite{chandrasekaran2023large}. The analog of the Poisson limit theorem in this case is a free Poisson limit theorem with an asymptotic universal free Poisson distribution, otherwise known as the Pastur-Marchenko distribution. It is known that free Poisson and free Gaussian distributions (GUE ensemble) are related by a change of variables.  We postpone an exploration of free-Poissonization and more generally $q$-deformed Poissonization and its connections to double-scaled SYK model and JT gravity to future work \cite{berkooz2019towards,lin2022bulk,lin2023symmetry}.

In this work, we argued that our universal Poisson process, in essence, captures the count of the number of ways to partition $p$ into $k$ connected universes. From the point of view of quantum chaos, these are non-perturbative corrections that capture deviation from random matrix theory, and non-perturbative corrections to the spectral density. More explicitly, in the universe field theory of JT (Kodaira-Spencer theory), there are two types of branes: non-compact ones and compact ones \cite{post2022universe}. The random matrix dual of JT arises from integrating out the non-compact branes. The physics of non-compact branes results in non-perturbative corrections that are not captured in the random matrix model \cite{saad2019jt,post2022universe,altland2023quantum}. It is interesting to explore potential connections between our universal Poisson distribution and the semi-classical theory of periodic orbits, which have been argued to play a role in the physics of the plateau\footnote{For instance, see the recent work \cite{saad2024convergent}.}.

Our discussion of Poisson distributions resonates with the recent work in \cite{de2024principle} that applies the maximum ignorance principle to semiclassical gravity. In that work, the authors argue that wormhole contributions capture the ignorance of the microstate of quantum gravity, given a fixed semiclassical bulk description. It is worth noting that given a discrete random variable that takes values on positive integers (e.g., number of events) with a fixed one-point expectation value, a Poisson distribution maximizes the entropy (ignorance)\footnote{Recall that Gaussians are the maximum entropy distributions for a real random variable for fixed one-point and two-point expectation values. Note that if, in addition to a one-point function, for a discrete random variable, we also fix the second moment, the maximum entropy distribution is an exponential distribution.}. We postpone the exploration of the connections between Poissonization, wormhole physics, and the maximum uncertainty principle to upcoming work.

Finally, we would like to point out that in the probe limit approach to third quantization (one parent universe and integrating out the Fock space of baby universes), the mathematics of Poissonization resembles the construction of continuous matrix product states in \cite{ganahl2017continuous,haegeman2013calculus}. We postpone the exploration of this connection to future work.

We saw in Section \ref{subsec:coherentbipartite} that the output algebra of Poissonization always has a center generated by functions of the number operator $N$. As can be seen from the commutation relation $[\lambda(\mathcal{O}),N]=\lambda([\mathcal{O},1])=0$.
In a holographic context, it is desirable to take the algebra of a single boundary to be infinite-dimensional $I_\infty$ and the input weight to be the tracial weight. The construction of Section \ref{subsec:coherentbipartite} using creation/annihilation operators no longer applies because the trace is not finite, and one has to use the more general definition of Poissonization we discussed in Section \ref{app:math}. Quite surprisingly, in this case, the output algebra of Poissonization is a factor! It would be interesting to explore the implications of this fact for the ensemble interpretation of gravity.

We would like to end by pointing out a key question that needs further investigation. We saw that, in the framework of third quantization, the rare event is a topology fluctuation. In our discussion of ETH, it is not clear to the authors what the rare event is that is responsible for the late-time Poisson behavior of the generalized spectral form-factor.

As a guide to the appendices, Appendices \ref{app:closed}, \ref{app:openclosed}, \ref{app:JT}, and \ref{app:plateau} are physics appendices complementing and extending various discussions in the main text. Appendices \ref{app:setpartitions}, \ref{app:universalityPoisson}, and \ref{app:math} are math appendices laying the necessary foundations for Poisson statistics and noncommutative Poissonization.
    
\section*{Acknolwedgements}

NL would like to thank Scott Collier, Shawn Cui, Luca Iliesiu, Juan Maldacena, Herny Maxfield, Gregory Moore, Moshe Rozali, Julian Sonner, Jeremy van der Heijden, Eric Verlinde and Qi Zhou for discussions. YC would like to thank Roy Araiza for the discussions. NL's work is supported by the DOE grants DE-SC0007884 and DE-SC0025547. MJ's work is supported the NSF grant DMS-2247114.

\appendix
\section{List of Symbols}\label{app:symbols}
\begin{longtable}[H]{c|c}
        $\mathcal{N}$ or $\mathcal{M}$ & a von Neumann algebra of single-particle/single-boundary observables  \\
        $\omega$ & a normal faithful (semi-)finite weight on the algebra $\mathcal{N}$ \\
        $\sigma_t^\omega$ & the modular automorphism of $\omega$
        \\
        $J_\omega$ & the modular conjugation of $\omega$
        \\
        $\mathbb{P}_\omega \mathcal{N}$ & a Poisson algebra generated by the pair of input date $(\mathcal{N},\omega)$ \\
        $\varphi_\omega$ & a Poisson state associated with the input weight $\omega$ \\
        $\sigma_t^{\varphi_\omega}$ & the modular automorphism  of the Poisson state $\varphi_\omega$\\
        $J_{\varphi_\omega}$ & the modular conjugation of the Poisson state $\varphi_\omega$
        \\
        $\mathcal{H}$ & a Hilbert space \\
        $\overline{\mathcal{H}}$ & the conjugate Hilbert space
        \\
        $M_n(\mathbb{C})$ & the algebra of $n\times n$-matrices
        \\
        $\mathbb{B}(\mathcal{H})$ & an algebra of bounded operators on the Hilbert space $\mathcal{H}$ \\
        $\tr$ & unnormalized trace on $\mathbb{B}(\mathcal{H})$
        \\
        $\tr_n$ & unnormalized trace on $M_n(\mathbb{C})$
        \\
        $\mathbb{P}_{\tr}\mathbb{B}(\mathcal{H})$ & the Poisson algebra generated by $(\mathbb{B}(\mathcal{H}), \tr)$
        \\
        $\mathcal{R}$ & the hyperfinite type $II_1$ factor
        \\
        $\tau$ & the canonical trace on $\mathcal{R}$
        \\
        $\lambda$ & the Poisson quantization map
        \\
        $\lambda(\mathcal{O})$ & a Poisson-quantized operator
        \\
        $\lambda_N(\mathcal{O})$ & an $N$-th degree Poisson-quantized operator
        \\
        $\Gamma(x)$ & an integrated Poisson-quantized operators
        \\
        $\Gamma(e^{ix})$ & an integrated Poisson-quantized operators associated with unitary $e^{ix}$
        \\
        $L_2(\mathcal{N},\omega)$ or $\mathcal{H}_\omega$ & standard Hilbert space of $\mathcal{N}$ under the weight $\omega$
        \\
        $L_2(\mathbb{P}_\omega \mathcal{N}, \varphi_\omega)$ or $\mathcal{H}_{\varphi_\omega}$ & standard Hilbert space of the Poisson algebra $\mathbb{P}_\omega\mathcal{N}$ under the state $\varphi_\omega$
        \\
        $\mathcal{F}_\text{sym}(\mathcal{H})$ & the symmetric Fock space of a single-particle Hilbert space $\mathcal{H}$
        \\
        $\mathcal{F}_\text{sym}(L_2(\mathcal{N},\omega))$ & the symmetric Fock space of the standard Hilbert space $L_2(\mathcal{N},\omega)$
        \\
        $\mathcal{N}^{\otimes_\text{sym} N}$ & $N$-fold symmetric tensor product of $\mathcal{N}$
        \\
        $\mathcal{T}_\text{sym}(\mathcal{N})$ & the tensor algebra of $\mathcal{N}$
        \\
        $\omega^{\otimes N}$ & $N$-fold tensor product weight of $\omega$
        \\
        $\mathbb{N}$ & the set of natural numbers
        \\
        $\ell_2(\mathbb{N})$ & the Hilbert space of square-integrable sequences
        \\
        $[n]$ & the set of natural numbers $\{1,2,\cdots,n\}$
        \\
        $\mathcal{P}_n$ & the set of partitions on $[n]$
        \\
        $\mathcal{P}_{\text{pair}, 2k}$ & the set of pair partitions on $[2k]$
        \\
        $\sigma$ & a partition of some discrete set
        \\
        $S(n,m)$ & the Stirling number of second kind
        \\
        $B_n$ & the Bell number
        \\
        $\overrightarrow{\prod}$ & ordered product
        \\
        $\mathcal{S}_n$ & the permutation group on $[n]$
        \\
        $\mathcal{S}_A$ & the permutation group on a finite set $A$
        \\
        $\pi$ & a permutation 
        \\
        $\pi_A$ & a permutation on a finite set $A$
        \\
        $\gamma$ & an irreducible cycle in a permutation
        \\
        $\gamma_A$ & an irreducible cycle in a permutation on a finite set $A$
        \\
        $\vec{i}:[n]\rightarrow[m]$ & a set map from $[n]$ to $[m]$
        \\
        $\mathcal{A}$ & a unital $*$-associative algebra
        \\
        $\phi$ & a general linear functional
        \\
        $V$ & a general vector space
        \\
        $\ket{W(\mu^{1/2})}$ & coherent state with parameter $\mu$
        \\
        $\ket{W(\omega^{1/2})}$ &  coherent state associated to the canonical purification of the density matrix $\ket{\omega^{1/2}}$
        \\
        $\psi_i^a$ & standard complex Gaussian random variable
        \\
        $\Psi_d$ & quantum channel that creates $d\times d$ simple operators from finite matrices
        \\
        $\Psi_d^\dagger$ & dual channel (unital CP map) of $\Psi_d$
        \\
        $1\leq i,j,\cdots\leq d$ & indices for the gravitational sector
        \\
        $1\leq a,b,\cdots\leq k$ & indices for the matter sector
        \\
        $\langle\cdot\rangle_\text{conn}$ & connected correlation function
        \\
        $\langle\cdot\rangle$ & correlation function
        
\end{longtable}

\section{Closed 2D TQFTs and Poissonization - Mathematical Theory}\label{app:closed}

This is the first half of a two-part Appendix. The purpose of these two Appendices \ref{app:closed} and \ref{app:openclosed} is to provide a unified and rigorous connection between Poissonization and 2D TQFT, and how summing over 2D topologies is naturally included in the Poisson moment formula. In particular, we provide more mathematical details for the examples discussed in the main text in Section \ref{sec:babyuniversescommutative} and \ref{sec:babyuniversesnonAbel}.

In this section, we focus on summation over topologies and baby universes in closed 2D TQFTs. To prepare for the discussion of open/closed TQFTs in the next Appendix, here, we discuss a general relation between Frobenius algebras and von Neumann algebras.

Mathematically, a 2D TQFT (without summation over bordisms) is a monoidal functor from an appropriate category of 2D cobordisms to the linear category of complex vector spaces \cite{tqftreview, @Moore-Segal, abrams1996two, frobeniusTQFT2, frobeniusTQFT3}. In addition, it can be equivalently characterized as an appropriate Frobenius algebra. For example, it is well-known that any closed 2D TQFT is a symmetric monoidal functor from the category of oriented 2D cobordisms to the category of complex vector spaces, and such a theory is equivalently described by a commutative Frobenius algebra \cite{frobeniusTQFT2}. Note that by definition, complex Frobenius algebras are finite-dimensional. 

We propose Poissonization as a natural and general framework to study summation over bordisms in 2D TQFTs. First, we observe that in finite dimensions, the input data to Poissonization - namely, a von Neumann algebra $N$ and a normal faithful (finite) weight $\omega$\footnote{In this section, we only consider the case where $\omega$ is finite.} - can naturally be identified with a Frobenius algebra. From the TQFT's perspective, one can understand Poissonization as a canonical procedure that takes a Frobenius algebra (or, equivalently, a 2D TQFT) as input and canonically adds in the effect of summation over all bordisms \footnote{In mathematical terminology, the term ``canonical procedure" by the term ``functor", meaning that the map is structure preserving.}. One can understand the output algebra as a gravitational 2D TQFT, where the word ``gravitational" solely means summing over all bulk topologies (bordisms). To make the discussion less abstract, we discuss the concrete examples of closed 2D TQFTs, including Marolf-Maxfield's 2D topological gravity with baby universes and 2D Dijkgraaf-Witten theory with finite gauge groups. 

In this section, we will adopt notations familiar to the TQFT literature.

\subsection{Von Neumann algebra with weight as Frobenius algebra}

We start by reviewing the relevant definitions of Frobenius algebras \cite{abrams1996two, frobeniusTQFT2, frobeniusTQFT3}, emphasizing that a finite-dimensional von Neumann algebra with a normal faithful finite weight is a Frobenius algebra \footnote{In general, one can define a Frobenius algebra object in any monoidal (i.e., tensor) category. However, for us, we only consider Frobenius algebras in the monoidal category of complex vector spaces.}. 
\begin{definition}\label{def1Frobeinus}
    A {\it complex Frobenius algebra} is a unital associative algebra $A$ over $\mathbb{C}$ along with nondegenerate bilinear form $\sigma$ such that:
    \begin{enumerate}
        \item $\sigma(xy, z) = \sigma(x, yz)$
        \item $\varphi:A\rightarrow A^\dagger:x\mapsto \varphi_x(\cdot):=\sigma(x\cdot)$ is an isomorphism
    \end{enumerate}

    A {\it symmetric Frobenius algebra} is a Frobenius algebra $(A,\sigma)$ such that the ``tracial" property is satisfied $\sigma(xy) = \sigma(yx)$. A {\it commutative Frobenius algebra} is a Frobenius algebra where $xy = yx$.
\end{definition}
Using the isomorphism between $A$ and the dual $A^\dagger$, one can define a coalgebra homomorphism:
\begin{equation}
    \Delta: A\rightarrow A\otimes A
\end{equation}
and a co-unit (i.e. a linear functional):
\begin{equation}
    \epsilon:A\rightarrow \mathbb{C}
\end{equation}
In fact, an equivalent definition of a complex Frobenius algebra is the following:
\begin{definition}\label{def2Frobenius}
    A complex Frobenius algebra is a unital associative algebra $A$ that is also a coalgebra. In addition, the comultiplication and multiplication satisfy the following compatibility condition:
    \begin{equation}
        \Delta(xy) = m\otimes id_A(x\otimes\Delta(y)) = id_A\otimes m(\Delta(x)\otimes y)
    \end{equation}
    where $m:A\otimes A\rightarrow A:x\otimes y\mapsto xy$ is the multiplication and $\Delta:A\rightarrow A\otimes A$ is the comultiplication.
\end{definition}
Both definitions imply that $A$ is finite-dimensional. Then, it follows from the Artin-Wedderburn theorem that $A$ is a direct sum of matrix algebra $M_d(\mathbb{C})$ ($d\geq1$) \cite{lang02}. The counit $\epsilon$ defines a faithful linear functional on the matrix algebra. In the language of von Neumann algebras, $A$ is a type $I$ von Neumann algebra. In finite dimensions, the faithful linear functional $\epsilon$ is normal and the nondegenerate bilinear form is simply given by:
\begin{equation}
    \sigma(x,y) := \epsilon(xy)
\end{equation}

The equivalence of the two definitions relies on the fact that in finite dimensions, the existence of a nondegenerate bilinear form on a vector space $V$ implies that $V$ is isomorphic to its dual. This is not true in infinite dimensions. For example, if a Banach space is isomorphic to its dual, then the Banach space must be finite-dimensional. Here we choose to adopt the following generalization of Definition \ref{def1Frobeinus}:
\begin{definition}
    A complex Frobenius algebra (potentially infinite dimensional) is a unital associative algebra $N$ with a nondegenerate bilinear form $\sigma$ such that:
    \begin{equation}
        \sigma(x,yz) = \sigma(xy, z)
    \end{equation}
\end{definition}
This definition is almost identical to the original definition, but we omit the requirement that $N$ and $N^\dagger$ are isomorphic. Using this extended definition, any von Neumann algebra with a fixed normal faithful finite weight is a Frobenius algebra.

\begin{claim}
    A von Neumann algebra $N$ with a fixed normal faithful finite weight $\omega$ is a Frobenius algebra (in this extended sense) where the nondegenerate bilinear form is given by:
    \begin{equation}
        \sigma(x,y) := \omega(xy)
    \end{equation}
    It is clear that $\sigma(xy,z) = \omega(xyz) = \sigma(x,yz)$. In addition, this Frobenius algebra is symmetric if and only if $\omega$ is tracial:
    \begin{equation}
        \omega(xy) = \sigma(x,y) = \sigma(y,x) = \omega(yx)
    \end{equation}
\end{claim}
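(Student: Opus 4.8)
The plan is to verify, one at a time, the three clauses of the weakened Frobenius-algebra definition adopted just above the claim — that $N$ is a unital associative algebra, that $\sigma(x,y) := \omega(xy)$ is a nondegenerate bilinear form, and that $\sigma$ obeys the compatibility $\sigma(x,yz) = \sigma(xy,z)$ — and then to treat the symmetry criterion as a separate, purely definitional equivalence.

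First I would dispatch the easy clauses. A von Neumann algebra is by construction a unital associative algebra over $\mathbb{C}$, so the first requirement holds with nothing to prove. Bilinearity of $\sigma$ is immediate from the bilinearity of the algebra product together with the linearity of $\omega$; because $\omega$ is a \emph{finite} weight it is a bounded positive functional defined on all of $N$, so $\sigma(x,y)=\omega(xy)$ is well-defined for every pair $x,y\in N$ with no domain subtleties. The compatibility condition is equally direct, using only associativity of the product:
\[
\sigma(x, yz) = \omega\big(x(yz)\big) = \omega\big((xy)z\big) = \sigma(xy, z),
\]
which is exactly the inline computation the claim already records.

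The one genuinely substantive point is nondegeneracy, and this is where faithfulness — rather than normality or finiteness — does the work; I expect it to be the main obstacle in that it is the only step not following from pure algebra. For left nondegeneracy, suppose $x\in N$ satisfies $\sigma(x,y)=0$ for all $y\in N$. Choosing $y = x^\dagger$ gives $\omega(xx^\dagger)=0$; since $xx^\dagger\geq 0$ and $\omega$ is faithful, this forces $xx^\dagger = 0$ and hence $x=0$. The symmetric argument, taking $x=y^\dagger$ to conclude $\omega(y^\dagger y)=0\Rightarrow y=0$, gives nondegeneracy in the right slot. I would stress here that faithfulness yields only \emph{injectivity} of the induced map $x\mapsto \omega(x\,\cdot\,)$ into the algebraic dual $N^*$, not surjectivity: in infinite dimensions $N$ and $N^*$ are genuinely different spaces, and this is precisely why the excerpt weakens Definition~\ref{def1Frobeinus} by dropping the self-duality isomorphism that is automatic in the finite-dimensional case.

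Finally, for the symmetry criterion I would simply unfold the definition of a symmetric Frobenius algebra, $\sigma(x,y)=\sigma(y,x)$ for all $x,y$. Since $\sigma(x,y)=\omega(xy)$ and $\sigma(y,x)=\omega(yx)$, this condition reads $\omega(xy)=\omega(yx)$ for all $x,y\in N$, which is by definition the statement that $\omega$ is a trace. Reading the chain of equalities left-to-right establishes that a symmetric form forces $\omega$ tracial, and right-to-left gives the converse, so both directions of the equivalence come out at once, completing the verification.
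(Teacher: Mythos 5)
Your proposal is correct and follows essentially the same route as the paper, whose ``proof'' consists of exactly the inline compatibility computation $\sigma(xy,z)=\omega(xyz)=\sigma(x,yz)$ and the definitional symmetry-iff-trace equivalence. You go slightly further than the paper by actually verifying nondegeneracy (taking $y=x^\dagger$ and invoking faithfulness to get $\omega(xx^\dagger)=0\Rightarrow x=0$) and by noting that faithfulness gives only injectivity of $x\mapsto\omega(x\,\cdot\,)$ rather than the self-duality of Definition~\ref{def1Frobeinus} --- both points the paper leaves implicit, and both handled correctly.
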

It is well-known that any closed 2D TQFT is equivalent to a commutative Frobenius algebra \cite{abrams1996two, frobeniusTQFT2}. The generators of (the category of) closed 2D cobordims are given by Figure \ref{figure:closedtqft}. For the relations these generators must satisfy, please refer to \cite{abrams1996two, frobeniusTQFT2}.

Now we restrict our attention on a particular type of Frobenius algebra. It is known that an open/closed 2D TQFT is equivalent to a complex knowledgable Frobenius algebra \cite{frobeniusTQFT3}.
\begin{definition}
    A {\it complex knowledgable Frobenius algebra} is a tuple of the following data:
    \begin{enumerate}
        \item A symmetric complex Frobenius algebra $(A, \omega)$ - This algebra contains both the closed and the open sectors of the TQFT. For the cobordism generators of the open sector, please refer to Figure \ref{figure:opentaft}. For the relations these generators must satisfy, please refer to \cite{frobeniusTQFT3}
        \item A commutative complex Frobenius algebra $(C, \omega_C)$ - This is the closed sector of the knowledgable Frobenius algebra
        \item An algebra homomorphism (c.f. Figure \ref{figure:openclosed}):
        \begin{equation}
            \iota: C\rightarrow A
        \end{equation}
        \item A complex linear map (c.f. Figure \ref{figure:openclosed}):
        \begin{equation}
            \iota^\dagger: A\rightarrow C
        \end{equation}
    \end{enumerate}
    such that the following three conditions hold:
    \begin{enumerate}
        \item The algebra homomorphism $\iota$ maps $C$ to a subalgebra of the center of $A$:
        \begin{equation}
            \iota(C)\subset Z(A)
        \end{equation}
        \item The complex linear map $\iota^\dagger$ is dual to $\iota$ in the sense that:
        \begin{equation}
            \omega(\iota(x)y) = \omega_C(x\iota^\dagger(y))
        \end{equation}
        \item The so-called Cardy condition holds:
        \begin{equation}
            m\circ\tau_{A,A}\circ\Delta(x) = \iota\circ\iota^\dagger(x)
        \end{equation}
        where $\tau_{A,A}:A\otimes A\rightarrow A\otimes A:x\otimes y\mapsto y\otimes x$ is the twist map, $m$ is the multiplication map of the Frobenius algebra $A$, and $\Delta$ is the comultiplication map of $A$.
    \end{enumerate}
\end{definition}
\begin{figure}
    \centering
    \includegraphics[width=0.2\linewidth]{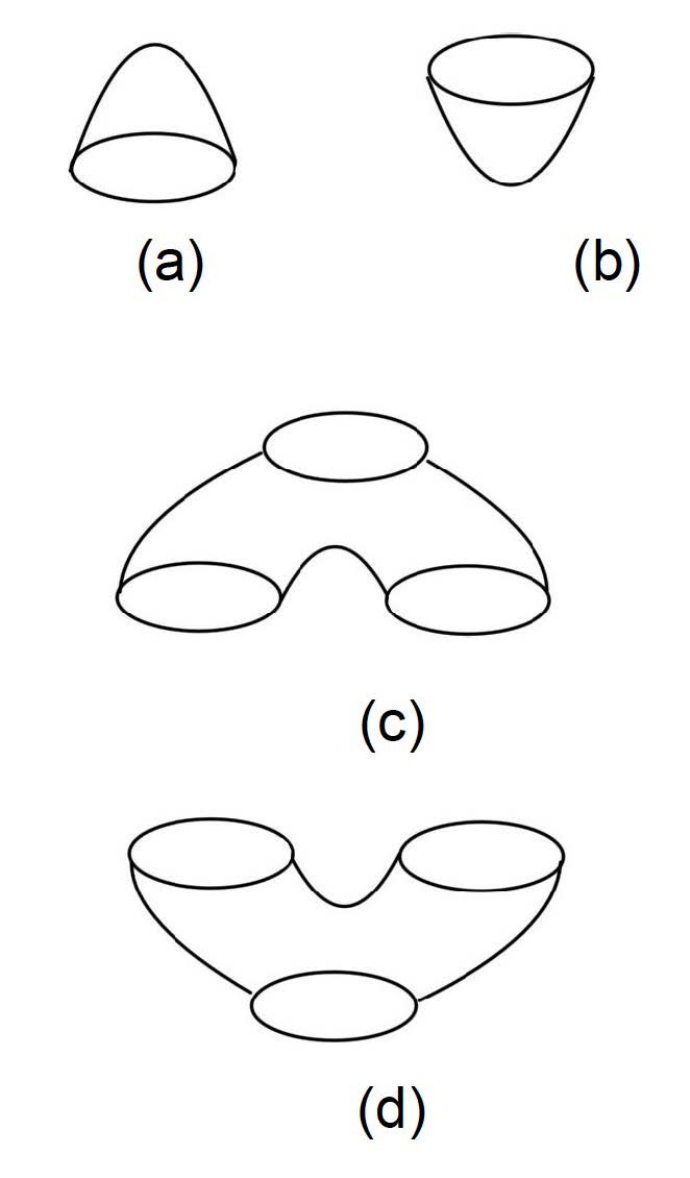}
    \caption{\small{(a) This cobordism connects the empty set (at the top of the figure) to a closed circle. This represents the unit in the commutative Frobenius algebra $C$. (b) This cobordism connects a closed circle to the empty set. This represents the counit (i.e. a nondegenerate linear functional) of the commutative Frobenius algebra $C$. (c) This cobordism connects one closed circle to a disjoint union of two closed circles. This represents the comultiplication of the commutative Frobenius algebra $C$. (d) This cobordism connects a disjoint union of two closed circles to one closed circle. This represents the multiplication of the commutative Frobenius algebra $C$.}}
    \label{figure:closedtqft}
\end{figure}
\begin{figure}
    \centering
    \includegraphics[width=0.2\linewidth]{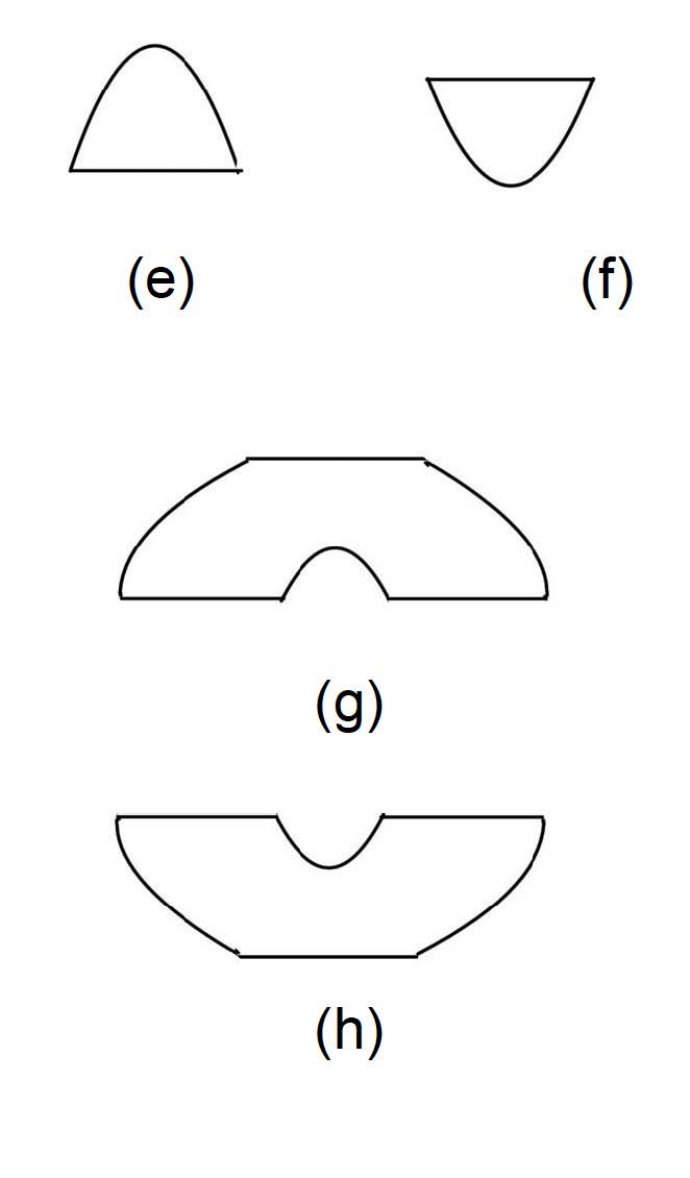}
    \caption{\small{(e) This cobordism connects the empty set (at the top of the figure) to a closed interval. This represents the unit in the symmetric (noncommutative) Frobenius algebra $A$. (f) This cobordism connects a closed interval to the empty set. This represents the counit (i.e. a nondegenerate linear functional) of the symmetric Frobenius algebra $A$. (g) This cobordism connects one closed interval to a disjoint union of two closed intervals. This represents the comultiplication of the symmetric Frobenius algebra $A$. (h) This cobordism connects a disjoint union of two closed intervals to one closed interval. This represents the multiplication of the symmetric Frobenius algebra $A$.}}
    \label{figure:opentaft}
\end{figure}
\begin{figure}
    \centering
    \includegraphics[width=0.2\linewidth]{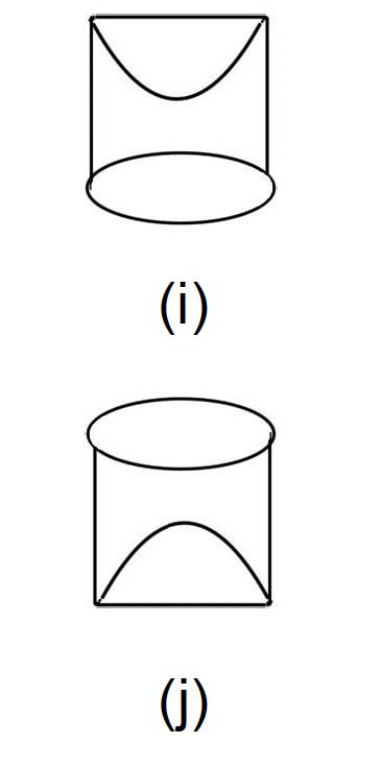}
    \caption{\small{(i) This cobordism connects a closed interval (at the top of the figure) to a closed circle. This is the so-called open-to-closed map. Algebraically, this represents the algebra homomorphism $\iota:C\rightarrow A$. (j) This cobordism connects a closed circle to a closed interval. This is the so-called closed-to-open map. This map is dual to the open-to-closed map. Algebraically, this represents the dual linear map $\iota^\dagger$.}}
    \label{figure:openclosed}
\end{figure}
To see that the definition is not vacuous, we recall that a matrix factor $M_d(\mathbb{C})$ can be used to construct a knowledgeable Frobenius algebra.

\begin{claim}
    A matrix factor $M_d(\mathbb{C})$ along with a finite weight $\tr$ is a symmetric Frobenius algebra. The nondegenerate bilinear form is given by $\tr(xy)$. The comultiplication is defined by:
    \begin{equation}
        \Delta(x) := \sum_{ijk}x_{ij}e_{ik}\otimes e_{kj}
    \end{equation}
    where $x = \sum_{ij}x_{ij}e_{ij}$ and $e_{ij}$ is the matrix unit in $M_d(\mathbb{C})$. The normalization is chosen so that the following duality holds:
    \begin{equation}
        \tr(x^\dagger yz) = \tr\otimes\tr(\Delta(x)^\dagger(y\otimes z))
    \end{equation}
    This follows from a simple calculation:
    \begin{align}
        \begin{split}
            \tr\otimes\tr(\Delta(x)^\dagger(y\otimes z)) &= \sum_{ijk}\bar{x}_{ij}\tr\otimes\tr((e^\dagger_{ik}\otimes e^\dagger_{kj})(y\otimes z)) = \sum_{ijk}\bar{x}_{ij}\tr(e_{ki}y)\tr(e_{jk}z)\\&=\sum_{ijk}\bar{x}_{ij}y_{ik}z_{kj} = \tr(x^\dagger yz)
        \end{split}
    \end{align}
    This duality relies on the conjugation. In general, a Frobenius algebra is not required to have a well-defined notion of conjugation. The compatibility conditions between multiplication and comultiplication can be checked explicitly as well:
    \begin{align}
        \begin{split}
            m\otimes id(x\otimes \Delta(y)) &= \sum_{ijklm}x_{lm}y_{ij}m\otimes id(e_{lm}\otimes e_{ik}\otimes e_{kj})
            =\sum_{ijklm}x_{lm}y_{ij}(e_{lm}e_{ik})\otimes e_{kj}
            \\
            &=\sum_{ijkl}x_{li}y_{ij}e_{lk}\otimes e_{kj} = \Delta(xy)
        \end{split}
    \end{align}
    \begin{align}
        \begin{split}
            id\otimes m(\Delta(x)\otimes y) &= \sum_{ijklm}x_{ij}y_{lm}id\otimes m(e_{ik}\otimes e_{kj}\otimes e_{lm})
            =\sum_{ijklm}x_{ij}y_{lm}e_{ik}\otimes(e_{kj}e_{lm}
            )
            \\
            &=\sum_{ijkm}x_{ij}y_{jm}e_{ik}\otimes e_{km} = \Delta(xy)
        \end{split}
    \end{align}
    Because the weight is tracial, this Frobenius algebra is symmetric.

    The center of $M_d(\mathbb{C})$ is $\mathbb{C}$. Trivially, it is a commutative Frobenius algebra where the nondegenerate bilinear form is simply the multiplication. There exists an algebraic embedding:
    \begin{equation}
        \iota:\mathbb{C}\rightarrow M_d(\mathbb{C}):z\mapsto z \mathbbm{1}
    \end{equation}
    and its dual normal conditional expectation (by Takesaki's theorem or explicitly by the tracial weight):
    \begin{equation}
        \iota^\dagger: M_d(\mathbb{C})\rightarrow \mathbb{C}: x\mapsto \tr(x)
    \end{equation}
    The compatibility condition between $\iota$ and $\iota^\dagger$ is simply the definition for conditional expectation:
    \begin{equation}
        \mu\tr(\iota(z)x) = \mu\tr((z\mathbbm{1})x) = \mu (z\tr(x))
    \end{equation}
    Finally, the Cardy condition can be checked by an explicit calculation:
    \begin{align}
        \begin{split}
            m\circ\tau\circ\Delta(x)&=\sum_{ijk}x_{ij}m\circ\tau(e_{ik}\otimes e_{kj}) = \sum_{ijk}x_{ij}m(e_{kj}\otimes e_{ik}) \\&= \sum_{ijk}x_{ij}\delta_{ij}e_{kk} = \tr(x)\mathbbm{1} = \iota\circ\iota^\dagger(x)
        \end{split}
    \end{align}
    Thus, the matrix factor with the tracial weight together with its center defines a knowledgeable Frobenius algebra. 
\end{claim}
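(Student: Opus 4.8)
The plan is to verify, structure by structure, that the concrete data $A=M_d(\mathbb{C})$, $C=\mathbb{C}$, with counit supplied by the tracial weight $\tr$, satisfies every axiom in the definition of a knowledgeable Frobenius algebra, thereby exhibiting a non-vacuous example. First I would record the two elementary identities that drive all subsequent computation, namely $e_{ij}e_{kl}=\delta_{jk}e_{il}$ and $\tr(e_{ij})=\delta_{ij}$. From these, nondegeneracy and associativity of the bilinear form $\sigma(x,y):=\tr(xy)$ are immediate: the pairing $\sigma(e_{ij},e_{kl})=\delta_{jk}\delta_{il}$ is dual on the matrix-unit basis, and $\sigma(xy,z)=\tr(xyz)=\sigma(x,yz)$ is just associativity of the product. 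This already realizes $(M_d(\mathbb{C}),\tr)$ as a Frobenius algebra in the sense of Definition \ref{def1Frobeinus}, and symmetry $\sigma(x,y)=\sigma(y,x)$ is precisely cyclicity $\tr(xy)=\tr(yx)$.

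Second, I would introduce the comultiplication $\Delta(x)=\sum_{ijk}x_{ij}\,e_{ik}\otimes e_{kj}$ and check the coalgebra package of Definition \ref{def2Frobenius}. The governing requirement is the duality $\tr(x^\dagger yz)=(\tr\otimes\tr)\big(\Delta(x)^\dagger(y\otimes z)\big)$, which one obtains by expanding $\Delta(x)^\dagger$ into matrix units and collapsing the Kronecker deltas; the normalization of $\Delta$ is fixed exactly so that this holds with no stray factor of $d$. The two Frobenius compatibility relations $m\otimes\mathrm{id}\,(x\otimes\Delta(y))=\Delta(xy)=\mathrm{id}\otimes m\,(\Delta(x)\otimes y)$ then reduce to the same index bookkeeping via $e_{ij}e_{kl}=\delta_{jk}e_{il}$.

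Third, I would assemble the open/closed (knowledgeable) layer. The center $Z(M_d(\mathbb{C}))=\mathbb{C}\mathbbm{1}$ is trivially a commutative Frobenius algebra, and I would set $\iota:\mathbb{C}\to M_d(\mathbb{C}),\ z\mapsto z\mathbbm{1}$ with dual map $\iota^\dagger=\tr$. Condition $(1)$, $\iota(\mathbb{C})\subset Z(A)$, is automatic; condition $(2)$, the $\iota$--$\iota^\dagger$ duality $\omega(\iota(z)x)=\omega_C(z\,\iota^\dagger(x))$, reduces to $\tr\big((z\mathbbm{1})x\big)=z\,\tr(x)$, which is the defining property of the trace as a conditional expectation onto the center.

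The one computation with genuine content---and hence the step I expect to be the main obstacle---is the Cardy condition $m\circ\tau_{A,A}\circ\Delta(x)=\iota\circ\iota^\dagger(x)$. Tracking the legs, $\Delta(x)=\sum_{ijk}x_{ij}\,e_{ik}\otimes e_{kj}$, the twist gives $\sum_{ijk}x_{ij}\,e_{kj}\otimes e_{ik}$, and multiplication collapses $e_{kj}e_{ik}=\delta_{ji}e_{kk}$, leaving $\sum_{i,k}x_{ii}\,e_{kk}=\tr(x)\mathbbm{1}$, which matches $\iota\circ\iota^\dagger(x)=\tr(x)\mathbbm{1}$. This is the algebraic shadow of the equality of the cylinder computed as an open double versus a closed handle, and it is the place where traciality of $\tr$ and the chosen normalization of $\Delta$ must conspire; the care needed is to ensure that the twist converts the open splitting-and-rejoining into exactly a trace times the identity, with no residual dimensional factor. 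Once this is verified, the data $(M_d(\mathbb{C}),\tr,\mathbb{C},\iota,\iota^\dagger)$ satisfies all the axioms and the claim follows.
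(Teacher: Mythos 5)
Your proposal is correct and follows essentially the same route as the paper: the same comultiplication $\Delta(x)=\sum_{ijk}x_{ij}\,e_{ik}\otimes e_{kj}$, the same duality and compatibility checks via matrix-unit index contractions, the same $\iota$, $\iota^\dagger=\tr$, and an identical Cardy computation $m\circ\tau\circ\Delta(x)=\sum_{ik}x_{ii}e_{kk}=\tr(x)\mathbbm{1}$. Your added remark making the nondegeneracy of the pairing explicit on the matrix-unit basis, $\sigma(e_{ij},e_{kl})=\delta_{jk}\delta_{il}$, is a small but harmless refinement of what the paper leaves implicit.
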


Next, we show how the data of a knowledgeable Frobenius algebra defines an open/closed 2D TQFT. The basic idea is that the symmetric Frobenius algebra corresponds to the algebra of observables on the open boundary intervals. The commutative Frobenius algebra corresponds to the algebra of observables on the closed boundary circles. The map $\iota^\dagger$ is the so-called open-to-closed bordism that ties the two ends of an open boundary interval and forms a closed circle \cite{frobeniusTQFT3}. 

If we focus on the closed sector of an open/closed 2D TQFT, then the data of a knowledgeable Frobenius algebra reduces to the commutative Frobenius algebra. This is the well-known equivalence between closed 2D TQFTs and commutative Frobenius algebras \cite{frobeniusTQFT2}. Using the language of von Neumann algebras, a knowledgeable Frobenius algebra is characterized by the following data:
\begin{enumerate}
    \item A tracial von Neumann algebra $\mathcal{N}$ with a tracial weight $\tau$.
    \item A central subalgebra $\mathcal{C}\subset Z(\mathcal{N})$ with $\tau$ restricted to $\mathcal{C}$.
    \item A normal conditional expectation $\mathcal{E}:\mathcal{N}\rightarrow \mathcal{C}$\footnote{If $\mathcal{C}$ is the center of $\mathcal{N}$, then such a conditional expectation always exists.}.
\end{enumerate}
Crucially, the Cardy condition is not easy to characterize using the language of von Neumann algebra and requires the existence of a comultiplication. This is not an issue in finite dimensions. The extension to infinite dimensions would be beyond the scope of this work.
\subsection{Summing over bordisms in closed 2D TQFTs and Poissonization of commutative von Neumann Algebras}
In this section, we discuss how Poissonization sums over bordisms given the input data of a 2D TQFT. In this subsection and the next section, we demonstrate this by considering two classes of 2D TQFTs: 
\begin{enumerate}
    \item Closed 2D TQFT. Such a TQFT is described by a (finite-dimensional) commutative Frobenius algebra. Equivalently, it is described by a (finite-dimensional) commutative von Neumann algebra with a fixed finite weight
    \item Open/Closed 2D TQFT. Such a TQFT is described by a (finite-dimensional) knowledgeable Frobenius algebra. For simplicity, we consider the case of a single matrix algebra with the canonical tracial weight. As we have seen in the main text, because the boundaries are indistinguishable, we need to modify the input data by coupling each matrix with a random Wishart matrix. The input algebra is effectively the algebra of Gaussian random matrices. This modification ensures that the Poisson moment formula is permutation invariant and correctly models the indistinguishability of the boundaries in an open/closed 2D TQFT.
\end{enumerate}
Using a mutually-orthogonal family of idempotents (i.e., projections), any commutative Frobenius algebra can be decomposed as:
\begin{equation}
    A := \oplus_{1\leq i \leq n} \mathbb{C}\epsilon_i
\end{equation}
where $\epsilon_i\epsilon_j = \delta_{ij}\epsilon_i$. The multiplicative unit is given by:
\begin{equation}
    1 = \sum_i\epsilon_i
\end{equation}The Frobenius bilinear form is given by:
\begin{equation}
    \sigma(\sum_i\alpha_i\epsilon_i, \sum_j\beta_j\epsilon_j) = \sum_i\alpha_i\beta_i\sigma(\epsilon_i,\epsilon_i)
\end{equation}
Equivalently, $A$ is a finite-dimensional commutative von Neumann algebra isomorphic to $\ell_\infty^n$. When $\sigma(\epsilon_i, \epsilon_i) > 0$, the Frobenius form is equivalent to a normal faithful finite weight:
\begin{equation}
    \omega(\sum_i\alpha_i\epsilon_i) := \sigma(1, \sum_j\alpha_j\epsilon_j) = \sigma(\sum_i\epsilon_i, \sum_j\alpha_j\epsilon_j) = \sum_i\alpha_i\mu_i
\end{equation}
where $\mu_i:=\sigma(\epsilon_i,\epsilon_i)$. In addition, the coproduct is given by:
\begin{equation}
    \Delta(\sum_i\alpha_i\epsilon_i) = \sum_i\frac{\alpha_i}{\mu_i}\epsilon_i\otimes\epsilon_i
\end{equation}
This is the input data of a 2D TQFT. To sum over bordisms, we need to perform two summations. For a closed 2D gravitational TQFTs, given a boundary manifold (i.e. a number of closed circles), the topological type of the bordism connecting the boundary is characterized by 
\begin{enumerate}
    \item The number of connected components.
    \item The genus of each connected component.
\end{enumerate}
Hence, we need to sum over genus and sum over connected components. The summation over genus corresponds to a renormalization of the Hartle-Hawking state by a multiplicative modification of the weight $\omega$. Poissonization, in essence, follows from counting the number of ways a fixed number of boundaries can be partitioned into several connected components. Note that the set partitions are intimately related to the Poisson distribution; see Appendix \ref{app:setpartitions}.

\paragraph{Summing over genera}

Using Morse theory, any bordism connecting boundary circles can be decomposed into a normal form. Concretely, suppose there are $n_\text{in}$-many incoming boundary circles and $n_\text{out}$-many outgoing boundary circles, then the normal form of a connected genus-g bordism connecting these boundary circles is given by
\begin{enumerate}
    \item First using pair-of-pants, all $n_\text{in}$-many incoming circles are sequentially merged to a single circle $S^1_\text{in}$
    \item Then connect $S^1_\text{in}$ to another circle $S^1_\text{out}$ by a genus $g$ surface
    \item Finally using reversed pair-of-pants, $S^1_\text{out}$ sequentially splits into $n_\text{out}$-many outgoing circles
\end{enumerate}
Equivalently, instead of morphisms we can consider topological amplitudes from $n$ boundary circles to the empty set. Then the normal form of topological amplitude is given by:
\begin{enumerate}
    \item First using pair-of-pants, all $n$ circles are sequentially merged to a single circle $S^1$
    \item The circle $S^1$ is connected to the empty set via a capped genus-g surface
\end{enumerate}
From the normal form of topological amplitude, the summation over genus is equivalent to a modification of the counit (i.e. the weight). Since each genus is represented by a concatenation of comultiplication and multiplication, we can calculate:
\begin{equation}
    m\circ\Delta(\sum_i\alpha_i\epsilon_i) = \sum_i\frac{\alpha_i}{\mu_i}\epsilon_i^2 = \sum_i\frac{\alpha_i}{\mu_i}\epsilon_i
\end{equation}
where $\mu_i$ is the suppression factor of genus creation. In a physically meaningful theory, this suppression factor is given by $e^{-2S_0^{(i)}}$ (i.e. $\mu_i = e^{2S_0^{(i)}}$) where $S_0^{(i)}$ is the topological action associated with the $i$-th topological sector and $S_0^{(i)} > 0$. Hence, summing up all genera, the co-unit (i.e. the weight) is modified to:
\begin{equation}
    \omega_\text{all-genus}(\sum_i\alpha_i\epsilon_i) = \sigma(\sum_{g\geq 0}(m\circ\Delta)^g(\sum_i\epsilon_i), \sum_i\alpha_i\epsilon_i) = \sum_i\alpha_i\mu_i\sum_{g\geq 0}\mu_i^{-g} = \sum_i\frac{\alpha_ie^{2S_0^{(i)}}}{1 - e^{-2S_0^{(i)}}}
\end{equation}
In terms of the original weight, we have:
\begin{equation}
    \omega_\text{all-genus}(\cdot) = \omega(\sum_i\frac{1}{1 - e^{-2S_0^{(i)}}}\epsilon_i\cdot)
\end{equation}
Hence, the input data to Poissonization is the commutative algebra $A$ and the weight $\omega_\text{all-genus}$.

\paragraph{Summing over connected components}

The summation over connected components requires summing over all possible ways to connect the set of $n$ circles. This is equivalent to partitioning the set of $n$ circles and connecting each subset in the partition. This is exactly what Poissonization achieves.

Concretely, given a commutative algebra with the genus-modified weight $(A,\omega_\text{all-genus})$, the Poisson algebra along with the Poisson state $(\mathbb{P}_{\omega_\text{all-genus}}A, \varphi_{\omega_\text{all-genus}})$ gives a canonical way to calculate topological amplitude that includes bulk contributions (connected and disconnected):
\begin{equation}
    \varphi_{\omega_\text{all-genus}}(\lambda(\sum_i\alpha_i^{(1)}\epsilon_i)\cdots\lambda(\sum_i\alpha_i^{(n)}\epsilon_i)) = \sum_{\sigma\in\mathcal{P}_n}\prod_{A\in\sigma}\omega_\text{all-genus}(\prod_{k\in A}(\sum_i\alpha_i^{(k)}\epsilon_i))
\end{equation}
The innermost amplitude $\omega_\text{all-genus}(\prod_{k\in A}(\sum_i\alpha_i^{(k)}\epsilon_i))$ is itself a summation over all topological types of connected bulk manifolds with $|A|$-many boundaries. Each boundary circle is associated with a Poisson quantized operator $\lambda(\sum_i\alpha_i\epsilon_i)$. The outermost summation sums over all possible ways to connect $n$ boundaries.

\subsubsection*{Marolf-Maxfield closed baby universes and 2D Dijkgraaf-Witten}\label{subsection:MMDW}

In this Subsection, we apply the general theory of Poissonization of commutative von Neumann algebras to two particular examples of closed 2D TQFTs. We emphasize the mathematical structures relevant to these two theories. All discussions related to the underlying physics are in the main text. 

\paragraph{Marolf-Maxfield closed baby universes}

This is the most trivial example of Poissonization and the simplest example of a closed 2D TQFT. As discussed in the main text, the algebra of boundary observables in Marolf-Maxfield closed baby universe theory is the Poisson algebra of complex numbers $\mathbb{P}_{|\mu|^2}\mathbb{C}$. The weight is simply given by multiplication:
\begin{equation}
    \mathbb{C}\rightarrow\mathbb{C}:z\mapsto|\mu|^2z
\end{equation}
As discussed in the previous section, this weight (i.e. the constant $|\mu|^2$) should be understood as including a summation over all genera. In terms of the original notation of Marolf and Maxfield, we have\footnote{In the original work \cite{MM}, this constant is denoted $\lambda > 0$. Coincidentally, this is the notation of the Poisson quantization map. To avoid notational confusion, we opt to denote this constant $|\mu|^2$.}:
\begin{equation}
    |\mu|^2 = Z[0] = \sum_{g\geq 0} e^{S_0(2-2g)} = \frac{e^{2S_0}}{1 - e^{-2S_0}}
\end{equation}
The Poisson state $\ket{W(\mu 1)}$ (i.e. the purifying vector of the Poisson state $\varphi_{|\mu|^2}$) is the coherent state. In the context of the Marolf-Maxfield closed baby universe theory, this is the normalized Hartle-Hawking state:
\begin{equation}
    \ket{W(\mu1)} = \ket{HH}
\end{equation}
Equivalently, we have:
\begin{equation}
    \varphi_{|\mu|^2}(\cdot) = \bra{W(\mu 1)}\cdot\ket{W(\mu 1} = \frac{1}{\mathcal{Z}}\bra{HH}\cdot\ket{HH}
\end{equation}
where $\mathcal{Z} = ||\ket{HH}||^2$ is the norm of the Hartle-Hawking state.

The algebra of boundary observables in the Marolf-Maxfield closed baby universe theory is generated by the ``partition function operator" $\widehat{Z}$. In terms of Poissonization, this is the Poisson quantized operator:
\begin{equation}
    \lambda(1) = \widehat{Z}
 \end{equation}
To check these claims, we calculate the m-point function explicitly. This serves as a direct comparison with the calculation done in \cite{MM}:
\begin{equation}
    \varphi_{|\mu|^2}(\lambda(1)^m) = \sum_{\sigma\in\mathcal{P}_m}\prod_{A\in\sigma}|\mu|^21^{|A|} = \sum_{\sigma\in\mathcal{P}_m}|\mu|^{2|\sigma|} = B_m(|\mu|^2)
\end{equation}
where $B_m(|\mu|^2)$ is the m-th Touchard polynomial. Compared with the original calculation of \cite{MM}, we have an exact match:
\begin{equation}
    \varphi_{|\mu|^2}(\lambda(1)^m) = \bra{W(\mu 1)}\lambda(1)^m\ket{W(\mu 1)} = \frac{1}{\mathcal{Z}}\bra{HH}\widehat{Z}^m\ket{HH}
\end{equation}
\paragraph{2D Dijkgraaf-Witten theory}
Mathematically, 2D Dijkgraaf-Witten theory with finite gauge group $G$ is characterized by the commutative Frobenius algebra of class functions on $G$ \cite{DW}:
\begin{equation}
    \mathbb{C}[G]^G :=\{f:G\rightarrow \mathbb{C}:f(hgh^{-1}) = f(g)\}
\end{equation}
Here, the notation emphasizes that this algebra is the fixed-point algebra of $\mathbb{C}[G]$ under the conjugation of $G$. Since $G$ is finite, there are finitely many conjugacy classes and hence finitely many irreducible representations. Therefore, we have the decomposition
\begin{equation}
    \mathbb{C}[G]^G := \oplus_q \mathbb{C}\chi_q
\end{equation}
where the direct sum is over all irreducible representations $\alpha_q:G\rightarrow End(V_q)$ and $\chi_q(g) := \tr(\alpha_q(g))$ is the character of the representation $q$. To study the summation over bordisms in 2D Dijkgraaf-Witten theory, we consider the following Frobenius algebra structure on $\mathbb{C}[G]^G$:
\begin{enumerate}
    \item The multiplication is simply element-wise multiplication:
    \begin{equation}
        m:\mathbb{C}[G]^G\otimes\mathbb{C}[G]^G\rightarrow\mathbb{C}[G]^G:\sum_q\alpha_q\chi_q\otimes\sum_{q'}\beta_{q'}\chi_{q'} \mapsto \sum_q\alpha_q\beta_q\chi_q
    \end{equation}
    where $d_q \in\mathbb{N}$ is the dimension of the irreducible representation $q$. It is finite because $G$ is a finite group.
    \item The unit is the identity function: $\sum_q\chi_q$
    \item The comultiplication is diagonal:
    \begin{equation}
        \Delta:\mathbb{C}[G]^G\rightarrow\mathbb{C}[G]^G\otimes\mathbb{C}[G]^G:\sum_q\alpha_q\chi_q\rightarrow\sum_q\alpha_q\frac{|G|^2}{d_q^2e^{2S_0}}\chi_q\otimes\chi_q
    \end{equation}
    \item The count is given by:
    \begin{equation}
        \omega(\sum_q\alpha_q\chi_q) = \sum_q\frac{d_q^2e^{2S_0}}{|G|^2}\alpha_q
    \end{equation}
\end{enumerate}
As discussed in the previous section, this Frobenius algebra is naturally a commutative von Neumann algebra $\mathbb{C}[G]^G$ with a weight $\omega$. The effect of including all genera amounts to a modification of the weight $\omega$, and the corresponding all-genus correction to $\omega$ is given by:
\begin{equation}
    \omega_\text{all-genus}(\sum_q\alpha_q\chi_q) = \sum_q\sum_{g\geq 0}\lb\frac{d_q}{|G|}e^{S_0}\rb^{2-2g}\alpha_q
\end{equation}
Our claim is that the algebra of boundary observables in a gravitational 2D Dijkgraaf-Witten theory
is the Poisson algebra:
\begin{equation}
    \mathbb{P}_{\omega_\text{all-genera}}\bigg(\mathbb{C}[G]^G\bigg)\ .
\end{equation}
This can be checked directly by calculating the Poisson moment formula:
\begin{align}
    \begin{split}
        \varphi_{\omega_\text{all-genera}}(\lambda(\frac{|G|}{d_{q_1}e^{S_0}}\chi_{q_1})\cdots\lambda(\frac{|G|}{d_{q_n}e^{S_0}}\chi_{q_n})) &= \sum_{\sigma\in\mathcal{P}_n}\prod_{A\in\sigma}\omega_\text{all-genera}(\prod_{k\in A}\frac{|G|}{d_{q_k}e^{S_0}}\chi_{q_k})
    \end{split}
\end{align}
In particular the connected correlation function is given by (c.f. Equation \ref{equation:diagonalconn}):
\begin{equation}
    \omega_\text{all-genera}\bigg((\frac{|G|}{d_{q_1}e^{S_0}}\chi_{q_1})\cdots(\frac{|G|}{d_{q_n}e^{S_0}}\chi_{q_n})\bigg) = \sum_q\delta_{q_1\cdots Q_n}(\frac{d_q}{|G|}e^{S_0})^{2-2g-n}
\end{equation}
This reproduces the Mednykh formula \cite{mednykh} and the results from \cite{gardiner20212D}.

\section{Open/closed 2D TQFTs, EOW branes, and Poissonization -  Mathematical Theory}\label{app:openclosed}

In the second half of the two-part Appendix, we continue to discuss Poissonization as the proper mathematical framework to study summation over topologies and baby universes. We focus on 2D gravitational theories with end-of-the-world branes and open/closed 2D TQFTs. As opposed to the previous section, the asymptotic boundaries in theories considered in this section are generally allowed to be open intervals. This leads to Poisson algebras that are generally noncommutative. We apply Poissonization to study the following three theories:
\begin{enumerate}
    \item Simplified JT gravity with EOW branes \cite{Penington:2019kki}.
    \item Open/closed 2D TQFTs as defined by Moore and Segal \cite{@Moore-Segal,@Banerjee-Moore}.
    \item Marolf-Maxfield 2D topological gravity with EOW branes \cite{MM}.
\end{enumerate}
The mathematical descriptions of these theories are intimately related. JT gravity with EOW branes can be described by the Poissonization of random matrices. Through a conditional expectation to the diagonal subalgebra, open/closed 2D TQFTs can be described using Poissonization of random diagonal matrices. Finally, through another conditional expectation to the center (i.e., the trace), we obtain the mathematical description of Marolf-Maxfield 2D topological gravity with EOW branes. 


In this Appendix, we start by discussing the general setup, specifying the input data of Poissonization and its physical interpretation. Then, we calculate the moment formula (i.e., the correlation function) of Poisson quantized random matrices. This calculation is central to this Appendix. Then, we discuss the three examples of Section \ref{sec:babyuniversesnonAbel} in more detail. 

\subsection{Functorial framework for single-boundary data}\label{subsection:simplechannel}

When applying Poissonization to model quantum gravitational systems, the general philosophy dictates that the input data to Poissonization should represent physical data of a single asymptotic boundary. The theories considered in the previous Appendix contained only closed boundaries. In 2D theories, the closed boundaries are simply circles. Open boundaries require additional data. This is the case for regular (i.e. non-gravitational) open/closed 2D TQFTs. While a closed 2D TQFT can be fully described by a commutative Frobenius algebra, an open/closed 2D TQFT requires the data of a knowledgeable Frobenius algebra, whose center is a commutative Frobenius subalgebra representing the closed sector of the theory. The general theory of extended 2D TQFTs in \cite{Lauda-Pfeiffer} specifies the boundary data as follows:
\begin{enumerate}
    \item The objects are the endpoints of boundary intervals. Each endpoint is associated with a $d$-dimensional vector space.
    \item The 1-morphisms are the boundary intervals. Each boundary interval is associated with the linear maps (i.e. $d\times d$-matrices) mapping the vector space of one endpoint to another. 
\end{enumerate}
Here we are studying toy models of 2D quantum gravitational systems. 
As we argued in Section \ref{subsec:ETH}, the observables in these topological toy models can only probe very coarse-grained information (time-averages) or simple operators. Specifically in terms of the energy eigenstates, one cannot expect these observables to distinguish states whose energies differ by a negligible amount. Such considerations are derived directly from ETH, and hence suggest the following modification to the boundary data:
\begin{enumerate}
    \item The objects (end points of boundary intervals) are associated with $d$-dimensional random vectors $\psi := \sum_{1\leq i\leq d}\psi_i \ket{i}$, where $\ket{i}$ is the basis vector in $\mathbb{C}^d$ and $\psi_i$ is a standard complex Gaussian random variable on $\mathbb{C}^k$. Expand $\psi_i$ in terms of the basis of $\mathbb{C}^k$, we have $\psi_i := \sum_{1\leq a\leq k}\psi^a_i\ket{a}$ where $\ket{a}$ is the basis vector in $\mathbb{C}^k$ and $\psi^a_i$'s are mutually independent standard complex Gaussian random variables on $\mathbb{C}$.
    \item The 1-morphisms (boundary intervals) are associated with $M_k(\mathbb{C})$. Each 1-morphism is dressed by its two end-points:
    \begin{equation}
        \sum_{1\leq a,b\leq k}\mathcal{O}^{ab}\ket{a}\bra{b} \mapsto \mathcal{O}^{(\psi)}:=\frac{1}{d}\sum_{1\leq a,b\leq k}\mathcal{O}^{ab}\psi^{ab}
    \end{equation}
    where $\psi^{ab} := \sum_{1\leq i,j\leq d}\psi^{ab}_{ij}\ket{i}\bra{j} = \sum_{1\leq i,j\leq d}\psi^a_i\bar{\psi^b_j}\ket{i}\bra{j}$ \footnote{From the perspective of ETH, this map $\mO\mapsto \mO^{(\psi)}$ outputs a $d \times d$-simple operator from a $k \times k$-matrix.}.
\end{enumerate}
The fact that we choose Gaussian random vectors is a modeling choice rather than a derived requirement. This choice is to ensure the Poisson moment formulae match with known calculations of correlation functions.

Note that the map $\mathcal{O}\mapsto \mathcal{O}^{(\psi)}$ sends a $k\times k$-matrix to a $d\times d$-random matrix. From the perspective of ETH, one can understand $\mathcal{O}^{(\psi)}$ as a simple operator, and the coarse-graining map $\mathcal{O}\mapsto \mathcal{O}^{(\psi)}$ outputs simple-operators.
Before we move on to the main calculations, we show that the simple-operator coarse-graining map is a CPTP map by constructing its Stinespring dilation. Mathematically, this map connects the following two spaces:
\begin{equation}
    \Psi_d: L_k^1(\mathbb{C})\rightarrow L^1(\mathbb{C}^k, L_d^1(\mathbb{C})): \mathcal{O}\mapsto \mathcal{O}^{(\psi)}
\end{equation}
where $L_k^1(\mathbb{C})$ is the space of $k\times k$-trace class matrices and $L^1(\mathbb{C}^k, L_d^1(\mathbb{C}))$ is the integrable functions over $\mathbb{C}^k$ with value in $d\times d$-trace class matrices. Notice that we formulate everything in terms of $L^1$-spaces instead of $L^\infty$-spaces. This is because the Gaussian random variables $\psi^a_i$ is $L^1$-integrable on $\mathbb{C}^k$ but not $L^\infty$ \footnote{These random variables are simply coordinate functions on $\mathbb{C}^k$, and hence not bounded.}. The tracial weight on $L^1(\mathbb{C}^k, L_d^1(\mathbb{C}))$ is given by $\mathbb{E}_\text{Gaussian}\tr$ - i.e. the Gaussian expectation of the unnormalized trace. The tracial weight on $L^1_k(\mathbb{C})$ is simply the canonical unnormalized trace.

The map $\Psi_d$ is trace-preserving:
\begin{align}
    \begin{split}
        \mathbb{E}_\text{Gaussian}\tr(\Psi_d(\mathcal{O})) = \frac{1}{d}\sum_{1\leq a,b\leq k}\mathcal{O}^{ab}\sum_{1\leq i,j\leq d}\mathbb{E}(\psi^a_i\bar{\psi}^b_j)\tr(\ket{i}\bra{j}) = \frac{1}{d}\sum_{ab}\mathcal{O}^{ab}\sum_{ij}\delta^{ab}\delta_{ij} = \tr(\mathcal{O})\ .
    \end{split}
\end{align}
Before proving complete positivity, we first observe the dual map of $\Psi_d$ is given by:
\begin{equation}
    \Psi_d^\dagger: L^\infty(\mathbb{C}^k, M_d(\mathbb{C}))\rightarrow M_k(\mathbb{C}): \sum_{1\leq i,j\leq d}g_{ij}\ket{i}\bra{j}\mapsto\frac{1}{d}\sum_{1\leq i,j\leq d}\sum_{1\leq a,b\leq k}\mathbb{E}_\text{Gaussian}(g_{ij}\bar{\psi}^a_i\psi^b_j)\ket{a}\bra{b}
\end{equation}
where $g_{ij} \in L^\infty(\mathbb{C}^k)$ is a bounded function on $\mathbb{C}^k$. The duality can be checked explicitly:
\begin{align}
    \begin{split}
        \mathbb{E}_\text{Gaussian}\tr(\Psi_d(\mathcal{O})^\dagger\sum_{ij}g_{ij}\ket{i}\bra{j}) &= \frac{1}{d}\sum_{ab}\bar{\mathcal{O}}^{ab}\sum_{ijkl}\mathbb{E}_\text{Gaussian}(\bar{\psi}^a_k\psi^b_lg_{ij})\tr(\ket{l}\bra{k}\ket{i}\bra{j})
        \\
        &=\frac{1}{d}\sum_{ab}\bar{\mathcal{O}}^{ab}\sum_{ij}\mathbb{E}_\text{Gaussian}(g_{ij}\bar{\psi}^a_i\psi^b_j)
        \\
        &=\sum_{ab}\bar{\mathcal{O}}^{ab}\tr(\ket{b}\bra{a}\Psi^\dagger_d(\sum_{ij}g_{ij}\ket{i}\bra{j}))
        \\
        &=\tr(\mathcal{O}^\dagger\Psi^\dagger_d(\sum_{ij}g_{ij}\ket{i}\bra{j}))
    \end{split}
\end{align}
In addition, $\Psi^\dagger_d$ is unital:
\begin{equation}
    \Psi^\dagger_d(1) = \frac{1}{d}\sum_{ij,ab}\mathbb{E}_\text{Gaussian}(\delta_{ij}\bar{\psi}^a_i\psi^b_j)\ket{a}\bra{b} = \sum_{ab}\delta^{ba}\ket{a}\bra{b} = \mathbbm{1}
\end{equation}
where $1$ denotes the unit function on $\mathbb{C}^k$ taking constant value of the identity matrix.

Now, we directly construct the Stinespring dilation of $\Psi^\dagger_d$. Consider the isometry:
\begin{equation}
    V_d:\mathbb{C}^k\rightarrow L^2(\mathbb{C}^k, \mathbb{C}^d)\otimes\mathbb{C}^k:\ket{a}\mapsto\frac{1}{\sqrt{d}}\sum_{1\leq i\leq d}{\psi}^a_i\ket{i}\otimes\ket{a}
\end{equation}
where $L^2(\mathbb{C}^k,\mathbb{C}^d)$ is $\mathbb{C}^d$-vector-valued square-integrable functions on $\mathbb{C}^k$. Its inner product is defined using the standard Gaussian measure. The dual map is given by:
\begin{equation}
    V^\dagger_d:L^2(\mathbb{C}^k,\mathbb{C}^d)\otimes\mathbb{C}^k\rightarrow\mathbb{C}^k: g^a_i\ket{i}\otimes\ket{a}\mapsto\frac{1}{\sqrt{d}}\sum_{b}\mathbb{E}_\text{Gaussian}(g^a_i\bar{\psi}^b_i)\ket{b}
\end{equation}
where $g^a_i$ is a square-integrable function on $\mathbb{C}^k$. To check this is an isometry, we have:
\begin{equation}
    V^\dagger_dV_d(\ket{a}) = \frac{1}{\sqrt{d}}V^\dagger_d(\sum_{1\leq i\leq d}{\psi}^a_i\ket{i}\otimes\ket{a}) = \frac{1}{d}\sum_{i,b}\mathbb{E}_\text{Gaussian}({\psi}^a_i\bar{\psi}^b_i)\ket{b} = \ket{a}
\end{equation}
In addition, consider the $*$-homomorphism:
\begin{equation}
    \pi_d: L^\infty(\mathbb{C}^k,M_d(\mathbb{C}))\rightarrow L^\infty(\mathbb{C}^k, M_d(\mathbb{C}))\otimes M_k(\mathbb{C}):\sum_{ij}g_{ij}\ket{i}\bra{j}\mapsto\sum_{ij}g_{ij}\ket{i}\bra{j}\otimes\mathbbm{1}
\end{equation}
where $g_{ij}$ is a bounded function on $\mathbb{C}^k$. Then the dual channel $\Psi^\dagger_d$ can be decomposed as:
\begin{align}
    \begin{split}
        V^\dagger_d\pi_d(\sum_{ij}g_{ij}\ket{i}\bra{j})V_d\ket{a} &= \frac{1}{\sqrt{d}}V^\dagger_d\pi_d(\sum_{ij}g_{ij}\ket{i}\bra{j})(\sum_k{\psi}^a_k\ket{k}\otimes\ket{a}) \\&= \frac{1}{\sqrt{d}}V^\dagger_d(\sum_{ijk}g_{ij}{\psi}^a_k\ket{i}\bra{j}\ket{k}\otimes\ket{a})
        \\&=\frac{1}{d}\sum_{ij,b}\mathbb{E}_\text{Gaussian}(g_{ij}{\psi}^a_j\bar{\psi}^b_i)\ket{b}
        \\&
        = \Psi^\dagger_d(\sum_{ij}g_{ij}\ket{i}\bra{j})\ket{a}
    \end{split}
\end{align}
Therefore, $\Psi^\dagger_d$ and consequently its dual $\Psi_d$ are both completely positive.

Regarding physics applications, we occasionally would require embedding $d\times d$-matrices to $k\times k$-random matrices \footnote{This is the case for applications to simplified JT gravity with EOW branes \cite{Penington:2019kki}.}. The mathematical construction above can be copied verbatim to construct a CPTP map:
\begin{equation}
    \Psi_k: L_d^1\rightarrow L^1(\mathbb{C}^d, L_k^1(\mathbb{C})):\mathcal{O}\mapsto\mathcal{O}_{(\psi)}:=\frac{1}{k}\sum_{ij,ab}\mathcal{O}^{ij}\psi^a_i\bar{\psi}^b_j\ket{a}\bra{b}
\end{equation}
All mathematical properties of this map is completely analogous to $\Psi_d$. We use a different notation $\mathcal{O}_{(\psi)}$ primarily for the benefit of physical applications.

\subsection{Poissonization of ``simple" operators}

Following the general philosophy of Poissonization, because the simple operators $\mathcal{O}^{(\psi)}$'s are associated with single asymptotic boundaries, we need to consider Poisson quantized operators $\lambda(\mathcal{O}^{(\psi)})$ in order to study summation of bordisms. In terms of the algebra, $\mathcal{O}^{(\psi)}$'s are random matrices and they generate the von Neumann algebra $L^\infty(\mathbb{C}^k, M_d(\mathbb{C})) = L^\infty(\mathbb{C}^k)\otimes M_d(\mathbb{C})$. We take the expectation of trace $\mu\mathbb{E}\otimes\tr$ as the canonical weight on this algebra. For generality, we add a positive constant $\mu > 0$ to the weight. Therefore, we are led to consider the Poissonization of the pair of data $(L^\infty(\mathbb{C}^k, M_d(\mathbb{C})), \mu\mathbb{E}\tr)$. The resulting Poisson algebra is denoted as: $\mathbb{P}_{\mu\mathbb{E}\tr}L^\infty(\mathbb{C}^k, M_d(\mathbb{C}))$. The following calculation is central to our discussion in this section. 

Before stating the result, we fix a piece of notation. Given a subset $A\subset\{1,2,\cdots,p\}$, there exists a natural ordering of elements in $A$ which is derived from the natural ordering of $\{1,2,\cdots,p\}$:
\begin{equation}
    A := \{\alpha_1,\alpha_2,\cdots,\alpha_{|A|}\}
    \text{ where } \alpha_i < \alpha_{i+1}
\end{equation}
where $|A|$ is the cardinality of $A$. Given this ordering, there exists an isomorphism between the following two sets:
\begin{equation}
    \{1,2,\cdots,|A|\}\rightarrow A
\end{equation}
Using this isomorphism, we can transfer a permutation on the set $\{1,2,\cdots,|A|\}$ to a permutation on $A$:
\begin{equation}
    \pi(i) = j \rightarrow \pi_A(\alpha_i) = \alpha_j
\end{equation}
where $i,j\in \{1,2,\cdots,|A|\}$ and $\alpha_i, \alpha_j\in A$. It is easy to see that this map is a group isomorphism between permutation groups
\begin{equation}
    f:\mathcal{S}_{|A|}\rightarrow\mathcal{S}_A
\end{equation}
where $\mathcal{S}_{|A|}$ is the group of permutations on the set $\{1,2,\cdots,|A|\}$ and $\mathcal{S}_A$ is the group of permutations on the set $A$.

For our calculation, we need a particular cyclic permutation. Let $\pi^0:=(12\cdots|A|)\in\mathcal{S}_{|A|}$ be the cyclic permutation that sends $i$ to $i + 1$ (mod $|A|$). Then via the isomorphism $f$, we consider $\pi_A^0 := f(\pi^0)$. This is the cyclic permutation of $A$ that sends $\alpha_i$ to $\alpha_{(i + 1)\mod |A|}$.
\begin{proposition}\label{proposition:workhorse}
    Consider the Poisson algebra $\mathbb{P}_{\mu\mathbb{E}\tr}L^\infty(\mathbb{C}^k, M_d(\mathbb{C}))$ where $\mu > 0$ is a constant parameter. By Wick contraction and the general theory of Poissonization, the Poisson moment formula (i.e., the correlation function) can be written as:
    \begin{equation}
        \varphi_{\mu\mathbb{E}\tr}(\lambda(\mathcal{O}_1^{(\psi)})\cdots\lambda(\mathcal{O}^{(\psi)}_p)) = \frac{1}{d^p}\sum_{\sigma\in\mathcal{P}_p}\prod_{A\in\sigma}\sum_{\pi_A\in S_A}\mu \prod_{\gamma_A\in\pi_A}\tr(\overrightarrow{\prod}_{k\in\gamma_A}\mathcal{O}_k)d^{\text{cycles}(\pi_A^0\pi_A)}
    \end{equation}
    where $\mathcal{P}_p$ is the set of partitions of $\{1,2,\cdots,p\}$, $\sigma\in\mathcal{P}_p$ is a particular partition, $A\in\sigma$ is a subset of $\{1,2,\cdots,p\}$ that is part of the partition $\sigma$, $\mathcal{S}_A$ is the permutation group of the subset $A$, $\pi_A\in \mathcal{S}_A$ is a particular permutation, and $\gamma_A\in\pi_A$ is a cycle that is part of the permutation $\pi_A$. $\text{cycles}(\cdot)$ counts the number of cycles in a permutation. And finally, the product inside the trace is an ordered product over matrices.
    
    The connected correlation function is given by:
    \begin{equation}
        \frac{\mu}{d^p}\sum_{\pi\in S_p}d^{\text{cycles}(\pi^0\pi)}\prod_{\gamma\in\pi}\tr(\overrightarrow{\prod}_{k\in\gamma}\mathcal{O}_k)
    \end{equation}
    where $\pi^0 = (12\cdots p)$.
\end{proposition}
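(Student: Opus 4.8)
The plan is to combine the general Poisson moment formula (Theorem in Subsection \ref{subsection:qpoiss}/\ref{sec:bipartitecoherent}) with the Wishart/Gaussian expectation formula already established in the Wishart example of Subsection \ref{subsection:coherentexamples}. The key structural fact I would invoke first is that Poissonization preserves connected correlators: for any family of operators $\{\mathcal{O}_i^{(\psi)}\}$ in the input algebra, the general theory gives
\begin{equation}
    \varphi_{\mu\mathbb{E}\tr}(\lambda(\mathcal{O}_1^{(\psi)})\cdots\lambda(\mathcal{O}_p^{(\psi)})) = \sum_{\sigma\in\mathcal{P}_p}\prod_{A\in\sigma}(\mu\mathbb{E}\tr)\lb\overrightarrow{\prod}_{k\in A}\mathcal{O}_k^{(\psi)}\rb,
\end{equation}
where the factor $\mu$ appears once per block because the weight is $\mu\mathbb{E}\tr$ rather than $\mathbb{E}\tr$. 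Thus the full formula reduces immediately to computing a \emph{single block}, i.e. the connected correlator $(\mathbb{E}\tr)(\mathcal{O}_{k_1}^{(\psi)}\cdots\mathcal{O}_{k_m}^{(\psi)})$ for an ordered block $A=\{k_1<\cdots<k_m\}$. Since the outer sum over $\sigma\in\mathcal{P}_p$ and the product over blocks $A\in\sigma$ are exactly what the target formula displays, the entire proof collapses to verifying the connected (single-block) statement.

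Second, I would compute the single-block connected correlator. Expanding each $\mathcal{O}^{(\psi)}=\frac{1}{d}\sum_{ab}\mathcal{O}^{ab}\psi^{ab}$ with $\psi^{ab}=\sum_{ij}\psi^a_i\bar{\psi}^b_j\ket{i}\bra{j}$, the factor $d^{-1}$ per operator produces the overall $d^{-p}$ (or $d^{-|A|}$ per block), and the trace of a product of the $\psi^{ab}$ matrices is precisely the quantity $\mathbb{E}\tr(\psi^{a_1b_1}\cdots\psi^{a_mb_m})$ already evaluated in the Wishart example. That calculation, via Wick's theorem (Gaussian pairings indexed by $\pi\in\mathcal{S}_m$) and the resolution of the cyclic trace contractions $\delta_{j_m i_1}\delta_{j_1 i_2}\cdots$, yields
\begin{equation}
    \mathbb{E}\tr(\psi^{a_1b_1}\cdots\psi^{a_mb_m}) = \sum_{\pi\in\mathcal{S}_m}d^{\,\mathrm{cycles}(\pi^0\pi)}\,\delta^{a_1 b_{\pi(1)}}\cdots\delta^{a_m b_{\pi(m)}},
\end{equation}
with $\pi^0$ the full cycle. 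Contracting the Kronecker deltas against the matrix entries $(\mathcal{O}_k)_{a_k b_k}$ converts each permutation $\pi$ into a product of traces of ordered products of the $\mathcal{O}_k$ along the cycles $\gamma\in\pi$, giving the connected formula $\frac{\mu}{d^p}\sum_{\pi\in\mathcal{S}_p}d^{\,\mathrm{cycles}(\pi^0\pi)}\prod_{\gamma\in\pi}\tr(\overrightarrow{\prod}_{k\in\gamma}\mathcal{O}_k)$. For a general block $A$ one repeats this verbatim with $\mathcal{S}_A$, $\pi_A^0$, and $\gamma_A$ in place of $\mathcal{S}_p$, $\pi^0$, $\gamma$, using the order-isomorphism $f:\mathcal{S}_{|A|}\to\mathcal{S}_A$ defined above to transport the cyclic permutation to $A$.

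The main technical obstacle I anticipate is \emph{bookkeeping the two interleaved permutation structures correctly}, namely ensuring that the cyclic contraction $\pi^0$ coming from the trace over $d$-dimensional indices and the Wick pairing $\pi$ over the Gaussian moments combine so that the surviving index loops are counted by $\mathrm{cycles}(\pi^0\pi)$, and that the ordered matrix product inside each trace respects the natural order on $A$ inherited from $\{1,\dots,p\}$. This is a standard but delicate genus-expansion/Wick-pairing argument; the care lies in tracking which indices are identified by the trace cycle versus by Wick contraction, and in confirming that the noncommutativity of the $\mathcal{O}_k$ forces the ordered product $\overrightarrow{\prod}_{k\in\gamma}$ rather than an unordered one. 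Once the single-block identity is established, assembling the full Poisson moment formula is immediate from the connectivity-preservation step, and extracting the connected correlator (the $\sigma=\{[p]\}$ single-block term) requires no further work.
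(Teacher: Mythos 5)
Your proposal is correct and takes essentially the same route as the paper's proof: the general Poisson moment formula reduces everything to the per-block weights $\mu\,\mathbb{E}\tr\big(\overrightarrow{\prod}_{k\in A}\mathcal{O}_k^{(\psi)}\big)$, and the Wick/Wishart contraction within each block — composing the trace cycle $\pi_A^0$ with the Gaussian pairing permutation $\pi_A$ to produce the factor $d^{\mathrm{cycles}(\pi_A^0\pi_A)}$ alongside $\prod_{\gamma_A\in\pi_A}\tr\big(\overrightarrow{\prod}_{k\in\gamma_A}\mathcal{O}_k\big)$ — is exactly the paper's inline calculation. Your extraction of the connected correlator as the trivial-partition term $\sigma=\{1,\dots,p\}$ also matches the paper verbatim.
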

We can also consider the Poisson algebra $\mathbb{P}_{\mathbb{E}\tr}L^\infty(\mathbb{C}^d, M_k(\mathbb{C}))$. This is the algebra generated by $\lambda(\mathcal{O}_{(\psi)})$'s (c.f. Equation \ref{equation:lowerpsi}). Using the same calculation (only exchanging $d\leftrightarrow k$), the Poisson moment formula of this algebra is given by:
\begin{equation}\label{equation:momentforJT}
    \varphi_{\mathbb{E}\tr}\lb\lambda(\mathcal{O}_{1,(\psi)})\cdots\lambda(\mathcal{O}_{p,(\psi)})\rb = \frac{1}{k^p}\sum_{\sigma\in\mathcal{P}_p}\prod_{A\in\sigma}\sum_{\pi_A\in\mathcal{S}_A}\prod_{\gamma\in\pi_A}\tr\lb\overrightarrow{\prod}_{l\in \gamma_A}\mathcal{O}_l\rb k^{\text{cycles}(\pi_A^0\pi_A)}
\end{equation}
The corresponding connected correlation function is given by (c.f. Equation \ref{equation:lowerpsi}):
\begin{equation}\label{equation:connectedmomentforJT}
    \varphi_{\mathbb{E}\tr}\lb\lambda(\mathcal{O}_{1,(\psi)})\cdots\lambda(\mathcal{O}_{p,(\psi)})\rb_\text{conn} = \frac{1}{k^p}\sum_{\pi\in\mathcal{S}_p}\prod_{\gamma\in\pi}\tr(\overrightarrow{\prod}_{l\in \gamma}\mathcal{O}_l)k^{\text{cycles}(\pi^0\pi)}
\end{equation}
\begin{corollary}\label{corollary:planar}
    Using the same notation as the Proposition, in the limit where $d\gg 1$, the leading contribution to the connected correlation function is given by:
    \begin{equation}
        \tr(\mathcal{O}_p\mathcal{O}_{p-1}\cdots\mathcal{O}_1) = \tr(\overleftarrow{\prod}_{k\in \{1,2,\cdots,p\}}\mathcal{O}_k)
    \end{equation}
    where the matrix product reverses the canonical ordering of the set $\{1,2,\cdots,p\}$. Similarly, the leading contribution to the correlation function is given by:
    \begin{equation}
        \sum_{\sigma\in\mathcal{P}_p}\prod_{A\in\sigma}\tr(\overleftarrow{\prod}_{k\in A}\mathcal{O}_k)
    \end{equation}
    where the matrix product again reverses the canonical ordering of the set $\{1,2,\cdots,p\}$. 
\end{corollary}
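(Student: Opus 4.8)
The plan is to read the corollary off directly from the Poisson moment formula of Proposition~\ref{proposition:workhorse} by extracting the dominant power of $d$. In the connected correlation function
\[
\frac{\mu}{d^p}\sum_{\pi\in S_p}d^{\text{cycles}(\pi^0\pi)}\prod_{\gamma\in\pi}\tr(\overrightarrow{\prod}_{k\in\gamma}\mathcal{O}_k),
\]
every matrix trace is an $O(1)$ quantity (the $\mathcal{O}_k$ are fixed $k\times k$ matrices), so the entire $d$-dependence of the $\pi$-th summand is the monomial $d^{\text{cycles}(\pi^0\pi)-p}$. First I would observe that $\text{cycles}$ is a function on $S_p$ maximized uniquely by the identity, which carries $p$ cycles, and that right multiplication $\pi\mapsto\pi^0\pi$ is a bijection of $S_p$. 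Hence $\text{cycles}(\pi^0\pi)\le p$, with equality precisely when $\pi^0\pi=\mathrm{id}$, i.e.\ for the single permutation $\pi=(\pi^0)^{-1}$. Every other term is suppressed by at least one factor of $1/d$, so as $d\to\infty$ only $\pi=(\pi^0)^{-1}$ survives.

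The second step is to evaluate that surviving term. Since $\pi^0=(1\,2\,\cdots\,p)$ we have $\pi(i)=i-1\pmod p$, so $\pi=(\pi^0)^{-1}$ is the single $p$-cycle $\gamma_0=(1\,p\,(p-1)\,\cdots\,2)$; in particular the product $\prod_{\gamma\in\pi}$ collapses to one trace over all $p$ operators. Traversing $\gamma_0$ from its least element gives the ordered product $\mathcal{O}_1\mathcal{O}_p\mathcal{O}_{p-1}\cdots\mathcal{O}_2$, and the cyclicity of the trace rewrites this as $\tr(\mathcal{O}_p\mathcal{O}_{p-1}\cdots\mathcal{O}_1)=\tr(\overleftarrow{\prod}_{k}\mathcal{O}_k)$, which is exactly the claimed leading connected answer (up to the intensity $\mu$, taken here to be $1$, which in general only rescales the block).

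For the full correlation function I would distribute the overall prefactor as $d^{-p}=\prod_{A\in\sigma}d^{-|A|}$, so that the summand attached to each block $A$ of a partition $\sigma$ is an independent copy of the connected expression, now indexed by $S_A$ with its own cyclic permutation $\pi^0_A$. Applying the connected analysis blockwise, each block contributes a leading term $\tr(\overleftarrow{\prod}_{k\in A}\mathcal{O}_k)+O(1/d)$; since every block is $O(d^0)$ at leading order, the leading part of the product of blocks is the product of the leading parts, yielding $\sum_{\sigma\in\mathcal{P}_p}\prod_{A\in\sigma}\tr(\overleftarrow{\prod}_{k\in A}\mathcal{O}_k)$.

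I expect the only genuinely delicate point to be fixing the orientation convention in the ordered product $\overrightarrow{\prod}_{k\in\gamma}\mathcal{O}_k$ and checking that the traversal of $(\pi^0)^{-1}$ really produces the reversed order $\mathcal{O}_p\cdots\mathcal{O}_1$ rather than the forward order; this must be carried out consistently with the conventions used to derive Proposition~\ref{proposition:workhorse}, since an inconsistent choice would flip the ordering inside the trace. The power counting itself is elementary, and the partition bookkeeping is routine once the prefactor is split over blocks.
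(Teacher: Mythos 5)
Your proof is correct and follows essentially the same route as the paper's: both identify $\pi=(\pi^0)^{-1}$ as the unique maximizer of $d^{\text{cycles}(\pi^0\pi)}$ (your bijectivity-of-multiplication argument just makes the paper's "easy to see" explicit), evaluate the single surviving $p$-cycle as the order-reversed trace, and then repeat the analysis blockwise over each $A\in\sigma$ for the full correlator. Your handling of $\mu$ and the remark that all partitions contribute at the same order $d^0$ are consistent with the paper's treatment.
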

The proof follows from simple calculations and the following combinatorial observation:
\begin{lemma}
    \begin{equation}
        \sum_{\sigma\in\mathcal{P}_n}\prod_{A\in\sigma}\sum_{\pi_A\in \mathcal{S}_{|A|}}\mu = \sum_{\pi\in\mathcal{S}_n}B_{|\pi|}(\mu)
    \end{equation}
    where $\mathcal{P}_n$ is the set of partitions of $n$-elements, $\sigma$ is a partition, and $\mathcal{S}_{|A|}$ is the permutation group on $|A|$ elements.
\end{lemma}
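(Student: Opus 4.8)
The plan is to prove the combinatorial identity
\begin{equation}
    \sum_{\sigma\in\mathcal{P}_n}\prod_{A\in\sigma}\sum_{\pi_A\in \mathcal{S}_{|A|}}\mu = \sum_{\pi\in\mathcal{S}_n}B_{|\pi|}(\mu)
\end{equation}
by reorganizing both sides according to the unordered set partition of $[n]$ that a permutation $\pi\in\mathcal{S}_n$ induces through its cycle decomposition. The key observation is that there is a canonical two-stage bookkeeping of any permutation $\pi\in\mathcal{S}_n$: first one records which blocks $A$ its cycles occupy (this is a set partition $\sigma\in\mathcal{P}_n$), and then within each block $A$ one records a \emph{single cyclic} permutation of the elements of $A$. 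First I would make precise the claim that the data of $\pi\in\mathcal{S}_n$ is equivalent to the pair $(\sigma,\{c_A\}_{A\in\sigma})$ where $\sigma$ is the partition of $[n]$ into the supports of the cycles of $\pi$ and each $c_A$ is a full cyclic permutation of the block $A$.

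First I would rewrite the left-hand side. Since $|\mathcal{S}_{|A|}| = |A|!$, the inner sum over $\pi_A\in\mathcal{S}_{|A|}$ is simply $|A|!\,\mu$ (the summand $\mu$ is constant in $\pi_A$), so the left-hand side equals $\sum_{\sigma\in\mathcal{P}_n}\mu^{|\sigma|}\prod_{A\in\sigma}|A|!$. I would then recall that the number of full cyclic permutations (i.e.\ single $|A|$-cycles) on a set of size $|A|$ is $(|A|-1)!$, so that $|A|! = |A|\cdot(|A|-1)!$; however, the cleaner route is to note that $\prod_{A\in\sigma}|A|!$ overcounts relative to cycles by exactly the factor needed. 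The plan is to instead expand the right-hand side. Using the definition of the Touchard (Bell) polynomial $B_m(\mu) = \sum_{\rho\in\mathcal{P}_m}\mu^{|\rho|}$, I would write
\begin{equation}
    \sum_{\pi\in\mathcal{S}_n}B_{|\pi|}(\mu) = \sum_{\pi\in\mathcal{S}_n}\sum_{\rho\in\mathcal{P}_{|\pi|}}\mu^{|\rho|},
\end{equation}
where $|\pi|$ denotes the number of cycles of $\pi$ and $\rho$ ranges over partitions of the \emph{set of cycles} of $\pi$.

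Next I would exhibit a weight-preserving bijection realizing the equality term by term. On the left, each $\sigma\in\mathcal{P}_n$ with blocks $A$ carries weight $\mu^{|\sigma|}\prod_A|A|!$; the factor $\prod_A |A|!$ counts ordered sequences but I want to replace it by a count over permutations whose cycle-supports refine into the blocks of $\sigma$. The clean statement I would establish is: fixing the outer partition $\sigma$, the quantity $\prod_{A\in\sigma}\sum_{\pi_A\in\mathcal{S}_{|A|}}\mu$ counts pairs consisting of a permutation $\pi\in\mathcal{S}_n$ together with a partition $\rho$ of its cycles that is \emph{coarser than or compatible with} $\sigma$; I would match this against the right-hand side by summing the Touchard expansion over all permutations. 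Concretely, I would group the right-hand side double sum by the set partition $\sigma\in\mathcal{P}_n$ obtained by merging, for each block of $\rho$, the supports of the cycles it contains, and check that both sides collect the same monomials $\mu^{\ell}$ with identical integer coefficients.

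The main obstacle I expect is getting the combinatorial factors to align exactly: I must verify that summing over all permutations $\pi_A$ of each block on the left precisely reproduces, on the right, the sum over permutations $\pi\in\mathcal{S}_n$ \emph{together with} the freedom of partitioning their cycles, with no spurious over- or under-counting. The delicate point is that a single block $A$ on the left (contributing $|A|!\,\mu^1$) must correspond on the right to all ways of choosing a permutation of $A$ whose cycles are then all lumped into one $\rho$-block (contributing $\mu^1$ per such configuration), and the identity $\sum_{\pi\in\mathcal{S}_m}1 = m! = \sum_{\rho\in\mathcal{P}(\text{cycles})}(\cdots)$ restricted appropriately must hold. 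I would verify this first in the base cases $n=1,2,3$ to pin down the normalization, then argue the general case either by induction on $n$ or, more elegantly, by comparing exponential generating functions: both sides should have exponential generating function $\exp\!\big(\mu\,(e^{?}-1)\big)$-type structure, and matching the generating functions $\sum_n \frac{x^n}{n!}(\text{LHS})$ and $\sum_n\frac{x^n}{n!}(\text{RHS})$ reduces the whole identity to a single formal power series equality, which is the cleanest way to dispatch the bookkeeping.
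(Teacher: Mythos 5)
Your proposal is correct and takes essentially the same route as the paper's proof: you expand $B_{|\pi|}(\mu)=\sum_{\rho\in\mathcal{P}_{|\pi|}}\mu^{|\rho|}$ over partitions $\rho$ of the cycle set of $\pi$ and match terms via the weight-preserving correspondence $(\sigma,(\pi_A)_{A\in\sigma})\leftrightarrow(\pi,\rho)$, recovering $\sigma$ by merging the supports of the cycles within each block of $\rho$ (with fixed points counted as $1$-cycles), which is exactly the fiber-counting argument the paper gives. One small note: your closing generating-function guess would actually come out as $\exp\big(\mu\,x/(1-x)\big)$ on both sides rather than a Bell-type $\exp\big(\mu(e^{x}-1)\big)$, but the bijective regrouping you describe already suffices.
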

\begin{proof}
    For any partition $\sigma =\sqcup A\in\mathcal{P}_n$, we have a set map:
    \begin{equation}
        f_\sigma:\sqcup_A\mathcal{S}_{|A|}\rightarrow \mathcal{S}_n: (\pi_A)_A\mapsto \prod_A\pi_A
    \end{equation}
    Consider the combined set map:
    \begin{equation}
        f = \sqcup_\sigma f_\sigma: \sqcup_\sigma(\sqcup_A\mathcal{S}_{|A|})\rightarrow \mathcal{S}_n
    \end{equation}
    This map is many-to-one. For each fixed permutation $\pi \in\mathcal{S}_n$, consider its cycle decomposition. Let $|\pi|$ be the number of cycles in $\pi$. The cycle decomposition defines a set map:
    \begin{equation}
        \pi:[n]\rightarrow[|\pi|]
    \end{equation}
    where $\pi(i) = \pi(j)$ if and only if $i,j$ are in the same cycle. Then any partition $\eta$ of $[m]$ uniquely induces a partition $\pi^{-1}(\eta)$ of $[n]$ via pullback of $\pi$. In addition, for each subset $A$ in the induced partition $\pi^{-1}(\eta)$ of $[n]$, the cycles of the permutation $\pi$ uniquely determines a permutation $\pi_A\in\mathcal{S}_{|A|}$. Moreover, the product:
    \begin{equation}
        \prod_{A\in \pi^{-1}(\eta)}\pi_A = \pi
    \end{equation}
    Therefore, the preimage under the map $f$ is given by:
    \begin{equation}
        f^{-1}(\pi) = \{(\pi_A)_{A\in \pi^{-1}(\eta)}:\eta\in\mathcal{P}_{|\pi|}\text{ , and }\pi_A\text{ is product of cycles in }\pi\}
    \end{equation}
    We emphasize that once the partition $\eta\in\mathcal{P}_{|\pi|}$ is fixed, the permutation in the preimage $f^{-1}(\pi)$ is uniquely defined by the cycle decomposition of $\pi$.
    Then we have:
    \begin{equation}\label{equation:resummation}
        \sum_{\sigma\in\mathcal{P}_n}\prod_{A\in\sigma}\sum_{\pi_A\in\mathcal{S}_{|A|}}\mu = \sum_{\sigma\in\mathcal{P}_n}\sum_{(\pi_A)_A\in\sqcup_A\mathcal{S}_{|A|}}\mu^{|\sigma|} = \sum_{\pi\in\mathcal{S}_n}\sum_{\eta\in \mathcal{P}_{|\pi|}}\mu^{|\eta|} = \sum_{\pi\in\mathcal{S}_n}B_{|\pi|}(\mu)
    \end{equation}
    In the last equation, we have used the following formula of Touchard polynomial:
    \begin{equation}
        B_n(\mu) = \sum_{\sigma\in\mathcal{P}_n}\mu^{|\sigma|}\qedhere
    \end{equation}
\end{proof}
For the next corollary, we consider the following class of multiplicative functions on the permutation group $S_n$:
\begin{definition}
    $f:S_n\rightarrow\mathbb{C}$ is a multiplicative function on $S_n$ if for any two commuting permutations $\pi_1$ and $\pi_2$, $f(\pi_1\pi_2) = f(\pi_1)f(\pi_2)$.
\end{definition}
A multiplicative function is defined by fixing its values on irreducible cycles. This is because any permutation can be uniquely decomposed into a product of irreducible cycles (i.e., decomposition using the orbits of the permutation). Notice that a multiplicative function is \textit{not} a group homomorphism. For two generic permutations $\pi_1,\pi_2$ that do not commute with each other, in general $f(\pi_1\pi_2) \neq f(\pi_1)f(\pi_2)$. The multiplicative property holds only when the two permutations commute with each other. For this reason, we have coined the name multiplicative function to distinguish the notion from a group homomorphism.
\begin{corollary}
    Let $f:S_n\rightarrow\mathbb{C}$ be a multiplicative function on the permutation group $S_n$ then we have:
    \begin{equation}
        \sum_{\sigma\in\mathcal{P}_n}\prod_{A\in\sigma}\sum_{\pi_A\in S_{A}}\mu \prod_{\gamma_A\in S_A}f(\gamma_A) = \sum_{\pi\in S_n}B_{|\pi|}(\mu)\prod_{\gamma\in\pi}f(\gamma)
    \end{equation}
    where $\gamma\in\pi$ are irreducible cycles of $\pi\in S_n$ and $\gamma_A\in\pi_A$ are irreducible cycles of $\pi_A\in S_A$.
\end{corollary}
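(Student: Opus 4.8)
The plan is to re-run the many-to-one fiber argument used in the preceding Lemma (which is precisely the case $f\equiv 1$), now carrying along the cyclic weight $\prod_{\gamma}f(\gamma)$, and to show that this weight is \emph{constant on each fiber} so that it factors out of the inner sum. Concretely, I would set up the surjection
\begin{equation}
    F:\bigsqcup_{\sigma\in\mathcal{P}_n}\prod_{A\in\sigma}\mathcal{S}_A\longrightarrow \mathcal{S}_n,\qquad (\sigma,(\pi_A)_{A\in\sigma})\longmapsto \prod_{A\in\sigma}\pi_A,
\end{equation}
where each $\pi_A$ is regarded as a permutation of $[n]$ that fixes the complement of $A$. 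Since the blocks of $\sigma$ are disjoint, the $\pi_A$ have disjoint supports and therefore commute, and the cycle decomposition of $\pi=F(\sigma,(\pi_A))$ is exactly the disjoint union of the cycle decompositions of the individual $\pi_A$.

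Next I would identify the fiber $F^{-1}(\pi)$, reusing the analysis in the Lemma's proof. Because each block $A$ is permuted within itself by $\pi_A$ and left fixed by every other factor, each $A$ is $\pi$-invariant, hence a union of cycles of $\pi$, and necessarily $\pi_A=\pi|_A$. Thus a point of the fiber is the same datum as a partition $\eta$ of the set of $|\pi|$ cycles of $\pi$: the blocks of $\sigma$ are the unions of cycles prescribed by $\eta$, and the $\pi_A$ are then forced. In particular $|\sigma|=|\eta|$, and $\eta$ ranges freely over $\mathcal{P}_{|\pi|}$.

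The two weight factors then behave differently along the fiber. The factor $\prod_{A\in\sigma}\mu=\mu^{|\sigma|}=\mu^{|\eta|}$ depends only on the number of blocks, reproducing $\sum_{\eta\in\mathcal{P}_{|\pi|}}\mu^{|\eta|}=B_{|\pi|}(\mu)$ exactly as in the Lemma (using $B_m(\mu)=\sum_{\sigma\in\mathcal{P}_m}\mu^{|\sigma|}$). The cyclic factor, on the other hand, is \emph{invariant} under the reorganization: since the cycles of $\pi$ are precisely the cycles occurring across all the $\pi_A$,
\begin{equation}
    \prod_{A\in\sigma}\prod_{\gamma_A\in\pi_A}f(\gamma_A)=\prod_{\gamma\in\pi}f(\gamma),
\end{equation}
which depends on $\pi$ alone. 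Hence this factor pulls out of the inner sum over the fiber, and summing the remaining $\mu^{|\eta|}$ over $\eta\in\mathcal{P}_{|\pi|}$ gives $B_{|\pi|}(\mu)\prod_{\gamma\in\pi}f(\gamma)$ for each $\pi$. Summing over $\pi\in\mathcal{S}_n$ yields the claimed identity.

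The only point requiring a little care—and the closest thing to an obstacle—is verifying that the cyclic weight is genuinely constant on fibers, i.e.\ that decomposing $\pi$ into the commuting factors $\pi_A$ and recombining does not alter the product over irreducible cycles. This is exactly where the hypothesis that $f$ is multiplicative (equivalently, determined by its values on irreducible cycles) enters, guaranteeing that $\prod_{\gamma\in\pi}f(\gamma)$ is a well-defined invariant of $\pi$. Everything else is the bookkeeping already established in the Lemma, so no combinatorial input beyond the Touchard-polynomial identity is needed.
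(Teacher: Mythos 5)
Your proof is correct and takes essentially the same route as the paper's: the paper also reduces the identity to the fiber analysis of the map $(\sigma,(\pi_A)_{A\in\sigma})\mapsto\prod_{A}\pi_A$ from the preceding Lemma, observing that multiplicativity makes the cyclic weight constant on each fiber, equal to $\prod_{\gamma\in\pi}f(\gamma)$, so that only $\mu^{|\eta|}$ remains to be summed over $\eta\in\mathcal{P}_{|\pi|}$, yielding $B_{|\pi|}(\mu)$. Your write-up simply makes the fiber parameterization (blocks are unions of cycles of $\pi$, with $\pi_A=\pi|_A$ forced) more explicit than the paper's brief recalculation, but the argument is the same.
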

\begin{proof}
    The proof is a simple modification of the proof of the Lemma. Using the same notation, we recalculate Equation \ref{equation:resummation}. 
    \begin{align}
        \begin{split}
            \sum_{\sigma\in\mathcal{P}_n}\prod_{A\in\sigma}\sum_{\pi_A\in S_{A}}\mu\prod_{\gamma_A\in\pi_A}f(\gamma_A) &=\sum_{\sigma\in\mathcal{P}_n}\sum_{(\pi_A)_A\in\sqcup_A S_{A}}\mu^{|\sigma|}\prod_{\pi_A}f(\pi_A)
            \\&
            =\sum_{\pi\in S_n}\sum_{\eta\in\mathcal{P}_{|\pi|}}\mu^{|\eta|}f(\pi)
            \\&
            =\sum_{\pi\in S_n}B_{|\pi|}(\mu)\prod_{\gamma\in\pi}f(\gamma)\qedhere
        \end{split}
    \end{align}
    where in the last equation we used the definition of a multiplicative function to decompose $f(\pi)$ into a product over its cycles. Also note the $\gamma_A$'s and $\pi_A$'s are pairwise commuting because the subsets $A$ are pairwise disjoint.
\end{proof}
There are two natural types of multiplicative functions that we need in the calculation of the moment formula. 
\begin{enumerate}
    \item Fix a set of matrices $\{\mathcal{O}_1,\cdots,\mathcal{O}_p\}$. For any irreducible cycle $\gamma\in S_p$, we define:
    \begin{equation}
        f_{\tr}(\gamma) := \tr(\overrightarrow{\prod_{k\in\gamma}}\mathcal{O}_k)
    \end{equation}
    where the product of matrices is ordered. For a given cycle $\gamma = (i_1,i_2,\cdots,i_k)$, there is cyclic ambiguity in representing the cycle in the sense that:
    \begin{equation}
        \gamma = (i_2,i_3,\cdots i_k,i_1) = \cdots = (i_k,i_1,\cdots,i_{k-1})
    \end{equation}
    This corresponds to the cyclic invariance of the tracial weight:
    \begin{equation}
        \tr(\mathcal{O}_{i_1}\cdots\mathcal{O}_{i_k}) = \tr(\mathcal{O}_{i_2}\cdots\mathcal{O}_{i_k}\mathcal{O}_{i_1}) = \cdots =\tr(\mathcal{O}_{i_k}\mathcal{O}_{i_1}\cdots\mathcal{O}_{i_{k-1}})
    \end{equation}
    Therefore, the function $f_{\tr}$ is well-defined on cycles. Then for general permutations $\pi\in S_n$, the multiplicative generalization of $f$ is given by:
    \begin{equation}
        f_{\tr}(\pi) = \prod_{\gamma\in\pi}f_{\tr}(\gamma)
    \end{equation}
    By construction, $f_{\tr}$ is multiplicative;
    \item Given a constant $d > 0$, for any permutation $\pi \in S_p$ the counting function $d^{\text{cycles}(\pi)}$ where $\text{cycles}(\pi)$ counts the number of irreducible cycles in $\pi$. This function is multiplicative. This multiplicative counting function arises as the trace of a particular operator, which we explain below.

    For any given index set $A = \{\alpha_1 < \alpha_2 < \cdots < \alpha_{|A|}: \alpha_k\in \{1,2,3,\cdots,p\}\}$\footnote{$A\subset \{1,2,\cdots,p\}$}, the permutation group $S_{|A|}$ acts naturally $A$:
    \begin{equation}
        \pi_A(\alpha_k):=\alpha_{\pi(k)}
    \end{equation}
    where $\pi\in S_{|A|}$ and $\pi_A\in S_A$. $S_A$ is the permutation group of the set $A$ (not its labels). The map $\pi\mapsto\pi_A$ induces an isomorphism $S_{|A|}\cong S_A$. We want to calculate:
    \begin{align}
        \begin{split}
            \sum_{1\leq i_k,j_k\leq d}\prod_{k\in A}\delta_{i_kj_{\pi_A(k)}} &= \sum_{1\leq i_k,j_k\leq d}\tr(\ket{i_1\cdots i_{|A|}}\bra{i_{\pi_A(1)}\cdots i_{\pi_A(|A|)}})
        \end{split}
    \end{align}
    The summands are non-zero only if all labels in the same cycle of $\pi_A$ have the same index value. More precisely:
    \begin{equation}
        \sum_{1\leq i_k,j_k\leq d}\tr(\ket{i_1\cdots i_{|A|}}\bra{i_{\pi_A(1)}\cdot i_{\pi_A(|A|)}}) = \sum_{1\leq i_\gamma\leq d}\prod_{\gamma\in\pi_A}(\prod_{k\in\gamma}\delta_{i_k,i_\gamma}) = d^{\text{cycles($\pi_A$)}}
    \end{equation}
    where $\gamma\in\pi_A$ is a cycle. For simplicity of notation, we introduce the notation:
    \begin{equation}
        P^A_\pi := \sum_{1\leq i_k,j_k\leq d}\ket{i_1\cdots i_{|A|}}\bra{i_{\pi_A(1)}\cdots i_{\pi_A(|A|)}}
    \end{equation}
    where $\pi_A\in S_{A}$. Then $\tr(P^A_\pi) = d^{\text{cycles($\pi_A$)}}$.
\end{enumerate}
\begin{proof}\textbf{Proof of Proposition}
    Now we present the calculation of the moment formula:
    \begin{align}
        \begin{split}
            \varphi_{\mu\mathbb{E}\tr}&(\lambda(\mathcal{O}^{(\psi)}_1)\cdots\lambda(\mathcal{O}^{(\psi)}_p)) = \frac{1}{d^p}\sum_{\sigma\in\mathcal{P}_p}\prod_{A\in\sigma}\mu\mathbb{E}\tr\bigg(\overrightarrow{\prod}_{k\in A}(\sum_{a_kb_k,i_kj_k}\mathcal{O}_k^{a_kb_k}\psi^{a_k}_{i_k}\bar{\psi}^{b_k}_{j_k}\ket{i_k}\bra{j_k})\bigg)
            \\& = \frac{1}{d^p}\sum_{\sigma\in\mathcal{P}_p}\prod_{A\in\sigma}\sum_{a_kb_k, i_kj_k}(\prod_{k\in A}\mathcal{O}_k^{a_kb_k})\mu\mathbb{E}(\prod_{k\in A}\psi^{a_k}_{i_k}\bar{\psi}^{b_k}_{j_k})\tr(\overrightarrow{\prod}_{k\in A}\ket{i_k}\bra{j_k})
            \\
            &=\frac{1}{d^p}\sum_{\sigma\in\mathcal{P}_p}\prod_{A\in\sigma}\sum_{a_kb_k,i_kj_k}(\prod_{k\in A} \mathcal{O}_k^{a_kb_k})\mu\bigg(\sum_{\pi_A\in \mathcal{S}_{A}}\prod_{k\in A}\delta_{a_kb_{\pi_A(k)}}\delta_{i_kj_{\pi_A(k)}}\bigg)\bigg(\prod_{k\in A}\delta_{i_{\pi_A^0(k)}j_k}\bigg)
            \\&
            =\frac{1}{d^p}\sum_{\sigma\in\mathcal{P}_p}\prod_{A\in\sigma}\sum_{\pi_A\in S_{A}}\mu(\sum_{a_kb_k}\prod_{k\in A}\mathcal{O}_k^{a_kb_k}\delta_{a_kb_{\pi_A(k)}})(\sum_{i_kj_k}\prod_{k\in A}\delta_{i_kj_{\pi_A(k)}}\delta_{i_{\pi_A^0(k)}j_{k}})
            \\&
            =\frac{1}{d^p}\sum_{\sigma\in\mathcal{P}_p}\prod_{A\in\sigma}\sum_{\pi_A\in\mathcal{S}_{A}}\mu\prod_{\gamma\in\pi_A}\tr(\overrightarrow{\prod}_{k\in\gamma}\mathcal{O}_k)(\sum_{j_k}\prod_{k\in A}\delta_{j_{\pi_A^0\pi_A(k)}j_{k}})
            \\
            &=\frac{1}{d^p}\sum_{\sigma\in\mathcal{P}_p}\prod_{A\in\sigma}\sum_{\pi_A\in\mathcal{S}_{A}}\mu\prod_{\gamma\in\pi_A}\tr(\overrightarrow{\prod}_{k\in\gamma}\mathcal{O}_k)\tr(P^A_{\pi_A^0\pi_A})
        \end{split}
    \end{align}
    where the product $\prod_{\gamma\in\pi_A}$ is over the irreducible cycles in permutation $\pi_A$ and $\pi_A^0$ acts on $A := \{\alpha_1 < \alpha_2 < \cdots\alpha_{|A|}:1\leq \alpha_k\leq p\}$ by sending $\alpha_k$ to $\alpha_{k+1}$.

    By the general discussion of Poissonization, the connected component of the Poisson moment formula corresponds to the partition $\sigma = \{1,2,\cdots,p\}$. Hence, the connected correlation function reduces to \footnote{The constant parameter $\mu$ is omitted for simplicity.}:
    \begin{equation}
        \frac{1}{d^p}\sum_{\pi\in S_p}\prod_{\gamma\in\pi}\tr(\overrightarrow{\prod}_{k\in\gamma}\mathcal{O}_k)d^{\text{cycles}(\pi^0\pi)}\qedhere
    \end{equation}
\end{proof}
\begin{proof}
    \textbf{Proof of Corollary}
    When $d \gg 1$, the leading contribution to the connected correlation function comes from the maximum of the combinatorial factor $d^{\text{cycles}(\pi^0\pi)}$. It is easy to see that the maximum is unique and it is achieved by the permutation $(\pi^0)^{-1} = (p\cdots21)$ which is the inverse of the permutation $\pi^0$. For this permutation, the counting of cycles yields $d^{\text{cycles}(\pi^0(\pi^0)^{-1})} = d^p$. Hence, the leading term of the connected correlation function is determined by \footnote{Again, the constant parameter $\mu$ is omitted for simplicity.}:
    \begin{equation}
        \frac{1}{d^p}\tr(\overrightarrow{\prod}_{k\in (p\cdots21)}\mathcal{O}_k)d^p = \tr(\mathcal{O}_p\cdots\mathcal{O}_2\mathcal{O}_1)
    \end{equation}
    For the general correlation function, the same argument as above shows that the leading contribution is given by:
    \begin{align}
        \begin{split}
            \frac{1}{d^p}\sum_{\sigma\in\mathcal{P}_p}\prod_{A\in\sigma}\mu\tr(\overleftarrow{\prod}_{k\in A}\mathcal{O}_k)d^{\text{cycles}(\pi_A^0(\pi_A^0)^{-1})} &= \frac{1}{d^p}\sum_{\sigma\in\mathcal{P}_p}\mu^{|\sigma|}d^{\sum_{A\in\sigma}|A|}\prod_{A\in\sigma}\tr(\overleftarrow{\prod}_{k\in A}\mathcal{O}_k)
            \\& = \sum_{\sigma\in\mathcal{P}_p}\mu^{|\sigma|}\prod_{A\in\sigma}\tr(\overleftarrow{\prod}_{k\in A}\mathcal{O}_k)
        \end{split}
    \end{align}
    where $(\pi_A^0)^{-1}$ is the inverse of $\pi_A^0$ and $|\sigma|$ is the cardinality of the partition $\sigma$. Because $(\pi_A^0)^{-1}$ has only one cycle and it reverses the natural order of the set $A$, the matrix product is order-reversing $\overleftarrow{\prod}_{k\in A}$.
\end{proof}
This large-$d$ correlation function is the Poisson moment formula of the Poisson algebra $\mathbb{P}_{\tr}M^\text{op}_k(\mathbb{C})$ \footnote{Ignoring the constant $\mu$.}, where $M^\text{op}_k(\mathbb{C})$ is the \textit{opposite} von Neumann algebra of $M_k(\mathbb{C})$. In terms of the standard form of $M_k(\mathbb{C})$ with respect to the canonical trace, the opposite algebra $M^\text{op}_k(\mathbb{C})$ can be realized as the commutant of the left representation of $M_k(\mathbb{C})$ on the Hilbert-Schmidt space $L^2_k(\mathbb{C})$. The opposite algebra is the representation of $M_k(\mathbb{C})$ on $L^2_k(\mathbb{C})$ by \textit{right} multiplication. 
\begin{corollary}
    Consider the Poisson algebra $\mathbb{P}_{\tr}M^\text{op}_k(\mathbb{C})$, the Poisson moment formula is given by:
    \begin{equation}
        \varphi_{\tr}(\lambda(\mathcal{O}_1^\text{op})\cdots\lambda(\mathcal{O}^\text{op}_p)) = \sum_{\sigma\in\mathcal{P}_p}\prod_{A\in\sigma}\tr(\overrightarrow{\prod}_{j\in A}\mathcal{O}^\text{op}_k) = \sum_{\sigma\in\mathcal{P}_p}\prod_{A\in\sigma}\tr(\overleftarrow{\prod}_{j\in A}\mathcal{O}_k)
    \end{equation}
\end{corollary}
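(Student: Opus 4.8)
The plan is to obtain this corollary as an immediate specialization of the general Poisson moment formula (the Theorem established above) to the input pair $(M^\text{op}_k(\mathbb{C}), \tr)$, together with a purely algebraic translation of the opposite-algebra product. First I would check that $(M^\text{op}_k(\mathbb{C}), \tr)$ is admissible input data for Poissonization, namely that $\tr$ defines a normal faithful finite tracial weight on the opposite algebra. This is immediate: $M^\text{op}_k(\mathbb{C})$ shares the underlying Hilbert--Schmidt space with $M_k(\mathbb{C})$, and writing the opposite product as $\mathcal{O}_1^\text{op}\cdot\mathcal{O}_2^\text{op} = (\mathcal{O}_2\mathcal{O}_1)^\text{op}$ one has $\tr(\mathcal{O}_1^\text{op}\cdot\mathcal{O}_2^\text{op}) = \tr(\mathcal{O}_2\mathcal{O}_1) = \tr(\mathcal{O}_1\mathcal{O}_2)$ by cyclicity, so $\tr$ remains tracial and the hypotheses of the Theorem are met.

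With the input data verified, the first equality is the Poisson moment formula read off verbatim with $\omega = \tr$ and $\mathcal{N} = M^\text{op}_k(\mathbb{C})$, so no further argument is needed there. The only remaining step is to rewrite each inner factor $\tr(\overrightarrow{\prod}_{j\in A}\mathcal{O}^\text{op}_j)$ in terms of the original algebra $M_k(\mathbb{C})$. I would unwind the opposite multiplication on an ordered subset $A = \{\alpha_1 < \cdots < \alpha_{|A|}\}$: the forward-ordered product in $M^\text{op}_k(\mathbb{C})$ satisfies $\mathcal{O}^\text{op}_{\alpha_1}\cdots\mathcal{O}^\text{op}_{\alpha_{|A|}} = (\mathcal{O}_{\alpha_{|A|}}\cdots\mathcal{O}_{\alpha_1})^\text{op} = \big(\overleftarrow{\prod}_{j\in A}\mathcal{O}_j\big)^\text{op}$. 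Since $\tr$ is the same linear functional on the common underlying vector space, applying it yields the second equality.

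There is essentially no analytic obstacle; the entire content is bookkeeping about orderings. The one point requiring care is to confirm that the convention defining $\overrightarrow{\prod}$ (inherited from the natural order of $\{1,\dots,p\}$) is consistently reversed under the opposite product, and that this matches the order reversal produced in the large-$d$ limit of the preceding Corollary. Since the purpose of this statement is to identify the large-$d$ limiting moments of $\mathbb{P}_{\mu\mathbb{E}\tr}L^\infty(\mathbb{C}^k, M_d(\mathbb{C}))$ with the Poisson moments of $\mathbb{P}_{\tr}M^\text{op}_k(\mathbb{C})$, the last check I would perform is a term-by-term comparison of the two expressions: both reduce to $\sum_{\sigma\in\mathcal{P}_p}\prod_{A\in\sigma}\tr(\overleftarrow{\prod}_{k\in A}\mathcal{O}_k)$, confirming the identification and closing the argument.
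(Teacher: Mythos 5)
Your proposal is correct and is precisely the ``straightforward calculation'' the paper declares and omits: specialize the general Poisson moment theorem to the admissible input $(M^{\text{op}}_k(\mathbb{C}),\tr)$, then use $\overrightarrow{\prod}_{j\in A}\mathcal{O}^{\text{op}}_j = \bigl(\overleftarrow{\prod}_{j\in A}\mathcal{O}_j\bigr)^{\text{op}}$ and the fact that $\tr$ is the same (still tracial, by cyclicity) linear functional on the underlying space. Your closing term-by-term comparison with the large-$d$ limit of the preceding corollary also matches the paper's surrounding discussion, so nothing further is needed.
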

The proof is a straightforward calculation and is omitted.

\subsection{Projection to diagonal simple operators}

Poissonization of simple operators will be used to model simplified JT gravity with end-of-the-world branes. To study open/closed 2D TQFTs, we need an additional projection to the diagonal matrices and consider diagonal simple operators. Recall that we have considered a unital CP map that outputs simple operators:
\begin{equation}
\Psi_d: L_k^1(\mathbb{C})\rightarrow L^1(\mathbb{C}^k,L^1_d(\mathbb{C}))
\end{equation}
We now compose this channel with the projection to diagonal matrices:
\begin{equation}
    \mathcal{E}_d:=(id\otimes P_\text{diag})\circ\Psi_d: L_k^1(\mathbb{C})\rightarrow L^1(\mathbb{C}^k, L_d^1(\mathbb{C}))\rightarrow L^1(\mathbb{C}^k, \ell_d^1(\mathbb{C}))
\end{equation}
where $P_\text{diag}: L^1_d(\mathbb{C})\rightarrow\ell^1_d(\mathbb{C})$ is the projection to diagonal matrices.
\begin{equation}
    \mathcal{E}_d(\mathcal{O}) = (id\otimes P_\text{diag})(\mathcal{O}^{(\psi)}) = \frac{1}{d}\sum_{i,ab}\mathcal{O}^{ab}\psi^a_i\bar{\psi}^b_i\ket{i}\bra{i}
\end{equation}
We now calculate the correlation function of Poisson quantized operators $\lambda(\mathcal{E}_d(\mathcal{O}))$:
\begin{proposition}
    Consider the Poisson algebra $\mathbb{P}_{\mu\mathbb{E}\tr}L^\infty(\mathbb{C}^k, \ell_d^\infty(\mathbb{C}))$, we have the following correlation function (i.e., Poisson moment formula) of diagonal simple operators:
    \begin{equation}
        \varphi_{\mu\mathbb{E}\tr}(\lambda(\mathcal{E}_d(\mathcal{O}_1))\cdots\lambda(\mathcal{E}_d(\mathcal{O}_p))) = \frac{1}{d^p}\sum_{\pi\in \mathcal{S}_p}B_{|\pi|}(\mu d)\prod_{\gamma\in\pi}\tr(\overrightarrow{\prod}_{k\in\pi}\mathcal{O}_k)
    \end{equation}
    Moreover, the connected correlation function of diagonal simple operators is given by:
    \begin{equation}
        \frac{\mu d}{d^p}\sum_{\pi\in \mathcal{S}_p}\prod_{\gamma\in\pi}\tr(\overrightarrow{\prod}_{k\in\gamma}\mathcal{O}_k)
    \end{equation}
\end{proposition}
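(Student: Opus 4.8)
The plan is to reduce the claim to the general Poisson moment formula established in Section~\ref{section:math} and then evaluate the single-block weights, after which a purely combinatorial resummation produces the Touchard polynomial. First I would apply the Poisson moment formula with input weight $\omega = \mu\mathbb{E}\tr$ to write
\begin{equation}
\varphi_{\mu\mathbb{E}\tr}\lb\lambda(\mathcal{E}_d(\mathcal{O}_1))\cdots\lambda(\mathcal{E}_d(\mathcal{O}_p))\rb = \sum_{\sigma\in\mathcal{P}_p}\prod_{A\in\sigma}\mu\mathbb{E}\tr\lb\overrightarrow{\prod}_{k\in A}\mathcal{E}_d(\mathcal{O}_k)\rb,
\end{equation}
so that the entire problem decouples into computing one weight per block $A$ of each partition.

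The heart of the argument is the evaluation of a single block weight. Since each $\mathcal{E}_d(\mathcal{O}_k) = \frac{1}{d}\sum_{i,ab}\mathcal{O}_k^{ab}\psi^a_i\bar{\psi}^b_i\,\ket{i}\bra{i}$ is diagonal in the $d$-dimensional index, the ordered product collapses all such indices to a single common value $i$, and taking the trace sums over this one index to produce an overall factor of $d$ -- rather than the cyclic contraction $\tr(P^A_{\pi_A^0\pi_A}) = d^{\text{cycles}(\pi_A^0\pi_A)}$ that appeared in the non-diagonal Proposition~\ref{proposition:workhorse}. I would then apply Wick's theorem to the Gaussian average $\mathbb{E}\prod_{k\in A}\psi^{a_k}_i\bar{\psi}^{b_k}_i = \sum_{\pi_A\in\mathcal{S}_A}\prod_{k\in A}\delta^{a_k b_{\pi_A(k)}}$, contract against the entries $\mathcal{O}_k^{a_kb_k}$, and recognize each cyclic index sum as a trace. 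This yields the per-block weight
\begin{equation}
\mu\mathbb{E}\tr\lb\overrightarrow{\prod}_{k\in A}\mathcal{E}_d(\mathcal{O}_k)\rb = \frac{\mu d}{d^{|A|}}\sum_{\pi_A\in\mathcal{S}_A}\prod_{\gamma\in\pi_A}\tr\lb\overrightarrow{\prod}_{k\in\gamma}\mathcal{O}_k\rb = \frac{\mu d}{d^{|A|}}\sum_{\pi_A\in\mathcal{S}_A}f_{\tr}(\pi_A),
\end{equation}
which, restricted to the single block $A = [p]$, is already the claimed connected correlator $\frac{\mu d}{d^p}\sum_{\pi\in\mathcal{S}_p}\prod_{\gamma\in\pi}\tr(\overrightarrow{\prod}_{k\in\gamma}\mathcal{O}_k)$.

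Finally I would assemble the full correlator. Multiplying the block weights and using $\sum_{A\in\sigma}|A| = p$ extracts a clean prefactor $d^{-p}$, giving
\begin{equation}
\varphi_{\mu\mathbb{E}\tr}\lb\lambda(\mathcal{E}_d(\mathcal{O}_1))\cdots\lambda(\mathcal{E}_d(\mathcal{O}_p))\rb = \frac{1}{d^p}\sum_{\sigma\in\mathcal{P}_p}\prod_{A\in\sigma}\lb(\mu d)\sum_{\pi_A\in\mathcal{S}_A}f_{\tr}(\pi_A)\rb.
\end{equation}
This is exactly the left-hand side of the combinatorial corollary, with the intensity parameter promoted to $\mu d$ and the multiplicative function taken to be $f_{\tr}$; applying that corollary converts the double sum over partitions and block-permutations into $\sum_{\pi\in\mathcal{S}_p}B_{|\pi|}(\mu d)\prod_{\gamma\in\pi}\tr(\overrightarrow{\prod}_{k\in\gamma}\mathcal{O}_k)$, establishing the formula. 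The main obstacle I anticipate is the careful bookkeeping of the two competing powers of $d$ -- the suppression $d^{-|A|}$ from the normalization of $\mathcal{E}_d$ against the single enhancement $d$ from the collapsed diagonal index sum -- so that they combine into $d^{-p}$ and cleanly generate the promotion $\mu\mapsto\mu d$ inside the Touchard polynomial. The conceptual point, and the reason the answer differs structurally from Proposition~\ref{proposition:workhorse}, is that the diagonal projection eliminates the $\pi_A^0$-twisted lower-index contraction and leaves only a single free index; verifying this collapse rigorously is where the care is needed.
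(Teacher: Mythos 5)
Your proposal is correct and follows essentially the same route as the paper's proof: the Poisson moment formula decomposes the correlator over set partitions, the diagonal projections force all bulk indices to a common value so the trace contributes a single factor of $d$ per block (rather than the $d^{\text{cycles}(\pi_A^0\pi_A)}$ twist of the non-diagonal case), Wick contraction of the Gaussian variables yields $\sum_{\pi_A\in\mathcal{S}_A}\prod_{\gamma\in\pi_A}\tr(\overrightarrow{\prod}_{k\in\gamma}\mathcal{O}_k)$, and the combinatorial lemma on multiplicative functions resums the double sum into the Touchard polynomial $B_{|\pi|}(\mu d)$. Your per-block weight $\frac{\mu d}{d^{|A|}}\sum_{\pi_A\in\mathcal{S}_A}f_{\tr}(\pi_A)$ and your identification of the connected correlator with the trivial-partition contribution coincide exactly with the paper's computation.
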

\begin{proof}
    The proof is a simple modification of the calculation for random matrices. 
    \begin{align}
        \begin{split}
            \varphi_{\mu\mathbb{E}\tr}&(\lambda(\mathcal{E}_d(\mathcal{O}_1))\cdots\lambda(\mathcal{E}_d(\mathcal{O}_p))) = \frac{1}{d^p}\sum_{\sigma\in\mathcal{P}_p}\prod_{A\in\sigma}\mu\mathbb{E}\tr(\overrightarrow{\prod}_{k\in A}(\sum_{i_k,a_kb_k}\mathcal{O}^{a_kb_k}\psi^{a_k}_{i_k}\bar{\psi}^{b_k}_{i_k}\ket{i_k}\bra{i_k}))
            \\
            &=\frac{1}{d^p}\sum_{\sigma\in\mathcal{P}_p}\prod_{A\in\sigma}\sum_{a_kb_k,i_k}(\prod_{k\in A}\mathcal{O}^{a_kb_k})\mu\mathbb{E}(\prod_{k\in A}\psi^{a_k}_{i_k}\bar{\psi}^{b_k}_{i_k})\tr(\prod_{k\in A}\ket{i_k}\bra{i_k})
            \\
            &=\frac{1}{d^p}\sum_{\sigma\in\mathcal{P}_p}\prod_{A\in\sigma}d\sum_{a_kb_k}(\prod_{k\in A}\mathcal{O}^{a_kb_k})\mu(\sum_{\pi_A\in\mathcal{S}_A}\prod_{k\in A}\delta_{a_kb_{\pi_A(k)}})
            \\&
            =\frac{1}{d^p}\sum_{\sigma\in\mathcal{P}_p}\prod_{A\in\sigma}\sum_{\pi_A\in\mathcal{S}_A}(\mu d)\prod_{\gamma\in\pi_A}\tr(\overrightarrow{\prod}_{k\in \gamma}\mathcal{O}_k)
            \\
            &=\frac{1}{d^p}\sum_{\pi\in\mathcal{S}_p}B_{|\pi|}(\mu d)\prod_{\gamma\in\pi}\tr(\overrightarrow{\prod}_{k\in\gamma}\mathcal{O}_k)
        \end{split}
    \end{align}
    where the last equation follows from the combinatorial lemma.

    To extract the connected correlation function, we simply focus on the contribution of the trivial partition $\{1,2,\cdots,p\}$. The corresponding summand is given by:
    \begin{equation}
        \frac{\mu d}{d^p}\sum_{\pi\in\mathcal{S}_p}\prod_{\gamma\in\pi}\tr(\overrightarrow{\prod}_{k\in\gamma}\mathcal{O}_k)\qedhere
    \end{equation}
\end{proof}
Recall that for the Poisson algebra of random simple operators, the large-$d$ limit of the correlation function is the correlation function of the Poisson algebra of non-random matrices. In particular, the large-$d$ connected correlation function is given by a single trace $\tr(\mathcal{O}_p\cdots\mathcal{O}_2\mathcal{O}_1)$. As a comparison, for the Poisson algebra of random diagonal matrices, the large-$d$ limit of the correlation function is given by a product of traces.
\begin{corollary}
    The leading contribution to the large-$d$ limit of the correlation function is given by:
    \begin{equation}
        \mu^p\prod_{1\leq k\leq  p}\tr(\mathcal{O}_k)
    \end{equation}
\end{corollary}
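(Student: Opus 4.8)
The plan is to extract the large-$d$ asymptotics directly from the Poisson moment formula established in the preceding Proposition, namely
\begin{equation}
    \varphi_{\mu\mathbb{E}\tr}(\lambda(\mathcal{E}_d(\mathcal{O}_1))\cdots\lambda(\mathcal{E}_d(\mathcal{O}_p))) = \frac{1}{d^p}\sum_{\pi\in \mathcal{S}_p}B_{|\pi|}(\mu d)\prod_{\gamma\in\pi}\tr(\overrightarrow{\prod}_{k\in\gamma}\mathcal{O}_k).
\end{equation}
First I would recall that the Touchard polynomial satisfies $B_n(x)=\sum_{\sigma\in\mathcal{P}_n}x^{|\sigma|}$, so that as a polynomial in $x$ it has degree exactly $n$ with leading coefficient $1$ (the finest partition into $n$ singletons is the unique term of top degree). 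Substituting $x=\mu d$, the factor $B_{|\pi|}(\mu d)$ therefore behaves as $(\mu d)^{|\pi|}\bigl(1+O(1/d)\bigr)$, where $|\pi|$ denotes the number of irreducible cycles of the permutation $\pi$.

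The key observation is that the $\pi$-labelled summand scales as $d^{|\pi|-p}$: the prefactor contributes $d^{-p}$, the Touchard factor contributes $d^{|\pi|}$ at leading order, and the trace product is $d$-independent. I would then maximize the exponent $|\pi|-p$ over $\pi\in\mathcal{S}_p$. Since any permutation of $p$ elements has at most $p$ cycles, with equality precisely for the identity (whose cycle decomposition consists of the $p$ fixed points), the exponent is maximized uniquely at $\pi=\mathrm{id}$, where it equals $0$; every other permutation yields a strictly negative exponent and hence a contribution vanishing as $d\to\infty$.

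Finally I would evaluate the surviving term. For $\pi=\mathrm{id}$ each cycle $\gamma$ is a singleton $\{k\}$, so that $\prod_{\gamma\in\pi}\tr(\overrightarrow{\prod}_{k\in\gamma}\mathcal{O}_k)=\prod_{1\leq k\leq p}\tr(\mathcal{O}_k)$, while $B_p(\mu d)$ has leading term $(\mu d)^p$. Multiplying by the prefactor $d^{-p}$ gives
\begin{equation}
    \frac{1}{d^p}\,(\mu d)^p\prod_{1\leq k\leq p}\tr(\mathcal{O}_k)=\mu^p\prod_{1\leq k\leq p}\tr(\mathcal{O}_k),
\end{equation}
which is the claimed leading contribution. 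There is essentially no serious obstacle here: the only points demanding care are correctly identifying the degree of $B_n$ in its argument and confirming that the identity permutation is the \emph{unique} maximizer of the cycle count, so that the leading asymptotics is controlled by a single term rather than by an entire family of partitions.
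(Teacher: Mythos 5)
Your proof is correct, and it is the paper's power-counting argument carried out in the dual, resummed representation. The paper's own proof reverts to the intermediate double sum from the Proposition's derivation, $\frac{1}{d^p}\sum_{\sigma\in\mathcal{P}_p}\prod_{A\in\sigma}\sum_{\pi_A\in\mathcal{S}_A}(\mu d)\prod_{\gamma\in\pi_A}\tr\big(\overrightarrow{\prod}_{k\in\gamma}\mathcal{O}_k\big)$, reads off the scaling $d^{|\sigma|-p}$ of each partition's contribution, and identifies the singleton partition (whose blocks carry trivial permutation groups) as the unique maximizer; you instead start from the Touchard form stated in the Proposition, use that $B_{|\pi|}(\mu d)$ has degree $|\pi|$ in $d$ with leading coefficient $1$, and identify the identity permutation as the unique maximizer of the cycle count. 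Via the paper's combinatorial lemma, $B_n(\mu)=\sum_{\eta\in\mathcal{P}_n}\mu^{|\eta|}$, these dominant terms are literally the same object: the identity permutation together with the top-degree term of $B_p$ corresponds precisely to the singleton partition with trivial block permutations. Your route has the minor advantage of relying only on the Proposition's stated formula plus an elementary fact about Touchard polynomials, while the paper's route avoids any Touchard asymptotics by staying in the unresummed form; in both cases the sum is over a finite set at fixed $p$, so the term-by-term passage to the $d\to\infty$ limit that you use is unproblematic. Incidentally, your final display also corrects a typo in the paper's own proof, where the product is written $\prod_{1\leq k\leq d}$ rather than $\prod_{1\leq k\leq p}$.
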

\begin{proof}
    When $d\rightarrow \infty$, the correlation function is dominated by the singleton partition: $\sigma_\text{singleton}:=\{\{1\},\{2\},\cdots,\{p\}\}$. Because each component of this partition is a singleton set, the permutation group associated with each component is trivial. Therefore, from the calculation in the Proposition, we have the following limit:
    \begin{align}
        \begin{split}
            \lim_{d\rightarrow\infty}\frac{1}{d^p}\sum_{\sigma\in\mathcal{P}_p}\prod_{A\in\sigma}\sum_{\pi_A\in\mathcal{S}_A}(\mu d)\prod_{\gamma\in\pi_A}\tr(\overrightarrow{\prod}_{k\in\gamma}\mathcal{O}_k) = \mu^p\prod_{1\leq k\leq d}\tr(\mathcal{O}_k)\qedhere
        \end{split}
    \end{align}
\end{proof}
Notice that in the limit $d\rightarrow\infty$, the finite-$d$ connected correlation function of diagonal simple operators vanishes. Only the totally disconnected contribution to the full correlation function survives. This is in complete contrast with the correlation function of general simple operators, where the connected contributions survive in the large-$d$ limit.

Because the diagonal simple operators mutually commute, we can directly calculate the moment generating function. Physically, this generating function should be understood as a partition function. The correlation function can be derived from the partition function.
\begin{proposition}
    The moment generating function of diagonal simple operators under the Poisson state is given by:
    \begin{equation}
        \varphi_{\mu\mathbb{E}\tr}(e^{u\lambda(\mathcal{E}_d(\mathcal{O}))}) =  \exp(\frac{\mu d}{\det(\mathbbm{1} - \frac{u}{d}T)} - \mu d)
    \end{equation}
    where $u < 0$ is a negative constant. 
\end{proposition}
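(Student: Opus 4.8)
The plan is to reduce the statement to the universal Poisson moment generating function established in Section \ref{section:math}, namely $\varphi_\omega(\Gamma(e^{it\mathcal{O}})) = \exp(\omega(e^{it\mathcal{O}}-1))$, and then to evaluate the resulting single-boundary weight by a Gaussian integral. Here the input datum is $(L^\infty(\mathbb{C}^k,\ell^\infty_d(\mathbb{C})),\mu\mathbb{E}\tr)$ and the self-adjoint generator is the diagonal simple operator $X:=\mathcal{E}_d(\mathcal{O})$. Since $\Gamma(e^{uX})=e^{u\lambda(X)}$, analytically continuing $it\mapsto u$ (legitimate for $u<0$, where every quantity below converges) gives immediately
\begin{equation}
\varphi_{\mu\mathbb{E}\tr}\big(e^{u\lambda(\mathcal{E}_d(\mathcal{O}))}\big) = \exp\big(\mu\mathbb{E}\tr(e^{u\mathcal{E}_d(\mathcal{O})} - 1)\big).
\end{equation}
Thus the entire content of the proposition reduces to evaluating the single-boundary quantity $\mu\mathbb{E}\tr(e^{u\mathcal{E}_d(\mathcal{O})})$.

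Next I would use that $\mathcal{E}_d(\mathcal{O}) = \frac{1}{d}\sum_{i,ab}\mathcal{O}^{ab}\psi^a_i\bar\psi^b_i\ket{i}\bra{i}$ is diagonal in the index $i$, so its $i$-th diagonal entry is the quadratic form $\frac{1}{d}\psi_i^\dagger T^{\mathsf T}\psi_i$ with $T=\sum_{ab}\mathcal{O}^{ab}\ket{a}\bra{b}$, and $e^{u\mathcal{E}_d(\mathcal{O})}$ is again diagonal with entries $\exp(\frac{u}{d}\psi_i^\dagger T^{\mathsf T}\psi_i)$. Taking the unnormalized trace and then the Gaussian expectation, and using that the vectors $\{\psi_i\}_{1\le i\le d}$ are mutually independent standard complex Gaussians with $\mathbb{E}[\psi^a_i\bar\psi^b_j]=\delta^{ab}\delta_{ij}$, I obtain a sum of $d$ identical factors
\begin{equation}
\mathbb{E}\tr(e^{u\mathcal{E}_d(\mathcal{O})}) = \sum_{1\le i\le d}\mathbb{E}\big[e^{\frac{u}{d}\psi_i^\dagger T^{\mathsf T}\psi_i}\big].
\end{equation}
Each factor is a standard complex Gaussian integral, $\mathbb{E}[e^{s\,\psi^\dagger A\psi}]=\det(\mathbbm{1}-sA)^{-1}$, valid whenever $\mathbbm{1}-sA$ has positive real part; with $s=u/d$ and $A=T^{\mathsf T}$ this gives $\det(\mathbbm{1}-\tfrac{u}{d}T)^{-1}$, the determinant being transpose-invariant. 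Assembling the $d$ equal terms and subtracting $\mu\mathbb{E}\tr(\mathbbm{1})=\mu d$ yields exactly $\frac{\mu d}{\det(\mathbbm{1}-\frac{u}{d}T)} - \mu d$, the claimed exponent.

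The only genuine subtlety, and the step I would treat most carefully, is convergence and the legitimacy of the analytic continuation. The operator $\mathcal{E}_d(\mathcal{O})$ is built from Gaussian variables and is therefore unbounded (an element of the $L^1$, not $L^\infty$, category of Subsection \ref{subsection:simplechannel}), so one must check that $e^{u\mathcal{E}_d(\mathcal{O})}$ is integrable against $\mu\mathbb{E}\tr$ and lies in the domain where the generating identity of Section \ref{section:math} applies. This is precisely where the hypothesis $u<0$ enters: it guarantees $\mathbbm{1}-\frac{u}{d}T$ is invertible with positive real part (for the positive $T$ relevant to the gravitational applications), so the Gaussian integral converges absolutely and the continuation from the oscillatory identity $\varphi_\omega(\Gamma(e^{it\mathcal{O}}))=\exp(\omega(e^{it\mathcal{O}}-1))$ to real $u$ is justified by analyticity in the half-plane $\Re u<0$ together with boundary continuity. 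Alternatively, I could bypass the continuation entirely by summing the moment series $\sum_p\frac{u^p}{p!}\varphi_{\mu\mathbb{E}\tr}(\lambda(\mathcal{E}_d(\mathcal{O}))^p)$ via the Poisson moment formula of the preceding proposition and the combinatorial Lemma, recovering the same determinant; comparing the two derivations is a useful consistency check.
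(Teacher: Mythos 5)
Your proposal is correct and follows essentially the same route as the paper: apply the Poisson state's exponential identity $\varphi_{\mu\mathbb{E}\tr}(e^{u\lambda(X)})=\exp(\mu\mathbb{E}\tr(e^{uX}-\mathbbm{1}))$, use diagonality in $i$ together with the independence of the Gaussian vectors $\psi_i$ to factor the trace into $d$ identical terms, and evaluate each by the standard complex Gaussian integral yielding $\det(\mathbbm{1}-\tfrac{u}{d}T)^{-1}$. Your extra attention to the domain/convergence issues for $u<0$ and the alternative moment-series check are refinements the paper leaves implicit, but the core argument is identical.
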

\begin{proof}
    The proof is a simple calculation of the Gaussian integral.
    \begin{align}
        \begin{split}
            \varphi_{\mu\mathbb{E}\tr}(e^{u\lambda(\mathcal{E}_d(\mathcal{O}))}) &= \exp(\mu\mathbb{E}\tr(e^{u\mathcal{E}_d(\mathcal{O})} - \mathbbm{1}))
            \\
            &=\exp(\mu\mathbb{E}\tr(\sum_ie^{\frac{u}{d}\sum_{ab}\mathcal{O}^{ab}\psi^a_i\bar{\psi}^b_i}\ket{i}\bra{i} - \mathbbm{1}))
            \\
            &=\prod_i\exp(\mu\mathbb{E}(e^{\frac{u}{d}\sum_{ab}\mathcal{O}^{ab}\psi^a_i\bar{\psi}^b_i} - 1))
        \end{split}
    \end{align}
    The exponent is a Gaussian integral:
    \begin{equation}
        \mathbb{E}(e^{\frac{u}{d}\sum_{ab}\mathcal{O}^{ab}\psi^a_i\bar{\psi}^b_i}) = \int\big(\prod_{a}\frac{d\psi^a_id\bar{\psi}^a_i}{\pi}e^{-|\psi^a_i|^2}\big)e^{\frac{u}{d}\sum_{ab}\mathcal{O}^{ab}\psi^a_i\bar{\psi}^b_i} = \frac{1}{\det(\mathbbm{1} - \frac{u}{d}T)}
    \end{equation}
    where the determinant is taken over $k\times k$-matrices. Therefore, we have:
    \begin{equation}
        \varphi_{\mu\mathbb{E}\tr}(e^{u\lambda(\mathcal{E}_d(\mathcal{O}))}) = \prod_d\exp(\frac{\mu}{\det(\mathbbm{1} - \frac{u}{d}T)} - \mu) = \exp(\frac{\mu d}{\det(\mathbbm{1} - \frac{u}{d}T)} - \mu d)\qedhere
    \end{equation}
\end{proof}
From the perspective of 2D open/closed TQFT, it is not surprising that its correlation function can be reproduced by a commutative Poisson algebra. Recall that the crucial bordism in a 2D open/closed TQFT is the so-called open-to-closed bordism. Geometrically, this bordism folds and glues the two endpoints of an open boundary. In the canonical form of a 2D open/closed bordism \cite{Lauda-Pfeiffer}, each open boundary connects with the bulk manifold through an open-to-closed bordism, and the bulk manifold only has closed boundaries. 

In terms of a knowledgeable Frobenius algebra, this open-to-closed bordism is represented by a projection to a commutative subalgebra. Therefore, the correlation function of a 2D open/closed TQFT is calculated by first projecting the (possibly noncommutative) operators associated with open boundaries to a commutative subalgebra. From this perspective, it is not surprising that the 2D open/closed TQFT correlation function is modeled by the Poisson moment formula of a commutative Poisson algebra.
\subsection{Projection to central simple operators}\label{subsection:centralsimple}

In addition to the projection to diagonal simple operators, we can also go one step further and project to the central simple operators. This will yield a slightly different partition function and correlation functions. The central projection is given by the normalized trace:
\begin{equation}
    \tau_d:=(id\otimes\frac{\tr}{d})\circ\Psi_d: L^1_k(\mathbb{C})\rightarrow L^1(\mathbb{C}^k, L^1_d(\mathbb{C})) \rightarrow L^1(\mathbb{C}^k, \mathbb{C}\mathbbm{1})
\end{equation}
where the normalized trace $\frac{\tr}{d}$ maps $L^1_d(\mathbb{C})$ to its center which is isomorphic to $\mathbb{C}$. 
\begin{equation}
    \overline{\mathcal{O}^{(\psi)}}:=\tau_d(\mathcal{O}) = (id\otimes\frac{\tr}{d})(\mathcal{O}^{(\psi)}) = \frac{1}{d^2}(\sum_{i,ab}\mathcal{O}^{ab}\psi^a_i\bar{\psi}^b_i)\mathbbm{1}
\end{equation}
Notice that the channel $\tau_d$ preserves following two states:
\begin{equation}
    \mathbb{E}\otimes\tr(\overline{\mathcal{O}^{(\psi)}}) = \frac{1}{d^2}\sum_{i,ab}\mathcal{O}^{ab}\mathbb{E}(\psi^a_i\bar{\psi}^b_i)\tr(\mathbbm{1}) = \frac{1}{d}\sum_i\sum_{ab}\mathcal{O}^{ab}\delta_{ab} = \tr(\mathcal{O})
\end{equation}
We directly calculate the moment generating function.
\begin{proposition}
    Consider the Poisson algebra $\mathbb{P}_{\mu\mathbb{E}\tr}L^\infty(\mathbb{C}^k, \mathbb{C}\mathbbm{1})$, we have the moment generating function:
    \begin{equation}
        \varphi_{\mu\mathbb{E}\tr}(e^{u\lambda(\overline{\mathcal{O}^{(\psi)}})}) = \exp(\frac{\mu d}{\det(1 - \frac{u}{d^2}T)^d} - \mu d)\qedhere
    \end{equation}
    where $u < 0$ is a real constant. 
\end{proposition}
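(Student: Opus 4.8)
The plan is to reduce the statement to the noncommutative Poisson moment generating formula derived in Section \ref{section:math} and then to a purely Gaussian computation. Since the input weight $\omega = \mu\mathbb{E}\tr$ is finite on the commutative algebra $L^\infty(\mathbb{C}^k, \mathbb{C}\mathbbm{1})$ (indeed $\omega(\mathbbm{1}) = \mu d < \infty$), the integrated Poisson quantized operator satisfies $e^{u\lambda(\overline{\mathcal{O}^{(\psi)}})} = \Gamma(e^{u\overline{\mathcal{O}^{(\psi)}}})$, and the generating formula gives
\begin{equation}
    \varphi_{\mu\mathbb{E}\tr}(e^{u\lambda(\overline{\mathcal{O}^{(\psi)}})}) = \exp\big(\mu\mathbb{E}\tr(e^{u\overline{\mathcal{O}^{(\psi)}}} - \mathbbm{1})\big).
\end{equation}
Thus the whole problem collapses to evaluating $\mathbb{E}\tr(e^{u\overline{\mathcal{O}^{(\psi)}}} - \mathbbm{1})$.

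The key structural observation, and the step that distinguishes this central case from the diagonal case of the previous proposition, is that $\overline{\mathcal{O}^{(\psi)}}$ is a \emph{random scalar multiple of the identity}: $\overline{\mathcal{O}^{(\psi)}} = c(\psi)\mathbbm{1}$ with $c(\psi) = \frac{1}{d^2}\sum_{i,ab}\mathcal{O}^{ab}\psi^a_i\bar{\psi}^b_i$. Consequently $e^{u\overline{\mathcal{O}^{(\psi)}}} = e^{uc(\psi)}\mathbbm{1}$ and $\tr(e^{u\overline{\mathcal{O}^{(\psi)}}}) = d\, e^{uc(\psi)}$, so the entire dependence on the index $i$ stays locked inside a single exponent rather than being distributed over the diagonal. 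This is precisely why the final determinant appears raised to the power $d$ here, instead of to the first power as in the diagonal case (where the trace produced an $i$-sum outside the exponential and hence a prefactor $\mu d$ rather than a power).

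It then remains to compute the Gaussian expectation $\mathbb{E}(e^{uc(\psi)})$. Because the $\psi^a_i$ are mutually independent standard complex Gaussians across $i$, the exponent factorizes over $i$, and I would write $\mathbb{E}(e^{uc(\psi)}) = \prod_{i=1}^d \mathbb{E}(e^{\frac{u}{d^2}\sum_{ab}\mathcal{O}^{ab}\psi^a_i\bar{\psi}^b_i})$. Each factor is the standard complex Gaussian integral $\int \big(\prod_a \frac{d\psi^a_i\, d\bar{\psi}^a_i}{\pi}e^{-|\psi^a_i|^2}\big)\, e^{\frac{u}{d^2}\sum_{ab}\mathcal{O}^{ab}\psi^a_i\bar{\psi}^b_i} = \det(\mathbbm{1} - \frac{u}{d^2}T)^{-1}$, with $T = (\mathcal{O}^{ab})$ the $k\times k$ matrix, where $u < 0$ secures convergence of the integral. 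Multiplying the $d$ identical factors gives $\det(\mathbbm{1} - \frac{u}{d^2}T)^{-d}$, whence $\mu\mathbb{E}\tr(e^{u\overline{\mathcal{O}^{(\psi)}}} - \mathbbm{1}) = \mu d\, \det(\mathbbm{1} - \frac{u}{d^2}T)^{-d} - \mu d$, and exponentiation yields the claimed formula.

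The computation is otherwise routine, so the one point that genuinely requires care is the domain issue: $\overline{\mathcal{O}^{(\psi)}}$ is an unbounded affiliated operator, since $c(\psi)$ is a quadratic (hence unbounded) function of the Gaussians on $\mathbb{C}^k$, so the moment generating identity and the manipulation of $\Gamma(e^{u\overline{\mathcal{O}^{(\psi)}}})$ must be justified beyond the bounded setting. I would handle this exactly as in the diagonal proposition, reading the statement as an identity of analytic functions of $u$ on the half-line $u < 0$, where $e^{u\overline{\mathcal{O}^{(\psi)}}}$ is a genuine contraction and the Gaussian integral converges; the hypothesis $u < 0$ is precisely what makes both sides well defined.
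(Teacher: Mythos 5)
Your proof is correct and follows essentially the same route as the paper's: reduce via the Poisson generating formula $\varphi_{\mu\mathbb{E}\tr}(e^{u\lambda(\overline{\mathcal{O}^{(\psi)}})}) = \exp\big(\mu\mathbb{E}\tr(e^{u\overline{\mathcal{O}^{(\psi)}}} - \mathbbm{1})\big)$, use that the central operator is a random scalar multiple of $\mathbbm{1}$ so the trace contributes a factor of $d$, and factorize the Gaussian expectation over the index $i$ into $d$ identical standard complex Gaussian integrals, each equal to $\det(\mathbbm{1} - \frac{u}{d^2}T)^{-1}$, yielding the determinant raised to the power $d$. Your closing remarks on the unbounded affiliated operator and reading the identity analytically on $u<0$ elaborate on what the paper dispatches with a one-line convergence note, but they do not change the argument.
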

\begin{proof}
    This is essentially the same calculation as the diagonal simple operators.
    \begin{align}
        \begin{split}
            \varphi_{\mu\mathbb{E}\tr}(e^{u\lambda(\overline{\mathcal{O}^{(\psi)}})}) &= \exp(\mu\mathbb{E}\tr(e^{u\overline{\mathcal{O}^{(\psi)}}} - \mathbbm{1})) = \exp(\mu\mathbb{E}(e^{\frac{u}{d^2}\sum_{i,ab}\mathcal{O}^{ab}\psi^a_i\bar{\psi}^b_i} - 1)\tr(\mathbbm{1}))
            \\
            &=\exp(\mu d\mathbb{E}(e^{\frac{u}{d^2}\sum_{i,ab}\mathcal{O}^{ab}\psi^a_i\bar{\psi}^b_i} - 1))
        \end{split}
    \end{align}
    The exponent is a Gaussian integral:
    \begin{equation}
        \mathbb{E}(e^{\frac{u}{d^2}\sum_{i,a,b}\mathcal{O}^{ab}\psi^a_i\overline{\psi}^b_i}) = \int \bigg(\prod_{i,a}\frac{d\psi^a_i d\bar{\psi}^a_i}{\pi}e^{-|\psi^a_i|^2}\bigg)e^{\frac{u}{d^2}\sum_{i,a,b}\mathcal{O}^{ab}\psi^a_i\bar{\psi}^b_i}
    \end{equation}
    Note that this integral converges when $u < 0$. This Gaussian integral is a product of $d$ identical Gaussian integrals:
    \begin{equation}
        \mathbb{E}(e^{\frac{u}{d^2}\sum_{i,a,b}\mathcal{O}^{ab}\psi^a_i\bar{\psi}^b_i}) = \prod_{1\leq i\leq d}\int\bigg(\prod_a\frac{d\psi^a_i d\bar{\psi}^a_i}{\pi}e^{-|\psi^a_i|^2}\bigg)e^{\frac{u}{d^2}\sum_{ab}\mathcal{O}^{ab}\psi^a_i\bar{\psi}^b_i}
    \end{equation}
    And a simple calculation gives:
    \begin{equation}
        \int\bigg(\prod_a\frac{d\psi^a_i d\bar{\psi}^a_i}{\pi}e^{-|\psi^a_i|^2}\bigg)e^{\frac{u}{d^2}\sum_{ab}\mathcal{O}^{ab}\psi^a_i\bar{\psi}^b_i} = \frac{1}{\det(1 - \frac{u}{d^2}T)}
    \end{equation}
    where the determinant is taken over $k\times k$-matrices. Therefore, we have:
    \begin{equation}
        \varphi_{\mu\mathbb{E}\tr}(e^{u\lambda(\overline{\mathcal{O}^{(\psi)}})}) = \exp(\frac{\mu d}{\det(1 - \frac{u}{d^2}T)^d} - \mu d)\qedhere
    \end{equation}
\end{proof}
Compared with the partition function for diagonal simple operators, the only difference is in the determinant factor. 

We can derive the correlation function of central simple operators from this partition function. Alternatively, we can calculate the correlation function directly:
\begin{proposition}
    The correlation function of central simple operators is given by:
    \begin{equation}
        \varphi_{\mu\mathbb{E}\tr}(\lambda(\overline{\mathcal{O}^{(\psi)}_1})\cdots\lambda(\overline{\mathcal{O}_p^{(\psi)}}) = \frac{1}{d^{2p}}\sum_{\pi\in\mathcal{S}_p}B_{|\pi|}(\mu d)\prod_{\gamma\in\pi}d\tr(\overrightarrow{\prod}_{k\in\gamma}\mathcal{O}_k)
    \end{equation}
\end{proposition}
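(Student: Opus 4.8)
The plan is to reduce the statement to the Poisson moment formula of Section \ref{section:math} and evaluate the resulting single-block weights by a Gaussian (Wick) contraction, exactly as in Proposition \ref{proposition:workhorse}, and then collapse the nested combinatorial sum using the multiplicative-function Corollary proved earlier in this Appendix. Since the $\lambda(\overline{\mathcal{O}^{(\psi)}_k})$ are the Poisson quantizations of elements of the commutative algebra $L^\infty(\mathbb{C}^k,\mathbb{C}\mathbbm{1})$ with input weight $\omega=\mu\mathbb{E}\tr$, the moment formula immediately gives
\[
\varphi_{\mu\mathbb{E}\tr}\big(\lambda(\overline{\mathcal{O}^{(\psi)}_1})\cdots\lambda(\overline{\mathcal{O}^{(\psi)}_p})\big)=\sum_{\sigma\in\mathcal{P}_p}\prod_{A\in\sigma}\mu\mathbb{E}\tr\Big(\overrightarrow{\prod}_{k\in A}\overline{\mathcal{O}^{(\psi)}_k}\Big),
\]
so the entire problem reduces to computing the block weight $\mu\mathbb{E}\tr(\overrightarrow{\prod}_{k\in A}\overline{\mathcal{O}^{(\psi)}_k})$.

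First I would exploit the fact that each $\overline{\mathcal{O}^{(\psi)}_k}=d^{-2}\big(\sum_{i,ab}\mathcal{O}_k^{ab}\psi^a_i\bar\psi^b_i\big)\mathbbm{1}$ is a scalar random variable times the identity. The ordered product over a block therefore factors as $d^{-2|A|}$ times a product of these scalars times $\mathbbm{1}$, and $\tr(\mathbbm{1})=d$ pulls out the single power of $d$ that distinguishes the central projection from the diagonal and from the full random-matrix case. The only genuine computation is the Gaussian moment $\mathbb{E}\big(\prod_{k\in A}\psi^{a_k}_{i_k}\bar\psi^{b_k}_{i_k}\big)$, which Wick's theorem evaluates to $\sum_{\pi_A\in\mathcal{S}_A}\prod_{k\in A}\delta_{a_k b_{\pi_A(k)}}\delta_{i_k i_{\pi_A(k)}}$. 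Carrying out the index sums as in Proposition \ref{proposition:workhorse} — the $i$-deltas giving $d^{\text{cycles}(\pi_A)}$ and the contraction of the $\mathcal{O}_k^{a_kb_k}$ against the $a$-$b$ deltas giving $\prod_{\gamma\in\pi_A}\tr(\overrightarrow{\prod}_{k\in\gamma}\mathcal{O}_k)$ — produces
\[
\mu\mathbb{E}\tr\Big(\overrightarrow{\prod}_{k\in A}\overline{\mathcal{O}^{(\psi)}_k}\Big)=\frac{\mu d}{d^{2|A|}}\sum_{\pi_A\in\mathcal{S}_A}\prod_{\gamma\in\pi_A} d\,\tr\Big(\overrightarrow{\prod}_{k\in\gamma}\mathcal{O}_k\Big).
\]
Note that because each central operator is proportional to $\mathbbm{1}$ there is no matrix multiplication linking consecutive operators, so the open cyclic factor $d^{\text{cycles}(\pi_A^0\pi_A)}$ of Proposition \ref{proposition:workhorse} is replaced by the closed factor $d^{\text{cycles}(\pi_A)}$; this is the whole effect of the central projection.

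Finally I would assemble the blocks. Since $\sum_{A\in\sigma}|A|=p$, the factors $d^{-2|A|}$ combine into a global $d^{-2p}$, and writing $f(\gamma):=d\,\tr(\overrightarrow{\prod}_{k\in\gamma}\mathcal{O}_k)$, which is a multiplicative function on permutations, the expression becomes $d^{-2p}\sum_{\sigma\in\mathcal{P}_p}\prod_{A\in\sigma}(\mu d)\sum_{\pi_A\in\mathcal{S}_A}\prod_{\gamma\in\pi_A}f(\gamma)$. Applying the multiplicative-function Corollary with the constant $\mu d$ in place of $\mu$ resums the sum over pairs $(\sigma,\{\pi_A\})$ into a single sum over $\pi\in\mathcal{S}_p$ weighted by the Touchard polynomial $B_{|\pi|}(\mu d)$, yielding exactly the claimed identity. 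The hard part is precisely this resummation — correctly identifying $\sqcup_\sigma\sqcup_{A\in\sigma}\mathcal{S}_A$ with $\mathcal{S}_p$ via the many-to-one map $(\pi_A)_A\mapsto\prod_A\pi_A$ and tracking how the per-block constants $\mu d$ reassemble into $B_{|\pi|}(\mu d)$ through the pullback of partitions along cycle decompositions — but this is exactly the statement of the already-proven Corollary, so it may be invoked verbatim and no new combinatorics is needed.
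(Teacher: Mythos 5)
Your proof is correct and follows essentially the same route as the paper's: expand via the Poisson moment formula, evaluate each block weight by Wick contraction (with the $i$-index deltas yielding $d^{\text{cycles}(\pi_A)}$ and $\tr(\mathbbm{1})=d$ supplying the extra factor, since the central operators are scalars times the identity and hence no cyclic factor $d^{\text{cycles}(\pi_A^0\pi_A)}$ appears), and then resum the nested sum over $(\sigma,\{\pi_A\})$ into $\sum_{\pi\in\mathcal{S}_p}B_{|\pi|}(\mu d)\prod_{\gamma\in\pi}d\tr\big(\overrightarrow{\prod}_{k\in\gamma}\mathcal{O}_k\big)$ via the multiplicative-function corollary with constant $\mu d$. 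The only difference is organizational — you isolate the per-block weight as a standalone quantity before assembling, whereas the paper carries the product over blocks through a single chain of equalities — which is a presentational, not mathematical, distinction.
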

\begin{proof}
    \begin{align}
        \begin{split}
            \varphi_{\mu\mathbb{E}\tr}&(\lambda(\overline{\mathcal{O}^{(\psi)}_1})\cdots\lambda(\overline{\mathcal{O}^{(\psi)}_p})) = \frac{1}{d^{2p}}\sum_{\sigma\in\mathcal{P}_p}\prod_{A\in\sigma}\mu\mathbb{E}\tr(\prod_{k\in A}(\sum_{i_k,a_kb_k}\mathcal{O}^{a_kb_k}_k\psi^{a_k}_{i_k}\bar{\psi}^{b_k}_{i_k})\mathbbm{1})
            \\
            &=\frac{1}{d^{2p}}\sum_{\sigma\in\mathcal{P}_p}\prod_{A\in\sigma}\sum_{i_k, a_kb_k}(\prod_{k\in A}\mathcal{O}^{a_kb_k}_k)\mu\mathbb{E}(\prod_{k\in A}\psi^{a_k}_{i_k}\bar{\psi}^{b_k}_{i_k})\tr(\mathbbm{1})
            \\&
            =\frac{1}{d^{2p}}\sum_{\sigma\in\mathcal{P}_p}\prod_{A\in\sigma}\sum_{i_k,a_kb_k}(\prod_{k\in A}\mathcal{O}^{a_kb_k}_k)\mu(\sum_{\pi_A\in\mathcal{S}_A}\prod_{k\in A}\delta_{a_kb_{\pi_A(k)}}\delta_{i_k i_{\pi_A(k)}})\tr(\mathbbm{1})
            \\&
            =\frac{1}{d^{2p}}\sum_{\sigma\in\mathcal{P}_p}\prod_{A\in\sigma}\sum_{\pi_A\in\mathcal{S}_A}\mu\prod_{\gamma\in\pi_A}\tr(\overrightarrow{\prod}_{k\in\gamma}\mathcal{O}_k)(\sum_{i_k}\prod_{k\in A}\delta_{i_ki_{\pi_A(k)}})\tr(\mathbbm{1})
            \\&
            =\frac{1}{d^{2p}}\sum_{\sigma\in\mathcal{P}_p}\prod_{A\in\sigma}\sum_{\pi_A\in\mathcal{S}_A}\mu\big(\prod_{\gamma\in\pi_A}\tr(\overrightarrow{\prod}_{k\in\gamma}\mathcal{O}_k)\big)d^{1 + \text{cycles}(\pi_A)}
            \\&
            =\frac{1}{d^{2p}}\sum_{\sigma\in\mathcal{P}_p}\prod_{A\in\sigma}\sum_{\pi_A\in\mathcal{S}_A}(\mu d)\big(\prod_{\gamma\in\pi_A}d\tr(\overrightarrow{\prod}_{k\in\gamma}\mathcal{O}_k)\big)
            \\&
            =\frac{1}{d^{2p}}\sum_{\pi\in\mathcal{S}_p}B_{|\pi|}(\mu d)\prod_{\gamma\in\pi}d\tr(\overrightarrow{\prod}_{k\in\gamma}\mathcal{O}_k)
        \end{split}
    \end{align}
    where the last equation follows from the combinatorial lemma. 
\end{proof}
\begin{corollary}
    In the large-$d$ limit, the correlation function of central simple operators reduces to:
    \begin{equation}
        \mu^p\prod_{1\leq k\leq p}\tr(\mathcal{O}_k)
    \end{equation}
\end{corollary}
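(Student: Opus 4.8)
The plan is to take the $d\to\infty$ limit directly in the exact finite-$d$ correlation function established in the preceding Proposition,
\begin{equation}
    \varphi_{\mu\mathbb{E}\tr}\big(\lambda(\overline{\mathcal{O}^{(\psi)}_1})\cdots\lambda(\overline{\mathcal{O}_p^{(\psi)}})\big) = \frac{1}{d^{2p}}\sum_{\pi\in\mathcal{S}_p}B_{|\pi|}(\mu d)\prod_{\gamma\in\pi}d\,\tr\big(\overrightarrow{\prod}_{k\in\gamma}\mathcal{O}_k\big),
\end{equation}
and to isolate the single permutation that survives. First I would do power counting in $d$ term by term. For a fixed $\pi\in\mathcal{S}_p$ with $|\pi|$ cycles, the Touchard polynomial $B_{|\pi|}(\mu d)$ is a polynomial in $d$ of degree exactly $|\pi|$ with leading coefficient $\mu^{|\pi|}$, since $B_n(x)=\sum_{\sigma\in\mathcal{P}_n}x^{|\sigma|}$ has top term $x^n$ coming from the all-singletons partition. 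The product $\prod_{\gamma\in\pi}d$ contributes a factor $d^{|\pi|}$ (one per cycle), while each trace $\tr(\overrightarrow{\prod}_{k\in\gamma}\mathcal{O}_k)$ is $d$-independent. Hence the $\pi$-term scales at leading order as $\mu^{|\pi|}\,d^{2(|\pi|-p)}\prod_{\gamma\in\pi}\tr(\overrightarrow{\prod}_{k\in\gamma}\mathcal{O}_k)$.

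Then I would invoke $|\pi|\le p$ for every $\pi\in\mathcal{S}_p$, with equality if and only if $\pi$ is the identity, which is the only permutation of $[p]$ with $p$ cycles (all of them fixed points). Every nontrivial permutation therefore carries a strictly negative power of $d$ and drops out in the limit, leaving $\pi=\mathrm{id}$ as the sole contributor. Evaluating this term, $B_p(\mu d)\to\mu^p d^p$, the cycle product (all singletons $\gamma=\{k\}$) gives $d^p\prod_{k=1}^p\tr(\mathcal{O}_k)$, and combined with the prefactor $d^{-2p}$ these yield
\begin{equation}
    \lim_{d\to\infty}\varphi_{\mu\mathbb{E}\tr}\big(\lambda(\overline{\mathcal{O}^{(\psi)}_1})\cdots\lambda(\overline{\mathcal{O}_p^{(\psi)}})\big) = \mu^p\prod_{1\leq k\leq p}\tr(\mathcal{O}_k),
\end{equation}
as claimed. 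This is exactly the ``singleton partition dominates'' mechanism already used for the diagonal simple operators, now carried through with the extra $\prod_{\gamma}d$ factor and the $d^{-2p}$ normalization particular to the central projection.

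The only point that needs genuine care, and the step I would flag as the main obstacle, is justifying that the subleading contributions vanish in a way that commutes with the sum rather than merely termwise. Here the argument is elementary: $\mathcal{S}_p$ is a finite index set and each summand is a rational function of $d$ whose leading degree I have pinned down, so the limit passes through the finite sum with no uniformity issue beyond finiteness. As an alternative and perhaps more transparent route, I could run the same degree comparison on the unresummed form $\frac{1}{d^{2p}}\sum_{\sigma\in\mathcal{P}_p}\prod_{A\in\sigma}\sum_{\pi_A\in\mathcal{S}_A}(\mu d)\prod_{\gamma\in\pi_A}d\,\tr(\overrightarrow{\prod}_{k\in\gamma}\mathcal{O}_k)$, where the dominant contribution is manifestly the singleton partition $\sigma_{\text{singleton}}=\{\{1\},\dots,\{p\}\}$, each block forcing a trivial permutation group and contributing a single factor $\mu\,\tr(\mathcal{O}_k)$, giving an equivalent derivation of the same limit.
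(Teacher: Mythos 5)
Your proposal is correct and is essentially the paper's proof: both arguments are the same elementary power counting in $d$ over a finite sum, identifying the unique maximizer of the $d$-degree (the identity permutation in your resummed Touchard form, equivalently the singleton partition with trivial block permutations in the paper's unresummed form, which you yourself supply as the alternative route). Your two explicit refinements --- that $B_{|\pi|}(\mu d)$ has degree $|\pi|$ in $d$ with leading coefficient $\mu^{|\pi|}$ since $S(n,n)=1$, and that finiteness of $\mathcal{S}_p$ justifies exchanging the limit with the sum --- are accurate and merely make precise what the paper leaves implicit.
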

\begin{proof}
    For each partition $\sigma$ and each set of permutations $(\pi_A)_{A\in\sigma}$, the $d$-dependent factor is given by $d^{|\sigma| + \sum_{A\in\sigma}\text{cycles}(\pi_A)}$. Because $|\sigma| \leq p$ and $\sum_{A\in\sigma}\text{cycles}(\pi_A) \leq \sum_{A\in\sigma}|A| = p$. Both inequalities achieve a maximum when the partition $\sigma$ is the singleton partition and the set of partitions is trivial. In this case, we have:
    \begin{equation}
        \lim_{d\rightarrow\infty}\frac{1}{d^{2p}}\sum_{\pi\in\mathcal{S}_p}B_{|\pi|}(\mu d)\prod_{\gamma\in\pi}d\tr(\overrightarrow{\prod}_{k\in\gamma}\mathcal{O}_k) = \mu^p\prod_{1\leq k\leq p}\tr(\mathcal{O}_k)\qedhere
    \end{equation}
\end{proof}
Notice that this is the same large-$d$ limit as the correlation function of the diagonal simple operators.  

\subsection{Simplified JT gravity with End-of-the-World branes}\label{subsection:mathJTEOW}

In this section, we apply Poissonization to study simplified JT gravity with end-of-the-world branes \cite{Penington:2019kki}; the example discussed in Section \ref{sec:babyuniversesnonAbel}. In particular, we show that the correlation function of particular (reduced) density matrices calculated in \cite{Penington:2019kki} is the Poisson moment formula of random matrices. This demonstration is largely an exercise of matching notations. We will not discuss the physical meaning of the density matrices considered in \cite{Penington:2019kki} since this is well explained in the original paper and in the main body of the text.

We state the observation as the following corollary to Proposition \ref{proposition:workhorse}.
\begin{corollary}
    Consider a particular diagonal $d\times d$-matrix:
    \begin{equation}
        \mathcal{O}(\beta) := \sum_i 2^{1 - 2\mu}|\Gamma(\mu - \frac{1}{2} + i\sqrt{2E_i})|^2e^{-\beta E_i}\ket{i}\bra{i}
    \end{equation}
    where $\mu > 0$ is a constant parameter (tension of the brane)\footnote{This is not to be confused with the $\mu$ introduced earlier for the normalization of the weight.} and the set $\{E_i\}$ are the eigen-energies and $\ket{i}$ is the corresponding energy eigen-state \footnote{Here for simplicity of notation, we assume there is no degeneracy. Each energy $E_i$ corresponds to a distinct state $\ket{i}$. Our discussion can accommodate degeneracy with only changes in notations.}. Then, the Poisson moment formula for the Poisson quantized random operators $\lambda(\mathcal{O}(\beta)_{(\psi)})$ is given by (c.f. Equation \ref{equation:momentforJT}):
    \begin{equation}
        \varphi_{\mathbb{E}\tr}(\lambda(\mathcal{O}(\beta_1)_{(\psi)})\cdots\lambda(\mathcal{O}(\beta_p)_{\psi})) = \frac{1}{k^p}\sum_{\sigma\in \mathcal{P}_p}\prod_{A\in\sigma}\sum_{\pi_A\in S_A}k^{\text{cycles}(\pi_A^0\pi_A)}\prod_{\gamma\in \pi_A}\tr(\prod_{j\in \gamma}\mathcal{O}(\beta_j))
    \end{equation}
    where recall $\pi^0_A$ is the cyclic permutation as explained in the proof of Proposition F.1.
    
    This moment formula contains both connected and disconnected contributions. Since connected correlation function corresponds to the trivial partition $\{1,2,\cdots,p\}\in\mathcal{P}_p$, the connected correlation function is given by:
    \begin{equation}
        \mathbb{E}\tr(\mathcal{O}(\beta_1)_{(\psi)}\cdots\mathcal{O}(\beta_p)_{(\psi)}) =\frac{1}{k^p}\sum_{\pi\in S_p}\prod_{\gamma\in \pi}\tr(\prod_{j\in\gamma}\mathcal{O}(\beta_k))k^{\text{cycles}(\pi^0\pi)}
    \end{equation}
    where again $\pi_0$ is the cyclic permutation as explained in the proof of Proposition F.1.
\end{corollary}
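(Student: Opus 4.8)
The plan is to recognize this corollary as an immediate specialization of Proposition \ref{proposition:workhorse} (in its $d\leftrightarrow k$ exchanged form, Equation \ref{equation:momentforJT}) to a concrete choice of single-boundary operators, so the work is almost entirely a matter of identifying the inputs rather than deriving anything new. First I would note that the diagonal matrix $\mathcal{O}(\beta)$ is a particular element of the $k\times k$ matrix algebra $M_k(\mathbb{C})$ (here the gravitational Hilbert space of dimension $k$ plays the role that $k$ plays in Equation \ref{equation:momentforJT}), and that the random embedding $\mathcal{O}(\beta)_{(\psi)} = \psi^\dagger \mathcal{O}(\beta)\psi$ lands in the $d$-dimensional matter sector exactly as in the construction of $\Psi_k$ in Subsection \ref{subsection:simplechannel}. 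The matrix entries $2^{1-2\mu}|\Gamma(\mu-\tfrac12 + i\sqrt{2E_i})|^2 e^{-\beta E_i}$ are simply the coefficients that appear in the simplified JT density matrix of \cite{Penington:2019kki}, so identifying them is a notation-matching step and not a computation.

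Next I would apply Equation \ref{equation:momentforJT} verbatim. Since Proposition \ref{proposition:workhorse} already establishes the general Poisson moment formula for Poisson-quantized operators $\lambda(\mathcal{O}_{(\psi)})$ built from random embeddings into a $k$-dimensional space, substituting $\mathcal{O}_j \mapsto \mathcal{O}(\beta_j)$ yields
\begin{equation}
\varphi_{\mathbb{E}\tr}(\lambda(\mathcal{O}(\beta_1)_{(\psi)})\cdots\lambda(\mathcal{O}(\beta_p)_{(\psi)})) = \frac{1}{k^p}\sum_{\sigma\in\mathcal{P}_p}\prod_{A\in\sigma}\sum_{\pi_A\in\mathcal{S}_A}k^{\text{cycles}(\pi_A^0\pi_A)}\prod_{\gamma\in\pi_A}\tr\Big(\overrightarrow{\prod}_{j\in\gamma}\mathcal{O}(\beta_j)\Big)
\end{equation}
directly. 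Because each $\mathcal{O}(\beta)$ is diagonal in the energy eigenbasis, all traces $\tr(\overrightarrow{\prod}_{j\in\gamma}\mathcal{O}(\beta_j))$ collapse to sums over a single energy index, $\sum_i \prod_{j\in\gamma} (\mathcal{O}(\beta_j))_{ii}$, but this simplification is optional and I would only mention it as a remark since the corollary statement keeps the traces in their general form. The connected piece is then extracted by the general principle from Section \ref{section:math} that connected correlators correspond to the single trivial partition $\sigma = \{1,2,\cdots,p\}$, which forces the outer sum over $\mathcal{P}_p$ to collapse and leaves precisely
\begin{equation}
\mathbb{E}\tr(\mathcal{O}(\beta_1)_{(\psi)}\cdots\mathcal{O}(\beta_p)_{(\psi)}) = \frac{1}{k^p}\sum_{\pi\in\mathcal{S}_p}\prod_{\gamma\in\pi}\tr\Big(\overrightarrow{\prod}_{j\in\gamma}\mathcal{O}(\beta_j)\Big)k^{\text{cycles}(\pi^0\pi)},
\end{equation}
which is Equation \ref{equation:connectedmomentforJT} with the JT matrices inserted.

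The only genuinely nontrivial content, and thus the main thing worth spelling out, is the comparison with \cite{Penington:2019kki}: one must check that the connected correlation function above reproduces the correlators of reduced density matrices $\rho(\beta)$ computed there (their Equation 2.32) in the appropriate scaling limit. The hard part will be verifying that the combinatorial weight $k^{\text{cycles}(\pi^0\pi)}$ summed over $\mathcal{S}_p$ matches the genus expansion / topological sum organizing the wormhole contributions in the gravitational path integral, and that the normalization conventions (the factors of $1/k$, the brane-tension-dependent $\Gamma$-factors, and the distinction between $d$ and $k$ in the double-scaling limit) align. I would handle this by invoking Corollary \ref{corollary:planar}: in the large-$k$ (planar) limit the dominant term comes from the inverse cyclic permutation $(\pi^0)^{-1}$, giving a single ordered trace $\tr(\mathcal{O}(\beta_p)\cdots\mathcal{O}(\beta_1))$, and the subleading terms in powers of $1/k$ organize into the genus expansion. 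Since $\mathcal{O}(\beta)$ is diagonal the ordering is immaterial and each trace is a spectral sum, so introducing a spectral density lets one rewrite the whole formula as an integral over energies; matching this to the double-scaled answer of \cite{Penington:2019kki} is then a direct identification that I would reference rather than reproduce in full.
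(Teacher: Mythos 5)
Your proposal follows the paper's own route exactly: the corollary is obtained by substituting $\mathcal{O}_j\mapsto\mathcal{O}(\beta_j)$ into Equation \ref{equation:momentforJT} (Proposition \ref{proposition:workhorse} with $d\leftrightarrow k$ exchanged), the connected piece is read off from the trivial partition $\{1,\dots,p\}$, and the remaining content is the notation match with \cite{Penington:2019kki}, which the paper handles precisely as you describe: the $k\to\infty$ limit picks out $(\pi^0)^{-1}$, diagonality makes the ordering immaterial, and a spectral density converts the trace into an energy integral.

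One slip is worth correcting in the very step you call ``identifying the inputs'': you have the roles of $d$ and $k$ backwards. In the paper's conventions $\mathcal{O}(\beta)$ is a $d\times d$ matrix on the gravitational sector (indices $i$, dimension $d\sim e^{S_0}$), and the channel $\Psi_k$ of Subsection \ref{subsection:simplechannel} sends it to the $k\times k$ random matrix $\mathcal{O}(\beta)_{(\psi)}=\psi^\dagger\mathcal{O}(\beta)\psi$ on the brane-flavor (matter) sector --- not, as you write, an element of $M_k(\mathbb{C})$ landing in a $d$-dimensional matter space. Concretely, the factor $k^{\text{cycles}(\pi_A^0\pi_A)}$ arises from Wick contractions of the flavor indices $a,b\leq k$, while the traces $\tr(\overrightarrow{\prod}_{j\in\gamma}\mathcal{O}(\beta_j))$ run over the $d$-dimensional energy eigenbasis. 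Your displayed formulas come out right anyway because you applied Equation \ref{equation:momentforJT} verbatim, but the swapped identification inverts the physics of the double-scaling limit (in \cite{Penington:2019kki}, $k$ counts EOW-brane flavors and $d$ the black-hole microstates, and the entropy transition depends on the ratio $k/d$); the same caveat applies to your appeal to Corollary \ref{corollary:planar}, which as stated is the large-$d$ result for $\mathcal{O}^{(\psi)}$ and must likewise be used in its $d\leftrightarrow k$ exchanged form. Fix the labels and the argument is exactly the paper's.
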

We now match notation with the original text (see also Appendix D \cite{Penington:2019kki}).  Recall the key (reduced) density operator\footnote{The original paper assumes a pure state on the composite system of a black hole and its Hawking radiation and considers its reduction to the radiation subsystem.} considered in \cite{Penington:2019kki} is given by (using the original notation):
\begin{equation}\label{equation:JTdensity}
    \rho(\beta) := \frac{1}{k}\sum_{1\leq a,b\leq k}\bra{\psi_b(\beta)}\ket{\psi_a(\beta)}\ket{a}\bra{b} = \tr_B\bigg((\frac{1}{\sqrt{k}}\sum_a\ket{\psi_a(\beta)}_B\ket{a}_R)(\frac{1}{\sqrt{k}}\sum_j\bra{\psi_b(\beta)}_B\bra{b}_R)\bigg)
\end{equation}
where the partial trace is over the gravitational component $B$ \footnote{Following the original notation in \cite{Penington:2019kki}, $B$ stands for black hole.}. In terms of the original notation of \cite{Penington:2019kki}, the pure state $\frac{1}{\sqrt{k}}\sum_a\ket{\psi_a(\beta)}_B\ket{a}_R$ is denoted as $\ket{\Psi}$ (see also Section 2 \cite{Penington:2019kki}). 

Using the definition of $\ket{\psi_a(\beta)}$, we have (see also Equations D.2 and D.3 in \cite{Penington:2019kki}):
\begin{equation}
    \rho(\beta) = \frac{1}{k}\sum_{1\leq a,b\leq k}\sum_{1\leq i\leq d} 2^{1-2\mu}|\Gamma(\mu - \frac{1}{2} + i\sqrt{2E_i})|^2e^{-\beta E_i}\psi^a_i\bar{\psi}^b_i\ket{a}\bra{b}
\end{equation}
where $\psi_i^a$'s are standard complex normal random variables\footnote{In terms of the original notation, $\psi_i^a$ is denoted as $C_{i,a}$.}. The authors are interested in calculating the \textit{connected} correlation function\footnote{In the language of JT gravity with EOW branes, this connected correlation function calculates the p-point function of p open boundaries with lengths $\{\beta_m\}_{1\leq m\leq p}$. Each open boundary is bounded by EOW branes. For simplicity, we have not included possible contributions from closed boundaries.}:
\begin{equation}
    \mathbb{E}\tr(\rho(\beta_1)\cdots\rho(\beta_p)) = \frac{1}{k^p}\sum_{ab}\tr(\overrightarrow{\prod_{1\leq m\leq p}}\ket{a_m}\bra{b_m})\mathbb{E}(\prod_{1\leq m\leq p}\bra{\psi_{b_m}(\beta_m)}\ket{\psi_{a_m}(\beta_m)})
\end{equation}
The Equations D.2 and D.3 in \cite{Penington:2019kki} calculate the random expectation part of this correlation function, namely $\mathbb{E}(\prod_{1\leq m\leq p}\bra{\psi_{b_m}(\beta_m)}\ket{\psi_{a_m}(\beta_m)})$. Plug in the result derived in \cite{Penington:2019kki} to the equation above, we can express the connected correlation function as:
\begin{align}\label{equation:connectedJT}
    \begin{split}
        &\mathbb{E}\tr(\rho(\beta_1)\cdots\rho(\beta_p)) = \frac{1}{k^p}\sum_{a_mb_m}\tr(\overrightarrow{\prod_{1\leq m\leq p}}\ket{a_m}\bra{b_m})\\&\times\sum_{\pi\in S_p}\bigg[(\prod_m\delta_{a_mb_{\pi(m)}})\prod_{\gamma\in\pi}\tr(\prod_{l\in \gamma}\bigg[\sum_i 2^{1-2\mu}e^{-\beta_kE_i}|\Gamma(\mu - \frac{1}{2} + i\sqrt{2E_i})|^2\ket{i}\bra{i}\bigg])\bigg]
        \\&
        =\frac{1}{k^p}\sum_{\pi\in\mathcal{S}_p}\bigg[\sum_{a_mb_m}\prod_m \delta_{a_{\pi^0(m)}b_m}\prod_m\delta_{a_{\pi(m)}b_m}\bigg]\\&\times\bigg[\prod_{\gamma\in\pi}\tr(\prod_{l\in \gamma}\bigg[\sum_i 2^{1-2\mu}e^{-\beta_kE_i}|\Gamma(\mu - \frac{1}{2} + i\sqrt{2E_i})|^2\ket{i}\bra{i}\bigg])\bigg]
        \\&
        =\frac{1}{k^p}\sum_{\pi\in\mathcal{S}_p}\prod_{\gamma\in\pi}\tr(\prod_{l\in \gamma}\bigg[\sum_i 2^{1-2\mu}e^{-\beta_kE_i}|\Gamma(\mu - \frac{1}{2} + i\sqrt{2E_i})|^2\ket{i}\bra{i}\bigg])k^{\text{cycles}(\pi^0\pi)}
    \end{split}
\end{align}
In the first equation, we simply plug in the results derived in \cite{Penington:2019kki}. In the second equation, we used the fact that $\tr(\overrightarrow{\prod}_m\ket{a_m}\bra{b_m} = \delta_{a_{\pi_0(m)}b_m})$ where $\pi_0$ is the cyclic permutation $(12\cdots p)\in\mathcal{S}_p$.

In our framework, the density operator $\rho(\beta)$ is modeled by a Poisson quantized operator: $\lambda(\mathcal{O}(\beta)_{(\psi)})$\footnote{In fact, Equation D.10 of \cite{Penington:2019kki} precisely defines the CPTP map $\mathcal{O}(\beta)\mapsto \mathcal{O}(\beta)_{(\psi)}$. This is the process of "dressing" open boundary operators with endpoint Gaussian random variables. The new feature added by Poissonization is to consider the Poisson quantized operator $\lambda(\mathcal{O}(\beta)_{(\psi)})$ instead of $\mathcal{O}(\beta)_{(\psi)}$}. Observe that we have defined:
\begin{equation}
    \mathcal{O}(\beta_k) = \sum_i 2^{1-2\mu}e^{-\beta_kE_i}|\Gamma(\mu - \frac{1}{2} + i\sqrt{2E_i})|^2\ket{i}\bra{i}
\end{equation}
Therefore, in our notation, the connected correlation function derived in \cite{Penington:2019kki} can be expressed as:
\begin{equation}\label{equation:corrJTEOW}
    \mathbb{E}\tr(\mathcal{O}(\beta_1)_{(\psi)}\cdots\mathcal{O}(\beta_p)_{(\psi)}) =\frac{1}{k^p}\sum_{\pi\in\mathcal{S}_p}\prod_{\gamma\in\pi}\tr(\prod_{l\in\gamma}\mathcal{O}(\beta_l))k^{\text{cycles}(\pi^0\pi)}
\end{equation}
This is the same expression derived using Poissonization (c.f. Equation \ref{equation:connectedmomentforJT}). Since the general correlation function is a summation over products of connected correlation functions, the Poisson moment formula correctly reproduces the correlation functions of simplified JT gravity with EOW branes.

In the main text of \cite{Penington:2019kki}, the authors considered the planar limit of JT gravity with EOW branes (c.f. Section 2.5 of \cite{Penington:2019kki}). The planar limit is achieved by taking both $k$ and $d\sim e^{S_0}$ to infinity. In Poissonization, these two limits have distinct effects. When $k\rightarrow \infty$, the only surviving term in the connected correlation function corresponds to the cyclic permutation $(\pi^0)^{-1}$. The connected correlation function becomes:
\begin{align}
    &\lim_{k\rightarrow\infty}\mathbb{E}\tr(\mathcal{O}(\beta_1)_{(\psi)}\cdots\mathcal{O}(\beta_p)_{(\psi)}) = \tr\lb\prod_{1\leq k\leq p}\mathcal{O}(\beta_k)\rb \nn\\
    &= \sum_i \prod_{1\leq k\leq p}\lb 2^{1-2\mu}e^{-\beta_k E_i}|\Gamma(\mu - \frac{1}{2} + i\sqrt{2E_i})|^2\rb
\end{align}
We can rewrite this equation in terms of spectral density. Following \cite{Penington:2019kki},   we introduce a new variable $s > 0$ such that $\frac{s^2}{2} = E$. Then, in terms of a density function $\rho_\text{spec}(s)$ \footnote{In the original text \cite{Penington:2019kki}, the spectral density is simply denoted as $\rho(s)$. Here we added the subscript to emphasize that $\rho_\text{spec}(s)$ is the spectral density while $\rho(\beta)$ is the density matrix.}, we have:
\begin{equation}\label{equation:poissZn1}
    \lim_{k\rightarrow\infty}\mathbb{E}\tr(\mathcal{O}(\beta)_{(\psi)}^p) = \int_0^\infty ds\rho_\text{spec}(s)y_\beta(s)^p
\end{equation}
where $y_\beta(s) := 2^{1-2\mu}e^{-\beta s^2 /2}|\Gamma(\mu - \frac{1}{2} + is)|^2$ (c.f. Equation 2.32 \cite{Penington:2019kki}). Here $\rho_\text{spec}(s) = \sum_i\delta(s - \sqrt{2E_i})$.

Now taking the limit $d\sim e^{S_0}\rightarrow\infty$, all genus creation is suppressed. The spectral density reduces to the leading term:
\begin{equation}
    \rho_0(s) = \frac{s}{2\pi^2}\sinh(2\pi s)
\end{equation}
In this case, the connected correlation function becomes:
\begin{equation}\label{equation:poissZn}
    \lim_{k, d\rightarrow\infty}\mathbb{E}\tr(\mathcal{O}(\beta)_{(\psi)}^p) = \lim_{k,d\rightarrow\infty}\tr\lb\mathcal{O}(\beta)^p\rb = \int_0^\infty ds\rho_0(s)y_\beta(s)^p = Z_p
\end{equation}
It is easy to see that Equation \ref{equation:poissZn} is precisely the formula of the planar connected correlation function derived in \cite{Penington:2019kki} (c.f. Equation 2.32). The notation $Z_p$ follows from the original text \cite{Penington:2019kki}, and it denotes the connected correlation function with $p$ asymptotic boundaries and disk bulk topology.

\subsection{Open/Closed 2D TQFTs}\label{subsection:openclosedappendix}

In this section, we apply Poissonization to study open/closed 2D TQFT as defined by \cite{@Moore-Segal, banerjee2022comments, gardiner20212D}. In particular, we show that the correlation function of open/closed 2D TQFT is a Poisson moment formula. 

To compare our calculation with the results of \cite{banerjee2022comments}, recall the n-point correlation function (i.e., the bordism mapping n open boundaries to $\mathbb{C}$) derived in \cite{banerjee2022comments} is given by (see also Equation 4.13 of \cite{banerjee2022comments}):
\begin{equation}
    \bar{\mathcal{A}}^\text{open}(n,0):\mathcal{O}_1\cdots\mathcal{O}_n\mapsto \sum_{\sigma\in S_n}B_{r(\sigma)}(\lambda)e^\lambda\prod_{c\in\text{cycl}(\sigma)}\mu^{-1}\tr(\mathcal{O}_c)
\end{equation}
where $r(\sigma) = |\sigma|$ is the number of cycles in $\sigma$, $\text{cycl}(\sigma)$ is the set of irreducible cycles of $\sigma$, $\mathcal{O}_c = \overrightarrow{\prod}_{k\in c}\mathcal{O}_k$ is the ordered product of matrices with labels in the cycle $c$. Translating to our notation, this n-point correlation function can be written as:
\begin{equation}
    \overline{\mathcal{A}}^\text{open}(n,0)(\mathcal{O}_1,\cdots,\mathcal{O}_n) = \sum_{\pi\in S_n}B_{|\pi|}(\mu)e^\mu\prod_{\gamma\in\pi}\alpha^{-1}\tr(\overrightarrow{\prod}_{k\in\gamma}\mathcal{O}_k)
\end{equation}
When $\frac{1}{\alpha}$ is an integer, this correlation function is a Poisson moment formula:
\begin{equation}
    \frac{e^\mu}{\mu^{2p}}\varphi_{\mu\alpha\mathbb{E}\tr}(\lambda(\overline{\mathcal{O}^{(\psi)}_1})\cdots\lambda(\overline{\mathcal{O}^{(\psi)}_n})) = \overline{\mathcal{A}}^\text{open}(n,0)(\mathcal{O}_1,\cdots,\mathcal{O}_n)
\end{equation}
To check this, we simply plug in the Poisson moment formula for central simple operators and define $d = \frac{1}{\mu}$:
\begin{align}
    \begin{split}
        \frac{e^\mu}{\alpha^{2p}}\varphi_{\mu\alpha\mathbb{E}\tr}(\lambda(\overline{\mathcal{O}^{(\psi)}_1})\cdots\lambda(\overline{\mathcal{O}^{(\psi)}_n})) &= \frac{e^\mu}{\alpha^{2p}}\frac{1}{d^{2p}}\sum_{\pi\in\mathcal{S}_p}B_{|\pi|}(\mu\alpha d)\prod_{\gamma\in\pi}d\tr(\overrightarrow{\prod}_{k\in\gamma}\mathcal{O}_k)
        \\&=e^\mu\sum_\pi B_{|\pi|}(\mu)\prod_\gamma\alpha^{-1}\tr(\overrightarrow{\prod}_k \mathcal{O}_k)
    \end{split}
\end{align}
This reproduces the n-point correlation function of an open/closed 2D TQFT.

So far, we have considered the case where $\frac{1}{\alpha}$ is a positive integer. To consider the general case, we need to replace the standard normal random variable $\psi^a_i$ with a complex free field $\psi^a(x)$ with two-point function\footnote{Rigorously, the Dirac delta function cannot be used to construct Gaussian free fields. One has to use other kernel functions to approximate the Dirac delta function. We have omitted this limiting procedure here and focus on presenting the main calculation.}:
\begin{equation}
    \mathbb{E}(\psi^a(x)\bar{\psi}^b(y)) = \delta_{ab}\delta(x,y)
\end{equation}
With the Gaussian free fields, we consider a modified central simple operator:
\begin{equation}
    \widetilde{\overline{\mathcal{O}^{(\psi)}}} := \sum_{ab}\int_0^{1/\alpha}dx \mathcal{O}^{ab}\psi^a(x)\bar{\psi^b}(x)
\end{equation}
Note that in this modified central simple operator, we have taken $d = 1$. Then we can calculate the correlation function of the Poisson quantized modified central simple operators:
\begin{align}
    \begin{split}
        \varphi_{\mu\mathbb{E}}&(\lambda(\widetilde{\overline{\mathcal{O}^{(\psi)}_1}})\cdots\lambda(\widetilde{\overline{\mathcal{O}^{(\psi)}}_p}))=\sum_{\sigma\in\mathcal{P}_p}\prod_{A\in\sigma}\mu\mathbb{E}(\prod_{k\in A}(\sum_{a_kb_k}\int_0^{1/\alpha}dx_k \mathcal{O}^{a_kb_k}_k\psi^{a_k}(x_k)\bar{\psi}^{b_k}(x_k)))
        \\
        &=\sum_{\sigma\in\mathcal{P}_p}\prod_{A\in\sigma}\sum_{\{a_kb_k\}}(\prod_k\int_0^{1/\alpha}dx_k)(\prod_{k\in A}\mathcal{O}^{a_kb_k}_k)\mu\mathbb{E}(\prod_{k\in A}\psi^{a_k}(x_k)\bar{\psi}^{b_k}(x_k))
        \\&=\sum_{\sigma\in\mathcal{P}_p}\prod_{A\in\sigma}\sum_{\{a_kb_k\}}(\prod_k\int_0^{1/\alpha}dx_k)(\prod_k \mathcal{O}^{a_kb_k}_k)\mu(\sum_{\pi_A\in\mathcal{S}_A}\prod_{k\in A}\delta_{a_kb_{\pi_A(k)}}\delta(x_k,x_{\pi_A(k)}))
        \\
        &=\sum_{\sigma\in\mathcal{P}_p}\prod_{A\in\sigma}\sum_{\pi_A\in\mathcal{S}_A}\mu(\prod_{\gamma\in\pi_A}\tr(\overrightarrow{\prod}_{k\in A}\mathcal{O}_k))(\int_0^{1/\alpha}\cdots\int_0^{1/\alpha}dx_1dx_2\cdots dx_p\prod_k\delta(x_k,x_{\pi_A(k)}))
        \\&=\sum_{\sigma\in\mathcal{P}_p}\prod_{A\in\sigma}\sum_{\pi_A\in\mathcal{S}_A}\mu(\prod_{\gamma\in\pi_A}\tr(\overrightarrow{\prod}_{k\in A}\mathcal{O}_k))\alpha^{-\text{cycles}(\pi_A)}
        =\sum_{\sigma\in\mathcal{P}_p}\prod_{A\in\sigma}\mu(\prod_{\gamma\in\pi_A}\alpha^{-1}\tr(\overrightarrow{\prod}_{k\in \gamma}\mathcal{O}_k))
        \\&=\sum_{\pi\in\mathcal{S}_p}B_{|\pi|}(\mu)\prod_\gamma\alpha^{-1}\tr(\overrightarrow{\prod}_{k\in \gamma}\mathcal{O}_k)
    \end{split}
\end{align}
where the last equation follows from the combinatorial lemma. Therefore, the correlation function of open/closed 2D TQFT is a Poisson moment function:
\begin{equation}
    e^\mu\varphi_{\mu\mathbb{E}}(\lambda(\widetilde{\overline{\mathcal{O}_1^{(\psi)}}})\cdots\lambda(\widetilde{\overline{\mathcal{O}_n^{(\psi)}}})) = \sum_{\pi\in\mathcal{S}_n}e^\mu B_{|\pi|}(\mu)\prod_{\gamma}\alpha^{-1}\tr(\overrightarrow{\prod}_{k\in \gamma}\mathcal{O}_k)
\end{equation}
\subsection{Marolf-Maxfield's topological gravity with EOW branes}\label{subsection:MMEOWdetail}

In this section, we apply Poissonization to study the Marolf-Maxfield model of topological gravity with end-of-the-world brane \cite{MM}. In particular, we show that the correlation function of this toy model is the moment formula of Poisson quantized central simple operators.

Recall that in the Marolf-Maxfield model, the moment generating function is given by (see also Equation 3.34 \cite{MM}):
\begin{equation}\label{equation:MMkey}
    \langle\exp(u\widehat{Z} + \sum_{1\leq i,j \leq k}t_{ij}\widehat{(\psi_j, \psi_i)})\rangle = \exp(\lambda\frac{e^u}{\det(1 - t)})
\end{equation}
Here we have copied the original notations.

We use Poissonization of central simple operators with $d = 1$ to reproduce this moment generating function. Recall that when $d = 1$, we have (see also Appendix \ref{subsection:centralsimple}):
\begin{equation}
    \varphi_{\mu\mathbb{E}}(e^{\lambda(\overline{\mathcal{O}^{(\psi)}})}) = \exp(\frac{\mu}{\det(1 - \mathcal{O})}-\mu)
\end{equation}
In this case, the operator $\overline{\mathcal{O}^{(\psi)}}$ is given by: $\sum_{ab}\mathcal{O}^{ab}\psi^{ab}$. Thus, the moment generating function of the Marolf-Maxfield model is given by:
\begin{equation}
    e^\mu\varphi_{\mu\mathbb{E}}(e^{u\lambda(1) + \lambda(\overline{\mathcal{O}^{(\psi)}})}) = e^\mu\exp(\mu\mathbb{E}(e^ue^{\overline{\mathcal{O}^{(\psi)}}} - 1)) =  \exp(\mu\frac{e^u}{\det(1 - \mathcal{O})})
\end{equation}
where $u$ is a real parameter and $\lambda(1)$ is the Poisson quantization of identity operator. In the first equation, we have used the fact that $[\lambda(1), \lambda(\overline{\mathcal{O}^{(\psi)}})] = \lambda([1, \overline{\mathcal{O}^{(\psi)}}]) = 0$ and hence $e^{u\lambda(1) + \lambda(\overline{\mathcal{O}^{(\psi)}})} = e^{u\lambda(1)}e^{\lambda(\overline{\mathcal{O}^{\psi}})}$.

Expanding the moment generating function in terms of the eigenvectors of $\lambda(1)$ (the number operator), the moment generating function can be written as:
\begin{equation}
    e^\mu\sum_{n\geq 0}e^{un}\frac{e^{-\mu}\mu^n}{n!}\lb \frac{1}{\det(1 - \mathcal{O})} \rb^n = e^\mu\sum_{n\geq 0}e^{un}p_n(\mu)\left\langle\exp\lb \sum_{a,b}\mathcal{O}^{ab}\psi^{ab} \rb\right\rangle_{\lambda(1) = n}
\end{equation}
This is exactly Equation 3.37 in \cite{MM}.
\section{Universe field theory of JT gravity}\label{app:JT}

JT gravity is a simple 2D theory of gravity with the action
\begin{eqnarray}\label{JTaction2}
    I_{JT}=S_0\chi-\left(\frac{1}{2}\int_\mathcal{M}\sqrt{g}\phi(R+2)+\int_{\partial\mathcal{M}}\phi K\right)
\end{eqnarray}
where $\chi=2-2g-n$ is the Euler characteristic of a Riemann surface of genus $g$ with $n$ disks removed.

Consider the probe limit of a single asymptotic open boundary (parent universe) emitting and absorbing closed baby universes. We can choose boundary conditions that correspond to fixing the geodesic length of the closed universe to be $b$. To a time interval of length $T$ that does not scale with entropy, associate the Hilbert space $\mathcal{H}$ with $\psi(b)$ a wave-function on a single baby universe Hilbert space $\mathcal{H}=L^2(\mathbb{R}^+,bdb)$. A single absorption/emission event is an evolution in the Hilbert space $\mH_{open}\otimes\mathcal{H}$.
Since the emissions are identical, to describe multiple emissions we consider the Hilbert space $\mH_{open}\otimes \mathcal{F}_{sym}\lb\mathcal{H}\rb$ where $\mathcal{H}$ is the Fock space of closed universes:
\begin{eqnarray}
\mathcal{F}_{sym}\lb\mathcal{H}\rb\equiv \overline{\oplus_{n\geq 0}\otimes_\text{sym}^n\mathcal{H}}=   \overline{\mathbb{C}\ket{\Omega}\oplus \mathcal{H}\oplus \text{sym}(\mathcal{H}\otimes \mathcal{H})\otimes \cdots}
\end{eqnarray}
In the absence of matter fields, the Hilbert space of the open boundary $\mH_{open}$ furnishes a representation of the commutative algebra of functions of Hamiltonian $f(H)=\int dE \rho_0(E) f(E)$ where $\rho_0(E)=\frac{e^S}{2\pi^2}\sinh(2\pi\sqrt{2E})$ is the eigenvalue density. Another convenient basis is $l$-basis, where $l$ is the renormalized bulk geodesic length between the left and right boundaries in $\mH_{open}$. They are related to the bulk energy eigenbasis using
\begin{eqnarray}
    &&\psi_E(l)=\braket{l}{E}=\frac{u}{2}\int_{-\infty}^\infty dx\: e^{-u\cosh x}e^{-2is x}\nn\\
    &&s=\sqrt{2E},\qquad u=4e^{-l/2}\ .
\end{eqnarray}
The resolution of identity in the $l$-basis takes the form
\begin{eqnarray}
  &&e^{-S} \int e^{l}dl \braket{E'}{l}\braket{l}{E}=\frac{1}{\rho_0(E)}\delta(E-E')\\
&&    e^{-S}\int_{-\infty}^\infty e^l dl\: \ket{l}\bra{l}=1\ .
\end{eqnarray}
In the absence of topology fluctuations, the disk partition function captures the transition amplitude between thermofield doubles of different temperatures:
\begin{eqnarray}
    \braket{TFD_\beta}{TFD_{\beta'}}=\int dE \rho_0(E) e^{-(\beta+\beta')E/2}\ .
\end{eqnarray}
We can compute the disk amplitude for a transition to $\ket{l}_{open}$ after time $t$:
\begin{eqnarray}
    &&\bra{l}e^{-iHT/2}\ket{TFD_\beta}=\int_0^\infty dE \rho_0(E) e^{-(\beta/2+iT)E}\psi_E(l)\ .
\end{eqnarray}
    
The amplitude for an event that corresponds to the absorption of a single baby universe is 
$\ket{l,b}\equiv \ket{l}_{open}\otimes \ket{b}$ with $\ket{b}\in \mathcal{H}$:
\begin{eqnarray}
    &&\bra{l,b}e^{-iHT/2}\ket{TFD_\beta}=\int_0^\infty dE \rho_0(E) e^{-(\beta/2+iT)E}\psi_E(l) \lb\frac{\cos (b \sqrt{2E})}{\pi\rho_0(E)\sqrt{2E}} \rb
\end{eqnarray}
In the probe limit of a single open boundary, we can represent the emission of a baby universe with the action of the operator $\mO(b)$ below, which is diagonal in the energy eigenbasis of the parent universe \footnote{Note that the vector $\ket{E,b}=\mO(b)\ket{E}$ is not properly normalized. Its norm is given by the two-point function of $\mO(b)$, which diverges in the limit $E\to 0$. However, the operator $\mO(b)$ is still densely defined in the Hilbert space of the open boundary \cite{penington2023algebras}}. See Figure \ref{fig7} (a):
\begin{eqnarray}\label{Ob}
    &&\mO(b)=\int dE \rho_0(E) \lb\frac{\cos (b \sqrt{2E})}{\pi\rho_0(E)\sqrt{2E}} \rb\ket{E}\bra{E}
\end{eqnarray}
The amplitude for the emission of a baby universe of length $b$ in a thermofield double state is
\begin{eqnarray}
  Z_{trumpet}(\beta+iT,b)&&\equiv \bra{TFD_{\beta/2}}e^{-iH_{bulk}T/2}\ket{TFD_{\beta/2},b}\nn\\
  &&=\int_0^\infty dE\:\frac{\cos(b\sqrt{2E})}{\pi\sqrt{2E}}e^{-(\beta+iT)E}\ .
\end{eqnarray}
Since $\rho_0(E)\sim e^{S}$, the amplitude for the emission of a baby universe of circumference $b$ is exponentially suppressed. Therefore, we can ignore the creation of multiple baby universes in times of order $T=O(1)$. The operators $\mO(b)$ generate a commutative algebra of observables on the parent Hilbert space.

\begin{figure}[t]
    \centering
    \includegraphics[width=0.8\linewidth]{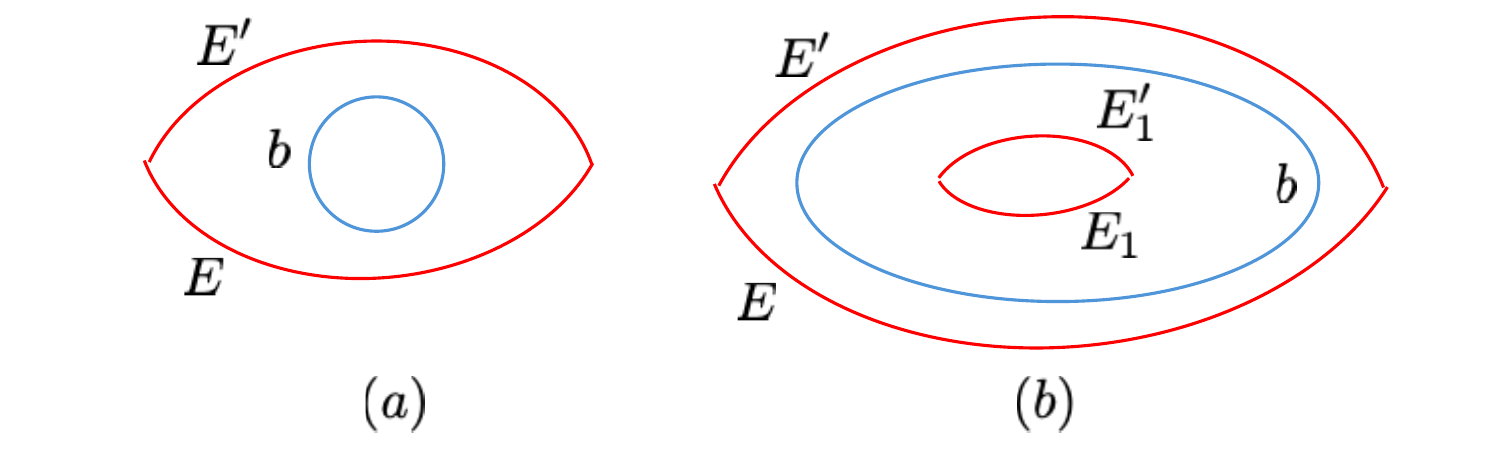}
    \caption{\small{(a) The operator $\mO(b)$ in JT gravity that creates an baby universe with length $b$. (b) The operator that creates an asymptotic boundary $Z(E_1,E'_1)$ can be decomposed as a second-order effect with emission and absorption of a baby universe \cite{saad2019late}.}}
    \label{fig7}
\end{figure}



 Going beyond the probe limit in JT, one can compute the amplitude for insertion of a closed asymptotic boundary as the operator $Z(E_1,E_1')$ in the one-boundary Hilbert space; see Figure \ref{fig7} (b):
 \begin{eqnarray}\label{twoptfunction}
     \bra{E'}Z(E'_1,E_1)\ket{E}&&=\int bdb\braket{E',b}{E}\braket{E_1',b}{E_1}\nn\\
     &&=\frac{\delta(E-E')\delta(E_1-E_1')}{\rho(E)\rho(E_1)}\frac{
     (E+E_1)
     }{4\pi^2\sqrt{EE_1}(E-E_1)^2}
 \end{eqnarray}
 In the limit $E\to E_1$, we find the level repulsion term $1/(E-E_1)^2$ often attributed to the ramp. More generally, acting on a single boundary with the operator $Z(E'_1,E_1)\cdots Z(E'_p,E_p)$ gives
\begin{eqnarray}\label{multiptJT}
    \bra{E'}Z(E'_1,E_1)\cdots Z(E'_p,E_p)\ket{E}=\frac{\delta(E-E')}{(2\pi)^{p}\rho(E)^p\sqrt{E}}\prod_{i=1}^p\frac{\delta(E_i-E'_i)(E+E_i)}{\rho(E_i)\sqrt{E_i}(E-E_i)^2}
\end{eqnarray}
It is convenient to define $z=\sqrt{E}$, and label $Z(E'_i,E_i)$ by a single variable $E_i$. Then, the factor in (\ref{twoptfunction}) can be written as
\begin{eqnarray}
    \frac{(E+E_1)}{(E-E_1)^2}=\frac{1}{2}\lb \frac{1}{(z-z_1)^2}+\frac{1}{(z+z_1)^2}\rb\ .
\end{eqnarray}
This makes it clear that (\ref{multiptJT}) corresponds to the correlators of a free $\mathbb{Z}_2$-twisted chiral bosons $\phi_+(z_1)$ and $\phi_-(z_2)$ \cite{post2022universe}. The universe field theory of JT (Kodaira-Spencer theory) also includes a cubic interaction term. The correlators of the interacting theory can be repackaged as the correlators of the free theory with an extra local operator insertion on the twisted vacuum\footnote{For more details, see Appendix C of a  \cite{post2022universe}.}.

\section{Poissonization and Late-Time Plateau in Spectral Form Factors}\label{app:plateau}
In this Appendix, we perform a simple calculation to schematically show how a late-time plateau can be captured in a Poisson algebra. First, we review the mathematical definition of spectral form factors and provide a simple generalization to the case where the underlying von Neumann algebra is semi-finite. Then, we apply our definition to the case of Poisson algebras and perform an explicit calculation inspired by pure JT gravity. We analyze the result and show that a late-time plateau appears in this calculation. In addition, we show that the global minimum of this spectral form factor is less than the late-time plateau, while the initial value of this spectral form factor is the global maximum, and the late-time plateau is less than this global maximum\footnote{Here minimum and maximum refer to the spectral form factor as a function of time}. These simple observations (together with the smoothness of the spectral form factor as a function of time) indicate the existence of a slope-ramp-plateau, which is believed to be characteristic of chaos.

Note that our discussion here differs from the usual discussion of the slope-ramp-plateau behavior. The slope-ramp-plateau behavior is usually studied in the context of ensemble theory, where a spectral form factor is calculated for each individual Hamiltonian, and the slope-ramp-plateau behavior appears only after taking the ensemble average of the entire family of spectral form factors. Here, our calculation is inspired by JT gravity, where the theory is effectively described by the double scaling limit of random matrices. The effective Hamiltonian is described in the form of a spectral density function. Instead of 'third quantizing' the random Hamiltonians before taking the ensemble average, alternatively, we consider 'third quantizing' the effective Hamiltonian using its spectral density and calculate a single spectral form factor. The randomness is effectively summarized in the spectral density in the double scaling limit.

\subsection{Spectral Form Factor in Semi-finite von Neumann Algebras}\label{subsubsection:spectralformfactorvna}
Given an $N\times N$ Hamiltonian (or generally any self-adjoint matrix) $H$ with eigenvalues $\lambda_1,...,\lambda_N$, the spectral form factor can be defined as\cite{sff-rm}:
\begin{equation}
    \textbf{SSF}(t) := \frac{1}{N^2}\sum_{1\leq i,j\leq N}e^{it(\lambda_i - \lambda_j)} = |\tau(e^{itH})|^2
\end{equation}
where $\tau := \frac{1}{N}\Tr$ is the normalized trace. In case of random matrices, we usually also consider the expectation value of the spectral form factor \cite{sff-rm}. In a typical chaotic system, the spectral form factor is expected to exhibit the so-called slope-ramp-plateau behavior. As time $t$ grows to $\infty$, the spectral form factor is expected to first decrease (the slope) to a global minimum, then increase (the ramp), and finally settle to a limiting value (the plateau). We will reproduce this behavior in a Poisson algebra.

The definition of a spectral form factor can be easily extended to more general semi-finite von Neumann algebras.
\begin{definition}
    Given a semi-finite von Neumann algebra $N$ (not to be confused with the size of matrices $N$) and the corresponding tracial state (or weight) $\tau$. Consider a Hamiltonian $H$ affiliated with $N$, the spectral form factor of $H$ is given by:
    \begin{equation}
        \textbf{SFF}_H(t) := |\tau(e^{itH})|^2\ .
    \end{equation}
\end{definition}

As an example of this definition, we consider the familiar case of random matrices in the large $N$ limit. As the size of the matrix grows to infinity, the discrete spectrum converges to a continuous spectral density. The commutative von Neumann algebra generated by the Hamiltonian has a normal weight (or state if normalized) defined by the spectral density $\tau(f(H)) := \int_0^\infty dE\rho(E)f(E)$ where $\rho(E)$ is the spectral density. Then the corresponding spectral form factor can be written as:
\begin{equation}
    \textbf{SSF}_H(t) = |\int_0^\infty dE\rho(E)e^{itH}|^2
\end{equation}
\subsection{Spectral Form Factor in Poisson Algebra}
Consider a semi-finite von Neumann algebra $N$ along with a tracial state (or weight) $\tau$. The corresponding Poisson algebra $\mathbb{P}_\tau N$ is also semi-finite, and the Poisson state $\varphi_\tau$ is a tracial state. Given a self-adjoint operator affiliated $H$ with $N$, the Poisson quantized operator $\lambda(H)$ is self-adjoint and affiliated with the Poisson algebra $\mathbb{P}_\tau N$. We can consider the spectral form factor:
\begin{equation}
    \textbf{SSF}_\text{Poiss}(t) = |\varphi_\tau(e^{itH})|^2
\end{equation}
As a concrete calculation, we consider the Poissonization of the commutative von Neumann algebra generated by the Hamiltonian in JT gravity. The corresponding weight is given by the regulated (approximate) spectral density $\tau(f(H)) = \int_0^\infty dE e^{-\beta E}\rho_0(E)f(E)$ where $\rho_0(E) = \frac{e^{S_0}}{4\pi^2}\sinh(2\pi\sqrt{E})$ and $\beta > 0$. The spectral density contains only leading order (genus zero) contribution. The normalization is necessary for technical reasons so that the normalized weight $\tau$ is finite (i.e. $\tau(1) < \infty$). We calculate the spectral form factor:
\begin{equation}
    \textbf{SSF}_\text{Poiss}(t) = |\varphi_\tau(e^{it\lambda(H)})|^2
\end{equation}
The calculation is straightforward:
\begin{align}
    \begin{split}
        \textbf{SSF}_\text{Poiss}(t) &= |\varphi_\tau(e^{it\lambda(H)})|^2 
        =|\exp(\tau(e^{itH} - 1))|^2
        \\&
        =|\exp(\int_0^\infty dE e^{-\beta E}\rho_0(E)(e^{itE} - 1))|^2
        \\&
        =\exp(\int_0^\infty dE e^{-\beta E}\rho_0(E)(2\cos(tE) - 2))
        \\&
        =\exp\lb 2\pi^{\frac{3}{2}}\lb \frac{e^{\frac{\pi^2\beta}{\beta^2 + t^2}}\cos(\frac{3}{2}\arctan(\frac{t}{\beta}) + \frac{\pi^2 t}{\beta^2 + t^2})}{(\beta^2 + t^2)^{\frac{3}{4}}} - \beta^{-\frac{3}{2}}e^{\frac{\pi^2}{\beta}} \rb \rb
    \end{split}
\end{align}
The time-dependence of this spectral form factor is explicit. We have the following long-time limit:
\begin{equation}
    \lim_{t\rightarrow\infty}\textbf{SSF}_\text{Poiss}(t) = \exp(-2\pi^{\frac{3}{2}}\beta^{-\frac{3}{2}}e^{\frac{\pi^2}{\beta}})
\end{equation}
At the other end, we have:
\begin{equation}
    \textbf{SSF}_\text{Poiss}(0) = 1 > \textbf{SSF}_\text{Poiss}(\infty)
\end{equation}
To see that a global minimum exists for this spectral form factor and that this global minimum is less than the limiting value, we analyze the time-dependent function:
\begin{equation}
    A(t) := \frac{e^{\frac{\pi^2\beta}{(\beta^2+t^2)}}}{(\beta^2 + t^2)^{\frac{3}{4}}}\cos(\frac{3}{2}\arctan(\frac{t}{\beta}) + \frac{\pi^2 t}{\beta^2 + t^2})
\end{equation}
The prefactor $\frac{e^{\frac{\pi^2\beta}{\beta^2 + t^2}}}{(\beta^2 + t^2)^{\frac{3}{4}}}$ is strictly positive and is monotonically decreasing, while the cosine factor is oscillating. A simple plot of this time-dependent function shows the slope-ramp-plateau behavior.
\begin{figure}[H]
    \centering
    \includegraphics[width = \linewidth]{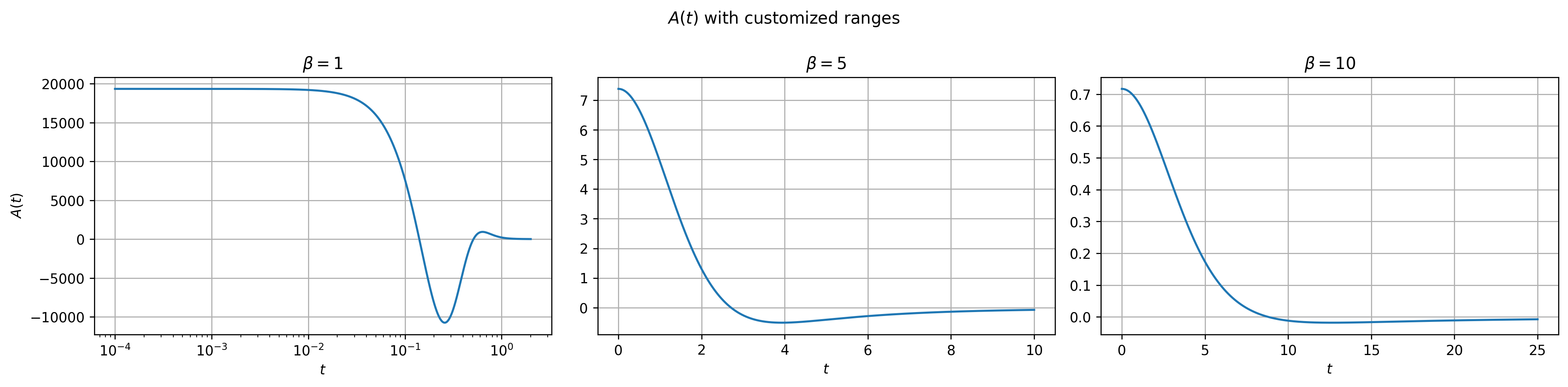}
    \caption{\footnotesize{The time-dependent factor for $\beta =1 , 5, 10$}}
\end{figure}
For $\beta = 1$, the plot is shown on the log scale to better demonstrate the details, while for $\beta = 5, 10$ the plots are shown on a normal scale. Although we plotted only for the time-dependent expononent function $A(t)$, the full spectral form factor is $\exp(2\pi^{\frac{3}{2}}(A(t) - \beta^{-\frac{3}{2}}e^{\frac{\pi^2}{\beta}}))$ and hence has the same qualitative behavior as $A(t)$. Note that for small $\beta$ (e.g. $\beta = 1$), the oscillating behavior of the cosine factor shows up for small $t$, creating a local maximum that is larger than the limiting plateau. But for large enough $\beta$ (e.g. $\beta = 5, 10$), the oscillating behavior of the cosine factor is washed out by the decay factor $\frac{e^{\frac{\pi^2\beta}{(\beta^2+t^2)}}}{(\beta^2 + t^2)^{\frac{3}{4}}}$ leaving only a global minimum with no local maximum that is above the limiting plateau. Because the spectral form factor is a constant times the exponential of $A(t)$, the same slope-ramp-plateau behavior is inherited by the spectral form factor.

\section{Set partitions and distributions}\label{app:setpartitions}

Throughout this work, the key combinatorial object is the set of partitions $\mathcal{P}_n$ on the finite set $[n] = \{1,2,\cdots,n\}$. In this subsection, we collect several useful facts about $\mathcal{P}_n$.

First a \textit{partition} $\sigma$ of $[n]$ is a grouping of the elements of $[n]$ into non-empty disjoint subsets:
\begin{equation}
    \sigma := \{A_1,A_2,\cdots,A_m: A_i \subset [n]\text{ , }A_i\cap A_j = \emptyset\}
\end{equation}
The cardinality of $\sigma$ is at least $1$ and at most $n$. For the unique partition with only $1$ element, we call it the trivial partition:
\begin{equation}
    \sigma_\text{trivial} = \{\{1,2,\cdots,n\}\}
\end{equation}
For the unique partition with $n$ elements, we call it the singleton partition:
\begin{equation}
    \sigma_\text{singleton} = \{\{1\},\cdots,\{n\}\}
\end{equation}
The number of partitions of $[n]$ with cardinality $m$ is \textit{the Stirling number of second kind} $S(n,m)$:
\begin{equation}
    S(n,m) = \frac{1}{m!}\sum_{p_1 + \cdots + p_m = n}\frac{n!}{p_1!\cdots p_m!}
\end{equation}
The total number of partitions of $[n]$ is \textit{the Bell number} $B_n$:
\begin{equation}
    B_n = \sum_{1\leq m\leq n}S(n,m)
\end{equation}
The generating function of the Stirling number of the second kind is \textit{the Touchard polynomial} $T_n(x)$:
\begin{equation}
    T_n(x) = \sum_{1\leq m\leq n}S(n,m)x^m
\end{equation}
From the definitions, the Bell number is simply the Touchard polynomial when $x = 1$:
\begin{equation}
    B_n = T_n(1)
\end{equation}
The moments of a Poisson distribution with intensity $\mu$ are given by the Touchard polynomials:
\begin{equation}
    \mathbb{E}(X^n) = \sum_{k\geq 0}k^n\frac{e^{-\mu}\mu^k}{k!}=  T_n(\mu)
\end{equation}
where $X$ is a Poisson random variable with intensity $\mu$. The additivity of Poisson intensities can be written in terms of a recursion relation between Touchard polynomials:
\begin{equation}
    T_n(\mu_1 + \mu_2) = \sum_{0\leq k\leq n}\binom{n}{k}T_k(\mu_1)T_{n - k}(\mu_2)
\end{equation}
where by convention $T_0(\mu) = 1$. And the moment generating function of a Poisson distribution can be written as:
\begin{equation}
    \exp(\mu(e^{\eta} - 1)) = \mathbb{E}(e^{\eta X}) = \sum_{n\geq 0}\frac{\eta^n}{n!}T_n(\mu) = \sum_{n\geq 0}\sum_{1\leq m\leq n}\frac{\eta^n}{n!}S(n,m)\mu^m
\end{equation}
where $X$ is a Poisson random variable with intensity $\mu$.

The Stirling numbers and the Touchard polynomials have shown up in various formulae involving normal ordering of number operators \cite{mansour2016commutation}. For example, we have:
\begin{equation}
    \widehat{N}^n = \sum_{1\leq m \leq n}S(n,m):\widehat{N}^m:
\end{equation}
where the number operators on the right hand side $:\widehat{N}^m:$ are normally ordered. We can regard the right-hand side as a normal-ordered operator:
\begin{equation}
    :T_n(\widehat{N}): = \sum_{1\leq m\leq n}S(n,m):\widehat{N}^m:
\end{equation}
The exponential of the number operator can be written as:
\begin{equation}
    \exp(\eta\widehat{N}) = \sum_{n\geq 0}\frac{\eta^n}{n!}\widehat{N}^n = \sum_{n\geq0}\sum_{1\leq m\leq n}\frac{\eta^n}{n!}S(n,m):\widehat{N}^m:   =:\exp((e^{\eta} - 1)\widehat{N}):
\end{equation}
\section{Universality of Poisson statistics}\label{app:universalityPoisson}

In this section, we show that the Poisson limit theorem is robust against perturbations. Specifically, we show that adding a mild correlation (thereby breaking the mutual independence condition) does not break the Poisson limit theorem. A mild correlation results in a scaled Poisson intensity. We also show that if the single event probability has a weak dependence on time, the Poisson limit theorem continues to hold. The resulting Poisson intensity would be time-dependent (i.e., a Poisson random process).
\subsection{Robustness against weak correlation}
Here, we show that the Poisson limit theorem is robust against mild mutual dependencies through a simplified model of correlation. This is by no means a complete proof, which requires a more detailed analysis, generalizing our example. This is beyond the scope of this work.

First, we present an alternative way of understanding the Poisson limit theorem based on a simple spin chain model. Consider a classical 1D spin chain with $M$ spins $\{s_i\}_{1\leq i\leq M}$. Each spin takes a value in $\{\pm 1\}$. Consider a single-body Hamiltonian: $H_0 := \sum_i s_i = \sum_i (2t_i - 1)$ where $t_i\in\{0,1\}$. If we take the inverse temperature to be:
\begin{equation}
    \beta_M := -\frac{1}{2}\log\frac{\frac{M_\text{up}}{M}}{1 - \frac{M_\text{up}}{M}} = -\frac{1}{2}\log\frac{M_\text{up}}{M - M_\text{up}}
\end{equation}
where $0 < M_\text{up} < M$ is an integer \textit{independent} of $M$. Notice that the inverse temperature is dependent on the length of the spin chain. As the size $M\rightarrow\infty$, $\beta_M\rightarrow\infty$, leading to the low temperature limit. In this case, we can calculate the expectation value:
\begin{align}
    \begin{split}
        \tr(\sum_i\frac{s_i + 1}{2}\frac{e^{-\beta_M H_0}}{Z_0}) &= \tr(\sum_i t_i \frac{e^{-2\beta_M \sum_i t_i}e^{M\beta_M}}{Z_0}) = \sum_i (1\times\frac{e^{-2\beta_M}}{Z_i} + 0\times\frac{1}{Z_i})
    \end{split}
\end{align}
where $Z_0$ is the partition function: $Z_0 = tr(e^{-\beta_M H_0}) = e^{M\beta_M}\tr(e^{-2\beta_M\sum_it_i})$ and $Z_i := \tr(e^{-2\beta_M t_i})$. Because $\frac{e^{-2\beta_M}}{Z_i} = \frac{\frac{M_\text{up}}{M - M_\text{up}}}{\frac{M_\text{up}}{M - M_\text{up}} + 1} = \frac{M_\text{up}}{M}$, it is clear that each $t_i$ is a Bernoulli random variable with success probabiliy $\frac{M_\text{up}}{M}$. The Poisson limit is achieved by considering the expectation of $\sum_it_i$ and taking the infinite chain limit $M\rightarrow\infty$. We reiterate that the inverse temperature goes to infinity with $M$ but the parameter $M_\text{up}$ is fixed:
\begin{align}
    \begin{split}
        \lim_{M\rightarrow\infty}\tr(\sum_it_i \frac{e^{-\beta_M H_0}}{Z_0}) &= \lim_{M\rightarrow\infty}\sum_{0\leq k\leq M}k\binom{M}{k}(\frac{M_\text{up}}{M})^k(1 - \frac{M_\text{up}}{M})^{M - k}
    \end{split}
\end{align}
Using this simple spin chain model, we now consider the effects of local correlations. For simplicity, we only consider the nearest-neighbor interaction term:
\begin{equation}
    H = H_0 - J_M\sum_i s_is_{i + 1}
\end{equation}
where $J_M  = O(\frac{1}{\log M}) > 0$. The interaction term gives more weight to neighbors with the same configuration, making such a configuration more likely.

Notice that this coupling constant $J_M$ depends on the system size. This is crucial to retain the Poisson limit, as we will see shortly. For large $M$, this coupling constant is vanishingly small, although the interaction term $J_M\sum_i s_is_{i + 1}$ may still be large for certain configurations. 

In this case, we can calculate:
\begin{equation}
    \tr(\sum_i t_i\frac{e^{-\beta_M H}}{Z})= \tr(\sum_i t_i\frac{e^{-(2\beta_M + 4\beta_M J)\sum_i t_i + 4 \beta_M J_M\sum_i t_i t_{i + 1} + (\beta_M M  + \beta_M M J_M)}}{Z})
\end{equation}
Define the modified partition functions:
\begin{equation}
    Z':=\tr(e^{-(2\beta_M + 4\beta_M J_M)\sum_i t_i + 4\beta_M J_M\sum_i t_it_{i + 1}})\text{ , }Z_\text{single} := \tr(e^{-(2\beta_M + 4\beta_M J_M)\sum_i t_i})
\end{equation}
For each fixed $M$, we consider the interaction term as a perturbation:
\begin{align}
    \begin{split}
        \tr(\sum_i t_i \frac{e^{-\beta_M H}}{Z}) &= \tr(\sum_i t_i \frac{e^{-(2\beta_M + 4\beta_M J_M)\sum_i t_i + 4\beta_M J_M\sum_i t_it_{i + 1}}}{Z'})
        \\&
        =\frac{Z_\text{single}}{Z'}\tr(\sum_i t_i\sum_{k\geq 0}\frac{(4\beta_M J_M)^k}{k!}(\sum_i t_it_{i + 1})^k\frac{e^{-(2\beta_M + 4\beta_M J_M)\sum_i t_i}}{Z_\text{single}})
    \end{split}
\end{align}
The leading order perturbation is given by:
\begin{align}
    \begin{split}
        (4\beta_M J_M)&\tr((\sum_i t_i)(\sum_j t_jt_{j + 1})\frac{e^{-(2\beta_M + 4\beta_M
        J_M)\sum_i t_i}}{Z_\text{single}}) 
        \\& = (4\beta_M J_M)\sum_j\tr((\sum_i t_i)t_jt_{j + 1}\frac{e^{-(2\beta_M + 4\beta_M J_M)\sum_it_i}}{Z_\text{single}})
    \end{split}
\end{align}
For each fixed $j$, the term is non-zero only when both $t_j$ and $t_{j + 1}$ are $1$. In addition, the distribution provided by $\frac{e^{-(2\beta_M + 4\beta_M J_M)\sum_it_i}}{Z_\text{single}}$ is a modification of the original multinomial distribution. Instead of success probability given by $\frac{e^{-2\beta_M}}{e^{-2\beta_M} + 1}$, the modified distribution has success probability given by $\frac{e^{-(2\beta_M + 4\beta_M J_M)}}{e^{-(2\beta_M + 4\beta_M J_M)} + 1}$. Denote the modified success probability as $\widetilde{p}$:
\begin{equation}
    \widetilde{p} = \frac{M^{1 + 2J_M}_\text{up}}{M^{1+ 2J_M}}
\end{equation}
For $J_M = O(\frac{1}{\log M})$, the modified probability is finite perturbation of the original probability $\frac{M_\text{up}}{M}$. In particular, $\lim_{M\rightarrow \infty}M\widetilde{p} < \infty$. This ensures a Poisson limit still exists.

Now we can calculate:
\begin{align}
    \begin{split}
        \tr((\sum_it_i)t_jt_{j + 1}\frac{e^{-(2\beta_M + 4\beta_M J_M)\sum_i t_i}}{Z_\text{single}}) &= \sum_{2\leq k\leq M}\widetilde{p}^2k\binom{M - 2}{k - 2}\widetilde{p}^{k - 2}(1 - \widetilde{p})^{M - k}
    \end{split}
\end{align}
Hence for each fixed $2\leq k\leq M$, the leading order perturbation leads to the following modification to the original multinomial distribution:
\begin{equation}
    \Delta\text{Prob}(\sum_i t_i = k) = M\frac{Z_\text{single}}{Z'}(4\beta_M J_M)\binom{M-2}{k-2}\widetilde{p}^k(1 - \widetilde{p})^{M - k}
\end{equation}
where the factor $M$ comes from the fact that there are $M$ interaction terms. Note $\binom{M - 2}{k-2} = \binom{M}{k}\frac{k(k -1)}{M(M - 1)}$. In addition, by definition we have $4\beta_M = O(\log M)$ and $Z' > Z_\text{single}$. Combining these considerations, we have:
\begin{equation}
    \lim_{M\rightarrow\infty}\Delta\text{Prob}(\sum_it_i = k) < \lim_{M\rightarrow\infty}C_k\frac{J_M\log M}{M}\binom{M}{k}\widetilde{p}^k(1-\widetilde{p})^{M-k}
\end{equation}
where $C_k$ is a positive constant depending only on $k$. We can calculate the limit on the right-hand side explicitly:
\begin{align}
    \begin{split}
        \lim_{M\rightarrow\infty}&\frac{J_M\log M}{M}\binom{M}{k}\widetilde{p}^k(1 - \widetilde{p})^{M - k} = \lim_{M\rightarrow\infty}\frac{J_M\log M}{M}\binom{M}{k}\frac{1}{M^k}(M\widetilde{p})^k(1-\widetilde{p})^{M - k}
        \\&=(\lim_{M\rightarrow\infty}\frac{J_M\log M}{M})(\lim_{M\rightarrow\infty}\binom{M}{k}\frac{1}{M^k})(\lim_{M\rightarrow\infty} (M\widetilde{p})^k)(\lim_{M\rightarrow\infty}(1 - \widetilde{p})^{M - k})
        \\&=0
    \end{split}
\end{align}
Therefore, the leading order modification to the limiting probability distribution is $0$. We can generalize this calculation to all orders of perturbations. For example, the second-order perturbation is bounded above by:
\begin{equation}
    (4\beta_M J_M)^2(\binom{M}{2}\binom{M-4}{k-4} + M\binom{M-2}{k-2})\widetilde{p}^k(1-\widetilde{p})^{M - k}
\end{equation}
where the first term comes from contributions $\sum_{j\neq j'}t_jt_{j +1}t_{j'}t_{j' + 1}$ and the second term comes from contributions $\sum_j(t_jt_{j+1})^2$. Because we have:
\begin{equation}
    \binom{M}{2}\binom{M-4}{k-4} + M\binom{M-2}{k-2} = \binom{M}{k}(O(\frac{M^2}{M^4}) + O(\frac{M}{M^2}))
\end{equation}
Hence, the second-order perturbation vanishes as $M\rightarrow\infty$.

Generally, the $k$-th order perturbation is bounded above by:
\begin{equation}
    (4\beta_M J_M)^k(\sum_{1\leq i\leq k}\binom{M}{i}\binom{M- 2i}{k - 2i})\widetilde{p}^k(1 - \widetilde{p})^{M - k}
\end{equation}
It follows from the same calculation that such a perturbation vanishes as $M\rightarrow\infty$.

Now, back to the $0$-th order term, this gives us a Poisson distribution with a modified intensity as $M\rightarrow\infty$:
\begin{align}
    \begin{split}
        \lim_{M\rightarrow\infty}\binom{M}{k}\widetilde{p}^k(1 - \widetilde{p})^{M - k} = \frac{e^{-\lim_{M\rightarrow\infty}M\widetilde{p}}(\lim_{M\rightarrow\infty}M\widetilde{p})^k}{k!}
    \end{split}
\end{align}
The limit $\lim_{M\rightarrow \infty}M\widetilde{p} = \lim_{M\rightarrow\infty}(4\beta_M J_M)M_\text{up}$. Compared with the original intensity $M_\text{up}$, the modification comes as a multiplicative constant $lim_{M\rightarrow\infty}4\beta_M J_M$. Since $\beta_M = O(\log M)$ and $J_M = O(\frac{1}{\log M})$, the product $\beta_M J_M = O(1)$ is a constant.
\subsection{Robustness against weak time-dependence - Poisson Random Process}
So far, we have focused on the Poisson limit theorem where the intensity is independent of time. The same Poisson limit holds if the intensity is dependent on time, and the limiting law would be a Poisson random process.

Recall that in the time-independent Bernoulli approximation, we assumed that the success probability of each Bernoulli event is proportional to the length of the discretized time interval. And the Poisson distribution arises from the limit:
\begin{equation}
    \lim_{M\rightarrow\infty}\binom{M}{k}(\lambda\frac{T}{M})^k(1 - \lambda\frac{T}{M})^{M-k} = \frac{e^{-\lambda T}(\lambda T)^k}{k!}
\end{equation}
where each time interval has length $\frac{T}{M}$ and the total length of time is $T$. Switching to the time-dependent case, the intensity parameter $\lambda$ now depends on time. We uniformly approximate the intensity function with an increasing sequence of step functions:
\begin{equation}
    \lim_{\alpha}\lambda_\alpha = \lambda\text{ , }\lambda_\alpha(t) = \sum_{i}\lambda^\alpha_i\chi_{I_i^\alpha}(t)
\end{equation}
where $I_i^\alpha\subset[0,T]$ is a finite time interval, $\chi_{I^\alpha_i}$ is the characteristic function of $I_i^\alpha$, and $\lambda_i^\alpha\in\mathbb{R}_+$ is a positive real number. Such an approximating sequence of simple functions always exists. 

For a given $\lambda_\alpha$, within each interval $I_i^\alpha$, we consider a Bernoulli approximation by dividing the interval $I_i^\alpha$ into equal-sized subintervals. When the number of divisions approaches infinity, the Poisson limit theorem shows that the limiting distribution is a Poisson distribution with intensity $\lambda_i^\alpha|I_i^\alpha|$ where $|I_i^\alpha|$ is the length of the subinterval $I_i^\alpha$. Each interval corresponds to a Poisson random variable $X_i^\alpha$. By the additivity of Poisson intensity, the distribution of the sum $\sum_i X_i^\alpha$ follows a Poisson distribution with intensity $\sum_i\lambda_i^\alpha |I_i^\alpha| = \int_0^T dt\lambda_\alpha(t)$. More precisely, we have:
\begin{equation}
    \text{Prob}(\sum_i X^\alpha_i = k) = \frac{e^{-\int_0^T dt\lambda_\alpha(t)}(\int_0^Tdt\lambda_\alpha(t))^k}{k!}
\end{equation}Taking the limit over $\alpha$, the monotone convergence theorem shows that:
\begin{equation}
    \lim_\alpha\frac{e^{-\int_0^T dt\lambda_\alpha(t)}(\int_0^Tdt\lambda_\alpha(t))^k}{k!} = \frac{e^{-\int_0^Tdt\lambda(t)}(\int_0^Tdt\lambda(t))^k}{k!}
\end{equation}
Hence, the limit is a Poisson random process with intensity given by $\lambda(t)$.

\section{Invitation to the mathematical theory of Poissonization}\label{app:math}

In this section, we introduce the mathematical theory of Poissonization. We focus on stating the results and proving only essential theorems.

Poissonization can be understood as a canonical quantization procedure (i.e. a functor) analogous to the second quantization. The input data is a physical description of a one-particle quantum system, and the output data is a physical description of a many-body quantum system. While the second quantization consumes the one-particle Hilbert space as the input, Poissonization consumes the following duplet of data:
\begin{itemize}
    \item An algebra of observables on the one-particle system;
    \item A reference state on the one-particle system\footnote{Here, a state is a normal faithful positive normalized functional. In general, we can drop the requirement of normalization and consider a general normal faithful \textit{weight}. For Poissonization, this weight can be finite or semi-finite \cite{T}.}
\end{itemize}
As an example, the duplet of data can be given by $\lb\mathbb{B}(\mathcal{H}), \rho_\beta\rb$ where $\mathcal{H}$ is the one-particle Hilbert space and $\rho_\beta$ is a Gibbs state associated with a Hamiltonian on the one-particle system.

The output of the second quantization is a Weyl algebra acting on a Fock space and a distinguished vacuum state. Analogously, Poissonization outputs an algebra of observables and a distinguished reference state. This algebra is called the \textit{Poisson algebra} and the state is called the \textit{Poisson state} \footnote{Both notions are canonically tied to the input data. Hence, more precisely, we should call the output the Poisson algebra and the Poisson state associated with the specific input data.}.

The second quantization also specifies a quantization map that takes in a vector in the one-particle Hilbert space and constructs a Weyl unitary. Analogously, Poissonization also defines such a quantization map (the \textit{Poisson quantization map}) which takes in an operator on the one-particle system and constructs an (affiliated\cite{T}) operator on the many-body system. The Poisson quantization map preserves commutation relations and admits a concrete description that forms the basis of many of the concrete calculations.

There exist several different but equivalent constructions of Poissonization. In Section \ref{sec:bipartitecoherent}, we have already seen how to define Poisson algebra and Poisson state of matrix algebra using \textit{one-sided number-preserving} operators. In this section, we will present two other constructions. All constructions are equivalent in the sense that they all construct an isomorphic Poisson algebra and Poisson state given the same input data. Mathematically, this equivalence is a non-trivial fact that requires a noncommutative generalization of Hamburger's moment theorem and an estimation of the moment growth of the Poisson moment formula (see also Appendix \ref{appendix:hamburger}) \cite{MAW}. 

In addition to the constructions, we will also briefly discuss the functoriality, modular theory, and other useful properties of Poissonization. For simplicity and concreteness, we will mostly restrict our attention to the Poissonization of matrix algebras. There are two major drawbacks of restricting to matrix algebras:
\begin{enumerate}
    \item Poissonization of matrix algebras always has a non-trivial center. This raises the question of whether Poisson algebra can be a factor. We will explain in detail how to construct the Poissonization of a type $I_\infty$ factor and show that the resulting Poisson algebra is the hyperfinite type $II_1$ factor.
    \item Many explicit formulae for Poissonization of matrix algebras rely on the fact that the input state (or weight) is finite. This raises the question of whether Poisson algebra can be constructed using unbounded weights. This is important in applications because it is frequently the case that the one-particle modular automorphism is given by some unbounded weight. Poissonization of type $I_\infty$ factor already provides a way to construct a Poisson algebra using unbounded weight (the tracial weight in this case). In this case, the construction proceeds via a finite-dimensional approximation. More general constructions are available. We comment on this generalization without going into the mathematical details. 
\end{enumerate}
To fix notation, we will denote the one-particle algebra of observables as $\mathcal{N}$ and the one-particle state (or weight) as $\omega$. The Poisson algebra will be denoted as $\mathbb{P}_\omega\mathcal{N}$ and the Poisson state $\varphi_\omega$.

\subsection{Poissonization as symmetric tensor algebra}

In this section, we present the first construction of Poissonization using the symmetric tensor algebra. We emphasize that in this section $\omega$ is finite (i.e. $\omega(1) < \infty$). As a caution, this construction does not work if $\omega$ is an infinite weight. In addition, we emphasize that to present a construction of Poissonization, we need to construct the Poisson quantization map, the Poisson state, and the Poisson algebra. These three objects are the essential ingredients of Poissonization. We will present some basic properties of these objects. And finally, we will derive the Poisson moment formula (i.e., the correlation functions of Poisson quantized operators).

Recall the (algebraic) symmetric tensor algebra of an associative algebra $\mathcal{N}$ is:
\begin{equation*}
    \mathcal{T}_\text{sym}(\mathcal{N}) = \oplus_{N\geq 0}\mathcal{N}^{\otimes_\text{sym} N}
\end{equation*}
where the tensor product is the symmetric tensor product. The Poisson quantization map is a linear map from $\mathcal{N}$ to the symmetric tensor algebra $\mathcal{T}_\text{sym}(\mathcal{N})$:
\begin{equation}
    \lambda:=\sum_{N\geq 0}\lambda_N:\mathcal{N}\rightarrow \mathcal{T}_\text{sym}(\mathcal{N})
\end{equation}
where the $N$-th degree Poisson quantization map is a linear map:
\begin{equation}
    \lambda_N:\mathcal{N}\rightarrow\mathcal{N}^{\otimes_\text{sym} N}: \mathcal{O}\mapsto \sum_{1\leq j\leq N}\lambda_N^{(j)}(\mathcal{O}):=\sum_{1\leq j\leq N}1\otimes\cdots\underbrace{\otimes \mathcal{O}\otimes}_\text{j-th position}\cdots\otimes 1
\end{equation}

The Poisson quantization map $\lambda$ is a Lie algebra homomorphism (i.e., preserves commutation relations):
\begin{equation}
    [\lambda(\mathcal{O}), \lambda(\widetilde{\mathcal{O}})] = \sum_{N\geq 0}[\lambda_N(\mathcal{O}), \lambda_N(\widetilde{\mathcal{O}})] = \sum_{N\geq 0}\sum_{1\leq j, k\leq N}[\lambda_N^{(j)}(\mathcal{O}), \lambda_N^{(k)}(\widetilde{\mathcal{O}})]
\end{equation}
When $j \neq k$, the commutator vanishes. Thus, we have:
\begin{equation}
    [\lambda(\mathcal{O}), \lambda(\widetilde{\mathcal{O}})] = \sum_{N\geq 0}\sum_{1\leq j\leq N}[\lambda_N^{(j)}(\mathcal{O}), \lambda_N^{(j)}(\widetilde{\mathcal{O}})] = \sum_{N\geq 0}\sum_{1\leq j\leq N}\lambda_N^{(j)}(\mathcal{O}\widetilde{\mathcal{O}} - \widetilde{\mathcal{O}}\mathcal{O}) = \lambda([\mathcal{O},\widetilde{\mathcal{O}}])
\end{equation}
Moreover, since each $N$-th degree quantization map preserves conjugation by definition (i.e. $\lambda_N(\mathcal{O}^\dagger) = \lambda_N(\mathcal{O})^\dagger$), the Poisson quantization map also preserves conjugation.

It is a fact that for each $N\geq 0$, the symmetric tensor product $\mathcal{N}^{\otimes_\text{sym} N}$ is the (ultra-)weak closure (under the tensor product weight $\omega^{\otimes N}$) of the following two sets of elements:
\begin{equation}
    \mathcal{N}^{\otimes_\text{sym} N} = \overline{\{\otimes^N \mathcal{O}: \mathcal{O}\in\mathcal{N}\}}^\text{w.o.t} = \overline{\{\Gamma_N(\mathcal{O}):\mathcal{O}\in\mathcal{N}\}}^\text{w.o.t}
\end{equation}
\begin{definition}
For a well-defined bounded weight $\omega$ on $\mN$, we define a normalized linear function on the symmetric tensor algebra called the {\it Poisson state of $\omega$}:
\begin{equation}
    \varphi_\omega := \sum_{N\geq 0}\frac{e^{-\omega(1)}}{N!}\omega^{\otimes N}
\end{equation}
where $\omega^{\otimes N}$ is a weight on $\mathcal{N}^{\otimes_\text{sym} N}$. Notice that for this definition to make sense, it is necessary that $\omega(1) < \infty$. 
\end{definition}
Using the GNS construction, we can take the (ultra-)weak closure of the algebraic symmetric tensor product under the Poisson state. The closure is defined as the \textit{Poisson algebra}.
\begin{equation}
    \mathbb{P}_\omega\mathcal{N} := \overline{\oplus_{N\geq 0}\mathcal{N}^{\otimes_\text{sym} N}}^{\varphi_\omega}
\end{equation}
Extension of $\varphi_\omega$ to the Poisson algebra is a normal faithful state. It is called the \textit{Poisson state}. The Poisson algebra is generated by the (integrated) Poisson quantized operators:
\begin{equation}
    \mathbb{P}_\omega\mathcal{N} = \overline{\{\Gamma(e^{i\mathcal{O}}):=e^{i\lambda(\mathcal{O})}:\mathcal{O}\in\mathcal{N}\text{ , }\mathcal{O}^\dagger = \mathcal{O}\}}^\text{w.o.t}
\end{equation}
Notice that because $\lambda$ preserves conjugation, $\lambda(\mathcal{O})$ is self-adjoint if $\mathcal{O}^\dagger = \mathcal{O}$. Hence, by Stone's theorem, $\lambda(\mathcal{O})$ generates a one-parameter unitary group \footnote{This unitary group is defined in the type I algebra $\mathbb{B}(\mathcal{H}_{\varphi_\omega})$, where $\mathcal{H}_{\varphi_\omega}$ is the GNS Hilbert space of the Poisson state.}. The integrated Poisson quantized operator can be represented explicitly:
\begin{equation}
    \Gamma(e^{i\mathcal{O}}) = e^{i\lambda(\mathcal{O})} = \sum_{N\geq 0}e^{i\lambda_N(\mathcal{O})} = \sum_{N\geq 0}\sum_{n\geq 0}\frac{i^n}{n!}\lambda_N(\mathcal{O})^n = \sum_{N \geq 0}(e^{i\mathcal{O}})^{\otimes N}
\end{equation}
From the integrated Poisson quantized operators, we can recover the noncommutative Poisson moment generating formula:
\begin{equation}
    \varphi_\omega(\Gamma(e^{it\mathcal{O}})) = \sum_{N\geq 0 }\frac{e^{-\omega(1)}}{N!}\omega^{\otimes N}((e^{it\mathcal{O}})^{\otimes N}) = \sum_{N\geq 0}\frac{e^{-\omega(1)\omega(e^{it\mathcal{O}})^N}}{N!} = \exp(\omega(e^{it\mathcal{O}} - 1))
\end{equation}
where $\mathcal{O}^\dagger = \mathcal{O}$ and $t \in\mathbb{R}$ is a parameter. Recall that for a classical Poisson random variable $X$ with intensity $\mu > 0$, the classical Poisson moment generating function is given by:
\begin{equation}
    \mathbb{E}(e^{itX}) = \sum_{N\geq 0}\frac{e^{-\mu}\mu^{N}e^{it N}}{N!}=\exp(\mu(e^{it} - 1))
\end{equation}

The domain of the Poisson state can be extended to include certain affiliated operators to the Poisson algebra. In particular, the following expressions are well-defined:
\begin{equation}
    \varphi_\omega\lb \lambda(x_1)...\lambda(x_n) \rb
\end{equation}
We will provide formulae to calculate these expressions shortly. These expressions are the noncommutative generalizations of the classical moments. Hence, we will call these expressions the \textit{Poisson moments}. When $\omega$ is finite (as is the case we focus on in this section), we can define Poisson moments for any product of Poisson quantized operators. When $\omega$ is not finite, the Poisson moments are only defined for products of Poisson quantized operators $\lambda(\mathcal{O})$'s where the operators 
$\mathcal{O}$'s are in the Tomita algebra of $(\mathcal{N},\omega)$ \cite{T}. 
\begin{theorem}
    The Poisson moment formula calculates the $n$-point correlation function of Poisson quantized operators $\{\lambda(\mathcal{O}_i)\}_{1\leq i\leq n}$ in the Poisson state $\varphi_\omega$. The formula is given by a summation over all possible partitions of the set $[n]:=\{1,\cdots,n\}$
    \begin{equation}
        \varphi_\omega(\lambda(\mathcal{O}_1)\cdots\lambda(\mathcal{O}_n)):=\sum_{\sigma\in\mathcal{P}_n}\prod_{A\in\sigma}\omega(\overrightarrow{\prod}_{i\in A}\mathcal{O}_i)
    \end{equation}
    where $\mathcal{P}_n$ is the set of partition on $[n]$, $\sigma$ is a particular partition, $A\in\sigma$ is a subset which is a part of the partition $\sigma$, and the final product is an ordered product because the elements $\{\mathcal{O}_i\}_{1\leq i\leq n}$ does not generally commute.
\end{theorem}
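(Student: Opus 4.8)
The plan is to reduce the statement to the moment generating functional that is already computed in the excerpt, and then extract the moments by differentiation, organizing the combinatorics through the multivariate Faà di Bruno (exponential) formula. Throughout I use the standing assumption of this construction that $\omega$ is finite, $\omega(1)<\infty$.

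First I would record the multiplicativity of the second-quantization map $\Gamma$. Since $\Gamma(u)=\sum_{N\geq 0}u^{\otimes N}$ acts diagonally on the Fock sectors and $u^{\otimes N}v^{\otimes N}=(uv)^{\otimes N}$, one has $\Gamma(u_1)\Gamma(u_2)=\Gamma(u_1u_2)$ for the unitaries $u_k=e^{is_k\mathcal{O}_k}$. Feeding this into the single-operator computation already carried out in the excerpt, which gives $\varphi_\omega(\Gamma(e^{it\mathcal{O}}))=\exp(\omega(e^{it\mathcal{O}})-\omega(1))$, yields the joint generating functional
\begin{equation}
    F(s_1,\dots,s_n):=\varphi_\omega\lb\Gamma(e^{is_1\mathcal{O}_1})\cdots\Gamma(e^{is_n\mathcal{O}_n})\rb=\exp\lb g(s)\rb,\qquad g(s):=\omega(e^{is_1\mathcal{O}_1}\cdots e^{is_n\mathcal{O}_n})-\omega(1).
\end{equation}
Because $\Gamma(e^{is\mathcal{O}})=e^{is\lambda(\mathcal{O})}$ by Stone's theorem, each factor contributes $i\lambda(\mathcal{O}_j)$ in its slot when differentiated at the origin, so that
\begin{equation}
    \varphi_\omega(\lambda(\mathcal{O}_1)\cdots\lambda(\mathcal{O}_n))=(-i)^n\left.\frac{\partial^n}{\partial s_1\cdots\partial s_n}\right|_{s=0}F(s).
\end{equation}

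Next I would apply the exponential formula $\partial_{s_1}\cdots\partial_{s_n}e^{g}=e^{g}\sum_{\sigma\in\mathcal{P}_n}\prod_{A\in\sigma}\partial_A g$, with $\partial_A:=\prod_{i\in A}\partial_{s_i}$. At $s=0$ one has $e^{g(0)}=1$, and since the variable $s_i$ appears only in the factor $e^{is_i\mathcal{O}_i}$, differentiating in $\{s_i:i\in A\}$ and setting the remaining $s_j=0$ collapses the ordered product to the block, giving $\partial_A g|_{0}=i^{|A|}\,\omega(\overrightarrow{\prod}_{i\in A}\mathcal{O}_i)$ (the constant $-\omega(1)$ term drops out since every block is nonempty). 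Collecting the prefactors via $\sum_{A\in\sigma}|A|=n$ and multiplying by $(-i)^n$ reproduces exactly $\sum_{\sigma\in\mathcal{P}_n}\prod_{A\in\sigma}\omega(\overrightarrow{\prod}_{i\in A}\mathcal{O}_i)$.

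The main obstacle is not the combinatorics but the operator-theoretic well-definedness and the justification of the interchange of differentiation with the expectation. The operators $\lambda(\mathcal{O})$ are unbounded and merely affiliated to $\mathbb{P}_\omega\mathcal{N}$, so one must verify that the product $\lambda(\mathcal{O}_1)\cdots\lambda(\mathcal{O}_n)$ is densely defined on the GNS domain and that its expectation genuinely equals the derivative of the bounded unitary functional $F$. For finite $\omega$ and bounded $\mathcal{O}_k$, the map $s\mapsto\omega(e^{is_1\mathcal{O}_1}\cdots e^{is_n\mathcal{O}_n})$ is entire, hence $F=e^{g}$ is entire and the mixed derivatives may be taken freely; the controlled moment growth supplied by the noncommutative Hamburger estimate invoked earlier secures the convergence. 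For a semifinite $\omega$ one would instead restrict the $\mathcal{O}_i$ to the Tomita algebra of $(\mathcal{N},\omega)$, where the same computation applies. A fully self-contained alternative is the direct route: expand each $\lambda(\mathcal{O}_k)=\sum_{N}\lambda_N^{(j)}(\mathcal{O}_k)$ into its single-slot pieces, use the factorization of $\omega^{\otimes N}$ across slots, and observe that grouping the operators by the tensor slot on which they act produces precisely the blocks of a set partition, ordered within each block, with the weights $e^{-\omega(1)}/N!$ and the sum over $N$ furnishing the correct counting.
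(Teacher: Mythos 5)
Your proof is correct, but it takes a genuinely different main route from the paper. The paper proves the theorem by direct combinatorics: it expands $\lambda_m(\mathcal{O}_1)\cdots\lambda_m(\mathcal{O}_n)$ on each $m$-fold tensor sector as a sum over set maps $\vec{i}:[n]\to[m]$, uses factorization of $\omega^{\otimes m}$ across tensor slots, counts the $|\sigma|!\binom{m}{|\sigma|}$ set maps inducing a given partition $\sigma$ together with the $\omega(1)^{m-|\sigma|}$ contribution of empty slots, and lets the exponential series over $m$ collapse against the $e^{-\omega(1)}/m!$ weights. You instead pass through the joint generating functional $F(s)=\varphi_\omega(\Gamma(e^{is_1\mathcal{O}_1})\cdots\Gamma(e^{is_n\mathcal{O}_n}))=\exp(\omega(e^{is_1\mathcal{O}_1}\cdots e^{is_n\mathcal{O}_n})-\omega(1))$ — which the paper only derives \emph{after} the theorem, to extract connected correlators — and then apply the set-partition (Fa\`a di Bruno) identity $\partial_{s_1}\cdots\partial_{s_n}e^g=e^g\sum_{\sigma\in\mathcal{P}_n}\prod_{A\in\sigma}\partial_A g$; your block derivatives, signs, and within-block orderings (increasing index order, matching the paper's $\overrightarrow{\prod}_{j\in i^{-1}(k)}$ convention) all check out. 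What your route buys is conceptual economy: moments and cumulants emerge in one stroke, with the connected correlators visibly sitting in the blocks, and the multiplicativity $\Gamma(u_1)\Gamma(u_2)=\Gamma(u_1u_2)$ doing the heavy lifting. What it costs is exactly the analytic overhead you flag — $\lambda(\mathcal{O})$ is unbounded and only affiliated, so equating $(-i)^n\partial^nF|_0$ with $\varphi_\omega(\lambda(\mathcal{O}_1)\cdots\lambda(\mathcal{O}_n))$ needs the coherent GNS vector to be an entire analytic vector for the $\lambda(\mathcal{O}_i)$ (true here, since Poisson moment growth is tame), or else a sector-wise term-by-term differentiation that, once written out, re-derives the paper's set-map combinatorics anyway; the paper's purely algebraic computation sidesteps all of this by working with absolutely convergent scalar series from the start. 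Two small points worth making explicit: the theorem allows arbitrary (non-self-adjoint) $\mathcal{O}_i$, whereas the Stone/unitary framing presumes self-adjointness, so you should reduce the general case by multilinearity of both sides in each argument; and your closing ``fully self-contained alternative'' is not an alternative at all — it is verbatim the paper's own proof, so your genuinely new content is the generating-functional argument.
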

Before presenting the derivation of the Poisson moment formula, we first write down some examples of the Poisson moment formula.
\begin{enumerate}
    \item For the simplest case $n = 1$, the first moment is preserved:
    \begin{equation}
        \varphi_\omega(\lambda(\mathcal{O})) = \sum_{N\geq 0}\frac{e^{-\omega(1)}}{N!}\omega^{\otimes N}(\lambda_N(\mathcal{O})) = \sum_{N\geq 0}\frac{e^{-\omega(1)}}{N!}N\omega(1)^{N-1}\omega(\mathcal{O}) = \omega(\mathcal{O})
    \end{equation}
    \item For $n = 2$, the two-point Poisson moment formula contains two terms, because the set of partitions $\mathcal{P}(2)$ contains two partitions:
    \begin{equation}
        \mathcal{P}(2) = \{\{1\}\{2\}, \{1,2\}\}
    \end{equation}
    Therefore, we have:
    \begin{equation}
        \varphi_\omega(\lambda(\mathcal{O}_1)\lambda(\mathcal{O}_2)) = \omega(\mathcal{O}_1)\omega(\mathcal{O}_2) + \omega(\mathcal{O}_1\mathcal{O}_2)
    \end{equation}
    \item For $n =3$, the three-point Poisson moment formula contains 5 terms, because the set of partitions $\mathcal{P}(3)$ contains 5 partitions:
    \begin{align}
        \begin{split}
            \mathcal{P}(3)&=\{\{1\}\{2\}\{3\}, \{1,2\}\{3\}, \{1,3\}\{2\}, \{2,3\}\{1\}, \{1,2,3\}\}
        \end{split}        
    \end{align}
    Notice that the set partition is unordered. Therefore, we have:
    \begin{align}
        \begin{split}
            \varphi_\omega(\lambda(\mathcal{O}_1)\lambda(\mathcal{O}_2)\lambda(\mathcal{O}_3)) &= \omega(\mathcal{O}_1)\omega(\mathcal{O}_2)\omega(\mathcal{O}_3) + \omega(\mathcal{O}_1\mathcal{O}_2)\omega(\mathcal{O}_3) + \omega(\mathcal{O}_1\mathcal{O}_3)\omega(\mathcal{O}_2) \\&+ \omega(\mathcal{O}_2\mathcal{O}_3)\omega(\mathcal{O}_1) + \omega(\mathcal{O}_1\mathcal{O}_2\mathcal{O}_3)
        \end{split}
    \end{align}
\end{enumerate}
Now we are ready to present the derivation of the Poisson moment formula. The derivation is simply done by a brute-force calculation.
\begin{proof}
    First we calculate the product $\lambda(\mathcal{O}_1)\cdots\lambda(\mathcal{O}_n)$. Recall $\lambda(\mathcal{O}) = \sum_{N\geq 0}\lambda_N(\mathcal{O})$. To calculate this product, we only need to calculate the $N$-th degree component of this product:
    \begin{align}
        \begin{split}
            \lambda_m(\mathcal{O}_1)\cdots\lambda_m(\mathcal{O}_n) &= \sum_{1\leq i_1,\cdots,i_n\leq m}\bigotimes_{1\leq k\leq m}\overrightarrow{\prod_{j\in[n]:i_j = k}}\mathcal{O}_j
            \\& = \sum_{\vec{i}:[n]
            \rightarrow [m]}\bigotimes_{1\leq k\leq m}\overrightarrow{\prod_{j\in i^{-1}(k)}}\mathcal{O}_j
        \end{split}
    \end{align}
    where $\vec{i}(j) = i_j$ is a set map from $[n]$ to $[m]$. The product is an ordered product because the operators $\{\mathcal{O}_i\}_{1\leq i\leq n}$ do not generally commute. The set $i^{-1}(k)$ is the preimage of $\{k\}$ under the map $\vec{i}$. If this set is empty, the product is interpreted to be $1$. Using this formula, we can calculate the $m$-th degree component of the Poisson moment formula:
    \begin{equation}\label{equation:combinatorial_moment}
        \omega^{\otimes m}(\lambda_m(\mathcal{O}_1)\cdots\lambda_m(\mathcal{O}_n)) = \sum_{\vec{i}:[n]\rightarrow[m]}\prod_{1\leq k\leq m}\omega(\overrightarrow{\prod_{j\in i^{-1}(k)}}\mathcal{O}_j)
    \end{equation}
    For each set map $\vec{i}$, the collection of non-empty subset $\{i^{-1}(k)\neq\emptyset:k\in [m]\}$ forms a partition of $[n]$ \footnote{This is the induced partition induced by the map $\vec{i}$ and the singleton partition of $[m]$: $\{\{1\}\{2\}\cdots\{m\}\}$.}. Different set maps can induce the same partition of $[n]$. For any given partition $\sigma\in\mathcal{P}_n$, the number of set maps that induce $\sigma$ is given by:
    \begin{equation}
        |\sigma|!\binom{m}{|\sigma|}
    \end{equation}
    where $|\sigma|$ is the number of subsets in the partition $\sigma$ and $|\sigma| \leq m$ \footnote{This combinatorial factor arises by simply counting the number of ways to assign a distinct number between $1$ and $m$ to each subset $A\in \sigma$. Different assignments define different set maps.}. Using this combinatorial observation, we can rewrite the $m$-th degree component of the Poisson moment formula:
    \begin{equation}
        \omega^{\otimes m}(\lambda_m(\mathcal{O}_1)\cdots\lambda_m(\mathcal{O}_n)) = \sum_{\sigma\in\mathcal{P}_n, |\sigma| \leq m}|\sigma|!\binom{m}{|\sigma|}\omega(1)^{m - |\sigma|}\prod_{A\in\sigma}\omega(\overrightarrow{\prod}_{j\in A}\mathcal{O}_j)
    \end{equation}
    where the factor $\omega(1)^{m - |\sigma|}$ comes from the empty sets $\{k\in[m]: i^{-1}(k) = \emptyset\}$. These empty sets exist because $|\sigma|\leq m$.

    Now, applying the definition of Poisson state, we have:
    \begin{align}
        \begin{split}
            \varphi_\omega(\lambda(\mathcal{O}_1)\cdots\lambda(\mathcal{O}_n)) &= \sum_{m\geq 0}\frac{e^{-\omega(1)}}{m!}\sum_{\sigma\in\mathcal{P}_n, |\sigma|\leq m}|\sigma|!\binom{m}{|\sigma|}\omega(1)^{m - |\sigma|}\prod_{A\in\sigma}\omega(\overrightarrow{\prod_{j\in A}}\mathcal{O}_j)
            \\
            &=\sum_{\sigma\in\mathcal{P}_n}\sum_{m\geq |\sigma|}\frac{e^{-\omega(1)}\omega(1)^{m - |\sigma|}}{(m - |\sigma|)!}\prod_{A\in\sigma}\omega(\overrightarrow{\prod_{j\in A}}\mathcal{O}_j)
            \\
            &=\sum_{\sigma\in\mathcal{P}_n}\prod_{A\in\sigma}\omega(\overrightarrow{\prod_{j\in A}}\mathcal{O}_j)\qedhere
        \end{split}
    \end{align}
\end{proof}
Finally, we discuss the connected component of the Poisson moment formula, which is defined as the connected correlation function of Poisson quantized operators: 
\begin{equation}
    \varphi_\omega(\lambda(\mathcal{O}_1)\cdots\lambda(\mathcal{O}_n))_\text{conn} := (-i)^n\frac{\partial^n}{\partial s_1\cdots\partial s_n}|_{s_1,\cdots,s_n = 0}\log\varphi_\omega(e^{is_1\lambda(\mathcal{O}_1)}\cdots e^{is_n\lambda(\mathcal{O}_n)})
\end{equation}
To calculate the connected correlation function, we make a simple observation:
\begin{align}
    \begin{split}
        e^{is\lambda(\mathcal{O})}e^{it\lambda(\widetilde{\mathcal{O}})} &= \Gamma(e^{is\mathcal{O}})\Gamma(e^{it\widetilde{\mathcal{O}}}) = \sum_{N\geq 0}(e^{is\mathcal{O}})^{\otimes N} \sum_{M\geq 0}(e^{it\widetilde{\mathcal{O}}})^{\otimes M} \\&= \sum_{N\geq 0}(e^{is\mathcal{O}}e^{it\widetilde{\mathcal{O}}})^{\otimes N} = \Gamma(e^{it\mathcal{O}}e^{is\widetilde{\mathcal{O}}})
    \end{split}
\end{align}
where $\mathcal{O}^\dagger = \mathcal{O}\text{ , }\widetilde{\mathcal{O}}^\dagger = \widetilde{\mathcal{O}}$ and $s,t\in\mathbb{R}$ \footnote{Recall that in this section $\omega$ is finite. For $\omega$ infinite, this formula needs to be used with care because the domains of the Poisson quantization map and the integrated Poisson quantization map are no longer the entire input von Neumann algebra.}.
Using these observations, we can calculate the connected component of the Poisson moment formula:
\begin{align}
    \begin{split}
        \varphi_\omega(\lambda(\mathcal{O}_1)\cdots\lambda(\mathcal{O}_n))_\text{conn}&=(-i)^n\frac{\partial^n}{\partial s_1\cdots\partial s_n}|_{s_1,\cdots,s_n = 0}\log\varphi_\omega(e^{is_1\lambda(\mathcal{O}_1)}\cdots e^{is_n\lambda(\mathcal{O}_n)})
        \\
        &=(-i)^n\frac{\partial^n}{\partial s_1\cdots\partial s_n}|_{s_1,\cdots,s_n = 0}\log \varphi_\omega(\Gamma(e^{is_1\mathcal{O}_1}\cdots e^{is_n\mathcal{O}_n}))
        \\
        &=(-i)^n\frac{\partial^n}{\partial s_1\cdots\partial s_n}|_{s_1,\cdots,s_n = 0}\log\exp(\omega(e^{is_1\mathcal{O}_1}\cdots e^{is_n\mathcal{O}_n} - 1))
        \\&=(-i)^n\frac{\partial^n}{\partial s_1\cdots\partial s_n}|_{s_1,\cdots,s_n = 0}\omega(e^{is_1\mathcal{O}_1}\cdots e^{is_n\mathcal{O}_n} - 1)
        \\&=\omega(\mathcal{O}_1\cdots\mathcal{O}_n)
    \end{split}
\end{align}
This shows that \textbf{\textit{Poissonization preserves the connected correlation functions}}.

\subsection{Brief introduction to ultrafilter and ultraproduct}\label{appendix:ultra}

In this subsection, we introduce the notions of ultrafilter and ultraproduct to physicists. We will \textit{not} provide the rigorous mathematical definitions. Rather, we will focus on how to use these tools to study asymptotic properties of sequences of operators/operator algebras. 

For our purpose, an ultrafilter can be understood as a consistent set of rules to define limits for \textit{all} bounded sequences of complex numbers. It provides a generalization of the notion of convergence in standard analysis. This is particularly useful when we study \textit{asymptotic} properties of sequences of operators.

As a demonstration, we go back to the original proof by Murray and von Neumann \cite{MvN}that the hyperfinite type $II_1$ factor is not isomorphic to the free group factor (i.e., the large $N$ limit of Gaussian random matrices). The key technical tool in their proof is the notion of a nontrivial central sequence in the hyperfinite $II_1$ factor $\mathcal{R}$. In terms of infinite tensor product, $\mathcal{R}\cong\otimes_{n\geq 0}(M_2(\mathbb{C}), \frac{1}{2}\tr_2)$. A nontrivial central sequence in this case is a sequence of operators $(x_n)_{n\geq 0}$ such that:
\begin{enumerate}
    \item The sequence is uniformly bounded in norm: $\sup_n||x_n|| < \infty$;
    \item For any operator $y\in \mathcal{R}$, the sequence asymptotically commute with $y$:
    \begin{equation}
        [x_n,y]\rightarrow 0\text{ strongly}
    \end{equation}
    \item The following nontriviality conditions hold:
    \begin{equation}
        \forall n\geq 0\text{ , }\tau(x_n) = 0
    \end{equation}
    \begin{equation}
        \lim_{n\rightarrow\infty}\tau(x_n^\dagger x_n) \neq 0
    \end{equation}
    where $\tau$ is the canonical trace on $\mathcal{R}$.
\end{enumerate}
Murray and von Neumann showed that there exist nontrivial central sequences in the hyperfinite type $II_1$ factor, while the free group factors do not have nontrivial central sequences. This proof highlights the importance of asymptotic properties in operator algebras. 

It turns out that there is an alternative, more concise and exact formulation of these conditions. Since we want to study sequences of operators in $\mathcal{R}$, it is natural to consider the von Neumann algebra of sequences: $\ell_\infty(\mathbb{N},\mathcal{R}) := \{(x_n)_{n\geq 0}:x_n\in\mathcal{R}\text{ , }\sup_{n\geq 0}||x_n|| < \infty\}$. This algebra is the algebra of all sequences satisfying the first condition. Within $\ell_\infty(\mathbb{N},\mathcal{R})$, there exists a special kind of sequences - namely, the diagonal sequences: $\{(x_n)_{n\geq 0}:\forall n\text{ , }x_n = x\in\mathcal{R}\}$. This is a subalgebra in $\ell_\infty(\mathbb{N}, \mathcal{R})$ isomorphic to $\mathcal{R}$. Tentatively, an asymptotically central sequence can now be reformulated as an element in the relative commutant $\mathcal{R}'\cap \ell_\infty(\mathbb{N}, \mathcal{R})$.

One caveat is that there exist trivial examples of asymptotically central sequences. And we would like to get rid of these examples. For example, the following sequence is trivially asymptotically central:
\begin{equation}
    (\widetilde{x}_n:=\frac{1}{n}\mathbbm{1}^{\otimes n}\otimes\begin{bmatrix}
        0 & 1 \\ 1 & 0
    \end{bmatrix}\otimes\mathbbm{1}\otimes\cdots)_{n\geq 0}
\end{equation}
This is because for all $y\in\mathcal{R}$, we have:
\begin{eqnarray}
    \lim_{n\rightarrow\infty}|\tau(\widetilde{x}_ny)|\leq\lim_{n\rightarrow\infty}\tau(\widetilde{x}_n^\dagger\widetilde{x}_n)^{1/2}\tau(y^\dagger y)^{1/2} = \lim_{n\rightarrow\infty}\frac{\tau(y^\dagger y)^{1/2}}{n} = 0
\end{eqnarray}
And similarly $\lim_{n\rightarrow\infty}\tau(y\widetilde{x}_n) = 0$. From this and the fact that any vector $\ket{\xi}$ in the GNS representation can be written as $\ket{\xi} = z\ket{\tau^{1/2}}$ for some $z\in\mathcal{R}$\footnote{Here $\ket{\tau^{1/2}}$ is the vector implementing the canonical trace $\tau$ in the GNS Hilbert space.}, we have:
\begin{equation}
    \lim_{n\rightarrow\infty}\bra{\eta}[\widetilde{x}_n,y]\ket{\xi} = \lim_{n\rightarrow\infty}\tau(w^\dagger[\widetilde{x}_n,y]z) = \lim_{n\rightarrow\infty}\tau(\widetilde{x}_nyzw^\dagger) - \tau(yzw^\dagger\widetilde{x}_n) = 0
\end{equation}
where $\ket{\eta}  = w\ket{\tau^{1/2}}$ and $\ket{\xi} = z\ket{\tau^{1/2}}$. Thus $\lim_{n\rightarrow\infty}[\widetilde{x}_n,y]\ket{\xi} = 0$ for all $y\in\mathcal{R}$ and for all $\ket{\xi}$. 

The nontriviality condition gets rid of these trivial sequences. In terms of the algebra of bounded sequences $\ell_\infty(\mathbb{N}, \mathcal{R})$, we are tempted to get rid of these trivial sequences by quotienting out $\{(x_n)_{n\geq 0}:\lim_{n\rightarrow\infty}\tau(x^\dagger_nx_n) = 0\}$. This is the correct idea, but the key technical issue is that the standard limit $\lim_{n\rightarrow\infty}\tau(x^\dagger_nx_n)$ is not always defined. In principle, the sequence of numbers $(\tau(x^\dagger_nx_n))_{n\geq 0}$ is only a bounded sequence - it need not be convergent in the standard sense. \textit{This is where the notion of ultrafilter comes to the rescue}. As mentioned above, for us, an ultrafilter is simply a consistent set of rules that gives all bounded sequences a well-defined limit. In terms of notation, we denote this well-defined limit as: $\lim_{w}\tau(x^\dagger_nx_n)$ where $w$ is an ultrafilter\footnote{To be precise, $w$ is a nonprincipal ultrafilter: $w\in\beta\mathbb{N}\backslash\mathbb{N}$ where $\beta\mathbb{N}$ is the Stone-Cech compactification of the natural numbers.}. Using this limit, the following set is well-defined: $I_w := \{(x_n)_{n\geq 0}:\lim_w\tau(x^\dagger_nx_n) = 0\}$. It turns out that $I_w$ is a closed $*$-ideal and we can form the quotient von Neumann algebra:
\begin{equation}
    \mathcal{R}^w := \ell_\infty(\mathbb{N}, \mathcal{R}) / I_w
\end{equation}
\textit{This is the ultraproduct algebra of $\mathcal{R}$.} Now the asymptotic central sequences are simply elements in the relative commutant: $\mathcal{R}^w\cap\mathcal{R}'$. The fact that the hyperfinite type $II_1$ algebra has many nontrivial central sequences can be reformulated as a precise statement: $\mathcal{R}^w\cap \mathcal{R}' \neq\mathbb{C}$. The tools of ultrafilter and ultraproduct turn an asymptotic property (e.g., asymptotic centrality) into a property of the ultraproduct algebra $\mathcal{R}^w$. In this sense, one can regard the ultraproduct algebra as a way to encode asymptotic information. 

In the main text, we used a slightly more general version of the ultraproduct construction. For a sequence $(x_n)_{n\geq 0}$ in $\ell_\infty(\mathbb{N}, \mathcal{R})$, each operator $x_n$ takes value in the same algebra $\mathcal{R}$. Alternatively, we can allow for operators taking values in different algebras. For example, given a sequence of matrix algebras $\{M_n(\mathbb{C})\}$, we can form the von Neumann algebra of bounded sequences $\ell_\infty(\mathbb{N}, M_n(\mathbb{C})) := \{(x_n)_{n\geq 0}: x_n\in M_n(\mathbb{C})\}$. Using this, we can form the ultraproduct algebra $\prod^w M_n(\mathbb{C}) := \ell_\infty(\mathbb{N}, M_n(\mathbb{C})) / I_w$ where $I_w := \{(x_n)_{n\geq 0}\in \ell_\infty(\mathbb{N}, M_n(\mathbb{C})):\lim_w\tr_n(x_n^\dagger x_n) = 0\}$ is a closed $*$-ideal. Here $\tr_n$ is the canonical normalized trace on $M_n(\mathbb{C})$. This ultraproduct of matrix algebras is useful in capturing asymptotic properties of sequences of matrices. This makes the ultraproduct construction useful in studying large $N$ properties of matrix models. For example, the free group factor (i.e., the large $N$ limit of Gaussian random matrices) can be constructed from this ultraproduct construction. 

However, we emphasize that the ultraproduct algebra is typically \textit{not} hyperfinite and it can have drastically different properties than the original sequence of algebras. In addition, the ultraproduct algebra depends on the ultrafilter $w$ used in the construction. Often, to show that certain constructions or properties are independent of the ultrafilter, additional arguments would be needed. For our application to Poissonization, we will show that the Poisson algebra is independent of the ultrafilter used in the noncommutative Bernoulli approximation by using a noncommutative Hamburger moment theorem \cite{MAW}. 

\subsection{Poissonization as noncommutative Bernoulli approximation}\label{subsection:qpoiss}

In this section, we present an alternative construction of Poissonization from the perspective of noncommutative probability theory. As the name suggests, Poissonization can be understood as a noncommutative generalization of the classical Poisson random measure. The simplest case of a Poisson random measure is a Poisson random variable (where the underlying space is a singleton). A classical Poisson random variable can be constructed as the limit of a sequence of sums of Bernoulli random variables. This is the well-known \textit{law of rare events}.

To generalize this approximation to the noncommutative setting, two ingredients need to be modified:
\begin{enumerate}
    \item The classical Bernoulli random variables need to be generalized to the noncommutative setting. Poissonization replaces these random variables with Poisson quantized operators.
    \item The notion of limit needs to be modified so as to construct a well-defined Poisson algebra and Poisson state. Poissonization replaces the usual limit with the notion of an ultraproduct.
\end{enumerate}
Although the notion of an ultraproduct seems like a formidable mathematical tool, physicists have long been using this notion implicitly. More precisely, the algebra generated by master fields in the large $N$ limit of a matrix theory is an ultraproduct of matrix algebras \cite{statmech, singer, voiculescu1, VND, accardi1, Speicher:1993kt}. For example, for $SU(N)$-gauge theory on 2D Riemann surfaces, the algebra generated by the master fields in the large $N$ limit is a (non-hyperfinite) type $II_1$ factor, which is an ultraproduct of finite-dimensional matrix algebras \cite{singer}. 

Poissonization is \textit{not} free probability. But the same framework of noncommutative probability theory provides the right tools to construct Poissonization. The key observation of large $N$ theory is that the large $N$ limits of the correlation functions completely specify the theory. Mathematically, this observation can be made precise by a noncommutative Hamburger moment theorem \cite{MAW}. For Poissonization, the key is the Poisson moment formula. In this section, we use noncommutative Bernoulli approximation to construct a sequence of algebras, weights, and correlation functions whose limit is the Poisson moment formula. The limiting operators will be analogs of large $N$ master fields (these are the Poisson quantized operators), the limiting algebra will be the Poisson algebra, and the limiting weight will be the Poisson weight.

Certain parts of this section can be technical, but we have tried to make the material accessible. We first recall the classical Bernoulli approximation and the law of rare events. For completeness, we have included a discussion of the algebraic formulation of classical probability theory (i.e., the commutative von Neumann algebras) in Appendix \ref{app:math}. We then present the large $N$ limit as an ultraproduct, and apply our discussion to the classical Bernoulli approximations and construct the classical Poisson random variable as an ultraproduct of sums of Bernoulli random variables. Finally, we generalize the construction to the noncommutative case and observe that Poissonization can be understood as the large $N$ limit of particular sparse Bernoulli random matrices.

\subsubsection{Classical Bernoulli Approximation to Poisson Distribution}\label{subsubsection:bernoulli}

First, recall the classical Bernoulli approximation to the Poisson distribution. We already discussed this in Subsection \ref{subsec:rareevents} with a physics example; however, we repeat it here using mathematical terminology to motivate ultralimits. Suppose there exists a rare event that happens with probability $\mu\Delta T$ where $\Delta T$ is the length of the time interval. In other words, in each unit interval of time, this rare event happens with probability $\mu$. We would like to know what the probability is of $k$ such events happening during a unit time interval. One way to approach this problem is to coarse-grain the time interval into $M$ equi-length subintervals:
\begin{equation}
    [0,1] = [0,\frac{1}{M}]\sqcup[\frac{1}{M}, \frac{2}{M}]\sqcup\cdots\sqcup[\frac{M-1}{M},1]
\end{equation}
For large enough $M$, each time interval has negligible length. Therefore, the probability of a rare event happening more than once within a subinterval is dampened by a $\frac{1}{M}$. Hence, we can assume that the number of events within each subinterval is at most once. Then the total number of events within the unit time interval can be approximated by a summation:
\begin{equation}
    X_M = \sum_{1\leq i\leq M}X^{(i)}
\end{equation}
where $X^{(i)}$ is the number of events within the $i$-th subinterval. By construction $X^{(i)}$ is a Bernoulli random variable:
\begin{align}
    \begin{split}
        &X^{(i)} = 0 \text{ with probability }(1 - \frac{\mu}{M})
        \\
        &X^{(i)} = 1 \text{ with probability }\frac{\mu}{M}
    \end{split}
\end{align}
Then the probability distribution of the random variable $X_M$ is a convolution of $M$ Bernoulli random variables:
\begin{equation}
    X_M\sim (\frac{\mu}{M}\delta_1 + (1 - \frac{\mu}{M})\delta_0)^{* M}
\end{equation}
where $*M$ denotes the $M$-fold convolution of Bernoulli random variables. Concretely, we have:
\begin{equation*}
    \text{Prob}(X_M = k) = \binom{M}{k}(\frac{\mu}{M})^k(1- \frac{\mu}{M})^{M-k}
\end{equation*}
When the coarse graining becomes finer and finer (i.e. $M\rightarrow \infty$), we have the limiting distribution (the \textit{law of rare events}):
\begin{align}
    \begin{split}
       \lim_{M\rightarrow\infty}\text{Prob}(X_M = k) &= 
        \lim_{M\rightarrow\infty}\frac{(1 - \frac{\mu}{M})^M\mu^k}{k!}\frac{M!}{(M-k)!M^k(1 - \frac{mu}{M})^k} \\&= \frac{e^{-\mu}\mu^k}{k!}\lim_{M\rightarrow\infty}\prod_{0\leq i\leq k-1}\frac{M-i}{M-\mu} = \frac{e^{-\mu}\mu^k}{k!} 
    \end{split}
\end{align}
The resulting probability distribution is the classical Poisson distribution with intensity $\mu$. 
\subsubsection{Large $N$ Limit as ultraproduct}\label{subsubsection:large_N}
In this subsection, we introduce the mathematical tool of ultraproduct. Physicists have been using this tool implicitly when constructing master fields to study the large $N$ limits of matrix models \cite{Gopakumar-Gross}.

In matrix models, we are given a sequence of finite $N$ matrix algebras and finite $N$ correlation functions. The large $N$ limit of the correlation functions can often be calculated. From these limiting correlation functions, we would like to construct master fields and a state on the algebra generated by the master fields so that limiting correlation functions can be reproduced. 

For example, consider the non-interacting matrix model with the partition function:
\begin{equation}
    Z_N := \int_{\text{Herm}_N} \prod_{1\leq i\leq k}dX_i e^{-\frac{N}{2}\sum_{1\leq i\leq k}\tr(X_i^2)}
\end{equation}
where $X_i^\dagger = X_i$ are Hermitian \footnote{In other words, $X_i$'s are mutually independent Hermitian Gaussian random matrices.}. The finite $N$ correlation functions are:
\begin{equation}
    \langle \frac{\tr}{N}(f(X_1,\cdots,X_k))\rangle:=\frac{1}{Z_N}\int_{\text{Herm}_N} \prod_{1\leq i\leq k}dX_i e^{-\frac{N}{2}\sum_{1\leq i\leq k}\tr(X_i^2)}\frac{\tr}{N}(f(X_1,\cdots,X_k))
\end{equation}
where $\tr$ is the un-normalized trace on the $N$-by-$N$ matrices and $f(X_1,\cdots,X_k)$ is a noncommutative monomial (or simply a word) formed from products of the matrices $X_i$.

When $k = 1$ (one-matrix model), the large $N$ limit of the correlation functions can be calculated from the Wigner semi-circular law \cite{mehta2004random}. When $k > 1$, the key observation is the following factorization (free independence) \cite{voiculescu1}:
\begin{equation}
    \lim_{N\rightarrow\infty}\langle\frac{\tr}{N}(X_1^{\alpha_1} X_2^{\alpha_2}\cdots X_k^{\alpha_k})\rangle = \prod_{1\leq i\leq k}\lim_{N\rightarrow \infty}\langle\frac{\tr}{N}(X_i^{\alpha_i})\rangle
\end{equation}
where $\alpha_i \in \mathbb{N}$ and the product is ordered so that $X_i\neq X_j$. More general correlation functions can be calculated from this factorization rule. 

In this case, the master fields $\widehat{X_i}$ corresponding to $X_i$'s are given by semi-circular random variables that are mutually freely independent \cite{voiculescu1}. The algebra generated by these master fields is isomorphic to the free group factor with $k$ generators. And using the canonical trace on the free group factor, we have:
\begin{equation}
    \lim_{N\rightarrow\infty}\langle\frac{\tr}{N}f(X_1,\cdots,X_k)\rangle := \tau(f(\widehat{X}_1,\cdots,\widehat{X}_k))
\end{equation}
where $\widehat{X}_i$ is the master field corresponding to $X_i$ and $\tau$ is the canonical trace on the free group factor. Here $f$ is a noncommutative monomial.

Mathematically, the free group factor can be constructed using an ultraproduct of matrix algebras \cite{Speicher1992, Speicher:1993kt}. The key idea is to represent master fields as sequences of finite $N$ matrices and to construct the limiting algebra via a GNS-type construction. The limiting state is well-defined, and it relies on a well-defined limit of finite $N$ moment formulae. In the next section, we apply this procedure to recast the classical Bernoulli approximation as an ultraproduct.

\paragraph{Bernoulli approximation as ultraproduct}

In this section, we recast the classical Bernoulli approximation as a particular example of an ultraproduct. The commutative von Neumann algebra generated by the limiting Poisson random variable is the ultraproduct algebra. Recall that the Poisson random variable can be approximated by a sequence of sums of i.i.d. Bernoulli random variables. Based on the algebraic characterization of random variables, a Bernoulli distribution defines a state $\omega_p$ on the commutative von Neumann algebra $\ell_\infty^2$ where $\omega_p(\alpha_0\delta_0 + \alpha_1\delta_1) = \alpha_0(1-p) + \alpha_1 p$ ($\alpha_i$ are constants for $i = 0,1$). In addition, a sum of $M$ i.i.d. Bernoulli random variables $X_M = \sum_{1\leq i \leq M}X^{(i)}$ defines a tensor product state $\omega_p^{\otimes k}$ on the symmetric tensor algebra $\otimes_\text{sym}^M \ell_\infty^2$. Moreover, the distribution of $X_M$ is reproduced by the tensor product state and the spectral projections of the operator:
\begin{equation}
    \widetilde{\lambda}_M(\delta_1) = \sum_{1\leq i \leq M}1\otimes\cdots\underbrace{\otimes \delta_1\otimes}_\text{i-th position}\cdots\otimes 1
\end{equation}

The discussion above indicates that we should consider the following sequence of algebras and states:
\begin{equation}
    \{(\otimes_\text{sym}^M \ell_\infty^2, \omega_{p_M}^{\otimes M})\}_{M\geq [\lambda]+1}
\end{equation}
where for each $M$ the success probability $p_M = \frac{\mu}{M}$. The indices $M$ are required to be larger than $[\mu]+1$ so that $p_M$ is less than 1. 

As mentioned in our discussion of large $N$ limits, the ultraproduct construction represents the master field as a sequence of (uniformly bounded) operators. In our current application, we can be more explicit and consider the von Neumann algebra of (uniformly bounded) sequences:
\begin{equation}
    \ell_\infty(\mathbb{N}, \otimes_\text{sym}^M\ell_\infty^2) := \{(\mathcal{O}_M):\mathcal{O}_M\in\otimes_\text{sym}^M\ell_\infty^2\text{ for }M\geq [\mu]+1\text{, and }0\text{ otherwise}\}
\end{equation}
Heuristically, the limiting state $\omega$ is defined by the limit ``$\omega((\mathcal{O}_M)) := \lim_{M\rightarrow\infty}\omega_{p_M}^{\otimes M}(\mathcal{O}_M)$" where $\mathcal{O}_M\in\otimes_\text{sym}^M\ell_\infty^2$. The quotation mark emphasizes the fact that \textit{not all} bounded sequences have a well-defined limit in the standard sense \footnote{This is where an ultrafilter/ultraproduct comes to the rescue. The crucial fact is that given any (non-principal) ultrafilter, any bounded sequence has a well-defined limit. And for convergent sequences, this limit reduces to the standard notion of limit.}. Then, following the GNS construction, we quotient out the null ideal:
\begin{equation}
    \mathcal{I}_w := \{(\mathcal{O}_M)\in\ell_\infty(\mathbb{N},\otimes_\text{sym}^M\ell_\infty^2): \lim_{M\rightarrow w}\omega_{p_M}^{\otimes M}(\mathcal{O}_M^\dagger\mathcal{O}_M) = 0\}
\end{equation}
where $w$ is a non-principal ultrafilter (see also Appendix \ref{appendix:ultra}).
The resulting quotient algebra is \textit{the ultraproduct algebra}:
\begin{equation}
    \prod^w (\otimes_\text{sym}^M\ell_\infty^2) := \ell_\infty(\mathbb{N}, \otimes_\text{sym}^M\ell_\infty^2) / I_w
\end{equation}

We do not need the entire ultraproduct algebra to reproduce the Poisson random variable. We only need the sub-algebra generated by a particular sequence: 
\begin{equation}
    \widetilde{\lambda}(\delta_1) := [(\widetilde{\lambda}_M(\delta_1))]
\end{equation}
where the square bracket emphasizes that the operator $\widetilde{\lambda}(\delta_1)$ exists in the quotient algebra. Notice that the expectation of $\widetilde{\lambda}(\delta_1)$ can be readily calculated:
\begin{equation}
    \omega(\widetilde{\lambda}(\delta_1)) = \lim_{M}\omega_{p_M}^{\otimes M}(\widetilde{\lambda}_M(\delta_1)) = \lim_M M\omega_{p_M}(1)^{M-1}\omega_{p_M}(\delta_1) = \lim_M Mp_M = \mu
\end{equation}
To study the spectrum of $\widetilde{\lambda}(\delta_1)$, we decompose each component $\lambda_M(\delta_1)$:
\begin{equation}
    (\widetilde{\lambda}_M(\delta_1))_{M\geq [\mu]+1} = (\sum_{0\leq k\leq M}kp^M_k)_{M\geq [\mu]+1} 
\end{equation}
Hence for each $k\in\mathbb{N}$, we have a sequence of projections $(p^M_k)_{M\geq [\mu]+1}$ where for $M < k$ define $p^M_k = 0$. We can calculate the ultraproduct of this sequence of projections:
\begin{equation}
    \lim_{M}\omega_{p_M}^{\otimes M}(p^M_k) = \lim_{M}\binom{M}{k}(\frac{\mu}{M})^k(1 - \frac{\mu}{M})^{M-k} = \frac{e^{-\mu}\mu^k}{k!}
\end{equation}
where the ultraproduct in the second equation reduces to the usual limit because the sequence converges. This is nothing but the classical Bernoulli approximation formula. The sequence of projections $(p^M_k)_{M\geq[\mu]+1}$ gives a well-defined projection $\widetilde{\lambda}(p_k)$ in the ultraproduct algebra such that:
\begin{equation}
    \omega(\widetilde{\lambda}(p_k)) = \lim_{M}\omega_{p_M}^{\otimes M}(p_k^M) = \frac{e^{-\mu}\mu^k}{k!}
\end{equation}
Therefore, we have the spectral decomposition of $\widetilde{\lambda}(\delta_1)$:
\begin{equation}
    \widetilde{\lambda}(\delta_1) = \sum_{k\geq 0}k\widetilde{\lambda}(p_k)
\end{equation}
And the von Neumann subalgebra generated by $\widetilde{\lambda}(\delta_1)$ is the algebra of bounded sequences on $\mathbb{N}$, and the state $\omega$ restricted to this subalgebra reduces to the Poisson distribution. The isomorphism is given by
\begin{equation}
    \varphi:\ell_\infty(\mathbb{N})\rightarrow \langle\widetilde{\lambda}(\delta_1)\rangle:\sum_{k\geq 0}\alpha_k\delta_k\mapsto\sum_{k\geq 0}\alpha_k\widetilde{\lambda}(p_k)
\end{equation}
where $(\alpha_k)_{k\geq 0}$ is a bounded sequence of complex numbers. The limiting state $\omega$ reduces to:
\begin{equation*}
    \omega\circ\varphi(\sum_{k\geq 0}\alpha_k\delta_k) = \sum_{k\geq 0}\alpha_k \omega(\widetilde{\lambda}(p_k)) = \sum_{k\geq 0}\alpha_k \frac{e^{-\lambda}\lambda^k}{k!}
\end{equation*}
This concludes the ultraproduct modeling of Bernoulli approximation.
\subsubsection{Noncommutative Bernoulli approximation to Poissonization}\label{subsubsection:general}
The classical Bernoulli approximation is the simplest example of Poissonization. The resulting Poisson algebra is a commutative von Neumann algebra generated by a single Poisson quantized operator $\widetilde{\lambda}(\delta_1)$, and the Poisson state is the Poisson distribution. Rephrasing this construction in a slightly more concise and abstract way, the input data in the Bernoulli approximation is the trivial von Neumann algebra $\mathbb{C}$ and the input weight is the linear functional of multiplication by a positive constant $\mu > 0$:
\begin{equation}
    \omega_\mu:\mathbb{C}\rightarrow\mathbb{C}:z\mapsto \mu z
\end{equation}
In the ultraproduct construction (as opposed to the symmetric tensor algebra construction), the Poisson quantization map is given by a sequence of linear maps:
\begin{equation}
    \widetilde{\lambda}_M:\mathbb{C}\rightarrow \otimes_\text{sym}^M\ell_\infty^2:\delta_1\mapsto \sum_{1\leq i\leq M}1\otimes\cdots\underbrace{\otimes \delta_1\otimes}_\text{i-th position}\cdots\otimes 1
\end{equation}
where we have regarded $\mathbb{C}$ as the von Neumann algebra of measurable functions on the one-point space $\{1\}$ and $\delta_1$ generates this trivial algebra. The full Poisson quantization map is defined to be:
\begin{equation}
    \widetilde{\lambda}:\mathbb{C}\rightarrow \ell_\infty(\mathbb{N}, \otimes_\text{sym}^M\ell_\infty^2)/I_w: \delta_1\mapsto [(\widetilde{\lambda}_M(\delta_1))]
\end{equation}
where again the square bracket emphasizes that the operator exists in the quotient algebra.

The Poisson algebra is the subalgebra generated by $\widetilde{\lambda}(\delta_1)$ in the ultraproduct algebra and the Poisson state is the limiting state $\varphi_{\omega_\mu}:=\lim_M \omega_{p_M}^{\otimes M}$ restricted to this subalgebra.

The generalization to more general von Neumann algebras is straightforward. Here we restrict ourselves to consider tracial von Neumann algebras as inputs \footnote{The reason is that the ultraproduct definition needs to be modified for non-tracial weights. This involves introducing additional (but standard) machineries.}. Instead of $(\mathbb{C}, \omega_\mu)$, the input data is $(\mathcal{N},\tau)$ where $\tau$ is a tracial finite weight (i.e.$\tau(1)<\infty$). The Poisson quantization map again is a sequence of linear maps:
\begin{eqnarray}\label{equation:qpoissquant}
    \widetilde{\lambda}_M: \mathcal{N}\rightarrow \otimes_\text{sym}^M(\mathcal{N}\otimes\ell_\infty^2):\mathcal{O}\mapsto &&\sum_{1\leq i\leq M}1\otimes\cdots\underbrace{\otimes(\mathcal{O}\otimes\delta_1)\otimes}_\text{i-th position}\cdots\otimes 1 \nn\\
    &&= \sum_{1\leq i\leq M}1\otimes\cdots\underbrace{\otimes\begin{bmatrix}
        \mathcal{O}&0\\0&0
    \end{bmatrix}\otimes}_\text{i-th position}\cdots\otimes 1
\end{eqnarray}
where $M \geq [\tau(1)] + 1$, and the symmetric tensor product is taken over the tensor algebra $\mathcal{N}\otimes\ell_\infty^2\cong\mathcal{N}\oplus\mathcal{N}$. In terms of the matrix notation, $\mathcal{N}\otimes\ell_\infty^2$ is the set of diagonal $2\times2$ matrices where each matrix entry is an operator in $\mathcal{N}$. The $M$-th state on the algebra $\mathcal{N}\otimes\ell_\infty^2$ is given by the tensor product of the state $\tau_N:=\frac{\tau}{\tau(1)}\otimes\begin{bmatrix}
    \frac{\tau(1)}{N}&0\\0&1-\frac{\tau(1)}{N}
\end{bmatrix}$. We consider the ultraproduct 
\begin{equation}
    \prod^w \big(\otimes_\text{sym}^M(\mathcal{N}\otimes\ell_\infty^2), \tau_M^{\otimes M}\big)
\end{equation}
where $w$ is a non-principal ultrafilter. The Poisson quantization map is given by:
\begin{equation}
    \widetilde{\lambda}:\mathcal{N}\rightarrow \prod^w\big(\otimes_\text{sym}^M(\mathcal{N}\otimes\ell_\infty^2), \tau_M^{\otimes M}\big):\mathcal{O}\mapsto [(\widetilde{\lambda}_M(\mathcal{O}))]
\end{equation}
where again the square bracket emphasizes that the operator exists in the quotient. The Poisson algebra is the subalgebra generated by:
\begin{equation}
    \mathbb{P}_\tau\mathcal{N}:=\langle\widetilde{\lambda}(\mathcal{O}):\mathcal{O}^\dagger =\mathcal{O}\in\mathcal{N}\rangle\subset\prod^w(\otimes_\text{sym}^M(\mathcal{N}\otimes\ell_\infty^2), \tau_M^{\otimes M})
\end{equation}
The Poisson state $\varphi_\tau$ is the restriction of the limiting state $\lim_{M\rightarrow w}\tau_M^{\otimes M}$ to the Poisson algebra \footnote{Technically, the Poisson quantized operator $\widetilde{\lambda}(\mathcal{O})$ is generally \textit{not} bounded operators in the Poisson algebra but \textit{affiliated operators} with the Poisson algebra. Affiliated operators commute with the commutant of the Poisson algebra. But they are not necessarily bounded. For example, in the case of Bernoulli approximation, $\widetilde{\lambda}(\delta_1)$ has spectrum $\mathbb{N}$ and is not bounded. Nevertheless, by Stone's theorem, essentially self-adjoint affiliated operators generate a one-parameter family of unitary operators that exist inside the von Neumann algebra. This is what it means for the Poisson quantized operators to generate the Poisson algebra.}.

By definition of $\widetilde{\lambda}_M$, it is not difficult to see that $\widetilde{\lambda}_M$ preserves commutation relation:
\begin{equation}
    \widetilde{\lambda}_M([\mathcal{O},\widetilde{\mathcal{O}}]) = [\widetilde{\lambda}_M(\mathcal{O}), \widetilde{\lambda}_M(\widetilde{\mathcal{O}})]
\end{equation}
Hence, the Poisson quantization map also preserves commutation relations. 

In addition, we can readily calculate the expectation of the Poisson quantized operator under the Poisson state:
\begin{equation}
    \varphi_\tau(\widetilde{\lambda}(\mathcal{O})) = \lim_M \tau_M^{\otimes M}(\widetilde{\lambda}_M(\mathcal{O})) = \lim_M M\tau_M(1)^{M-1}\tau_M(\mathcal{O}\otimes\delta_1) = \lim_M M\frac{\tau(\mathcal{O})}{\tau(1)}\frac{\tau(1)}{M} = \tau(\mathcal{O})
\end{equation}
where we have used the fact that $\tau_M(1) = (\frac{\tau(1)}{M} + (1 - \frac{\tau(1)}{M}))\frac{\tau(1)}{\tau(1)} = 1$.

We can also derive the Poisson moment formula. Notice that by the same argument of the proof of Equation \ref{equation:combinatorial_moment}, we have:
\begin{equation}
    \tau_M^{\otimes M}(\widetilde{\lambda}_M(\mathcal{O}_1)\cdots\widetilde{\lambda}_M(\mathcal{O}_n)) = \sum_{\sigma\in\mathcal{P}_n, |\sigma| \leq M}|\sigma|!\binom{M}{|\sigma|}\prod_{A\in\sigma}\tau_M(\overrightarrow{\prod}_{j\in A}(\mathcal{O}_j\otimes\delta_1))
\end{equation}
Because $\delta_1$ is a projection, we have $\overrightarrow{\prod}_{j\in A}(\mathcal{O}_j\otimes\delta_1) = (\overrightarrow{\prod}_{j\in A}\mathcal{O}_j)\otimes\delta_1$. Hence we have $\tau_M(\overrightarrow{\prod}_{j\in A}(\mathcal{O}_j\otimes\delta_1)) = \frac{\tau(1)}{M}\frac{\tau(\overrightarrow{\prod}_{j\in A}x_j)}{\tau(1)}$. And hence, we have:
\begin{equation}
    \tau_M^{\otimes M}(\widetilde{\lambda}_M(\mathcal{O}_1)\cdots\widetilde{\lambda}_M(\mathcal{O}_n)) = \sum_{\sigma\in\mathcal{P}_n, |\sigma| \leq M}|\sigma|!\binom{M}{|\sigma|}\frac{1}{M^{|\sigma|}}\prod_{A\in\sigma}\tau(\overrightarrow{\prod}_{j\in A}\mathcal{O}_j)
\end{equation}
where $|\sigma|$ is the number of subsets in the partition $\sigma$. Observe that the standard limit exists for the leading combinatorial factor:
\begin{equation}
    \lim_{M\rightarrow\infty}|\sigma|!\binom{M}{|\sigma|}\frac{1}{M^{|\sigma|}} = \lim_{M\rightarrow\infty}\frac{M(M-1)\cdots(M-|\sigma|+1)}{M^{|\sigma|}} =1 
\end{equation}
Taking the limit, we have the Poisson moment formula:
\begin{equation}
    \varphi_\tau(\widetilde{\lambda}(\mathcal{O}_1)\cdots\widetilde{\lambda}(\mathcal{O}_n)) = \lim_{M}\tau_M^{\otimes M}(\widetilde{\lambda}_M(\mathcal{O}_1)\cdots\widetilde{\lambda}_M(\mathcal{O}_n)) = \sum_{\sigma\in\mathcal{P}_n}\prod_{A\in\sigma}\tau(\overrightarrow{\prod}_{j\in A}\mathcal{O}_j)
\end{equation}

We end this section with two comments:
\begin{enumerate}
    \item As emphasized, an ultraproduct of von Neumann algebras is typically not hyperfinite. However, the Poisson algebra as a proper subalgebra of an ultraproduct is hyperfinite if the input von Neumann algebra is hyperfinite. This is not easy to see in the ultraproduct construction of Poissonization, but it is easy to see in the symmetric tensor product construction of Poissonization. The missing link is to show that both constructions result in the same Poisson algebra and the same Poisson state. This is achieved by a moment-growth estimation\footnote{For physicists, such moment-growth estimations have shown up in the study of quantum chaos and specifically in the study of Universal Operator Growth Hypothesis\cite{PCASA}.} and a noncommutative Hamburger moment theorem\cite{MAW}(see also Appendix \ref{appendix:hamburger}).
    \item We have chosen to present Bernoulli approximation using the algebraic formulation of random variables to emphasize the mathematical structure of this construction and to demonstrate that the noncommutative generalization is natural and straightforward. Now that we are familiar with the noncommutative Bernoulli approximation, we can restate the construction in the more traditional language of random matrices. We devote the final part of this section to this reformulation.
\end{enumerate}
In the following, we restrict our attention to the case where the input von Neumann algebra is a matrix algebra $M_k(\mathbb{C})$ and the input state is the normalized trace $\frac{\tr}{k}$. First we redefine the map $\widetilde{\lambda}_M$ as:
\begin{equation}
    \widetilde{\lambda}_M(\mathcal{O}) = \sum_{1\leq i\leq M}1\otimes\cdots\underbrace{\otimes(\mathcal{O}X^{(i)}_M)\otimes}_\text{i-th position}\cdots\otimes 1
\end{equation}
where $\mathcal{O}\in M_k(\mathbb{C})$ is a matrix and $X_M^{(i)}$ is a classical Bernoulli random variable with success probability $\frac{1}{M}$. Further, $X_M^{(i)}$ and $X_M^{(i')}$ are mutually independent. Since $\mathcal{O}$ is a matrix, $\widetilde{\lambda}_M(\mathcal{O})$ is a sum of $M$ Bernoulli random matrices, where each summand $1\otimes\cdots\underbrace{\otimes(\mathcal{O}X^{(i)}_M)\otimes}_\text{i-th position}\cdots\otimes 1$ is non-zero with probability $\frac{\tau(1)}{M}$. Notice that because of the tensor product structure, each summand $1\otimes\cdots\underbrace{\otimes(\mathcal{O}X^{(i)}_M)\otimes}_\text{i-th position}\cdots\otimes 1$ has at least half of its entries equal to $0$. Therefore, $\widetilde{\lambda}_M(\mathcal{O})$ is a sum of sparse random matrices. 

The limiting state is the familiar state in matrix models - namely, the expectation value of the normalized trace:
\begin{equation}
    \lim_M \mathbb{E}_M\frac{\tr}{k^M}(\widetilde{\lambda}_M(\mathcal{O}_1)\cdots\widetilde{\lambda}_M(\mathcal{O}_n))
\end{equation}
where $k^M$ is the dimension of the random matrix $\widetilde{\lambda}_M(\mathcal{O})$. Here, the expectation value is taken over all independent Bernoulli random variables. Notice that the normalized trace can be decomposed into a tensor product of normalized traces on $k\times k$-matrices:
\begin{equation}
    \frac{\tr}{k^M} = (\frac{\tr}{k})^{\otimes M}
\end{equation}
where $\tr$ is the unnormalized trace and we have used the fact that $M_{k^M}(\mathbb{C}) = \otimes^M M_k(\mathbb{C})$. Using these observations, we can calculate the finite $N$ moment formula. Notice that the combinatorics in the calculation is the same combinatorics as before (see also \ref{equation:combinatorial_moment}):
\begin{align}
    \begin{split}
        \mathbb{E}_M\frac{\tr}{k^M}(\widetilde{\lambda}(\mathcal{O}_1)\cdots\widetilde{\lambda}(\mathcal{O}_n)) &= \sum_{\vec{i}:[n]\rightarrow[M]}\prod_{1\leq \alpha\leq M}\mathbb{E}\frac{\tr}{k}(\overrightarrow{\prod}_{j\in i^{-1}(\alpha)}(\mathcal{O}_jX_M^{(\alpha)}))
        \\
        &=\sum_{\vec{i}:[n]\rightarrow[M]}\prod_{1\leq \alpha\leq M}\mathbb{E}\big((X_M^{(\alpha)})^{|i^{-1}(\alpha)|}\big)\frac{\tr}{k}(\overrightarrow{\prod}_{j\in i^{-1}(\alpha)}\mathcal{O}_j)
    \end{split}
\end{align}
Because powers of a Bernoulli random variable with value in $\{0,1\}$ is the same random variable, we have:
\begin{equation}
    \mathbb{E}\big((X_M^{(\alpha)})^{i^{-1}(\alpha)}\big) = \mathbb{E}(X_M^{(\alpha)}) = \frac{1}{M}
\end{equation}
Hence, the finite $M$ moment formula can be written as:
\begin{align}
    \begin{split}
        \mathbb{E}_M\frac{\tr}{k^M}(\widetilde{\lambda}(x_1)\cdots\widetilde{\lambda}(x_n)) &= \sum_{\vec{i}:[n]\rightarrow[M]}\prod_{1\leq\alpha\leq M}\frac{1}{M}\frac{\tr}{k}(\overrightarrow{\prod}_{j\in i^{-1}(\alpha)}\mathcal{O}_j)
        \\&=\sum_{\sigma\in\mathcal{P}_n, |\sigma|\leq M}|\sigma|!\binom{M}{|\sigma|}\frac{1}{M^{|\sigma|}}\prod_{S\in \sigma}\frac{\tr}{k}(\overrightarrow{\prod}_{j\in S}\mathcal{O}_j)
    \end{split}
\end{align}
Therefore, in the large $M$ limit, we have the Poisson moment formula:
\begin{equation}
    \lim_{M\rightarrow\infty}\mathbb{E}_M\frac{\tr}{k^M}(\widetilde{\lambda}(\mathcal{O}_1)\cdots\widetilde{\lambda}(\mathcal{O}_n)) = \sum_{\sigma\in\mathcal{P}_n}\prod_{S\in\sigma}\frac{\tr}{k}(\overrightarrow{\prod}_{j\in S}\mathcal{O}_j)
\end{equation}
In short, \textbf{\textit{Poissonization of matrix algebra is the large $M$ limit of particular sparse Bernoulli random matrices}}. 

\subsection{Properties of Poissonization}

In this section, we mention three key properties of Poissonization:
\begin{enumerate}
    \item The GNS Hilbert space of the Poisson state is a symmetric Fock space. The Poisson algebra represented on this symmetric Fock space is in its standard form. In particular, the commutant of the Poisson algebra is another Poisson algebra.
    \item Poissonization lifts unital CP maps on the one-particle algebra of observables to unital CP maps on the Poisson algebra.
    \item Modular automorphism associated with the input weight is lifted to a modular automorphism of the Poisson state.
\end{enumerate}

\subsubsection{Standard representation of Poisson algebra}
\begin{theorem}
    The GNS Hilbert space of the Poisson algebra $\mathbb{P}_\omega \mathcal{N}$ under the Poisson state $\varphi_\omega$ is isomorphic to the symmetric Fock space $\mathcal{F}_\text{sym}(L_2(\mathcal{N},\omega))$ where $L_2(\mathcal{N},\omega)$ is the GNS Hilbert space of $N$ under the (finite) weight $\omega$.
\end{theorem}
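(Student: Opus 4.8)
The plan is to exhibit an explicit unitary between the two Hilbert spaces by matching the overlaps of a total family of vectors, exploiting that the Poisson algebra is generated by the unitaries $\Gamma(u)=\Gamma(e^{i\mathcal{O}})$ and that the generating function $\varphi_\omega(\Gamma(v))=e^{\omega(v-1)}$ is already in hand (it is the one-line computation $\sum_{N}\frac{e^{-\omega(1)}}{N!}\omega(v)^N$, valid for any unitary $v$, not just $v=e^{it\mathcal{O}}$). Write $H=L_2(\mathcal{N},\omega)$ with GNS embedding $\Lambda:\mathcal{N}\to H$, so that $\langle\Lambda(a),\Lambda(b)\rangle=\omega(a^*b)$, and let $\Omega$ be the cyclic vector of the Poisson GNS representation. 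First I would compute the overlaps on the Poisson side: using $\Gamma(v)^*=\Gamma(v^*)$, the composition law $\Gamma(v_1)\Gamma(v_2)=\Gamma(v_1v_2)$, and the generating formula, one gets immediately $\langle\Gamma(u_1)\Omega,\Gamma(u_2)\Omega\rangle=\varphi_\omega(\Gamma(u_1^*u_2))=\exp(\omega(u_1^*u_2-1))$ for any unitaries $u_1,u_2$ in the group generated by the $e^{i\mathcal{O}}$.

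Next I would compute the matching overlaps on the Fock side. For a unitary $u$ one has $\|\Lambda(u)\|^2=\omega(u^*u)=\omega(1)$, so the normalized coherent vectors $\ket{W(\Lambda(u))}\in\mathcal{F}_{\text{sym}}(H)$ obey the standard identity $\langle W(\Lambda(u_1)),W(\Lambda(u_2))\rangle=\exp(\langle\Lambda(u_1),\Lambda(u_2)\rangle-\tfrac12\|\Lambda(u_1)\|^2-\tfrac12\|\Lambda(u_2)\|^2)=\exp(\omega(u_1^*u_2-1))$, which is exactly the Poisson-side overlap. I would therefore define $V:\Gamma(u)\Omega\mapsto\ket{W(\Lambda(u))}$; by the two overlap computations $V$ preserves inner products on the spanning set, hence extends to an isometry of the closed spans.

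Surjectivity is the crux and would be established by two totality statements. On the Poisson side, the unitaries $\Gamma(u)$ generate $\mathbb{P}_\omega\mathcal{N}$ as a von Neumann algebra, so $\{\Gamma(u)\Omega\}$ has dense linear span in the GNS space by cyclicity. On the Fock side, I would use that $\Lambda$ maps the unitary group of $\mathcal{N}$ onto a set whose complex-linear span is dense in $H$ (every operator is a combination of four unitaries, and $\Lambda(\mathcal{N})$ is dense since $\omega$ is finite and faithful), together with the standard fact that coherent states over a dense subset of $H$ are total in $\mathcal{F}_{\text{sym}}(H)$. Hence $V$ is a unitary identifying the GNS Hilbert space with $\mathcal{F}_{\text{sym}}(L_2(\mathcal{N},\omega))$ and sending $\Omega$ to the Fock vacuum. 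Finally, to see that this identification is compatible with the algebra (and thus prepares the standard-form statement that follows), I would check that $V\,\Gamma(v)$ equals the second-quantized Fock action of $\Gamma(v)$ composed with $V$, which follows from $\Gamma(v)\Gamma(u)\Omega=\Gamma(vu)\Omega\mapsto\ket{W(\Lambda(vu))}$ and the known action of $\Gamma(v)$ on coherent states.

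The main obstacle is the totality of the \emph{restricted} coherent-state family $\{\ket{W(\Lambda(u))}:u\ \text{unitary}\}$: coherent states are generically indexed by arbitrary $\xi\in H$, whereas here only the values $\xi=\Lambda(u)$ appear, so one must argue these still span densely. The cleanest route is to reduce to the totality of all coherent states by approximating a general $\xi\in H$ in norm by complex combinations of the $\Lambda(u)$ and using norm-continuity and analyticity of $\xi\mapsto\ket{W(\xi)}$ to recover each fixed-particle-number component. A second technical point, which I would flag rather than belabor, is that the whole argument uses $\omega(1)<\infty$ essentially (the exponential vectors and the vacuum require a finite weight); the infinite-weight case is genuinely different and must be handled through the Tomita-algebra restriction or the ultraproduct approximation described earlier. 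An alternative, more combinatorial proof would bypass coherent states entirely: use the number operator $\lambda(\mathbbm{1})$ to split the GNS space into particle-number sectors and show, by Möbius inversion of the partition-sum Poisson moment formula (the noncommutative Charlier/chaos decomposition), that the Wick-ordered vectors $:\lambda(\mathcal{O}_1)\cdots\lambda(\mathcal{O}_N):\Omega$ map isometrically onto $\Lambda(\mathcal{O}_1)\otimes_{\text{sym}}\cdots\otimes_{\text{sym}}\Lambda(\mathcal{O}_N)$ with cross-sector overlaps vanishing; I would keep this in reserve since it makes the Fock grading manifest.
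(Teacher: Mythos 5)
Your main route is genuinely different from the paper's. The paper proves the theorem by explicitly constructing a Wick-ordered family through the recursion
$\lambda_\emptyset(\mathcal{O}_1,\dots,\mathcal{O}_n)=\lambda(\mathcal{O}_1)\lambda_\emptyset(\mathcal{O}_2,\dots,\mathcal{O}_n)-\sum_{2\le i\le n}\lambda_\emptyset(\mathcal{O}_2,\dots,\mathcal{O}_1\mathcal{O}_i,\dots,\mathcal{O}_n)$,
showing by induction that $\lambda_\emptyset(\mathcal{O}_1,\dots,\mathcal{O}_n)=\sum_{N\geq n}\mathcal{O}_1\otimes_\text{sym}\cdots\otimes_\text{sym}\mathcal{O}_n\otimes_\text{sym}1^{\otimes_\text{sym}(N-n)}$, and then computing, for mean-zero $\mathcal{O}_i$, the overlaps $\delta_{nm}\sum_{\pi\in S_n}\prod_i\omega(\widetilde{\mathcal{O}}_i^\dagger\mathcal{O}_{\pi(i)})$, which reproduce the symmetric-tensor inner product sector by sector --- this is precisely the noncommutative Charlier/chaos decomposition you sketch as your ``reserve'' plan, so your fallback is essentially the paper's proof. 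Your primary argument instead matches generating functionals: $\varphi_\omega(\Gamma(v))=e^{\omega(v-1)}$ together with $\Gamma(v_1)\Gamma(v_2)=\Gamma(v_1v_2)$ gives the Poisson-side overlap, the exponential-vector formula gives the Fock-side overlap, and the two agree on unitaries because $\|\Lambda(u)\|^2=\omega(1)$. This buys a much shorter isometry step and makes the intertwining $V\,\Gamma(v)=\Gamma_{\text{Fock}}(v)\,V$ with left multiplication manifest, at the price of a harder totality lemma; the paper's construction makes the Fock grading and cross-sector orthogonality manifest, at the price of the recursion and a somewhat delicate treatment of the identity ($\Lambda(1)$) direction.

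Two points in your main route need repair. First, your proposed fix for the crux --- totality of $\{\ket{W(\Lambda(u))}:u\in U(\mathcal{N})\}$ --- fails as stated: ``approximating a general $\xi\in H$ in norm by complex combinations of the $\Lambda(u)$'' is useless here because $\xi\mapsto\ket{W(\xi)}$ is nonlinear, so norm-approximating $\xi$ by linear combinations of $\Lambda(u)$'s says nothing about approximating $\ket{W(\xi)}$ by vectors in the closed span of your restricted family; note $\{\Lambda(u)\}$ lies on the sphere of radius $\omega(1)^{1/2}$ and is nowhere near dense. The correct repair is the differentiation argument your ``analyticity'' remark gestures at: if $\psi\perp e(\Lambda(u))$ for all unitaries $u$, then the entire function $(t_1,\dots,t_n)\mapsto\langle\psi,e(\Lambda(e^{it_1\mathcal{O}_1}\cdots e^{it_n\mathcal{O}_n}))\rangle$ vanishes identically, and mixed derivatives at zero (with an induction on order to separate the terms where several derivatives hit the same factor) give $\psi\perp a^\dagger(\Lambda(x_1))\cdots a^\dagger(\Lambda(x_n))\,e(\Lambda(1))$ for all $x_j\in\mathcal{N}$; conjugating by the Weyl displacement $W(-\Lambda(1))$, which shifts each $a^\dagger(f)$ by a scalar and sends $e(\Lambda(1))$ to the vacuum up to normalization, reduces this to totality of number vectors over the dense subspace $\Lambda(\mathcal{N})$, whence $\psi=0$. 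Second, a small but genuine slip: your $V$ does not send $\Omega$ to the Fock vacuum; taking $u=1$ gives $V\Omega=\ket{W(\Lambda(1))}=\ket{W(\omega^{1/2})}$, the coherent vector over the purification of $\omega$ --- which is in fact exactly consistent with the paper's Section 4, where the Poisson state is implemented on the Fock space by the bipartite coherent state rather than the vacuum. With these two corrections your argument goes through and is a legitimate alternative to the paper's proof.
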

We construct a basis for the GNS Hilbert space explicitly using an inductive procedure\footnote{This inductive construction is similar to the classical moment-to-cumulant transformation.}. Throughout the construction, all operators $\mathcal{O}$ are assumed to have zero singleton expectation value: $\omega(\mathcal{O})= 0$ \footnote{When $\omega$ is finite, this is not restrictive since we can always subtract the expectation value from the operator, i.e. $\mathcal{O}\mapsto \mathcal{O} - \omega(\mathcal{O})\mathbbm{1}$.}. Consider the following recursive formula:
\begin{equation}\label{equation:recursion}
    \lambda_\emptyset(\mathcal{O}_1,...,\mathcal{O}_n) := \lambda(\mathcal{O}_1)\lambda_\emptyset(\mathcal{O}_2,...,\mathcal{O}_n) - \sum_{2\leq i\leq n}\lambda_\emptyset(\mathcal{O}_2,...,\mathcal{O}_1\mathcal{O}_i,...,\mathcal{O}_n)
\end{equation}
For the base case $n = 1$, we define:
\begin{equation}
    \lambda_\emptyset(\mathcal{O}) := \lambda(\mathcal{O})
\end{equation}
Using the definition of $\lambda(\mathcal{O}) = \sum_{N\geq 0}\sum_{1\leq j\leq N}1\otimes...\underbrace{\otimes \mathcal{O}\otimes}_\text{the j-th position}...\otimes 1$, the following formula for $\lambda_\emptyset(\mathcal{O}_1,...,\mathcal{O}_n)$ holds:
\begin{equation}\label{equation:generalbasis}
    \lambda_\emptyset(\mathcal{O}_1,...,\mathcal{O}_n):= \sum_{N\geq n}\lambda_{\emptyset,N}(\mathcal{O}_1,...\mathcal{O}_n)  = \sum_{N\geq n}\mathcal{O}_1\otimes_\text{sym}...\otimes_\text{sym}\mathcal{O}_n\otimes_\text{sym} 1^{\otimes_\text{sym}(N - n)}
\end{equation}
where $\otimes_\text{sym}$ is the symmetric tensor product and $\lambda_{\emptyset,N}(\mathcal{O}_1,...,\mathcal{O}_n)$ denotes the $N$-th degree component of $\lambda_\emptyset(\mathcal{O}_1,...,\mathcal{O}_n)$. Notice that there is no contribution from degrees less than $n$. To be more concrete, we give some examples of $\lambda_\emptyset(\mathcal{O}_1,...\mathcal{O}_n)$ when $n$ is small:
\begin{enumerate}
    \item When $n = 2$, the first non-zero term is:
    \begin{equation}
        \lambda_{\emptyset, 2}(\mathcal{O}_1,\mathcal{O}_2) = \mathcal{O}_1\otimes_\text{sym}\mathcal{O}_2 = \mathcal{O}_1\otimes\mathcal{O}_2 + \mathcal{O}_2 \otimes\mathcal{O}_1
    \end{equation}
    \begin{align}
        \begin{split}
            \lambda_{\emptyset, 3}(\mathcal{O}_1,\mathcal{O}_2) = \mathcal{O}_1\otimes_\text{sym}\mathcal{O}_2\otimes_\text{sym} 1 &= \mathcal{O}_1\otimes\mathcal{O}_2 \otimes 1+ \mathcal{O}_2\otimes\mathcal{O}_1\otimes 1 
            \\& + \mathcal{O}_1\otimes 1\otimes\mathcal{O}_2 + \mathcal{O}_2 \otimes1\otimes\mathcal{O}_1
            \\& +1\otimes\mathcal{O}_1\otimes\mathcal{O}_2 + 1\otimes\mathcal{O}_2\otimes\mathcal{O}_1
        \end{split}
    \end{align}
    In general, we need to sum over all possible ways of assigning 2 operators to $N$ slots where each operator is assigned a different slots. For example when $N = 3$, there are $\binom{3}{2}\times 2! = 6$ distinct assignments:
    \begin{equation}
        (1,2)\mapsto(1,2), (2,1), (1,3), (3,1), (2,3), (3,2)
    \end{equation}Denote the set of such set maps as $\mathcal{P}_{2,N}:=\{\vec{j}:[2]\rightarrow[N]\text{s.t.}j_1 \neq j_2\}$, then we can write:
    \begin{equation}
        \lambda_{\emptyset, N}(x_1,x_2) = \sum_{\vec{j}\in\mathcal{P}_{2,N}}1\otimes...\underbrace{\otimes x_1\otimes}_{j_1 = k\text{ ,k-th position}}...\underbrace{\otimes x_2\otimes}_{j_2 = \ell\text{ ,$\ell$-th position}}...\otimes 1
    \end{equation}
    \item When $n = 3$, the first non-zero term is:
    \begin{equation}
        \lambda_{\emptyset, 3}(x_1,x_2,x_3) = \sum_{\pi\in\mathcal{S}_3}x_{\pi(1)}\otimes x_{\pi(2)}\otimes x_{\pi(3)}
    \end{equation}
    where $\mathcal{S}_3$ is the permutation group of $3$ elements. For general terms, an analogous formula as $\lambda_{\emptyset, N}(x_1,x_2)$ holds. Denote the set of all possible ways of assigning 3 operators to $N$ slots as $\mathcal{P}_{3,N}$. Then we have:
    \begin{align}
        \begin{split}
            \lambda_{\emptyset, N}&(x_1,x_2,x_3) \\&=  \sum_{\vec{j}\in\mathcal{P}_{3,N}}1\otimes...\underbrace{\otimes x_1\otimes}_{j_1 = k\text{ ,k-th position}}...\underbrace{\otimes x_2\otimes}_{j_2 = \ell\text{ ,$\ell$-th position}}...\underbrace{\otimes x_3\otimes}_{j_3 = m\text{ ,m-th position}}...\otimes 1
        \end{split}
    \end{align}
\end{enumerate}
The general formula of $\lambda_\emptyset(\mathcal{O}_1,...,\mathcal{O}_n)$ follows from induction and recursive formula (c.f. Equation \ref{equation:recursion}). We sketch the argument below. By inductive hypothesis, $\lambda_\emptyset(\mathcal{O}_2,...,\mathcal{O}_n)$ only contains terms where each operator of $\mathcal{O}_2,...,\mathcal{O}_n$ is placed at different tensorial positions. No two operators are assigned the same position. The inductive formula (c.f. Equation \ref{equation:recursion}) involves a product $\lambda(\mathcal{O}_1)\lambda_\emptyset(\mathcal{O}_2,...,\mathcal{O}_n)$. This product contains two types of terms:
\begin{enumerate}
    \item The newly added operator $\mathcal{O}_1$ is assigned to a previously empty position \footnote{Empty positions are occupied by the identity.}.
    \item The newly added operator $\mathcal{O}_1$ is assigned to a previously occupied position. Suppose that position is occupied with the operator $\mathcal{O}_i$ ($2\leq i\leq n$), then the addition of $\mathcal{O}_1$ modifies the occupant operator to $\mathcal{O}_1\mathcal{O}_i$.  
\end{enumerate}
Then by general symmetric argument, we can see that the second type of terms sum up to exactly $\sum_{2\leq i\leq n}\lambda_\emptyset(\mathcal{O}_2,...,\mathcal{O}_1\mathcal{O}_i,...,\mathcal{O}_n)$. This is canceled by the second term in the recursive formula. The remaining expression is precisely Equation \ref{equation:generalbasis}.

We can calculate the inner product $\bra{\varphi_\omega^{1/2}}\lambda_\emptyset(\widetilde{\mathcal{O}}_1,..,\widetilde{\mathcal{O}}_m), \lambda_\emptyset(\mathcal{O}_1,\cdots,\mathcal{O}_n)\ket{\varphi_\omega^{1/2}}$. Without loss of generality, we assume $m \geq n$:
\begin{align}
    \begin{split}
        \bra{\varphi_\omega^{1/2}}&\lambda_\emptyset(\widetilde{\mathcal{O}}_1,..,\widetilde{\mathcal{O}}_m), \lambda_\emptyset(\mathcal{O}_1,\cdots,\mathcal{O}_n)\ket{\varphi_\omega^{1/2}} \\&= \sum_{N\geq m}\frac{e^{-\omega(1)}}{N!}\omega^{\otimes N}\bigg(\lb\widetilde{\mathcal{O}}^\dagger_1\otimes_\text{sym}\cdots\otimes_\text{sym}\widetilde{\mathcal{O}}^\dagger_m\otimes_\text{sym} 1^{\otimes_\text{sym}(N-m)}\rb\\&\cdot\lb\mathcal{O}_1\otimes_\text{sym}\cdots\otimes_\text{sym}\mathcal{O}_n\otimes_\text{sym} 1^{\otimes_\text{sym}(N-n)}\rb\bigg)
        \\&=\sum_{N\geq m}\frac{e^{-\omega(1)}}{N!}\delta_{nm}(n!\binom{N}{n})\omega(1)^{N-n}\sum_{\pi\in S_n}\prod_{1\leq i\leq n}\omega(\widetilde{\mathcal{O}}^\dagger_i\mathcal{O}_{\pi(i)})
        \\&=\delta_{nm}\sum_{\pi\in S_n}\prod_{1\leq i\leq n}\omega(\widetilde{\mathcal{O}}^\dagger_i\mathcal{O}_{\pi(i)})
    \end{split}
\end{align}
where we have used the following two facts:
\begin{enumerate}
    \item Because the singleton expectation vanishes, the non-zero contributions to $\omega^{\otimes N}((\widetilde{\mathcal{O}}^\dagger_1\otimes_\text{sym}\cdots\otimes_\text{sym}\widetilde{\mathcal{O}}^\dagger_m\otimes_\text{sym}1^{\otimes_\text{sym}(N-m)})(\mathcal{O}_1\otimes_\text{sym}\cdots\otimes_\text{sym}\mathcal{O}_n\otimes_\text{sym} 1^{\otimes_\text{sym}(N-n)}))$ come from pairing $\widetilde{\mathcal{O}}^\dagger_j$ with $\mathcal{O}_i$. Therefore, this expectation value is non-zero if and only if $n = m$
    \item The factor $n!$ in the third line of the derivation comes from the symmetric tensor product. For example $\omega^{\otimes 2}((\widetilde{\mathcal{O}}^\dagger_1\otimes_\text{sym} \widetilde{\mathcal{O}}^\dagger_2)(\mathcal{O}_1\otimes_\text{sym}\mathcal{O}_2)) = \omega^{\otimes 2}((\widetilde{\mathcal{O}}^\dagger_1\otimes \widetilde{\mathcal{O}}^\dagger_2 + \widetilde{\mathcal{O}}^\dagger_2 \otimes \widetilde{\mathcal{O}}^\dagger_1)(\mathcal{O}_1\otimes \mathcal{O}_2 + \mathcal{O}_2\otimes \mathcal{O}_1)) = 2\omega(\widetilde{\mathcal{O}}^\dagger_1\mathcal{O}_1)\omega(\widetilde{\mathcal{O}}^\dagger_2\mathcal{O}_2) + 2\omega(\widetilde{\mathcal{O}}^\dagger_1\mathcal{O}_2)\omega(\widetilde{\mathcal{O}}^\dagger_2\mathcal{O}_1)$. And the binomial factor $\binom{N}{n}$ comes from choosing $n$ positions out of $N$ to place the nontrivial operators. 
\end{enumerate}
This calculation shows that the vectors $\{\lambda_\emptyset(\mathcal{O}_1,...,\mathcal{O}_n)\ket{\varphi_\omega^{1/2}}\}$ form a basis for the symmetric Fock space $\mathcal{F}_\text{sym}\lb L_2(\mathcal{N},\omega)\rb$ \footnote{Notice that this basis of $\lambda_\emptyset$-vectors is \textit{not} a finite-dimensional modification of the familiar symmetric basis in the sense that the difference between $\lambda_\emptyset(\mathcal{O}_1,\cdots,\mathcal{O}_n)\ket{\varphi_\omega^{1/2}}$ and $\mathcal{O}_1\otimes_\text{sym}\cdots\otimes_\text{sym}\mathcal{O}_n\ket{\varphi_\omega^{1/2}}$ is a infinite sum of mutually orthogonal vectors:
\begin{eqnarray}
&&\lambda_\emptyset(\mathcal{O}_1,\cdots,\mathcal{O}_n)\ket{\varphi_\omega^{1/2}} - \mathcal{O}_1\otimes_\text{sym}\cdots\otimes_\text{sym}\mathcal{O}_n\ket{\varphi_\omega^{1/2}} \nn\\
&&= \sum_{N\geq n+1}\mathcal{O}_1\otimes_\text{sym}\cdots\otimes_\text{sym} \mathcal{O}_n\otimes_\text{sym} 1^{\otimes_\text{sym}(N-n)}\ket{\varphi_\omega^{1/2}} 
\end{eqnarray}
From this perspective, the $\lambda_\emptyset$-basis cannot be obtained by a small (i.e., finite-rank or compact) perturbation of the usual symmetric basis.}.

\subsubsection{Poissonization and unital CP maps}

We move on to discuss in what sense Poissonization preserves unital CP maps. If the unital CP map further preserves the input weight (finite and faithful):
\begin{equation}
    \omega_M\circ \Psi = \omega_N
\end{equation}
Then there exists a unital CP map between the two corresponding Poisson algebras:
\begin{equation}
    \Gamma(\Psi):=\sum_{N\geq 0}\Psi^{\otimes N}:\mathbb{P}_{\omega_N}\mathcal{N}\rightarrow\mathbb{P}_{\omega_M}\mathcal{M}
\end{equation}
Concretely, this lifted unital CP map acts on integrated Poisson quantized operators in the obvious way:
\begin{equation}
    \Gamma(\Psi)\lb\Gamma(e^{i\mathcal{O}})\rb = \Gamma\lb\Psi(e^{i\mathcal{O}})\rb
\end{equation}
It also acts on the Poisson quantized operators in the obvious way:
\begin{equation}
    \Gamma(\Psi)\lb\lambda(\mathcal{O})\rb = \lambda\lb\Psi(\mathcal{O})\rb
\end{equation}
The lifted unital CP map preserves the Poisson state:
\begin{equation}
    \varphi_{\omega_M}\circ\Gamma(\Psi) = \varphi_{\omega_N}
\end{equation}

A special example of a unital CP map is a normal $*$-homomorphism. Thus if we have a normal $*$-homomorphism $\pi:\mathcal{N}\rightarrow\mathcal{M}$ that preserves the weight: $\omega_M\circ\pi = \omega_N$, then we have a lifted normal $*$-homomorphism: $\Gamma(\pi):\mathbb{P}_{\omega_N}\mathcal{N}\rightarrow\mathbb{P}_{\omega_M}\mathcal{M}$ such that the Poisson state is preserved: $\varphi_{\omega_M}\circ\Gamma(\pi) = \varphi_{\omega_N}$.

\subsubsection{Modular automorphism of Poisson state}

Finally, we explain the relation between the modular automorphism of the Poisson state and the modular automorphism of the input weight. The Poisson quantization map intertwines these two modular automorphisms in the sense that:
\begin{equation}
    \sigma_t^{\varphi_\omega}(\lambda(\mathcal{O})) = \lambda(\sigma_t^\omega(\mathcal{O}))
\end{equation}
where $\sigma_t^{\varphi_\omega}$ is the modular automorphism of the Poisson state and $\sigma_t^\omega$ is the modular automorphism of the input weight.

In addition, the Poisson quantization map also intertwines the modular conjugations:
\begin{equation}
    J_{\varphi_\omega}(\lambda(\mathcal{O})) = \lambda(J_\omega(\mathcal{O}))
\end{equation}
where $J_{\varphi_\omega}$ is the modular conjugation of the Poisson state and $J_\omega$ is the modular conjugation of the input weight.

This intertwining property allows us to calculate the spectrum of the modular operator $\Delta_{\varphi_\omega}$ from the spectrum of the modular operator $\Delta_\omega$. This is useful to determine the type of the Poisson algebra. We will briefly mention some results along this line in the next section. 

\subsection{General statement of Poissonization}

In this final subsection, we collect the general theorem of Poissonization. The discussions in this work have focused on specific examples of Poissonization. It fails to emphasize the functoriality of Poissonization. As mentioned before, Poissonization is extremely general. Just like the second quantization, which can be applied to any one-particle Hilbert space, Poissonization can be applied to any one-particle operator algebra in the following sense:
\begin{theorem}
    For any von Neumann algebra $\mathcal{N}$ and any normal faithful semifinite weight $\omega$, the Poisson algebra $\mathbb{P}_\omega\mathcal{N}$ and the Poisson state $\varphi_\omega$ are well-defined.
\end{theorem}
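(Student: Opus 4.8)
The plan is to build the pair $(\mathbb{P}_\omega\mathcal{N},\varphi_\omega)$ directly from the Poisson moment formula by an abstract GNS reconstruction, verifying well-definedness through positivity, determinacy, and independence of the approximation scheme. Since $\omega$ need not be finite, the first task is to fix the correct domain: I would let $\mathcal{O}$ range over the Tomita algebra $\mathfrak{T}_\omega\subset\mathcal{N}$ of $\sigma^\omega$-analytic elements lying, together with all their $\sigma^\omega_z$-translates, in $\mathfrak{n}_\omega\cap\mathfrak{n}_\omega^*$. On this $*$-algebra every product $\omega(\mathcal{O}_{i_1}\cdots\mathcal{O}_{i_k})$ occurring in the formula is finite, so $\varphi_\omega(\lambda(\mathcal{O}_1)\cdots\lambda(\mathcal{O}_n))=\sum_{\sigma\in\mathcal{P}_n}\prod_{A\in\sigma}\omega(\overrightarrow{\prod}_{i\in A}\mathcal{O}_i)$ defines an unambiguous linear functional on the free unital $*$-algebra $\mathcal{A}_\omega$ generated by symbols $\lambda(\mathcal{O})$ subject to linearity, $\lambda(\mathcal{O})^\dagger=\lambda(\mathcal{O}^\dagger)$, and $[\lambda(\mathcal{O}),\lambda(\widetilde{\mathcal{O}})]=\lambda([\mathcal{O},\widetilde{\mathcal{O}}])$.

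The crucial analytic input is positivity, $\varphi_\omega(P^\dagger P)\geq 0$ for all $P\in\mathcal{A}_\omega$. Rather than approximate $\omega$ corner-by-corner, I would exploit the infinitely divisible (L\'evy--Khintchine) shape of the generating functional already computed in the finite case, $\varphi_\omega(\Gamma(e^{i\mathcal{O}}))=\exp(\omega(e^{i\mathcal{O}}-1))$. The exponent $\mathcal{O}\mapsto\omega(e^{i\mathcal{O}}-1)$ is conditionally positive definite, and by the noncommutative Schoenberg--Sch\"urmann exponentiation theorem its exponential is positive definite; this reproduces the Fock representation on $\mathcal{F}_\text{sym}(L_2(\mathcal{N},\omega))$ intrinsically and is insensitive to finiteness of $\omega$. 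Performing GNS with respect to $\varphi_\omega$ then yields a Hilbert space $\mathcal{H}_{\varphi_\omega}$, a cyclic vector $\ket{\varphi_\omega^{1/2}}$, and symmetric operators $\lambda(\mathcal{O})$ for $\mathcal{O}^\dagger=\mathcal{O}$.

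To upgrade this $*$-representation to a von Neumann algebra, each $\lambda(\mathcal{O})$ must be essentially self-adjoint, so that Stone's theorem produces the unitaries $\Gamma(e^{it\mathcal{O}})=e^{it\overline{\lambda(\mathcal{O})}}$; equivalently, the noncommutative moment problem must be determinate. Here I would bound the moment growth directly from the Poisson formula: the $n$-th moment is a sum over the $B_n$ set partitions, each term controlled by $\|\mathcal{O}\|$ and the finite quantities $\omega(\mathcal{O}^j)$, and the sub-factorial growth of the Bell numbers keeps the sequence inside the Carleman determinacy range. This is exactly the hypothesis of the noncommutative Hamburger moment theorem of \cite{MAW}, which simultaneously guarantees essential self-adjointness and that the functional is realized by a \emph{unique} (up to spatial isomorphism) von Neumann algebra $\mathbb{P}_\omega\mathcal{N}:=\{\Gamma(e^{i\mathcal{O}})\}''$ carrying the normal state $\varphi_\omega(\cdot)=\bra{\varphi_\omega^{1/2}}\cdot\ket{\varphi_\omega^{1/2}}$; this uniqueness is what makes the symmetric-tensor, ultraproduct, and bipartite constructions agree. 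Faithfulness and separation of $\ket{\varphi_\omega^{1/2}}$ I would obtain by verifying the KMS condition for $\varphi_\omega$ against the candidate flow $\lambda(\mathcal{O})\mapsto\lambda(\sigma_t^\omega(\mathcal{O}))$ directly at the level of the moment formula, placing $\mathbb{P}_\omega\mathcal{N}$ in standard form.

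The hardest part will be the unbounded, non-tracial regime, where determinacy and the Tomita-algebra domain interact delicately: one must confirm that $\mathfrak{T}_\omega$ is a core on which the exponentiated operators act consistently, and that the generated algebra and state are independent of any exhaustion of $\omega$. When $\omega$ is only semifinite and not strictly semifinite, the inductive-limit scheme of Subsection \ref{subsection:factor} breaks down because $\omega$-finite projections need not lie in the centralizer $\mathcal{N}_\omega$; I would circumvent this either by reducing to the tracial case through Haagerup's continuous decomposition of $(\mathcal{N},\omega)$ and Poissonizing the resulting trace equivariantly under the modular flow, or by the conditionally-positive-definite argument above, which never invokes centralizer projections. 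Checking that these routes return the same $(\mathbb{P}_\omega\mathcal{N},\varphi_\omega)$ as the finite-weight construction is the final and most technical checkpoint.
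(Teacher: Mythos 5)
Your proposal is correct in outline, but it proceeds by a genuinely different route from the paper, which in fact states this theorem \emph{without} a self-contained proof: the paper establishes existence only through explicit constructions --- the symmetric tensor algebra and bipartite coherent states for finite $\omega$, the ultraproduct Bernoulli approximation for finite traces, and the inductive limit over $\omega$-finite centralizer projections for strictly semifinite weights --- and handles uniqueness and equivalence exactly as you do, via the Tomita-algebra domain, the Cauchy--Schwarz estimate with the seminorm $|||\mathcal{O}|||$, the Bell-number bound $B_n\leq Ce^nn^n$, and the noncommutative Hamburger moment theorem of \cite{MAW} (Appendix \ref{appendix:hamburger}). Your determinacy, essential self-adjointness, and uniqueness steps therefore coincide with the paper's machinery. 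What is genuinely new in your argument is the existence/positivity step: replacing the explicit Fock and ultraproduct models by Schoenberg-type exponentiation of the conditionally positive definite functional $u\mapsto\omega(u-1)$ on unitaries $u=e^{i\mathcal{O}}$ with $\mathcal{O}$ in the Tomita algebra (note $e^{i\mathcal{O}}-1\in\mathfrak{n}_\omega\cap\mathfrak{n}_\omega^\dagger$ because $\mathfrak{n}_\omega$ is a left ideal, and for $\sum_ic_i=0$ the divergent $\omega(1)$ terms cancel \emph{inside} the argument of $\omega$ before the weight is applied). This buys exactly what the paper leaves open: the argument is insensitive to strict semifiniteness, whereas the paper's inductive-limit scheme needs modular-invariant $\omega$-finite projections in the centralizer and only remarks that ``other constructions exist'' for general semifinite weights. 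Conversely, the paper's concrete constructions deliver for free the Fock-space standard form, the $\lambda_\emptyset$-basis, and the hyperfiniteness and factoriality statements, which your abstract GNS route must recover after the fact.

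One step of your fallback plan would fail as stated: reducing to the tracial case through Haagerup's continuous decomposition does not obviously return $(\mathbb{P}_\omega\mathcal{N},\varphi_\omega)$. Poissonizing the dual trace on $\mathcal{N}\rtimes_{\sigma^\omega}\mathbb{R}$ produces the Poisson algebra of a strictly larger input algebra, and there is no evident mechanism --- Takesaki duality does not visibly intertwine with the Poisson functor --- by which the dual action descends to carve $\mathbb{P}_\omega\mathcal{N}$ out of $\mathbb{P}_{\tau}\lb\mathcal{N}\rtimes_{\sigma^\omega}\mathbb{R}\rb$; at minimum the moment formulae differ, since trace-moments of crossed-product elements are not $\omega$-moments of elements of $\mathcal{N}$. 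Your conditionally-positive-definite argument must therefore carry the full weight of the general case, and for it you still owe two verifications you gesture at but do not supply: weak continuity of $t\mapsto\Gamma(e^{it\mathcal{O}})$ so that Stone's theorem applies, and the identification of the derived moments of the exponentiated state with the partition-sum Poisson moment formula. Both follow from the same Bell-number growth estimate, which furnishes analytic vectors, so this gap is reparable within your own framework rather than fatal.
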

\subsection{Noncommutative Hamburger moment theorem}\label{appendix:hamburger}

Throughout this work, we have presented several constructions of Poissonization. Yet we have not shown these constructions are equivalent. Here we quote the result that allows us to prove the equivalence of different constructions of Poissonization. This result is based on a noncommutative generalization of the classical Hamburger's moment theorem\cite{MAW}. This generalization is done in the framework of noncommutative probability theory\cite{quantStoch}. One of the key ingredients of noncommutative probability theory is the correlation function. In a sense, noncommutative probability theory aims to construct and study the properties of operator algebras using combinatorial structures in correlation functions. To demonstrate the use of noncommutative probability theory, we will revisit the well-known second quantization and reformulate its construction using tools from noncommutative probability theory. 

\paragraph{Hamburger moment problem and its reformulation in noncommutative probability theory:} This classical moment problem considers the reverse-engineering question whether a sequence of real numbers can \textit{uniquely} specify a probability measure on $\mathbb{R}$. Suppose two such probability measures have already been found $d\nu_1(x)$ and $d\nu_2(x)$, these two measures are the same if the \textit{moment growth condition} is satisfies:
\begin{equation}
    |\int_{-\infty}^\infty x^nd\nu_1(x)| = |\int_{-\infty}^\infty x^nd\nu_2(x)| = |\mu_n| \leq Ce^nn^n
\end{equation}
where $C>0$ is an absolute constant.

To introduce the noncommutative generalization, we first rethink the essential input to the classical Hamburger moment problem. The problem statement specifies a sequence of real numbers $\{\mu_1,\mu_2,\cdots\}$. One suspects these real numbers might be moments of some probability measure $d\nu(x)$. To make this suspicion mathematically precise, we need two ingredients. First, we consider the abstract algebra generated by two symbols $1$ and $x$. In fact, this algebra is nothing but the familiar algebra of polynomials: $\mathbb{C}[x]$. Note that there exists an adjoint operation on polynomials:
\begin{equation}
    (\sum_{0\leq n\leq N}\alpha_nx^n)^\dagger = \sum_{0\leq n\leq N}\overline{\alpha_n}x^n
\end{equation}
This means that $\mathbb{C}[x]$ is a unital $*$-associative algebra \footnote{Unital means there exists a multiplicative unit in the algebra. Associativity means that the multiplication is associative.}. In addition, the sequence of real numbers $\{\mu_1,\mu_2,\cdots\}$ defines a linear functional:
\begin{equation}
    \phi:\mathbb{C}[x]\rightarrow\mathbb{C}:\sum_{0\leq n\leq N}\alpha_nx^n\mapsto \sum_{0\leq n\leq N}\alpha_n\mu_n
\end{equation}
where $\mu_0$ is defined to be 1. Hence, this linear functional is normalized:
\begin{equation}
    \phi(1) = \mu_0 = 1
\end{equation}

The existence part of the Hamburger moment theorem states that there exists a probability measure on $\mathbb{R}$ whose moments are given by $\{\mu_0,\mu_1,\cdots\}$ if and only if the Hankel operator is positive definite:
\begin{equation}
    \begin{bmatrix}
        \mu_0&\mu_1&\mu_2&\cdots&\mu_N&\cdots\\
        \mu_1&\mu_2&\mu_3&\cdots&\mu_{N+1}&\cdots\\
        \mu_2&\mu_3&\mu_4&\cdots&\mu_{N+2}&\cdots\\
        \cdots\\
        \mu_N&\mu_{N+1}&\mu_{N+2}&\cdots\\
        \cdots
    \end{bmatrix}\geq0
\end{equation}
Equivalently, for any finite sequence of complex parameters $(\alpha_n)_{0\leq n\leq N}$, we have:
\begin{equation}
    \begin{bmatrix}
        \overline{\alpha_0}&\overline{\alpha_1}&\cdots&\overline{\alpha_N}&0&\cdots
    \end{bmatrix}\begin{bmatrix}
        \mu_0&\mu_1&\mu_2&\cdots&\mu_N&\cdots\\
        \mu_1&\mu_2&\mu_3&\cdots&\mu_{N+1}&\cdots\\
        \mu_2&\mu_3&\mu_4&\cdots&\mu_{N+2}&\cdots\\
        \cdots\\
        \mu_N&\mu_{N+1}&\mu_{N+2}&\cdots\\
        \cdots
    \end{bmatrix}
    \begin{bmatrix}
        \alpha_0\\\alpha_1\\\alpha_2\\\cdots\\\alpha_N\\0\\\cdots
    \end{bmatrix} \geq 0
\end{equation}
In other words:
\begin{equation}
    \sum_{0\leq n,m\leq N}\overline{\alpha_n}\alpha_m\mu_{n+m} \geq 0
\end{equation}
In terms of the linear functional $\phi$, we have:
\begin{equation}
    \phi((\sum_{0\leq n\leq N}\alpha_nx^n)^\dagger(\sum_{0\leq n\leq N}\alpha_nx^n)) = \sum_{0\leq n,m\leq N}\overline{\alpha_n}\alpha_m\phi(x^{n+m}) = \sum_{0\leq n,m\leq N}\overline{\alpha_n}\alpha_m\mu_{n+m} \geq 0
\end{equation}
In addition, since the measure is a probability measure, we have a normalization condition:
\begin{equation}
    \int_{-\infty}^\infty d\nu(x) = 1
\end{equation}
where $d\nu(x)$ is the probability measure whose moments are $\{\mu_0,\mu_1,\cdots\}$. This is exactly the normalization of the functional $\phi$. Therefore, the existence of a probability measure on $\mathbb{R}$ whose moments are $\{\mu_0,\mu_1,\cdots\}$ is equivalent to the linear functional $\phi$ being positive and unital:
\begin{equation}
    \phi(f(x)^\dagger f(x))\geq 0\text{ for any }f(x)\in\mathbb{C}[x]
\end{equation}
\begin{equation}
    \phi(1) = 1
\end{equation}
Given such a linear functional, the GNS construction gives a representation of the polynomial algebra on the Hilbert space $L^2(\mathbb{R},d\nu(x))$:
\begin{equation}
    \pi:\mathbb{C}[x]\rightarrow \mathcal{L}(L^2(\mathbb{R},d\nu(x))):f(x)\mapsto M_{f(x)}
\end{equation}
where $\mathcal{L}(L^2(\mathbb{R},d\nu(x)))$ is the space of linear operators densely defined on $L^2(\mathbb{R},d\nu(x))$, and $M_{p(x)}$ is the multiplication operator:
\begin{equation}
    (M_{f(x)}g)(x) = f(x)g(x)
\end{equation}
Notice that this is indeed the GNS construction because the inner product is given by the positive functional $\phi$:
\begin{align}
    \begin{split}
        ||\pi(f(x))1||^2 &= ||M_{f(x)}1||^2 = ||f(x)||^2 =\int_{-\infty}^\infty |f(x)|^2D\nu(x) = \sum_{0\leq n\leq N}\overline{\alpha_n}\alpha_m\int_{-\infty}^\infty x^{n+m}d\nu(x)
        \\
        &=\sum_{0\leq n,m\leq N}\overline{\alpha_n}\alpha_m\mu_{n+m} = \phi(f(x)^\dagger f(x)) < \infty
    \end{split}
\end{align}
where $f(x)=\sum_{0\leq n\leq N}\alpha_nx^n$. In particular, all polynomials are elements of the Hilbert space $L^2(\mathbb{R},d\nu(x))$. The representation $\pi$ is unital:
\begin{equation}
    \pi(1) = id
\end{equation}
And it preserves the adjoint:
\begin{equation}
    \pi(f(x)^\dagger) = M_{f(x)}^\dagger
\end{equation}
Moreover, although the multiplication operator $M_{f(x)}$ may be unbounded in general, all polynomials are in its domain, because all moments of the probability distribution are finite:
\begin{equation}
    ||M_{f(x)}g(x)||^2 = \int_{-\infty}^\infty |f(x)g(x)|^2D\nu(x) = \phi(|f(x)g(x)|^2) = \sum_{0\leq n,m\leq N}\overline{\beta_n}\beta_m\mu_{n+m} < \infty
\end{equation}
where $f(x)g(x) = \sum_{0\leq n\leq N}\beta_nx^n$. 

The pair $(\mathbb{C}[x],\phi)$ is a basic example of a noncommutative probability space. The representation $\pi$ is a basic example of the GNS representation of a noncommutative probability space. More generally, a noncommutative probability space is a pair of data:
\begin{enumerate}
    \item A unital $*$-associative algebra $\mathcal{A}$
    \item And a normalized positive linear functional $\phi:\mathcal{A}\rightarrow\mathbb{C}$ such that $\phi(x^\dagger x)\geq 0$ and $\phi(1) = 1$
\end{enumerate}

Previously we construct the GNS representation of $(\mathbb{C}[x],\phi)$ on the Hilbert space $L^2(\mathbb{R},d\nu)$. However, the choice of Hilbert space is not unique. Different choices result in different models of the same noncommutative probability space $(\mathbb{C}[x],\phi)$. For example, one may find another probability measure $d\nu'(x)$ on $\mathbb{R}$ whose moments are $\{\mu_0,\mu_1,\cdots\}$. Then the GNS representation of $(\mathbb{C}[x],\phi)$ can act on the Hilbert space $L^2(\mathbb{R},d\nu'(x))$. Within each GNS representation, the noncommutative probability space generates a von Neumann algebra. This is the algebra generated by (the spectral projections of) the basis elements:
\begin{equation}
    \overline{\pi_\nu(\mathbb{C}[x])} := \langle\pi_\nu(x), \pi_\nu(1)\rangle\subset\mathbb{B}(L^2(\mathbb{R},d\nu))
\end{equation}
where $\{1,x\}$ are the generators that generate the polynomial algebra and $\pi_\nu$ is the GNS representation of $(\mathbb{C}[x],\phi)$. The probability measure $\nu$ induces a state on this von Neumann algebra:
\begin{equation}
    \phi_\nu\bigg(\pi_\nu\big(f(x)\big)\bigg) := \int_{-\infty}^\infty \bigg(\pi_\nu\big(f(x)\big)1\bigg) d\nu(x) = \int_{-\infty}^\infty M_{f(x)}1d\nu(x) = \int_{-\infty}^\infty f(x)d\nu(x)
\end{equation}

The uniqueness of the probability measure can now be translated into the uniqueness of the von Neumann algebra generated by the noncommutative probability space $(\mathbb{C}[x],\phi)$. The probability measure is unique if and only if there exists an isomorphism of von Neumann algebras:
\begin{equation}
    \psi:\overline{\pi_\nu(\mathbb{C}[x])}\rightarrow\overline{\pi_{\nu'}(\mathbb{C}[x])}
\end{equation}
such that the map preserves the state: $\phi_{\nu'}\circ\psi = \phi_\nu$. And by the Hamburger moment theorem, the measure is unique if and only if the moment growth condition is satisfied:
\begin{equation}
    \phi(x^n) = \int_{-\infty}^\infty x^nd\nu(x) \leq Ce^nn^n
\end{equation}
The noncommutative probability space $(\mathbb{C}[x],\phi)$ is generated by a single non-trivial element $x$. In general, a noncommutative probability space $(\mathcal{A},\phi)$ can be generated by a set $S:=\{s_i\in\mathcal{A}\}_{i\in I}$. And a noncommutative moment is defined to be an expression of the form:
\begin{equation}
    \phi(s_1s_2\cdots s_n)
\end{equation}

We are finally ready to state the noncommutative generalization of the Hamburger moment theorem \cite{MAW}:
\begin{theorem}
    Let $(\mathcal{A},\phi)$ be a noncommutative probability space. Let $\pi_1,\pi_2$ be two representations of $(\mathcal{A},\phi)$. Then there exists a state-preserving von Neumann algebra isomorphism:
    \begin{equation}
        \psi:\overline{\pi_1(\mathcal{A})}\rightarrow\overline{\pi_2(\mathcal{A})}
    \end{equation}
    if and only if there exists a length function on a generating set $S$ of $\mathcal{A}$:
    \begin{equation}
        \ell:S\rightarrow(0,\infty)
    \end{equation}
    such that the noncommutative \textit{moment growth} condition is satisfied:
    \begin{equation*}
        |\phi(s_1\cdots s_n)| \leq Cn^n\prod_{1\leq i\leq n}\ell(s_i)
    \end{equation*}
\end{theorem}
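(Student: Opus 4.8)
The plan is to isolate the one genuinely analytic ingredient — essential self-adjointness of the generators — and to let the algebraic GNS structure carry everything else. After rewriting, the moment growth bound is exactly the analytic-vector estimate required to invoke Nelson's theorem; once each generator is essentially self-adjoint on a common core, the two generated von Neumann algebras are automatically intertwined by the canonical GNS isometry. Here a \emph{representation of $(\mathcal{A},\phi)$} means a $\ast$-representation by closable operators with a cyclic vector $\Omega$ such that $\langle\Omega,\pi(a)\Omega\rangle=\phi(a)$; the classical models $L^2(\mathbb{R},d\nu)$ and $L^2(\mathbb{R},d\nu')$ for two representing measures of the same moment sequence are the guiding picture. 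I treat the two implications separately, sufficiency first.

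\emph{Sufficiency.} First I would replace $S$ by its self-adjoint refinement, adjoining $\tfrac{1}{2}(s+s^\dagger)$ and $\tfrac{1}{2i}(s-s^\dagger)$ for each $s$, so all generators are self-adjoint in $\mathcal{A}$; the bound survives with a rescaled $\ell$. In a representation $\pi_i$ write $\Omega_i=\pi_i(1)$ and $\mathcal{D}_i=\pi_i(\mathcal{A})\Omega_i$ for the dense invariant domain, on which each $\pi_i(s)$ is symmetric. The central estimate is that for every word $w$ the vector $\pi_i(w)\Omega_i$ is analytic for $\pi_i(s)$: since $\|\pi_i(s)^n\pi_i(w)\Omega_i\|^2=\phi(w^\dagger s^{2n}w)$, the \emph{mixed} growth bound gives $\|\pi_i(s)^n\pi_i(w)\Omega_i\|\le C_w\,(2e\,\ell(s))^n\,n!$ after applying Stirling to $(2n+2\deg w)^{2n+2\deg w}$, so the orbit has positive radius of convergence. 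This is precisely why the hypothesis must control words mixing $s$ with other generators and not merely the diagonal moments $\phi(s^{2n})$. Thus $\mathcal{D}_i$ is a dense set of analytic vectors for the symmetric operator $\pi_i(s)$, and Nelson's analytic vector theorem makes $\pi_i(s)$ essentially self-adjoint; write $\overline{\pi_i(s)}$ for its closure and $U^s_i(t)=e^{it\overline{\pi_i(s)}}$ for the induced one-parameter group. (The same estimate yields $\phi(s^{2k})^{1/2k}=O(k)$, i.e. Carleman's condition, reconfirming determinacy on the single-generator level.)

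Next the joint structure is handled for free. The map $W_0:\pi_1(a)\Omega_1\mapsto\pi_2(a)\Omega_2$ is isometric because both inner products equal $\phi(a^\dagger a)$, and extends by cyclicity to a unitary $W:\mathcal{H}_1\to\mathcal{H}_2$ with $W\Omega_1=\Omega_2$ and $W\pi_1(a)=\pi_2(a)W$ on $\mathcal{D}_1$. In particular $W$ carries the core $\mathcal{D}_1$ onto the core $\mathcal{D}_2$ while intertwining $\pi_1(s),\pi_2(s)$; since both are essentially self-adjoint, unitary conjugation of closures gives $W\,\overline{\pi_1(s)}\,W^\ast=\overline{\pi_2(s)}$, hence $W\,U^s_1(t)\,W^\ast=U^s_2(t)$ for all $t$. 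As the spectral projections of $\overline{\pi_i(s)}$, with $s$ ranging over $S$, generate $\overline{\pi_i(\mathcal{A})}$, the automorphism $\psi=\mathrm{Ad}(W)$ maps $\overline{\pi_1(\mathcal{A})}$ onto $\overline{\pi_2(\mathcal{A})}$ and preserves the cyclic state — the asserted state-preserving isomorphism.

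\emph{Necessity, and the main obstacle.} For the converse I would argue by contraposition: if no generating set carries a length function meeting the bound, then after adjusting $S$ one can exhibit a self-adjoint generator $s$ whose one-variable moment sequence $\{\phi(s^{2k})\}$ is Hamburger-indeterminate, so that $\pi(s)$ on the canonical GNS core is not essentially self-adjoint and admits at least two distinct self-adjoint extensions — the classical indeterminate phenomenon, where different extensions realize different representing measures with identical moments. The substantive difficulty, which I expect to be the hard part, is promoting a single non-self-adjoint generator into two genuine representations of the \emph{whole} algebra $(\mathcal{A},\phi)$: one must extend a chosen self-adjoint extension of $\overline{\pi(s)}$ to a closable $\ast$-representation of all of $\mathcal{A}$ whose cyclic vector still reproduces $\phi$ on every word. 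I would attempt this via a Naimark/Stinespring-type dilation of the restricted functional, arranging two choices (for instance a Friedrichs versus a Krein extension) whose spectral measures for $\overline{\pi(s)}$ differ, so that the generated von Neumann algebras cannot be intertwined by any state-preserving $\psi$. Checking that the dilated representations restrict to $\phi$ on the entire free structure, not merely on powers of $s$, is the delicate point where noncommutativity genuinely intervenes.
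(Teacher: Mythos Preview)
The paper does not prove this theorem; it is quoted from the reference \cite{MAW} and then \emph{applied} to the Poisson moment formula. So there is no proof in the paper to compare against, and your task was really to reconstruct an argument the authors outsource.

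Your sufficiency argument is the right one and is essentially how such results are proved: the growth bound $|\phi(s_1\cdots s_n)|\le Cn^n\prod_i\ell(s_i)$ is precisely an analytic-vector estimate on the GNS domain, Nelson's theorem gives essential self-adjointness of each $\pi_i(s)$ on the common core $\pi_i(\mathcal{A})\Omega_i$, and the canonical GNS unitary $W$ then intertwines closures and hence the generated von Neumann algebras. That part is sound.

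The necessity direction has a more basic problem than the one you flag. Even in the classical one-variable case, determinacy of the Hamburger moment problem is \emph{not} equivalent to a Carleman-type growth bound: Carleman's condition is sufficient but not necessary, and there exist determinate moment sequences violating any bound of the form $|\mu_n|\le Cn^n e^n$. So the ``only if'' as literally stated cannot hold without further qualification, and your contrapositive strategy --- failure of the growth bound forces indeterminacy of some $\phi(s^{2k})$ --- is simply false at the first step. The statement in \cite{MAW} presumably either asserts only the sufficiency direction that the paper actually uses, or packages the equivalence differently (e.g.\ in terms of essential self-adjointness rather than a numerical growth rate). You should not spend effort trying to rescue the converse as written; the dilation programme you sketch for extending a single bad generator to a full representation of $\mathcal{A}$ is attacking a claim that is already wrong in the commutative case.
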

In the case of $(\mathbb{C}[x],\phi)$, we can simply take the length function to be the constant function with value being the natural exponent:
\begin{equation}
    \ell(x) := e
\end{equation}

\paragraph{Equivalence of different constructions of Poissonization:} We can now immediately apply this theorem to show the equivalence of various constructions of Poissonization. The noncommutative probability space is the \textit{free} algebra generated by the \textit{abstract} symbols:
\begin{equation}
    \mathcal{A}_\text{Poiss}:=\langle1,\lambda(\mathcal{O}):x\in\mathcal{N}_\text{Tomita}\rangle
\end{equation}
where $\mathcal{N}_\text{Tomita}$ is the Tomita subalgebra of $\mathcal{N}$ under the weight $\omega$. When $\omega(1) < \infty$, $\mathcal{N}_\text{Tomita}$ can be taken as the entire algebra $\mathcal{N}$.

$\mathcal{A}_\text{Poiss}$ is a unital $*$-associative algebra where elements are words:
\begin{equation}
    \lambda(\mathcal{O}_1)\lambda(\mathcal{O}_2)\cdots\lambda(\mathcal{O}_n)
\end{equation}
The adjoint operation is given by:
\begin{equation}
    \lambda(\mathcal{O})^\dagger := \lambda(\mathcal{O}^\dagger)
\end{equation}
The state $\phi_\text{Poiss}$ is defined explicitly by the formula:
\begin{equation}
    \phi_\text{Poiss}(\lambda(\mathcal{O}_1)\cdots\lambda(\mathcal{O}_n)) = \sum_{\sigma\in\mathcal{P}_n}\prod_{A\in\sigma}\omega(\overrightarrow{\prod}_{i\in A}x_i)
\end{equation}
where $\mathcal{P}_n$ is the set of partitions on $\{1,..,n\}$ and $A\in\sigma$ is an element in the partition $\sigma$. 

On the von Neumann algebra $\mathcal{N}$ (or the Tomita algebra $\mathcal{N}_\text{Tomita}$ in case $\omega$ is infinite), we define the semi-norm:
\begin{equation}
    |||x||| := \max\{||x||, \omega(x^\dagger x)^{1/2}, \omega(xx^\dagger)^{1/2}\}
\end{equation}
And the length function is a rescaling of the semi-norm:
\begin{equation}
    \ell(x):=e|||x|||
\end{equation}
Then the \textit{Poisson moment formula} satisfies the growth condition:
\begin{align}
    \begin{split}
        |\phi_\text{Poiss}(\lambda(\mathcal{O}_1)\cdots\lambda(\mathcal{O}_n))| &\leq \sum_{\sigma\in\mathcal{P}_n}\prod_{A\in\sigma}|\omega(\overrightarrow{\prod}_{i\in A}x_i)| \leq \sum_{\sigma\in\mathcal{P}_n}\prod_{A\in\sigma}\prod_{i\in A}|||x_i|||
        \\
        &\leq |\mathcal{P}_n| \prod_{1\leq i\leq n}|||x_i||| \leq Ce^nn^n\prod_{1\leq i\leq n}|||x_i||| = Cn^n\prod_{1\leq i\leq n}\ell(x_i)
    \end{split}
\end{align}
where we have used the Cauchy-Schwartz estimate:
\begin{align}
    \begin{split}
        |\omega(xy_1\cdots y_nz)| &\leq \omega(xx^\dagger)^{1/2}\omega(z^\dagger y^\dagger y z)^{1/2}\leq \omega(xx^\dagger)^{1/2}||y_1..y_n||\omega(z^\dagger z)^{1/2} \\
        &\leq \omega(xx^\dagger)^{1/2}\prod_{1\leq i\leq n}||y_i||\omega(z^\dagger z)^{1/2}
        \leq |||x|||(\prod_{1\leq i\leq n}|||y_i|||)|||z|||
    \end{split}
\end{align}
In addition, we have used the fact that the cardinality of the set of partitions $\mathcal{P}_n$ is given by the classical Bell number $B_n$, and the Bell number is bounded above by:
\begin{equation}
    B_n\leq Ce^nn^n
\end{equation}
where $C>0$ is an absolute constant.

Therefore, from the noncommutative Hamburger moment theorem, all formulations of Poissonization are equivalent because all formulations take the same input and produce the same Poisson moment formula, which satisfies the growth condition.

{\it The key idea of this proof is that the Poisson algebra can be fully characterized by the combinatorial structure of the Poisson moment formula.} The abstract noncommutative probability space $(\mathcal{A}_\text{Poiss}, \phi_\text{Poiss})$ serves to formalize the combinatorial structure and to add in the necessary analytical control over the moment formula. To determine an analytical object (in this case, the Poisson algebra) from combinatorial information is one of the key ideas in noncommutative probability theory. Another example of noncommutative probability theory is free probability, where in physics language, many of the properties are fixed by the structures of large $N$ correlation functions. Yet another example is the second quantization. As a demonstration, we elaborate on the noncommutative probability perspective on the second quantization below.

    \bibliographystyle{unsrt}
    \bibliography{abstractPoiss}
 \end{document}